\documentclass[11 pt, onecolumn, journal]{IEEEtran}
%\pdfoutput=1

\usepackage{refcheck}
\usepackage{graphicx}
\usepackage{amsmath}
\usepackage{amssymb}
\usepackage{graphics}
\usepackage{latexsym}
\usepackage[dvips]{epsfig}
\usepackage[nospace]{cite}
\usepackage{tabularx}
\usepackage{subfigure}
\usepackage{epstopdf}
\usepackage{mathdots}
\usepackage{stmaryrd}
\usepackage[T1]{fontenc}
\usepackage[utf8]{inputenc}

\def\withcolors{1}
\def\withnotes{1}

\ifnum\withcolors=1
   % New stuff
   % Newer stuff
   % Even even newer stuff
\else

\fi

\usepackage[colorinlistoftodos,textsize=scriptsize]{todonotes}
\ifnum\withnotes=1
   % Jayadev
   % Jayadev

\else

\fi

\newtheorem{Definition}{Definition}
\newtheorem{Lemma}{\bf Lemma}
\newtheorem{Corollary}{\bf Corollary}
\newtheorem{Proposition}[Lemma]{\bf Proposition}
\newtheorem{Theorem}{\bf Theorem}

\newtheorem{Example}{Example}
\newtheorem{Remark}{Remark}

\def\Pr{{\rm \mathbf {Pr}}}
\def\E{{\rm \mathbf  E}}

\def\var{{\rm \bold {var}}}

\def\essup{{\rm \text{ess-sup}}}
\def\Poi{{\mathrm{Poi}}}

\newcommand{\argmax}{\operatornamewithlimits{argmax}}
\newcommand{\argmin}{\operatornamewithlimits{argmin}}
\newcommand{\ml}[2]{\mathcal{L}\left(#1  \!\!  \to  \!\!   #2\right)} % maximal leakage
\newcommand{\mlt}[2]{{L}\left(#1     \!\!  \to  \!\!    #2\right)}
\newcommand{\mcl}[2]{\mathcal{L}^c\left(#1    \!\!  \to  \!\! #2 \right)} % max cost leakage
\newcommand{\cml}[3]{\mathcal{L}\left(#1    \!\!  \to  \!\!    #2 | #3 \right)} % cond max leakage
\newcommand{\mly}[2]{\widetilde{\mathcal{L}} \left(#1    \!\!  \to  \!\!   #2 \right)} % different U's
\newcommand{\kml}[2]{\mathcal{L}^{(k)} \left(#1    \!\!  \to  \!\!   #2 \right)} % k guesses leakage
\newcommand{\mlg}[2]{\mathcal{L}_G \left(#1    \!\!  \to  \!\!    #2 \right)} % general gains leakage
\newcommand{\mlr}[2]{\mathcal{L}^r \left(#1  \!\!  \to  \!\! #2 \right)} % max realizable leak

\allowdisplaybreaks

\linespread{1.1}

\let\svthefootnote\thefootnote

\begin{document}

\sloppy

%% Paper Title
%% You can use linebreaks \\ within to get better formatting as
%% desired. 
\title{An Operational Approach \\to Information Leakage}
 \author{
  Ibrahim Issa, Aaron B. Wagner, and Sudeep Kamath
 }

%\pagenumbering{gobble} 

 \maketitle

\begin{abstract} 
Given two random variables $X$ and $Y$, an operational approach is undertaken to quantify the ``leakage'' of information from $X$ to $Y$. The resulting measure $\ml{X}{Y}$ is called \emph{maximal leakage}, and is defined as the multiplicative increase, upon observing $Y$, of the probability of correctly 
guessing a randomized function of $X$, maximized over all such randomized 
functions. A closed-form expression for $\ml{X}{Y}$  is given for discrete $X$ and $Y$, and it is subsequently generalized to handle a large class of random variables.  
The resulting properties are shown to be consistent with an axiomatic view of a leakage measure, and the definition is shown to be robust to variations in the setup. Moreover, a variant of the Shannon cipher system is studied, in which performance of an encryption scheme is measured using maximal leakage. 
A single-letter characterization of
the optimal limit of (normalized) maximal leakage is derived and asymptotically-optimal
encryption schemes are demonstrated. Furthermore, the sample complexity of estimating maximal leakage from data is characterized up to subpolynomial factors.
Finally, the \emph{guessing} framework used to define maximal leakage is used to give operational interpretations of commonly used leakage measures, such as Shannon capacity, maximal correlation, and local differential privacy.
\end{abstract}

\section{Introduction}
\let\thefootnote\relax\footnote{I. Issa is with the School of Computer and Communication Sciences, Ecole Polytechnique F\'ed\'erale (EPFL), Lausanne, Switzerland (e-mail: ibrahim.issa@epfl.ch), and was with the School of Electrical and Computer Engineering at 
Cornell University, Ithaca, NY, when much of this work was conducted. A. B. Wagner is with the School of Electrical and Computer Engineering, Cornell University, Ithaca, NY (e-mail: wagner@cornell.edu). S. Kamath is with PDT Partners, New York, NY (e-mail: sudeep.kamath@gmail.com). Parts of this work were presented at the 2016 Annual Conference on Information Sciences and Systems, and at the 2016 and 2017 IEEE International Symposium on Information Theory.} 
\addtocounter{footnote}{-1}\let\thefootnote\svthefootnote
How much information does an observation ``leak'' about a quantity on which it depends? This basic question arises in many secrecy and privacy problems in which the quantity of interest is considered sensitive and an observation is available to an adversary. 
The observation could be intentionally provided to the adversary, as occurs
when a curator publishes statistical information about a given 
population. Or the observation could be an inevitable, if
undesirable, consequence of a design. In the latter case, which is
the focus of this paper, we call the observation the output of a 
\emph{side channel}. Some examples of side channels include:
%The observation could be publicly available information, as when a curator reveals statistical data about a given population, which might compromise the privacy of its individuals~\cite{kanonymity,tcloseness,DiffPrivacyMech}. 
%The observation could also be the output of a side-channel, which is the scenario of interest in this paper. For example,
\begin{itemize}
\item When using the Secure Shell (SSH), after the initial handshake, each keystroke is sent immediately to the remote machine, as shown in Figure~\ref{FigSSH}. When communicating
over a wireless network, an eavesdropper can observe the timing
of the packets which are correlated with the timing of the keystrokes, and hence with the input of the user (e.g., the inter-keystroke delay in
'ka' is significantly smaller than that in '9k'~\cite{SSHTiming}).
\begin{figure}[htp]
\centering
\includegraphics[scale=0.6]{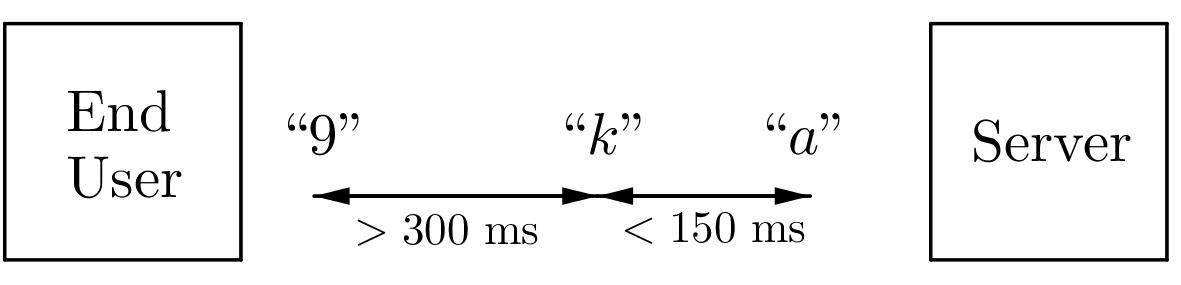}
\caption{The Secure Shell: each keystroke is sent immediately to the remote machine.} \label{FigSSH}
\end{figure}
\item Consider an on-chip network that has several processes running simultaneously, one of which is malicious. Because resources such as memory and buses are shared on the chip, the timing characteristics (e.g., memory access delays) observed by the malicious application are affected by the behavior of the other applications (e.g., memory access patterns) and can leak sensitive information such as keys. Similar phenomena occur when users share links or buffers in a communication network~\cite{NegarCapacityFCFS}.
\item  Consider the Shannon cipher system (shown in Figure~\ref{figureShannonCipher1}) in which a transmitter and a receiver are connected through a public noiseless channel and share a secret key. Unless the key rate is very high, the public message depends on the message~\cite{ShannonSecrecy}. 
\begin{figure}[htp]
\centering
\includegraphics[scale=0.8]{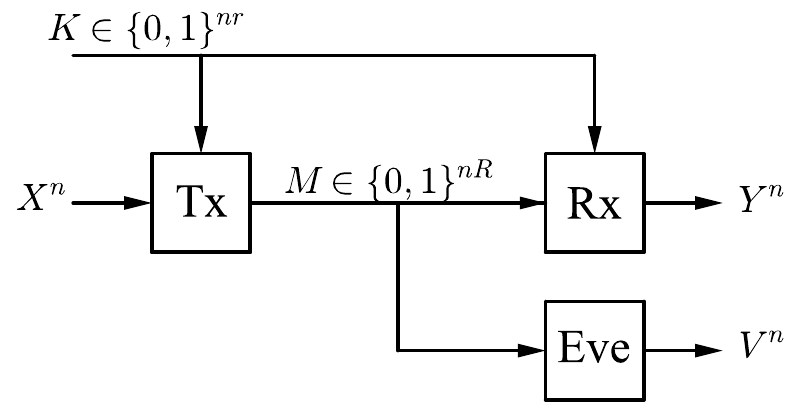}
\caption{The Shannon cipher system.}\label{figureShannonCipher1}
\end{figure}
\item An adversary could ``wiretap'' a communication channel to intercept transmissions. The wiretap channel is typically noisier than the main channel, but its output nevertheless depends on the transmitted message~\cite{WireTapChannel,WiretapGaussian}.
\item Suppose one would like to anonymously transmit a message through a given network (say, a call for protest on a social network). A powerful adversary (say, a government) could learn the spread of the message (i.e., who received it), which is correlated with the identity of its author~\cite{FantiRumor,RumorCulprit}.  
\item  Cellular networks track the locations of its users in order to route calls. Such tracking data might reveal private information of the user (such as their political affiliation, their place of work, etc.)~\cite{WickerLocationPrivacy,KolbeinnSemanticLocation}.
\end{itemize} 
Although at first glance such side-channels may seem innocuous, many works have shown that they pose a significant security threat~\cite{SSHTiming,PeepingTom,TimingSuh,SharedSchedulers,RSAbreak,PowerKocher,ICDoD,koziel2012effects}. For instance, Zhang and Wang~\cite{PeepingTom} show how to use the keystroke timing in SSH to reduce the search space for passwords by a factor of at least 250. Kocher~\cite{RSAbreak} shows how to break implementations of RSA using timing information.
 Ristenpart \emph{et al.}~\cite{amazonattack} show how
 secret keys can be extracted from co-resident virtual machines on production Amazon EC2 servers through microarchitectural timing channels. \\
 
 Addressing such threats first requires an answer to the question posed at the outset. That is, if $X$ is a random variable representing sensitive information and $Y$ is the output of a side-channel with input $X$,
 { \begin{center}
How much information does $Y$ leak about $X$?
\end{center} }
\noindent Let $\mlt{X}{Y}$ denote a potential answer. Before discussing existing approaches, we posit that a good choice of $\mlt{X}{Y}$ should satisfy the following requirements:
\begin{enumerate}
\item[(R1)] It should have a cogent \emph{operational} interpretation. That is, a system designer should be able to explain what guarantees on the system  an upper bound on $\mlt{X}{Y}$ provides. In the context of side channels,
   the design goal is typically to prevent the adversary from guessing 
   sensitive, discrete quantites such as keys and passwords. Thus the 
    leakage measure should be interpretable in terms of the adversary's 
    difficulty in guessing such quantities.
\item[(R2)] Assumptions about the adversary should be minimal (since guarantees are void if any assumption does not hold true). Indeed, one would like to take into account a large family of potential adversaries.
\item[(R3)] It should satisfy axiomatic properties of an information measure:
\begin{enumerate}
\item The data processing inequality: $\mlt{X}{Y} \leq \min\{\mlt{X}{Z},\mlt{Y}{Z}\}$ if $X-Y-Z$ is a Markov chain.
\item The independence property: $\mlt{X}{Y}=0$ if and only if $X$ and $Y$ are independent.
\item The additivity property: if $(X_1,Y_1)$ and $(X_2,Y_2)$ are independent, then $\mlt{X_1^2}{Y_1^2} = \mlt{X_1}{Y_1} + \mlt{X_2}{Y_2}$.
\end{enumerate}
\item[(R4)] It should accord with intuition. That is, it should not mis-characterize the (severity of) information leakage in systems that we understand well.
\end{enumerate}

\subsection{Common Information-Theoretic Approaches}
Notably, many commonly-used information leakage metrics do not satisfy the above requirements. For example, mutual information, which has been frequently used as a leakage measure~\cite{ShannonSecrecy,WiretapGaussian,WireTapChannel}\!\! \cite{SecrecyFading, csiszar1978broadcast,erkippoor2008lossless, PoorUtilityPrivacy},
arguably fails to satisfy both (R1) and (R4). Regarding the latter,
consider the following example proposed by Smith~\cite{SmithQuantInfoFlow}.
\vspace{1mm}
\begin{Example}
\label{Example:Guessing}
Given $n \in \mathbb{N}$, let $\mathcal{X}= \{0,1\}^{8n}$ and $X \sim \mathrm{Unif}(\mathcal{X})$. Now consider the following two conditional distributions:
\begin{align*}
P_{Y|X} = \begin{cases}
X, & \text{if } X ~\mathrm{mod}~ 8 = 0, \\
1, & \text{otherwise}.
\end{cases} \quad \text{ and } \quad
P_{Z|X}= (X_1, X_2, \dots, X_{n+1}). 
\end{align*}
Then the probability of guessing $X$ correctly from $Y$ is at least $1/8$, whereas the probability of guessing $X$ correctly from $Z$ is only $2^{-7n+1}$ for $Z$. However, one can readily verify that $I(X;Y) \approx (n+0.169)\log 2 \leq I(X;Z) = (n+1)\log 2$~\cite{SmithQuantInfoFlow}. 
\end{Example}
\vspace{1mm}

Regarding the former, note that operational interpretations of mutual
information arise in transmission and compression settings, which are different from the security setting at hand. Moreover, in those settings, mutual information arises as part of a computable characterization of the solution, rather than as part of the formulation itself, i.e., the transmission and compression problems are not \emph{defined} in terms of mutual information. 

Mutual information could potentially be justified by appealing to 
rate-distortion theory~\cite[Section~V]{cuff2013secrecycausal}. 
In fact, a number of leakage
measures in the literature are based on rate-distortion theory.
%Other alternatives exist in the information-theoretic literature~\cite{yamamoto1997shannoncipher,cuff2014henchman,
%cuff2013secrecycausal,MerhavArikanShannonCipher,
%MySecrecySystem,MaxCorrSecrecy,LocalDiffPrivacy}, 
%many of which are based on rate-distortion theory. 
For instance, Yamomoto~\cite{yamamoto1997shannoncipher} introduces a distortion function $d$ and measures the privacy of $P_{Y|X}$ using $\inf_{ \hat{x}(\cdot)} \E[d(X, \hat{x}(Y))]$. Schieler and Cuff~\cite{cuff2013secrecycausal} discuss (and generalize) an example that shows the inadequacy of this approach, if
conventional distortion measures such as Hamming distortion are used.
\vspace{1mm}
\begin{Example}
\label{Example:Hamming}
Given $n \in \mathbb{N}$, let $X^n$ be i.i.d $\sim \mathrm{Ber}(1/2)$ and let $K \sim \mathrm{Ber}(1/2)$ be independent of $X^n$. Suppose $d$ is the Hamming distortion and let $P_{Y|X^n}$ be as follows: if $K=0$, $Y=X^n$; otherwise, $Y=\bar{X}^n$ (i.e., flip all the bits of $X^n$). Then $\inf_{ \hat{x}(\cdot)} \E[d(X, \hat{x}(Y))]=1/2$, which is the maximum distortion the adversary could incur. The proposed scheme is hence optimal from an expected distortion point of view. However, by observing $Y$, the adversary can guess $X^n$ with probability $1/2$. Moreover, they can guess it exactly with two attempts.
\end{Example}
\vspace{1mm}
Similarly to expected distortion, the expected number of guesses to find $X^n$ fails on the fourth requirement: it can label obviously insecure systems as secure (see~\cite{MySecrecySystem} for an example). More generally, rate-distortion-based approaches do not meet the second requirement (R2): they assume there is a \emph{known}  distortion function, and in some cases a particular distortion level, up to which the adversary is interested in reproducing the sensitive information $X$.

\subsection{Contributions}
We introduce a new metric, \emph{maximal leakage}, that meets all the above requirements. To do so, we first describe a threat model that captures the side-channel setup.

{\bf Threat model.} We assume the adversary is interested in a (possibly randomized) function of $X$, called $U$. We restrict $U$ to be discrete, which captures most scenarios of interest (in the side-channel examples above, all functions of interest are discrete, e.g., a password, a message, an identity, etc). However, $P_{U|X}$ is unknown to the system designer. This models the case in which we do not know the adversary's function of interest, and wish to account for a large family of potential adversaries, as in the second requirement of $\mlt{X}{Y}$ above. Even if it is known, $P_{U|X}$ could be so complicated that it might as well be unknown. The adversary observes a random variable $Y$, and the Markov chain $U-X-Y$ holds. They wish to \emph{guess} $U$ and can verify if their guess is correct (if, say, $U$ is a password for a given system, then they can attempt to log in using it). Hence, they would like to maximize the probability of guessing $U$ correctly. Finally, we assume the system designer accepts low risks (i.e., a random event that reveals $U$ is tolerable as long as it has very low probability), or that the leakage is concentrated with respect to $Y$ (i.e., we can average over $P_Y$, which is the case in side-channels where the input and output are running processes).

{\bf Operational, robust measure of information leakage.} We now define maximal leakage, which we denote by $\ml{X}{Y}$, as the (logarithm of the) ratio of the probability of correctly guessing $U$ from $Y$ to the probability of correctly guessing $U$ with no observation, maximized over all $U$ satisfying $U-X-Y$ (cf.~Definition~\ref{defmaxleakage}). The maximization over $U$ guarantees that our definition satisfies the requirement (R2) of making minimal assumptions about the adversary. Moreover, the operational meaning of this quantity is clear: a leak of $\ell$ bits means that for any $U$, the multiplicative increase (upon observing $Y$) in the correct guessing probability is upper-bounded by, but can be arbitrary close to, $2^\ell$.    

So defined, it is not clear \emph{a priori} that maximal leakage is
computable, since it requires maximizing over all auxiliary random
variables $U$. A standard approach to obtaining a computable characterization
in such problems is to bound the necessary alphabet size for $U$ 
in terms of the alphabet size of $X$ using Carath\'{e}odory's 
theorem~(e.g.,~\cite[Lemma~5.4]{korner}). This technique fails for the present problem,
however: even a binary $X$ can require arbitrarily large $U$ in order to
approach the supremum. Nonetheless, 
Theorem~\ref{thmmainthm} provides a simple formula for maximal leakage for the case of discrete $X$ and $Y$. In particular, it shows that $\ml{X}{Y}$ is equal to the Sibson mutual information of order infinity $I_\infty(X;Y)$~\cite{SibsonInfo,Verdualpha}. 
Consequently, maximal leakage meets our third requirement (R3) of satisfying axiomatic properties of an information measure. That is, it is zero if and only if $X$ and $Y$ are independent; it satisfies the data processing inequality; and it is additive over independent pairs $\{(X_i,Y_i)\}_{i=1}^{n}$. Interestingly, it is lower-bounded by $I(X;Y)$, indicating that mutual information underestimates leakage.

Moreover, the definition of maximal leakage is shown to be 
\emph{robust:} the result is unaffected if the adversary picks the function of interest $U$ only after observing $Y$ (cf.~Theorem~\ref{thmmaxleakY}), if they only wish to approximate $U$ (cf.~Theorem~\ref{lemmaconteqdisc}),
if they are allowed several guesses (cf.~Theorem~\ref{lemmakeq1}), or if they wish to maximize some arbitrary gain function (cf.~ Theorem~\ref{thmmaxgains}). We also extend the notion of maximal leakage in two directions. We propose a conditional form of maximal leakage, which attempts to answer the question: how much does $Y$ leak about $X$ when $Z$ is given? Here $Z$ represents side information that is available to the adversary. 
We again provide an operational definition in the guessing framework (cf.~Definition~\ref{defcondmaxleakage}), and derive a simple form for $\cml{X}{Y}{Z}$ (cf.~Theorem~\ref{thmcondmaxleakage}). Moreover, we generalize the computable characterization for maximal leakage to cover a large class of random variables and stochastic processes (cf.~Theorem~\ref{thmgeneralformula}). Both the general and the conditional form retain the axiomatic properties of a leakage measure, and are lower-bounded by mutual information and conditional mutual information, respectively.

{\bf New insights for mechanism design.} The new metric is useful to develop new mechanisms to mitigate leakage, as well as to evaluate existing mechanisms for this purpose. A common approach in such designs is to add independent noise to successive inputs of the system. For example, in the SSH scenario, packets could be passed through an $./M/1$ queue before being sent over the network. We provide examples of such memoryless schemes to show that, roughly speaking, they do not perform well under maximal leakage, and are outperformed by quantization-based schemes. More concretely, we consider the Shannon cipher system with lossy communication and evaluate the performance of an encryption scheme using maximal leakage between the source and the public message (other works have considered this setup under different metrics\cite{MySecrecySystem,yamamoto1997shannoncipher,cuff2014henchman,
merhavsecurelossy}). For a discrete memoryless source, we show that memoryless schemes are strictly suboptimal (cf.~Lemma~\ref{LemmaMemoryless}), whereas optimal schemes correspond to good rate-distortion codes. Moreover, we derive a single-letter characterization of the optimal (normalized) limit (cf.~Theorems~\ref{Thmmain}
and~\ref{Thmvary}).   

{\bf Complexity of estimating maximal leakage.} The computation of maximal leakage might become intractable for complicated schemes. For example, in the setup of multiple processes running on the same chip as described above, what determines the information leakage between processes is the memory controller, the operation of which might depend on many variables. Thus, we consider the problem of estimating maximal leakage from data. We show that this task is feasible only if we know (a lower bound on) the minimum strictly positive probability of a symbol $x \in \mathcal{X}$, denoted by $\theta$.	More specifically, we show that the number of samples needed to estimate $\ml{X}{Y}$ up to $\epsilon-$additive-accuracy is $ \Omega \left( |\mathcal{Y}|/(\theta \log |\mathcal{Y}|) \right)$ (cf.~Theorem~\ref{thmlowerbound}). Note that the lower bound diverges to infinity as $\theta$ tends to zero. On the other hand, we show that $ \mathcal{O} \left( \frac{|\mathcal{Y}| \log |\mathcal{X}|} {\theta} \right)$ samples are sufficient (cf.~Theorem~\ref{thmupperbound}). This suggests that we should take into account amenability to analysis while designing leakage-mitigating mechanisms. 

{\bf Guessing framework to interpret leakage measures.} Finally, we use the guessing framework used to define maximal leakage to give new operational interpretations for different information leakage measures. This provides a common framework with which to compare them, and elucidates in which setups each should be used. More specifically, we study the following commonly used metrics: Shannon capacity, local differential privacy~\cite{LocalDiffPrivacy}, and maximal correlation~\cite{MaxCorrSecrecy}.

We show that 
\begin{enumerate}
\item Shannon capacity captures the multiplicative increase of the probability of correct guessing over the restricted set of functions of $X$ that can be \emph{reliably} reconstructed from $Y$, hence underestimating leakage (cf.~Theorem~\ref{thmcapacitymaxleak});
\item Local differential privacy captures the multiplicative increase of the guessing probability of functions of randomized $X$, maximized over realizations of $Y$ and over distributions $P_X$ (cf.~Theorem~\ref{thmLDP});
Moreover, maximizing over realizations of $Y$ for a fixed $P_X$ yields the maximum information rate (cf.~Theorem~\ref{thmmaxleakrealizable});
\item Maximal correlation captures the multiplicative change in the variance of functions of $X$, rather than the guessing probability (cf.~Theorem~\ref{thmvariance}). We extend this last notion to a new measure we call \emph{maximal cost leakage} (cf.~Definition~\ref{defcostleakage}), which captures the worst-case multiplicative reduction over all cost functions defined on any hidden variable $U$.
\end{enumerate}

\subsection{Related Work}
Calmon \emph{et al.}~\cite{CalmonBoundsInference} and Li and El Gamal~\cite{MaxCorrSecrecy} use maximal correlation, $\rho_m(X;Y)$, as a secrecy measure (Calmon \emph{et al.} also generalize it to $k$-correlation, which is defined as the sum of the $k$ largest principal inertial components of the joint distribution $P_{XY}$). A key motivating result~\cite[Theorem 9]{CalmonPIC} shows that maximal correlation bounds the \emph{additive} increase in the correct guessing probability of any deterministic function of $X$. Although $\rho_m(X;Y)$ is zero only if $X$ and $Y$ are independent, the correct guessing probability of any deterministic function might be unchanged even if $X$ and $Y$ are not independent,
as illustrated in the following example.
\vspace{1mm}
\begin{Example} \label{exampledeterministicfunc}
Suppose $P_{XY}$ satisfies the following condition: there exists $x^\star \in \mathcal{X}$ such that for all $y \in \mathcal{Y}$, $P_{X|Y}(x^\star|y) \geq 1/2$. Then for any deterministic function $f$, $f(x^\star)$ is the adversary's best guess for $f(X)$, both with and without the observation of $Y$. Hence, observing $Y$ does not affect the probability of guessing any deterministic function of $X$. Note, however, that $X$ and $Y$ may be dependent.
\end{Example} 
\vspace{1mm}

The literature on leakage and privacy measures extends beyond information theory to computer security and computer science more generally. The closest to our work in fact comes from computer security~\cite{SmithQuantInfoFlow,CompSecQuantLeakage,LeakageGeneralized,LeakageMinEntropy
,AlvimAdditiveMultiplicative}. In particular, Smith~\cite{SmithQuantInfoFlow} defines leakage from $X$ to $Y$ as the logarithm of the multiplicative increase, upon observing $Y$, of the probability of guessing $X$ \emph{itself} correctly, neglecting that the adversary might be interested in certain functions of $X$. Braun \emph{et al.}~\cite{CompSecQuantLeakage} consider a worst case modification of the metric, and maximize the previous quantity over all distributions on the alphabet of $X$ (while $P_{Y|X}$ is fixed). The resulting quantity turns out to  equal $\ml{X}{Y}$---it is called ``maximal leakage'' in the computer security literature as well. It is denoted by $ML(P_{Y|X})$, and its properties were further studied by Espinoza and Smith~\cite{LeakageMinEntropy} and Alvim \emph{et al.}~\cite{LeakageGeneralized}. The latter also define $g$-leakage by introducing a gain function $g: \mathcal{X} \times \mathcal{\hat{X}} \rightarrow [0,1]$ and considering the normalized maximal gain (for $g$). Alvim \emph{et al.}~\cite{AlvimAdditiveMultiplicative} consider several variants of $g$-leakage (i.e., additive or multiplicative increase, fixing or maximizing over the marginal $P_X$, etc). They show that maximizing $g$-leakage over gain functions $g$ yields maximal leakage. However, no operational significance is attached to the $g$ that achieves the maximum. Moreover, the result is given only as one of many possible computable variations of leakage~\cite{AlvimAdditiveMultiplicative,SmithDevelopment}.

Another connected line of work stems from cryptography, and in particular from the notion of \emph{semantic security}~\cite{SemanticSec} which considers the security of encryption schemes. Goldwasser and Micali~\cite{SemanticSec} define the ``advantage'' for a given  function of the messages as the additive increase of the correct guessing probability
upon observing the encrypted message (i.e., the ciphertext).  
Semantic security then requires that, for an adversary that can work only for a polynomial (in the length of the message) amount of time, the advantage is negligible for all input distributions and for all deterministic functions that are computable in polynomial time.

There are several variants of semantic security. In particular, entropic security~\cite{EntropicSec,EntropicSec2} drops the computational bounds (on the adversary and the considered functions), but restricts its attention to input distributions with high min-entropy. Bellare \emph{et al.}~\cite{WiretapSemantic} introduce semantic security to the wiretap channel, and do not restrict it to computationally-bounded adversaries or to deterministic polynomial-time computable functions. For a given encryption scheme, they  then upper and lower-bound the advantage of semantic security in terms of ``mutual information security advantage'', which is defined as the maximum, over all input distributions, of the mutual information between the message and the output of the channel whose input is the encryption of the message. 
For further discussion of leakage metrics, we refer the reader to Wagner and Eckhoff's work\cite{Eighty}, which categorizes over eighty such metrics.

\subsection{Outline}
We describe our threat model and define maximal leakage in Section~\ref{sec:main}. We also give a closed-form expression of maximal leakage (for discrete $X$ and $Y$), discuss its properties, and compare it to related leakage metrics. In Section~\ref{sec:variations}, we prove the robustness of our definition by considering several variations on the setup, and show that they all lead to the same quantity. Furthermore, we generalize the formula of maximal leakage and analyze a simple model of the SSH side-channel. We also present a conditional form of maximal leakage. In Section~\ref{sec:cipher}, we consider the Shannon cipher system and derive (asymptotically) optimal schemes. We show that memoryless schemes are strictly suboptimal in general. We study the complexity of estimating maximal leakage from data in Section~\ref{sec:learning}. Finally, in Section~\ref{sec:commonframe}, we use the guessing framework to give new operational interpretations for common information leakage metrics, and we introduce a cost-based notion of leakage.

\section{Maximal Leakage} \label{sec:main}

Let $X$ be a random variable representing sensitive information, and $Y$ be the output of a side-channel the input of which is $X$. To give an operational definition of information leakage between $X$ and $Y$, we specify a threat model as follows. 
\begin{itemize}
\item The adversary is interested in a possibly randomized, discrete function of $X$ called $U$.
\item The adversary observes $Y$ and the Markov chain $U-X-Y$ holds.
\item The adversary wishes to \emph{guess} $U$ and can verify if the guess is correct.
\item The distribution $P_{U|X}$ is unknown to the system designer.
\item From the system designer's viewpoint, if the probability of guessing $U$ correctly is high for some realizations of $Y$, then it suffices that the probability of such realizations is suitably small.
\end{itemize} 
To clarify our model, consider how it applies to the SSH side-channel. In this case, $X$ represents the nominal packet timings. Suppose we perturb the packet timings before sending them over the network, in which case $Y$ represents the post-perturbation timings observed by the adversary. $U$ corresponds to the input of the user, e.g., their password. The adversary wishes to guess $U$ (e.g., the password) and can verify their guess (e.g., by attempting to log into the system). So they wish to maximize the probability that the guess is correct, and the system designer wishes to minimize it. The distribution of passwords given packet timings is complicated, so we assume it is unknown to the system designer. Finally, the system designer only requires the probability that the system is compromised to be small. 

One might be tempted to restrict the range of $U$ to deterministic functions of $X$. However, this is too restrictive as implied by Example~\ref{exampledeterministicfunc} in the introduction. Moreover, in most side-channel examples we mentioned, the $U$ of interest is a randomized function of $X$ (passwords given packet timings, key values given memory access patterns, political affiliation given location traces, etc). On the other hand, the restriction to discrete $U$'s still captures most scenarios of interest, as in the above examples. Indeed even when $X$ represents location traces, for instance, the $U$ of interest is typically discrete, e.g., home/work address, political affiliation, etc. Finally, assuming $P_{U|X}$ is unknown allows us to take into account a wide range of adversaries having different objectives. That is, as in our requirement (R2), we do not assume we know the function of interest to the adversary.   

We are now ready to present the definition of maximal leakage. Since the adversary wishes to guess $U$, we consider the maximum advantage in the probability of guessing $U$ from $Y$, as compared with guessing with no observations. Maximal leakage captures the maximum advantage over all $U$'s as in the following definition.

\vspace{1mm}
\begin{Definition}[Maximal Leakage] \label{defmaxleakage}
Given a joint distribution $P_{XY}$ on alphabets $\mathcal{X}$ and $\mathcal{Y}$, the \emph{maximal leakage} from $X$ to $Y$ is defined as
\begin{equation} 
\label{eq:maxleakagedef}
\ml{X}{Y} = \sup_{U - X - Y - \hat{U}} \log \frac{\Pr{\left(U=\hat{U}\right)}}{\max_{u \in \mathcal{U}} P_U(u)},
\end{equation}
where the supremum is over all $U$ and $\hat{U}$ taking values in the same finite, but arbitrary, alphabet.
\end{Definition}
\vspace{1mm}
\begin{Remark}
$\log$ is the natural logarithm so $\ml{X}{Y}$ is in nats. Using $\log_2$ instead gives an answer in bits.
\end{Remark}
\vspace{1mm}
The guarantee that a small leakage provides is as follows. Whatever function $U$ the adversary is interested in, if $\ml{X}{Y} \leq \ell$, then $\sup_{\hat{u}(\cdot)} \Pr \left( U = \hat{u} (Y) \right) \leq e^{\ell} \max_{u} P_U(u)$. Note that the upper bound can be decomposed into two quantities: $\max_{u} P_U(u)$ which is completely outside the control of the system designer, and $e^{\ell}$ which is determined by the designer's choice of $P_{Y|X}$ (which is typically subject to quality constraints related to the performance of the underlying system).  
Moreover, the definition directly implies several important properties of maximal leakage.
\vspace{1mm}
\begin{Lemma} \label{lemmadefprop}
For any joint distribution $P_{XY}$ on alphabets $\mathcal{X}$ and $\mathcal{Y}$,
\begin{enumerate}
\item (\emph{Data Processing Inequality}) If the Markov chain $X-Y-Z$ holds,  $\ml{X}{Z} \leq \min\{\ml{X}{Y}, \ml{Y}{Z} \}$.
\item If $Y$ is discrete, $\ml{X}{Y} \leq \log |\mathrm{supp}(Y)|$. 
\item If $X$ is discrete, $\ml{X}{Y} \leq \log |\mathrm{supp}(X)|$. 
\item $\ml{X}{Y} \geq 0$ with equality if $X$ and $Y$ are independent.
\end{enumerate}
\end{Lemma}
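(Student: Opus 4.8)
The plan is to obtain all four items directly from Definition~\ref{defmaxleakage}. Two elementary observations drive everything. First, once the observation is fixed, randomizing the guess cannot help: for any feasible chain $U-X-Y-\hat{U}$ the variables $U$ and $\hat{U}$ are conditionally independent given $Y$, so when $Y$ is discrete $\Pr(U=\hat{U}) \le \sum_{y} P_Y(y)\,\max_u P_{U|Y}(u|y) = \sum_y \max_u P_{UY}(u,y)$, and analogously for conditioning on a discrete $X$. Second, the objective $\log\!\big(\Pr(U=\hat{U})/\max_u P_U(u)\big)$ depends on the auxiliary variables only through the joint law of $(U,\hat{U})$, so it is invariant under any re-randomization of how $\hat{U}$ is produced that leaves that joint law unchanged.

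For the cardinality bounds (items~2 and~3) I would combine the first observation with the trivial inequality $P_{UY}(u,y)\le P_U(u)\le\max_{u'}P_U(u')$. Summing $\max_u P_{UY}(u,y)$ over $y\in\mathrm{supp}(Y)$ yields $\Pr(U=\hat{U})\le|\mathrm{supp}(Y)|\max_u P_U(u)$, which is item~2 after dividing and taking logarithms. Item~3 is the mirror image: conditioning on $X$ and using the Markov chain $U-X-Y$ gives $\Pr(U=\hat{U}\mid X=x)\le\max_u P_{U|X}(u|x)$, and then $\sum_{x\in\mathrm{supp}(X)}\max_u P_{UX}(u,x)\le|\mathrm{supp}(X)|\max_u P_U(u)$ finishes it.

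Item~4 is the easiest. The bound $\ml{X}{Y}\ge0$ follows by taking any fixed $U$ together with the constant guess $\hat{U}\equiv\argmax_u P_U(u)$, which makes the ratio exactly $1$. For equality under independence, I would argue that $X\perp Y$ forces $U\perp Y$ via $U-X-Y$, hence $U\perp\hat{U}$ since $\hat{U}$ is a (randomized) function of $Y$ alone; then $\Pr(U=\hat{U})=\sum_u P_U(u)P_{\hat{U}}(u)\le\max_u P_U(u)$, so the ratio is at most $1$, and with $\ml{X}{Y}\ge0$ the value is exactly $0$.

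The data processing inequality (item~1) is the only part needing real care, and I would split it into the two bounds. For $\ml{X}{Z}\le\ml{Y}{Z}$: any feasible pair for $\ml{X}{Z}$ sits inside a joint distribution in which $U-X-Y-Z-\hat{U}$ is a Markov chain (the chosen $P_{U|X}$ and $P_{\hat{U}|Z}$ composed with the given $X-Y-Z$), which in particular yields $U-Y-Z-\hat{U}$; hence the same pair is feasible for $\ml{Y}{Z}$ and, by the second observation above, achieves the same objective value, so taking suprema gives the bound. For $\ml{X}{Z}\le\ml{X}{Y}$: given a feasible $(U,\hat{U})$ for $\ml{X}{Z}$, ``simulate $Z$ from $Y$'' using the channel $P_{Z|Y}$ (which exists because $X-Y-Z$) and then apply $P_{\hat{U}|Z}$; the resulting $\hat{U}'$ is a randomized function of $Y$ with \emph{the same} joint law with $U$ as $\hat{U}$ had, so $U-X-Y-\hat{U}'$ is feasible for $\ml{X}{Y}$ with the same objective value. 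The one technical wrinkle is the existence of the conditioning kernel $P_{Z|Y}$ when $Y$ is not discrete; given that, everything reduces to short computations with conditional distributions.
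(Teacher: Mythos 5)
Your proof is correct, and for items 1, 2, and 4 it is essentially the paper's argument. The paper proves the data processing inequality by fixing $P_{U|X}$, defining $G(U;Y)=\sup_{\hat U: U-Y-\hat U}\log\bigl(\Pr(U=\hat U)/\max_u P_U(u)\bigr)$, and noting that in the joint chain $U-X-Y-Z-\hat U$ the pair $(U,\hat U)$ is simultaneously feasible for the other two leakage quantities with the same objective value; your ``simulate $Z$ from $Y$'' step is the same argument phrased generatively rather than by marginalization. (In fact the kernel-existence wrinkle you flag evaporates in the marginalization phrasing: the joint $P_{XYZ}\,P_{U|X}\,P_{\hat U|Z}$ exists outright, and $U-X-Y-Z-\hat U$ already implies $U-X-Y-\hat U$ and $U-Y-Z-\hat U$, so no separate construction of $P_{Z|Y}$ is needed.) The one genuine divergence is item 3: you prove it directly by conditioning on $X$, using $U\perp\hat U\mid X$ to get $\Pr(U=\hat U)\le\sum_x\max_u P_{UX}(u,x)\le|\mathrm{supp}(X)|\max_u P_U(u)$, whereas the paper derives it as a corollary of items 1 and 2 via $\ml{X}{Y}\le\ml{X}{X}\le\log|\mathrm{supp}(X)|$. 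Your version is self-contained and does not make item 3 depend on the DPI; the paper's is shorter but chains the earlier parts. Both are valid.
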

The proof is given in Appendix~\ref{app:lemmadefproof}. Note that properties 1) and 4) were two of our axioms for a leakage measure (R3). Properties 2) and 3) are consistent with intuitive understanding of information. In particular,  a binary variable $Y$ cannot leak more than one bit about any variable $X$. Similarly, a binary variable $X$ has no more than one bit of information to be leaked.

Despite the useful properties of the definition, it involves an infinite-dimensional optimization problem, so it is not clear \emph{a priori} that it is computable. In fact, one can show that it is impossible to bound the cardinality of the alphabet $\mathcal{U}$ in terms of the cardinalities of the alphabets $\mathcal{X}$ and $\mathcal{Y}$.
Nonetheless, we can show that maximal leakage is indeed computable and actually takes a simple form. We focus first on the discrete case and consider general alphabets later.
\vspace{1mm}
\begin{Theorem} \label{thmmainthm}
For any joint distribution $P_{XY}$ on finite alphabets $\mathcal{X}$ and $\mathcal{Y}$, the maximal leakage from $X$ to $Y$ is given by the Sibson mutual information of order infinity, $I_\infty(X;Y)$. That is,
\begin{equation*}
\ml{X}{Y} = \log  \sum_{y \in \mathcal{Y}} \max_{\substack{x \in \mathcal{X}:\\ P_X(x) >0 }} P_{Y|X}(y|x) = I_\infty(X;Y).
\end{equation*}
\end{Theorem}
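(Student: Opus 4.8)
The plan is to establish the two bounds $\ml{X}{Y}\le \log L$ and $\ml{X}{Y}\ge \log L$ separately, where $L:=\sum_{y\in\mathcal Y}\max_{x:\,P_X(x)>0}P_{Y|X}(y|x)$, and then to observe that $\log L$ is precisely Sibson's mutual information of order infinity, $I_\infty(X;Y)=\lim_{\alpha\to\infty}I_\alpha(X;Y)$, which is immediate from the definition of $I_\alpha$.

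For the upper bound I would fix an arbitrary feasible pair $U-X-Y-\hat{U}$ and write the success probability as
\[
\Pr(U=\hat{U})=\sum_{u,y}P_{\hat{U}|Y}(u|y)\,P_{UY}(u,y)\;\le\;\sum_{y}\max_{u}P_{UY}(u,y),
\]
using only that $\hat{U}$ is conditionally independent of $(U,X)$ given $Y$. Then I would expand $P_{UY}(u,y)=\sum_{x}P_X(x)P_{U|X}(u|x)P_{Y|X}(y|x)$ via the chain $U-X-Y$, bound each $P_{Y|X}(y|x)$ by $\max_{x':P_X(x')>0}P_{Y|X}(y|x')$, factor it out of the sum, and recognize what remains as $P_U(u)$. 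This gives $\max_u P_{UY}(u,y)\le\bigl(\max_{x':P_X(x')>0}P_{Y|X}(y|x')\bigr)\max_u P_U(u)$, and summing over $y$ yields $\Pr(U=\hat{U})\le L\cdot\max_u P_U(u)$, i.e.\ the ratio defining maximal leakage is at most $L$ for every choice of $U$.

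For the lower bound I would construct a single auxiliary variable $U$ (and guess $\hat{U}$) achieving the ratio $L$. For each $y$ with $P_Y(y)>0$ pick $x^\star(y)\in\argmax_{x:P_X(x)>0}P_{Y|X}(y|x)$ and set $c(x)=|\{y:x^\star(y)=x\}|$; let $\rho=\min_{x:\,c(x)>0}P_X(x)/c(x)>0$, for which one checks $\rho\cdot|\{y:P_Y(y)>0\}|\le1$. Let $\mathcal U$ consist of one ``useful'' symbol $u_y$ per such $y$ together with $K=\lceil(1-\rho\,|\{y:P_Y(y)>0\}|)/\rho\rceil$ ``garbage'' symbols; define $P_{U|X}(u_y|x)=\rho/P_X(x)$ if $x=x^\star(y)$ and $0$ otherwise, and spread the leftover mass $1-\rho\,c(x)/P_X(x)\ge0$ at each $x$ uniformly over the garbage symbols. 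A routine check shows this is a valid channel with $P_U(u_y)=\rho$ for every useful symbol and $P_U\le\rho$ on every garbage symbol (this is exactly what the choice of $K$ buys), so $\max_u P_U(u)=\rho$; guessing $\hat{U}=u_Y$ then gives $\Pr(U=\hat{U})=\sum_{y}P_{UY}(u_y,y)=\rho\sum_y\max_{x:P_X(x)>0}P_{Y|X}(y|x)=\rho L$, and the ratio is exactly $L$.

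I expect the lower-bound construction to be the main obstacle. Since (as the text notes) the alphabet of $U$ cannot be bounded in terms of those of $X$ and $Y$, Carath\'eodory-type arguments are unavailable and $U$ must be built by hand. The delicate point is to flatten the law of $U$ across its useful atoms — forcing $P_U(u_y)$ to be a common value $\rho$ so that $\max_u P_U(u)$ does not exceed it — while still respecting the per-symbol normalization $\sum_u P_{U|X}(u|x)=1$ for every $x$, including those $x$ that are never an $\argmax$; the garbage padding together with the particular value of $\rho$ is precisely what reconciles these competing constraints. Minor care is also needed for ties in the $\argmax$, for the degenerate case $K=0$, and for $x$ with $P_X(x)=0$, but none of these affect the argument, and the identification $\log L=I_\infty(X;Y)$ is then immediate.
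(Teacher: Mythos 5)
Your proof is correct, and while your upper bound is essentially the paper's argument verbatim (bound $\Pr(U=\hat U)$ by $\sum_y\max_u P_{UY}(u,y)$, expand through the Markov chain, pull out $\max_{x}P_{Y|X}(y|x)$, recognize $P_U(u)$), your lower-bound construction is genuinely different. The paper uses a ``shattering'' channel that depends only on $P_X$: each $x$ is split into $\lceil P_X(x)/p^\star\rceil$ shards of mass $p^\star=\min_x P_X(x)$ with disjoint conditional supports, so that $\max_u P_U(u)=p^\star$ and $\max_u P_{UY}(u,y)=p^\star\max_x P_{Y|X}(y|x)$ for every $y$ automatically. You instead build one atom $u_y$ per output letter, anchored to a maximizer $x^\star(y)$, with common mass $\rho=\min_{x:c(x)>0}P_X(x)/c(x)$ and a garbage reservoir to absorb the leftover conditional mass; I checked the normalization, the bound $\rho\,|\{y:P_Y(y)>0\}|\le 1$, the fact that garbage atoms carry at most $\rho$, and the degenerate cases, and the construction is sound and attains the supremum exactly, just as the paper's does. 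The trade-off: your auxiliary alphabet can be much smaller (on the order of $|\mathcal{Y}|$ plus padding, versus roughly $1/p^\star$), and it makes the role of the per-$y$ maximizers transparent; but your $P_{U|X}$ depends on $P_{Y|X}$ through the argmaxes, whereas the paper's shattering channel is universal over all $P_{Y|X}$ for a fixed $P_X$ (this is Proposition~\ref{prop:shattering}) and is reused later in the paper (e.g., in the proofs of Theorems~\ref{thmcondmaxleakage}, \ref{thmmaxleakrealizable}, and the converse for the Shannon cipher system), so your construction proves the theorem but would not substitute for the shattering distribution in those downstream arguments.
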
 
\vspace{1mm}
\begin{Remark} 
Sibson's mutual information~\cite{SibsonInfo, Verdualpha} of 
order $\alpha$ ($\alpha \geq 0$, $\alpha \ne 1$),
which can be expressed (in the discrete case) as
\begin{align}
\label{eq:SibsonMI}
I_{\alpha}(X;Y) & = \inf_{Q_Y} D_{\alpha} (P_{XY} || P_X \times Q_Y)
\end{align}
where
\begin{equation}
D_\alpha(P||Q)   = \frac{1}{\alpha - 1} \log
                  \left( \sum_{a} P^\alpha(a)
                      Q^{1 - \alpha}(a)\right),
\end{equation}
is one of several suggested extensions of the concept of Renyi entropy $H_\alpha (X)$ (itself an extension of entropy) and Renyi divergence $D_\alpha (P||Q)$. 
Verd\'{u}~\cite{Verdualpha} argues for the adoption of Sibson's extension, and the above result supports that choice by providing an operational interpretation of 
\begin{align}
\label{eq:SibsonMIinf}
I_{\infty}(X;Y) & = \lim_{\alpha \rightarrow \infty} I_\alpha(X;Y) \\
\label{eq:SibsonMIinfDinf}
                & = \inf_{Q_Y} D_\infty (P_{XY}|| P_X \times Q_Y),
\end{align}
where
\begin{align}
\label{eq:RenyiDivInf}
D_\infty(P||Q) & = \lim_{\alpha \rightarrow \infty} 
                      D_\alpha (P||Q) \\
               & = \log \left(\sup_{a}  \frac{P(a)}{Q(a)} \right)
\end{align}
and the interchange of the limit and infimum in~(\ref{eq:SibsonMIinf}) and~(\ref{eq:SibsonMIinfDinf}) is implied by~\cite[Theorem~4]{Hoconvexconcave} (see also (\ref{eq:optQY}) to follow).
\end{Remark}
\vspace{1mm}
Before proving the theorem, we investigate some of its consequences. 
First, it reveals two of the more useful aspects of maximal leakage from an engineering perspective: minimizing $\mathcal{L}({X \!\! \to \!\! Y})$ over $P_{Y|X}$, for a fixed support of $P_X$, amounts to minimizing a convex function, and $\mathcal{L}({X \!\! \to \!\! Y})$ depends on $P_X$ only through its support. The latter fact is very useful because in practice $P_X$ is typically complicated and outside our control. $P_X$ is also typically used to model the adversary's prior knowledge of $X$, which is not necessarily known to us. The following corollary (the proof of which is given in Appendix~\ref{app:corrprop}) summarizes some useful properties of $\ml{X}{Y}$.
\vspace{1mm} 
\begin{Corollary} \label{corrprop}
For any joint distribution $P_{XY}$ on finite alphabets $\mathcal{X}$ and $\mathcal{Y}$, 
\begin{enumerate}
\item $\ml{X}{Y} = 0$ iff $X$ and $Y$ are independent.
\item (\emph{Additivity}) 
If $\{(X_i,Y_i)\}_{i=1}^{n}$ are mutually independent, then
\begin{equation*}
\ml{X^n}{Y^n} = \sum_{i=1}^n \ml{X_i}{Y_i}.
\end{equation*}
\item $\ml{X}{Y} = \log |\mathcal{X}|$ iff $X$ is a deterministic function of $Y$ (assuming $X$ has full support).
\item $\ml{X}{X} = H_0(X) =\log |\mathrm{supp}(X)|$.
\item $\ml{X}{Y}$ is not symmetric in $X$ and $Y$.
\item $\exp\{\ml{X}{Y}\}$ is convex in $P_{Y|X}$ for fixed support of $P_X$.
\item $\ml{X}{Y}$ is concave in $P_X$ for fixed $P_{Y|X}$.
\end{enumerate}
\end{Corollary}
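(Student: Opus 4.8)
The plan is to read off all seven items from the closed form in Theorem~\ref{thmmainthm}. Write $g(P_{Y|X},S) := \sum_{y \in \mathcal{Y}} \max_{x \in S} P_{Y|X}(y|x)$, so that $\exp\{\ml{X}{Y}\} = g\big(P_{Y|X},\mathrm{supp}(P_X)\big)$. Two elementary observations carry most of the load: \emph{(i)} for fixed $P_{Y|X}$, the quantity $g(P_{Y|X},S)$ is nondecreasing in $S$ under set inclusion, since enlarging $S$ can only increase each inner maximum; and \emph{(ii)} for every $y$, $P_Y(y) = \sum_{x \in \mathrm{supp}(P_X)} P_X(x)\,P_{Y|X}(y|x)$ is a convex combination of $\{P_{Y|X}(y|x)\}_{x\in\mathrm{supp}(P_X)}$ with strictly positive weights, hence $P_Y(y) \le \max_{x\in\mathrm{supp}(P_X)} P_{Y|X}(y|x)$, with equality exactly when $P_{Y|X}(y|x)$ is constant in $x$ over $\mathrm{supp}(P_X)$.

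For \textbf{1)}, summing \emph{(ii)} over $y$ gives $1 = \sum_y P_Y(y) \le g\big(P_{Y|X},\mathrm{supp}(P_X)\big) = \exp\{\ml{X}{Y}\}$, and equality holds throughout iff for every $y$ the value $P_{Y|X}(y|x)$ is independent of $x \in \mathrm{supp}(P_X)$, i.e.\ iff $X$ and $Y$ are independent; this upgrades item~4) of Lemma~\ref{lemmadefprop} to an ``iff''. For \textbf{2)}, mutual independence of $\{(X_i,Y_i)\}$ gives $P_{Y^n|X^n}(y^n|x^n) = \prod_i P_{Y_i|X_i}(y_i|x_i)$ and $\mathrm{supp}(P_{X^n}) = \prod_i \mathrm{supp}(P_{X_i})$, so the inner maximum factorizes, $\max_{x^n}\prod_i P_{Y_i|X_i}(y_i|x_i) = \prod_i \max_{x_i} P_{Y_i|X_i}(y_i|x_i)$, whence the sum over $y^n$ factorizes too; taking logarithms yields additivity. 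For \textbf{4)}, set $Y=X$: then $P_{X|X}(x'|x) = \mathbf{1}\{x'=x\}$, so $g = \sum_{x'} \max_{x\in\mathrm{supp}(P_X)} \mathbf{1}\{x'=x\} = |\mathrm{supp}(X)|$, i.e.\ $\ml{X}{X} = \log|\mathrm{supp}(X)| = H_0(X)$.

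For \textbf{3)}, since $\max_x P_{Y|X}(y|x) \le \sum_x P_{Y|X}(y|x)$ we get $g(P_{Y|X},\mathcal{X}) \le \sum_y \sum_x P_{Y|X}(y|x) = |\mathcal{X}|$ (using full support), so $\ml{X}{Y} \le \log|\mathcal{X}|$ with equality iff every column $P_{Y|X}(y|\cdot)$ has at most one positive entry; equivalently the sets $\{y : P_{Y|X}(y|x) > 0\}$ are pairwise disjoint, which — since full support of $X$ forces $P_Y(y)>0$ whenever some $P_{Y|X}(y|x)>0$ — is precisely the assertion that $X$ is a deterministic function of $Y$. For \textbf{5)}, it suffices to display one asymmetric instance, e.g.\ $X \sim \mathrm{Ber}(1/2)$ through the Z-channel $P_{Y|X=0}=(1,0)$, $P_{Y|X=1}=(1/2,1/2)$: the formula gives $\ml{X}{Y} = \log(3/2)$, while computing $P_{X|Y}$ and reapplying the formula gives $\ml{Y}{X} = \log(5/3)$.

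For \textbf{6)}, fix $S = \mathrm{supp}(P_X)$; for each $y$ the map $P_{Y|X} \mapsto \max_{x\in S} P_{Y|X}(y|x)$ is a pointwise maximum of finitely many linear functionals of the channel, hence convex, and $\exp\{\ml{X}{Y}\}$ is a finite sum of such terms, hence convex in $P_{Y|X}$. For \textbf{7)}, fix $P_{Y|X}$ and let $P_X = \lambda P_X^{(1)} + (1-\lambda)P_X^{(2)}$ with $\lambda \in (0,1)$ (the cases $\lambda\in\{0,1\}$ being trivial); then $\mathrm{supp}(P_X) = \mathrm{supp}(P_X^{(1)}) \cup \mathrm{supp}(P_X^{(2)})$, so by \emph{(i)}, $\exp\{\ml{X}{Y}\} = g\big(P_{Y|X},\mathrm{supp}(P_X)\big) \ge \max_i g\big(P_{Y|X},\mathrm{supp}(P_X^{(i)})\big) \ge \lambda\, g\big(P_{Y|X},\mathrm{supp}(P_X^{(1)})\big) + (1-\lambda)\, g\big(P_{Y|X},\mathrm{supp}(P_X^{(2)})\big)$, and applying the increasing concave function $\log$ yields $\ml{X}{Y} \ge \lambda\,\ml{X}{Y}\big|_{P_X^{(1)}} + (1-\lambda)\,\ml{X}{Y}\big|_{P_X^{(2)}}$. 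The individual steps are short; the one that needs a little care is \textbf{7)}, where the apparent tension between concavity and the fact that $\ml{X}{Y}$ depends on $P_X$ only through $\mathrm{supp}(P_X)$ is resolved by the remark that mixing distributions enlarges the support and that $g$ is monotone under support enlargement.
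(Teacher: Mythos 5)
Your proof is correct and follows essentially the same route as the paper's: every item is read off the closed form $\exp\{\ml{X}{Y}\}=\sum_y\max_{x\in\mathrm{supp}(P_X)}P_{Y|X}(y|x)$, with the same equality analyses for 1) and 3), the same pointwise-max convexity argument for 6), and the same support-union observation for 7). The only cosmetic differences are that the paper cites known additivity of $I_\infty$ where you prove the factorization directly, and it points to its BEC example for asymmetry where you compute a Z-channel instance (which checks out: $\log(3/2)$ versus $\log(5/3)$).
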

\vspace{1mm}

Note that properties 1) and  2) along with the data processing inequality are the axioms we stated in (R3). Property 5) reveals a potential ``weakness'' in some suggested leakage metrics, including mutual information. In particular, there is no reason to expect a priori that $X$ leaks about $Y$ as much as $Y$ leaks about $X$. Therefore, metrics that are symmetric by design miss that fact (this is in contrast with R{\'e}nyi's axiom that a \emph{dependence} measure should be symmetric~\cite{RenyiMaxCorr}).  Finally, 
property 6) shows that minimizing maximal leakage, for a fixed support of $P_X$, amounts to minimizing a convex function. That is, one can efficiently solve the problem of finding the randomization mechanism $P_{Y|X}$ that minimizes maximal leakage, subject to a convex constraint. \\

We evaluate $\ml{X}{Y}$ for some special cases.
\begin{Example} \label{exampleBSC}
If $X \sim \text{Ber}(q)$, $0 < q <1$, and $Y$ is the output of a BSC with parameter $p$, $0 \leq p \leq 1/2$, then $\ml{X}{Y} = \log(2(1-p))$. 
\end{Example}
\vspace{1mm}
\begin{Example} \label{exampleasy1}
If $X \sim \text{Ber}(q)$, $0 < q <1$, and $Y$ is the output of a BEC with parameter $\epsilon$, $0 \leq \epsilon < 1$, then $\ml{X}{Y} = \log(2-\epsilon)$, and $\mathcal{L}(Y \!\! \to \!\! X)=\log 2$. 
\end{Example}
\vspace{1mm}
\begin{Example} \label{exampledeterministic}
For any deterministic law $P_{Y|X}$, ${\ml{X}{Y} = \log | \mathrm{supp}(Y)| }$.
\end{Example}
\vspace{1mm}
Consider the examples from the introduction that showed that expected distortion and mutual information do not meet our fourth requirement (R4).
\vspace{1mm}
\begin{Example}[cf.~Example~\ref{Example:Hamming}]
Given $n \in \mathbb{N}$, let $X^n$ be i.i.d $\sim \mathrm{Ber}(1/2)$ and let $K \sim \mathrm{Ber}(1/2)$ be independent of $X^n$. Let $P_{Y|X^n}$ be as follows: if $K=0$, $Y=X^n$; otherwise, $Y=\bar{X}^n$ (i.e., flip all the bits of $X^n$). This scheme is optimal from an expected Hamming distortion viewpoint. On the other hand, $\ml{X^n}{Y^n} = (n-1)\log  2$, which is exactly describing that we know $X^n$ up to 1 bit.
\end{Example}
\vspace{1mm}
\begin{Example}[cf.~Example~\ref{Example:Guessing}]
Given $n \in \mathbb{N}$, let $\mathcal{X}= \{0,1\}^{8n}$ and $X \sim \mathrm{Unif}(\mathcal{X})$. Now consider the following two conditional distributions:
\begin{align*}
P_{Y|X} = \begin{cases}
X, & \text{if } X ~\mathrm{mod}~ 8 = 0, \\
1, & \text{otherwise}.
\end{cases} \quad \text{ and } \quad
P_{Z|X}= (X_1, X_2, \dots, X_{n+1}). 
\end{align*}
Then $\ml{X}{Y} = \log(2^{8n-3}+1) > \ml{X}{Z} = (n+1) \log 2$, whereas $I(X;Y) \approx (n+0.169)\log 2 < I(X;Z) = (n+1)\log 2$.
\end{Example}
\vspace{1mm}
In the next section, we elaborate on the comparison between mutual information and maximal leakage. We also comment on the relation to the computer security and computer science literature, before proving Theorem~\ref{thmmainthm} in Section~\ref{sec:proofmain}.

\subsection{Comparison with Related Metrics}

\subsubsection{Mutual Information}  
We first compare maximal leakage with mutual information in the following lemma. It shows that $\ml{X}{Y}$ upper-bounds $I(X;Y)$, and no scalar multiple
%\footnote{We are grateful to Emre Telatar for suggesting this stronger form.} 
of $I(X;Y)$ can upper-bound $\ml{X}{Y}$.
\vspace{1mm}
\begin{Lemma} \label{lemmacompareMI}
For any joint distribution $P_{XY}$ on finite alphabets $\mathcal{X}$ and $\mathcal{Y}$, 
$\ml{X}{Y} \geq I(X;Y)$. Moreover, for any $c > 0$, there exists $P_{XY}$ such that $\ml{X}{Y} \geq c I(X;Y)$. Furthermore. 
 $ \ml{X}{Y} = I(X;Y)$ if and only if
\begin{enumerate}
\item If $P_{XY}(x,y) > 0$ and $P_{XY}(x',y) > 0$, then $P_{Y|X}(y|x) = P_{Y|X}(y|x')$.
\item For all $y,y' \in \text{supp}(Y)$,
\begin{align*}
\sum_{x: P_{XY}(x,y) >0} P_X(x) = \sum_{x': P_{XY}(x',y')>0} P_X(x').
\end{align*}
\end{enumerate} 
\end{Lemma}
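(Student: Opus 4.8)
The plan is to work from the closed-form expression in Theorem~\ref{thmmainthm}, namely $\ml{X}{Y} = \log \sum_{y} \max_{x : P_X(x)>0} P_{Y|X}(y|x)$, and to obtain $\ml{X}{Y} \ge I(X;Y)$ via a two-step chain of inequalities. Write $m(y) := \max_{x' : P_X(x')>0} P_{Y|X}(y|x')$. In $I(X;Y) = \sum_{x,y : P_{XY}(x,y)>0} P_{XY}(x,y)\log\frac{P_{Y|X}(y|x)}{P_Y(y)}$, the first step replaces each numerator $P_{Y|X}(y|x)$ by $m(y)$ (legitimate since $P_X(x)>0$ on the summation range), which only increases the sum, and then collapses the $x$-sum, giving $I(X;Y) \le \sum_{y \in \mathrm{supp}(Y)} P_Y(y)\log\frac{m(y)}{P_Y(y)}$ (note $m(y)\ge P_Y(y)$, so no domain issues). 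The second step applies Jensen's inequality for the concave function $\log$ to this $P_Y$-average, pulling the sum inside the logarithm: $\sum_{y} P_Y(y)\log\frac{m(y)}{P_Y(y)} \le \log \sum_{y\in\mathrm{supp}(Y)} m(y) = \log\sum_{y\in\mathcal{Y}} m(y) = \ml{X}{Y}$, where the middle equality holds because $m(y)=0$ whenever $y\notin\mathrm{supp}(Y)$. (An alternative, shorter route uses $I_\alpha(X;Y) = \inf_{Q_Y} D_\alpha(P_{XY}\|P_X\times Q_Y)$ and monotonicity of Rényi divergence in $\alpha$: for every $Q_Y$, $D_\infty(P_{XY}\|P_X\times Q_Y) \ge D(P_{XY}\|P_X\times Q_Y)\ge I(X;Y)$, and taking the infimum over $Q_Y$ gives $I_\infty(X;Y)\ge I(X;Y)$. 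I would present the direct computation since it simultaneously yields the equality conditions.)

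For the claim that no multiple of $I(X;Y)$ dominates $\ml{X}{Y}$, I would take $Y=X$ with $X\sim\mathrm{Ber}(q)$, $0<q<1$. Then $\ml{X}{Y}=\log 2$ by Corollary~\ref{corrprop} (part~4), or Example~\ref{exampledeterministic}, while $I(X;Y)=H(X)=-q\log q-(1-q)\log(1-q)\to 0$ as $q\to 0$. Given any $c>0$, choosing $q$ small enough that $-q\log q-(1-q)\log(1-q) < (\log 2)/c$ yields $\ml{X}{Y} > c\,I(X;Y)$.

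For the equality characterization, $\ml{X}{Y}=I(X;Y)$ holds iff both inequalities above are tight. Tightness of the first step requires $P_{Y|X}(y|x)=m(y)$ for every $(x,y)$ with $P_{XY}(x,y)>0$; for $y\in\mathrm{supp}(Y)$ the maximum $m(y)$ is attained at some $x'$ with $P_{XY}(x',y)>0$ (any $x'$ with $P_X(x')>0$ but $P_{XY}(x',y)=0$ contributes value $0$), so this is equivalent to saying all $P_{Y|X}(y|\cdot)$ on the $y$-slice of the support agree, i.e.\ condition~1. Tightness of the Jensen step requires $m(y)/P_Y(y)$ to be constant over $y\in\mathrm{supp}(Y)$; writing $c_y$ for the common slice-value granted by condition~1, one has $P_Y(y)=c_y\sum_{x:P_{XY}(x,y)>0}P_X(x)$ and $m(y)=c_y$, hence $m(y)/P_Y(y)=1\big/\sum_{x:P_{XY}(x,y)>0}P_X(x)$, which is constant precisely when condition~2 holds. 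Therefore equality holds iff conditions 1 and 2 both hold.

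The main obstacle I anticipate is purely bookkeeping: keeping straight the two notions of ``$y$-slice support'' — the set $\{x:P_X(x)>0\}$ over which the maximum $m(y)$ is taken versus the set $\{x:P_{XY}(x,y)>0\}$ appearing in the stated conditions — and checking that symbols $y\notin\mathrm{supp}(Y)$ are harmless (they contribute $0$ to $\sum_y m(y)$ since $P_Y(y)=0$ forces $P_{Y|X}(y|x)=0$ for all $x$ with $P_X(x)>0$, and they contribute $0$ to $I(X;Y)$ as well). Once these are pinned down, matching the two tightness conditions to conditions 1 and 2 is routine.
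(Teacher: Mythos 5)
Your proof is correct, and it follows the same basic strategy as the paper's (Appendix~\ref{app:lemmaMIMLproof}) with two small but genuine differences. First, the inequality chain: the paper starts from $I(X;Y)=\sum P_{XY}\log\frac{P_{Y|X}}{P_Y}$, applies Jensen \emph{first} to get $\log\sum_{x,y}P_{X|Y}(x|y)P_{Y|X}(y|x)$, and \emph{then} bounds $P_{Y|X}(y|x)$ by $m(y)$; you apply the same two moves in the opposite order, passing through the intermediate quantity $\sum_y P_Y(y)\log\bigl(m(y)/P_Y(y)\bigr)$. The two orderings are interchangeable here, and both isolate the same two tightness conditions, which you correctly match to conditions 1) and 2) (your observation that for $y\in\mathrm{supp}(Y)$ the max $m(y)$ is attained on the slice $\{x:P_{XY}(x,y)>0\}$ is exactly the point the paper leaves implicit). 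Second, the unboundedness witness: the paper uses $X\sim\mathrm{Ber}(1/2)$ through a BSC with $p\to 1/2$, which requires checking $\lim_{p\to 1/2}\log(2(1-p))/(\log 2-H(p))=\infty$, whereas you take $Y=X\sim\mathrm{Ber}(q)$ with $q\to 0$, where $\ml{X}{Y}=\log 2$ is pinned while $I(X;Y)=H(q)\to 0$ immediately; your example is simpler to verify, though the paper's has the feature of exhibiting a nondegenerate channel. Both are complete proofs.
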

\begin{Remark} \label{remarksing} A joint distribution satisfying condition 1) is called \emph{singular}~\cite{YucelRCBound}. Moreover, if $X$ has full support, $\ml{X}{Y}= I(X;Y) \Rightarrow \ml{X}{Y} = C(P_{Y|X}).$
\end{Remark}
\vspace{1mm}
\begin{proof}
That $I_\infty(X;Y) \geq I(X;Y)$ is already known~\cite{SibsonInfo,Verdualpha}. For the stronger statement, it suffices to show it for binary $\mathcal{X}$ and $\mathcal{Y}$. To that end, let $X \sim \mathrm{Ber}(1/2)$ and let $P_{Y|X}$ be a BSC with parameter $p \in (0,1/2)$. Then $\ml{X}{Y} = \log(2(1-p))$ and $I(X;Y)= \log 2 - H(p)$ (where the entropy function is computed using the natural logarithm). One can readily verify that
\begin{align*}
\lim_{p \rightarrow 1/2} \frac{\log(2(1-p))}{\log 2 - H(p)} = + \infty.
\end{align*}
The conditions for equality can be readily verified, and are included in Appendix~\ref{app:lemmaMIMLproof} for completeness.
\end{proof}

The lemma shows that a small maximal leakage is a more stringent requirement than a small mutual information. Since $\ml{X}{Y} $ depends on $P_X$ only through its support, it follows that maximal leakage is at least the Shannon capacity of the channel $P_{Y|X}$ when $X$ has full support, and this inequality can be strict (as in the BSC example in the proof of the lemma).
This justifies the claim in the introduction that the Shannon
capacity of a side-channel does not necessarily upper-bound its leakage.
The maximization in the definition of maximal leakage hints at the reason 
why. In particular, Shannon capacity is concerned with (the size of) message sets that can be \emph{reliably} reconstructed at the receiver, i.e., $\Pr(U=\hat{U}(Y)) \geq 1 - \epsilon$ for some small $\epsilon$. Leakage, on the other hand, is concerned with the advantage in guessing, without any notion of reliability. This observation is made mathematically precise in Section~\ref{subsec:compareCapacity}. On the other hand, local differential privacy~\cite{LocalDiffPrivacy}, which some regard as too pessimistic~(e.g.,~\cite{LocalDiffPrivacy}), does upper-bound maximal leakage. This is further explored in Section~\ref{subsec:LDP}. 

\subsubsection{g-leakage} For discrete $X$ and $Y$, Braun \emph{et al.}~\cite{CompSecQuantLeakage} define leakage as follows
\begin{align} \label{eq:BraunML}
ML(P_{Y|X}) = \sup_{P_X} \frac{\sup_{X-Y-\hat{X}} \Pr(X=\hat{X}) }{\max_{x \in \mathcal{X}} P_X(x)}.
\end{align}
This definition assumes the adversary wishes to guess $X$ itself, and hence does not meet our second requirement (R2). However, it is equal to $\ml{X}{Y}$ when $X$ has full support. Alvim \emph{et al.}~\cite{LeakageGeneralized} define $g$-leakage by introducing a gain function $g: \mathcal{X} \times \mathcal{\hat{X}} \rightarrow [0,1]$, where $\hat{\mathcal{X}}$ is a finite set. 
Then
\begin{align} \label{eq:gleakage}
\mathcal{ML}_g (P_X, P_{Y|X}) = \frac{ \sup_{X-Y-\hat{X}} \E[g(X,\hat{X})] }{\max_{\hat{x} \in \hat{\mathcal{X}}} \E[g(X,\hat{x})] }.
\end{align} It is shown~\cite{AlvimAdditiveMultiplicative} that
\begin{align} \label{eq:gleakmax}
\sup_{\hat{\mathcal{X}}, g: \mathcal{X} \times \mathcal{\hat{X}} \rightarrow [0,1]} \mathcal{ML}_g (P_X, P_{Y|X}) = \ml{X}{Y}.
\end{align}
This definition however does not explicitly account for random functions of $X$, whereas we have seen that in many cases the adversary could be interested in a hidden random variable $U$. Moreover, there is no operational interpretation attached to the $g$ that achieves the maximum. Nonetheless, these metrics
are quite similar to the one introduced here.

\subsubsection{Semantic Security} The semantic and entropic security literature~\cite{SemanticSec,EntropicSec,EntropicSec2} consider the difference between the guessing probabilities as opposed to the ratio considered in Definition~\ref{defmaxleakage}. Since $U$'s of interest, such as passwords, are typically hard to guess (i.e., $\max_u P_U(u)$ is small), the ratio is arguably the more appropriate measure of the change. It is also the more natural choice when viewing leakage in terms of leaked bits. Nevertheless, the following simple argument bounds the maximum difference in terms of maximal leakage. 
\vspace{1mm}
\begin{Lemma} \label{lemmaadditive}
For any joint distribution $P_{XY}$ on alphabets $\mathcal{X}$ and $\mathcal{Y}$, 
\begin{align*}
\sup_{U: U-X-Y} \left( \sup_{\hat{u}(\cdot)} \Pr( U = \hat{u}(Y)) - \max_{u \in \mathcal{U}} P_U(u) \right) \leq 1- e^{-\mathcal{L}(X \to Y)},
\end{align*}
where the supremum is over all $U$ taking values in a finite, but arbitrary, alphabet.
\end{Lemma}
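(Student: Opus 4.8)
The plan is to reduce the statement to the multiplicative guarantee that already follows from Definition~\ref{defmaxleakage}, and then optimize a one-variable expression. Fix a random variable $U$ with $U-X-Y$, taking values in a finite alphabet $\mathcal{U}$, and abbreviate $p^\star = \max_{u \in \mathcal{U}} P_U(u)$ and $g = \sup_{\hat{u}(\cdot)} \Pr(U = \hat{u}(Y))$. First I would invoke the guarantee stated just after Definition~\ref{defmaxleakage}, namely $g \le e^{\ml{X}{Y}} \, p^\star$. Separately, $g \le 1$ trivially, since $g$ is a probability. Combining the two bounds yields $g - p^\star \le \min\bigl\{ e^{\ml{X}{Y}} p^\star,\ 1 \bigr\} - p^\star$.

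Next I would bound the right-hand side uniformly over $p^\star \in (0,1]$. Writing $L = \ml{X}{Y}$ and $\phi(t) = \min\{e^{L} t, 1\} - t$, the function $\phi$ equals $(e^{L}-1)t$ on $[0, e^{-L}]$ (increasing) and equals $1 - t$ on $[e^{-L}, 1]$ (decreasing), so it attains its maximum at $t = e^{-L}$ with value $(e^{L}-1) e^{-L} = 1 - e^{-L}$. Hence $g - p^\star \le 1 - e^{-\ml{X}{Y}}$ for every admissible $U$, and taking the supremum over $U$ gives the claimed inequality. (If $Y$ is such that $\ml{X}{Y} = \infty$, the right-hand side is $1$ and the left-hand side is at most $1$, so the bound is trivial.)

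I do not expect a genuine obstacle here; the only subtlety is that the multiplicative bound $g \le e^{L} p^\star$ by itself does not suffice, because $(e^{L}-1)p^\star$ can exceed $1 - e^{-L}$ when $p^\star$ is close to $1$. It is precisely the interplay between this bound and the trivial bound $g \le 1$, together with the optimization over $p^\star$, that produces the sharp constant $1 - e^{-\ml{X}{Y}}$.
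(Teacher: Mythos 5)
Your proof is correct and rests on exactly the same two ingredients as the paper's: the multiplicative guarantee $g \le e^{\ml{X}{Y}} p^\star$ from Definition~\ref{defmaxleakage} together with the trivial bound $g \le 1$. The paper just packages the final step more directly, writing $g - p^\star \le g\left(1 - e^{-\ml{X}{Y}}\right) \le 1 - e^{-\ml{X}{Y}}$, rather than optimizing $\min\{e^{L}t,1\}-t$ over $t$, but this is a cosmetic difference.
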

\vspace{1mm}
\begin{proof}
Consider any $U$ satisfying $U-X-Y$. Then
$$\frac{\sup_{\hat{u}(\cdot)} \Pr( U = \hat{u}(Y))}{\max_{u \in \mathcal{U}} P_U(u)} \leq e^{\mathcal{L}(X \to Y)}.$$ Hence,
\begin{align*}
\sup_{\hat{u}(\cdot)} \Pr( U = \hat{u}(Y)) - \max_{u \in \mathcal{U}} P_U(u) \leq \Big( \sup_{\hat{u}(\cdot)} \Pr( U = \hat{u}(Y)) \Big) \left( 1- e^{-\mathcal{L}(X \to Y)} \right) \leq 1- e^{-\mathcal{L}(X \to Y)}.
\end{align*}
\end{proof}

This bound is nontrivial when $\max_{u} P_U(u) < e^{-\mathcal{L}(X \to Y)}$, i.e., $H_\infty(U) > \ml{X}{Y}$. It is worth noting that Alvim \emph{et al.} showed that 
\[ \sup_{\hat{\mathcal{X}}, g: \mathcal{X} \times \mathcal{\hat{X}} \rightarrow \mathbb{R}} \left( \sup_{X-Y-\hat{X}} \E[g(X,\hat{X})] -\max_{\hat{x} \in \hat{\mathcal{X}}} \E[g(X,\hat{x})] \right),\]
where $g$ is ``1-spanning''~\cite[Definition 3]{AlvimAdditiveMultiplicative}, can be efficiently computed~\cite[Theorem 17, Corollary 18]{AlvimAdditiveMultiplicative}. On the other hand, for a given threshold $t$, it is NP-hard~\cite[Theorem 11]{AlvimAdditiveMultiplicative} to decide whether 
\begin{align}
\sup_{P_X} \left( \sup_{\hat{x}(\cdot)} \Pr( X = \hat{x}(Y)) - \max_{x \in \mathcal{X}} P_X(x) \right)  \geq t.
\end{align}
Lemma~\ref{lemmaadditive}, however, gives a simple bound on the latter quantity.

\subsection{Proof of Theorem~\ref{thmmainthm}} \label{sec:proofmain}
Assume, without loss of generality, that $P_X(x) > 0$ for all $x \in \mathcal{X}$.
To show that $\ml{X}{Y} \leq I_\infty (X ; Y)$, consider any $U$ satisfying $U-X-Y$. Define
\begin{equation} \label{eqdefleakU}
\ml{X}{Y}[U]= \log \frac{\sum_{y \in \mathcal{Y}} \max_{u \in \mathcal{U}} P_{UY}(u,y)  }{\max_{u \in \mathcal{U}} P_U(u)},
\end{equation}
so that $\ml{X}{Y} = \sup_{U: U-X-Y} \ml{X}{Y}[U]$. Then
\begin{align*}
 \sum_{y \in \mathcal{Y}} \max_{u \in \mathcal{U}}  P_{UY}(u,y)   
& = \sum_{y\in \mathcal{Y}} \max_{u \in \mathcal{U}} \sum_{x \in \mathcal{X}} P_X(x) P_{U|X}(u|x) P_{Y|X}(y|x)  \\
& \leq  \sum_{y\in \mathcal{Y}} \max_{u \in \mathcal{U}} \sum_{x \in \mathcal{X}} P_X(x) P_{U|X}(u|x) \max_{x' \in \mathcal{X}} P_{Y|X}(y|x')  \\
& = \sum_{y\in \mathcal{Y}} \left( \max_{x' \in \mathcal{X}} P_{Y|X}(y|x') \right) \max_{u \in \mathcal{U}} \sum_{x \in \mathcal{X}} P_X(x) P_{U|X}(u|x)    \\
& =  \sum_{y\in \mathcal{Y}} \max_{x \in \mathcal{X}} P_{Y|X}(y|x) \max_{u \in \mathcal{U}} P_U(u).
\end{align*}
Therefore, $\ml{X}{Y}[U] \leq I_\infty(X ; Y)$ for all $P_{U|X}$, hence $\ml{X}{Y} \leq I_\infty(X ; Y)$. \\

For the reverse inequality, we construct a $P_{U|X}$ for which $\ml{X}{Y}[U] = I_\infty(X ; Y)$, which we will call the ``shattering'' $P_{U|X}$. To that end, let $p^\star = \min_{x \in \mathcal{X}} P_X(x)$. For each $x \in \mathcal{X}$, let $k(x) = P_X(x)/p^{\star}$, and let $ \mathcal{U} = \bigcup_{x \in \mathcal{X}} \{ (x,1), (x,2), \ldots, (x,\lceil k(x)\rceil) \}$.  For each $u=(i_u,j_u) \in \mathcal{U}$ and $x \in \mathcal{X}$, let $P_{U|X}(u|x)$ be:
\begin{align} \label{eqshattering}
 P_{U|X}((i_u,j_u)|x) = \begin{cases}
\frac{p^{\star}}{P_X(x)}, &  i_u=x, ~~1 \leq j_u \leq \lfloor k(x) \rfloor , \\
1 - \frac{(\lceil k(x) \rceil -1)p^{\star}}{P_X(x)}, & i_u=x,~~ j_u=\lceil k(x) \rceil, \\
0, &  i_u \neq x, ~~1 \leq j_u \leq \lceil k(i_u) \rceil.
\end{cases}
\end{align} 
It is easy to check that if $\lfloor k(x) \rfloor = \lceil k(x) \rceil$, then the corresponding formulas are equal.
Then, for each $((i_u,j_u),x) \in \mathcal{U} \times \mathcal{X}$,
\begin{align} \label{eqshatteringjointX}
P_{UX}((i_u,j_u),x)  = \begin{cases}
p^{\star}, &  i_u=x, ~1 \leq j_u \leq \lfloor k(x) \rfloor , \\
P_X(x)  - ( \lceil k(x) \rceil   -1)p^{\star}, &  i_u=x, ~j_u=\lceil k(x) \rceil, \\
0, &  i_u \neq x, ~1 \leq j_u \leq \lceil k(i_u) \rceil.
\end{cases}
\end{align} 
Note that the supports of $P_{U|X=x}$ is disjoint for each distinct $x$, and it effectively ``shatters'' $x$ into shards of probability $p^\star$. Now note that
\begin{equation}
\label{eqleakdenom}
\max_{u \in \mathcal{U}} P_U(u) =   \max_{(i_u,j_u) \in \mathcal{U}} P_{UX}((i_u,j_u),i_u) = p^{\star}.
\end{equation}
Now consider any $(u,y) \in \mathcal{U} \times \mathcal{Y}$. We have
\begin{align} \label{eqshatteringjointY}
P_{UY} ((i_u,j_u),y) 
& = \sum_{x \in \mathcal{X}} P_X(x) P_{U|X}((i_u,j_u)|x) P_{Y|X}(y|x) \notag \\
& = P_X(i_u) P_{U|X}((i_u,j_u)|i_u) P_{Y|X}(y|i_u) \notag \\
& = \begin{cases} 
p^{\star} P_{Y|X}(y|i_u), &  1 \leq j_u \leq \lfloor k(i_u) \rfloor, \\
 (P_X(x)  -   (\lceil k(x) \rceil -1)p^{\star})P_{Y|X}(y|i_u), &  j_u = \lceil k(i_u) \rceil.
\end{cases}
\end{align}
Then, for a given $y \in \mathcal{Y}$,
\begin{align}
\max_{(i_u,j_u) \in \mathcal{U}} P_{UY}((i_u,j_u),y) &  = \max_{(i_u,1) \in \mathcal{U}} p^{\star} P_{Y|X}(y|i_u)  = \max_{x \in \mathcal{X}} p^{\star} P_{Y|X}(y|x).  \label{eqleaknum}
\end{align}
Finally, we get
\begin{equation*}
\ml{X}{Y} \geq \ml{X}{Y}[U] = \log \sum_{y\in \mathcal{Y}} \max_{x \in \mathcal{X}} P_{Y|X} (y|x),
\end{equation*}
where the  inequality follows from the definition, and the equality follows from equations~\eqref{eqdefleakU},~\eqref{eqleakdenom}, and~\eqref{eqleaknum}. \hfill $\blacksquare$ 

Note that in the above proof, the conditional distribution (given in~\eqref{eqshattering}) that achieves the supremum in~\eqref{eq:maxleakagedef} depends on $P_{XY}$ only through the $X-$marginal, $P_X$. So we get the following proposition.
\vspace{1mm}
\begin{Proposition}~\label{prop:shattering}
Let $\mathcal{X}$ be a finite alphabet and $P_X$ a distribution on $\mathcal{X}$. Then the ``shattering'' $P_{U|X}$ defined in~\eqref{eqshattering} achieves the supremum in~\eqref{eq:maxleakagedef} for all finite alphabets $\mathcal{Y}$ and conditional distributions $P_{Y|X}$.
\end{Proposition}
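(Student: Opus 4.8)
The plan is to extract the statement from the work already done inside the proof of Theorem~\ref{thmmainthm}, rather than to reprove anything from scratch. The reverse-inequality half of that proof exhibits one specific auxiliary law $P_{U|X}$, the ``shattering'' law of~\eqref{eqshattering}, and the first thing I would stress is the observation already flagged before the proposition: that construction reads \emph{only} $P_X$ off of the joint distribution. Indeed, $p^\star = \min_{x\in\mathcal{X}}P_X(x)$, the counts $k(x) = P_X(x)/p^\star$, the alphabet $\mathcal{U} = \bigcup_{x}\{(x,1),\dots,(x,\lceil k(x)\rceil)\}$, and the entries $P_{U|X}((i_u,j_u)|x)$ in~\eqref{eqshattering} all depend on $P_{XY}$ solely through its $X$-marginal; neither $\mathcal{Y}$ nor $P_{Y|X}$ appears. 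Hence this one $P_{U|X}$ is a legitimate choice of auxiliary variable satisfying $U-X-Y$ simultaneously for every finite $\mathcal{Y}$ and every channel $P_{Y|X}$. (As in the theorem's proof, $P_X$ is assumed to have full support, which is exactly what makes $p^\star>0$ and~\eqref{eqshattering} well defined; otherwise one restricts $\mathcal{X}$ to $\mathrm{supp}(P_X)$.)

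Next I would re-run, with $P_{Y|X}$ now arbitrary, the two computations in the proof that refer to this $U$. Equation~\eqref{eqleakdenom} gives $\max_{u\in\mathcal{U}}P_U(u) = p^\star$, which is a statement about the $U$-marginal and so again uses only $P_X$. Equations~\eqref{eqshatteringjointY}--\eqref{eqleaknum} give, for every $y\in\mathcal{Y}$, $\max_{u\in\mathcal{U}}P_{UY}(u,y) = p^\star\,\max_{x\in\mathcal{X}}P_{Y|X}(y|x)$: the disjoint-support structure of the $P_{U|X=x}$ forces the maximizing $u$ to be a $j_u=1$ shard of the $x$ that maximizes $P_{Y|X}(y|x)$. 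Summing over $y$ and substituting into~\eqref{eqdefleakU} yields, for this particular $U$, $\ml{X}{Y}[U] = \log\sum_{y\in\mathcal{Y}}\max_{x\in\mathcal{X}}P_{Y|X}(y|x)$, for whatever $\mathcal{Y}$ and $P_{Y|X}$ were plugged in.

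Finally I would invoke Theorem~\ref{thmmainthm} itself: its right-hand side is precisely $\log\sum_{y}\max_{x:P_X(x)>0}P_{Y|X}(y|x) = I_\infty(X;Y) = \ml{X}{Y}$. Since, by the identification used in that proof, $\ml{X}{Y} = \sup_{U:\,U-X-Y}\ml{X}{Y}[U]$ (with the inner supremum over $\hat U$ attained by taking $\hat U$ to be the MAP estimate of $U$ from $Y$, so that $\Pr(U=\hat U) = \sum_y\max_u P_{UY}(u,y)$), and since we have produced a $U$ with $\ml{X}{Y}[U] = \ml{X}{Y}$, this $P_{U|X}$ together with the corresponding MAP $\hat U$ attains the supremum in~\eqref{eq:maxleakagedef}. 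I do not expect a genuine obstacle: all the analytic content lives in the already-established Theorem~\ref{thmmainthm}, and the only thing requiring care is the essentially bookkeeping point of making explicit that the shattering construction is a function of $P_X$ alone, so that a single $U$ works uniformly over all $(\mathcal{Y},P_{Y|X})$.
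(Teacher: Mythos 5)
Your proposal is correct and is essentially identical to the paper's own argument: the paper derives Proposition~\ref{prop:shattering} precisely by observing that the shattering construction in~\eqref{eqshattering} depends on $P_{XY}$ only through $P_X$, and that the computations \eqref{eqleakdenom}--\eqref{eqleaknum} in the proof of Theorem~\ref{thmmainthm} already show it attains $\log\sum_{y}\max_{x}P_{Y|X}(y|x)=\ml{X}{Y}$ for an arbitrary $P_{Y|X}$. Your additional remarks about the MAP estimator attaining the inner supremum over $\hat U$ and about restricting to $\mathrm{supp}(P_X)$ are correct bookkeeping that the paper leaves implicit.
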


\section{Maximal Leakage: Variations and Extensions} \label{sec:variations}

We now consider several natural variations to our threat model. In particular, we consider the following scenarios. 
\begin{enumerate}
\item The adversary chooses the variable of interest $U$ \emph{after} observing $Y$, i.e., for different realizations of $Y$, they might attempt to guess different functions of $X$.
\item The adversary only needs their guess to be within a certain distance of the true value of $U$.
\item The adversary can make several guesses.
\item The adversary attempts to maximize a gain function defined on $\mathcal{U} \times \hat{\mathcal{U}}$ for some alphabet $\hat{\mathcal{U}}$.
\end{enumerate}
We modify the definition of maximal leakage accordingly for each scenario. However, for each of these cases, the resulting computable characterization is unchanged. This shows that the definition of maximal leakage is robust, and meets the requirements we presented in the introduction. In particular, it has several useful operational interpretations, and it requires minimal assumptions about the adversary's goal.

Furthermore, we extend the notion of maximal leakage in two directions. First, we propose a conditional form of leakage $\cml{X}{Y}{Z}$, where $Z$ represents side information available at the adversary. Finally, we generalize Theorem~\ref{thmmainthm} to account for a large class of random variables, including point processes. We use the general formula to analyze a simple model of the SSH side-channel.

\subsection{Multiple Functions of Interest}
In our threat model, we assumed that the adversary is interested in a specific randomized function of $X$. However, they could be interested in several functions and choose which one to guess only after seeing the realization of $Y$. To account for this, we modify the definition of maximal leakage as follows.
\vspace{1mm}
\begin{Definition}[Opportunistic Maximal Leakage] \label{defmaxleakY}
Given a joint distribution $P_{XY}$ on alphabets $\mathcal{X}$ and $\mathcal{Y}$, define
\begin{align}
\label{eq:oppML:1}
\mly{X}{Y} & =  \log \sum_{y \in \mathcal{Y}} P_Y(y) \sup_{U: U-X-Y} \frac{\max_{u \in \mathcal{U} }P_{U|Y}(u|y)}{\max_{u \in \mathcal{U} }P_{U}(u)} \\
\label{eq:oppML:2}
       & =  \sup_{(U_y, y \in \mathcal{Y})-X-Y} \log 
                    \sum_{y \in \mathcal{Y}} P_Y(y)   
                  \frac{\max_{u \in \mathcal{U}_y }P_{U_y|Y}(u|y)}
                   {\max_{u \in \mathcal{U}_y }P_{U_y}(u)}  \\
\label{eq:oppML:3}
       & =  \sup_{U} \log \sum_{y \in \mathcal{Y}} P_Y(y) 
             \frac{\max_{u \in \mathcal{U}} \sum_{x \in \mathcal{X}}
                              P_{U|X,Y}(u|x,y)P_{X|Y}(x|y)}
                                 {\max_{u \in \mathcal{U}} 
                            \sum_{x \in \mathcal{X}}P_{U|X,Y}(u|x,y) P_X(x)},
\end{align}
where the $U$ variables in all three suprema take values in finite but
arbitrary alphabets.
\end{Definition}
\vspace{1mm}
The different $U_y, y \in \mathcal{Y}$ in (\ref{eq:oppML:2}) can 
be interpreted as different
secrets that the adversary might attempt to guess. The adversary 
opportunistically attempts to guess secret $U_y$ when it observes $Y = y$.
Notably, allowing the adversary this additional freedom does not
change the result.
\vspace{1mm}
\begin{Theorem} \label{thmmaxleakY}
For any joint distribution $P_{XY}$ on finite alphabets $\mathcal{X}$ and $\mathcal{Y}$, \[ \ml{X}{Y} = \mly{X}{Y}. \]
\end{Theorem}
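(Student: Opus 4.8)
The plan is to show the two inequalities $\mly{X}{Y} \geq \ml{X}{Y}$ and $\mly{X}{Y} \leq \ml{X}{Y}$ separately, using the closed-form expression $\ml{X}{Y} = \log \sum_{y} \max_{x} P_{Y|X}(y|x)$ from Theorem~\ref{thmmainthm} as the common target. The easy direction is $\mly{X}{Y} \geq \ml{X}{Y}$: since the opportunistic adversary has strictly more freedom than the ordinary adversary (they may commit to a single $U$ for all $y$, which is a special case of the family $(U_y)_{y\in\mathcal{Y}}$), the supremum defining $\mly{X}{Y}$ is over a larger set. More concretely, I would take the "shattering" $P_{U|X}$ from Proposition~\ref{prop:shattering} (which achieves $\ml{X}{Y}$), plug it into the expression~\eqref{eq:oppML:2} with $U_y = U$ for every $y$, and observe that it yields exactly $\log \sum_y \max_x P_{Y|X}(y|x)$; so $\mly{X}{Y} \geq \ml{X}{Y}$.

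For the reverse inequality $\mly{X}{Y} \leq \ml{X}{Y}$, I would work from the form~\eqref{eq:oppML:3} (equivalently~\eqref{eq:oppML:1}) and bound, for each fixed $y$ and each candidate $U$, the per-$y$ term $P_Y(y) \cdot \dfrac{\max_{u} P_{U|Y}(u|y)}{\max_{u} P_U(u)}$. The point is that $P_Y(y)\max_u P_{U|Y}(u|y) = \max_u P_{UY}(u,y) = \max_u \sum_x P_X(x)P_{U|X}(u|x)P_{Y|X}(y|x)$, and exactly as in the proof of Theorem~\ref{thmmainthm} one bounds $P_{Y|X}(y|x) \leq \max_{x'} P_{Y|X}(y|x')$ inside the sum to get $\max_u P_{UY}(u,y) \leq \big(\max_{x'} P_{Y|X}(y|x')\big)\max_u P_U(u)$. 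Dividing by $\max_u P_U(u)$ and summing over $y$ gives $\sum_y P_Y(y)\dfrac{\max_u P_{U|Y}(u|y)}{\max_u P_U(u)} \leq \sum_y \max_{x'} P_{Y|X}(y|x')$, and taking $\log$ and the supremum over $U$ (and over families $(U_y)$, handled the same way term-by-term) yields $\mly{X}{Y} \leq I_\infty(X;Y) = \ml{X}{Y}$.

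The main subtlety—and the step I expect to require the most care—is verifying that the three expressions~\eqref{eq:oppML:1}–\eqref{eq:oppML:3} genuinely coincide, i.e., that pushing the supremum over $U$ inside the sum over $y$ (replacing a single $U$ by a $y$-indexed family $(U_y)$) does not change the value. The equality of~\eqref{eq:oppML:1} and~\eqref{eq:oppML:2} is the crux: given an optimal collection $(U_y)_{y\in\mathcal{Y}}$ one wants to realize the per-$y$ optima simultaneously by a single auxiliary variable, which one does by taking the "disjoint union" $U = (Y^*, U_{Y^*})$ where the alphabets of the individual $U_y$ are made disjoint and $P_{U|X}$ is a suitable mixture so that conditioned on $Y=y$ the relevant coordinate behaves like $U_y$; the Markov structure $U-X-Y$ must be checked to survive this construction (it does, because each $P_{U_y|X}$ already satisfies $U_y - X - Y$ and we are only relabeling and mixing over the $x$-measurable part). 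Once~\eqref{eq:oppML:1}=\eqref{eq:oppML:2} is established, the equality with~\eqref{eq:oppML:3} is just the definition of $P_{U|Y}$ and $P_U$ via marginalization over $x$. Since the upper bound argument above applies verbatim to each term of~\eqref{eq:oppML:2}, the proof closes regardless of which of the three forms one regards as primary.
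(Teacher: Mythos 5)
Your proof is correct and follows essentially the same route as the paper: the reverse inequality rests on exactly the per-realization bound $\max_u P_{UY}(u,y) \le \bigl(\max_{x} P_{Y|X}(y|x)\bigr)\max_u P_U(u)$, which is the content of the paper's Proposition~\ref{prop:guessboundY}, and the easy direction is the same specialization to a single $U$. The only difference is that you devote extra care to the equivalence of the three expressions in Definition~\ref{defmaxleakY}, which the paper treats as definitional (and which follows immediately since $\log$ is monotone and the supremum over the unconstrained family $(U_y)_y$ decomposes term by term).
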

\vspace{1mm}
\begin{proof}
It follows straightforwardly from the definitions that $\mly{X}{Y} \geq \ml{X}{Y}$. For the reverse direction, consider the following proposition.
\begin{Proposition} \label{prop:guessboundY}
Suppose $U$, $X$, and $Y$ are discrete random variables that satisfy the Markov chain $U-X-Y$. Then for each $y \in \mathrm{supp}(Y)$, 
\begin{align*}
\frac{\max_{u \in \mathcal{U} }P_{U|Y}(u|y)}{\max_{u \in \mathcal{U} }P_{U}(u)} \leq \frac{\max_{x: P_{X|Y}(x|y) > 0 }P_{Y|X}(y|x)}{P_Y(y)}.
\end{align*}
\end{Proposition}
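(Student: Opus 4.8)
\textbf{Proof plan for Proposition~\ref{prop:guessboundY}.} The plan is a direct chain of inequalities at a fixed $y \in \mathrm{supp}(Y)$, so that $P_Y(y) > 0$ and all conditional quantities are well defined. First I would expand the conditional distribution of $U$ given $Y=y$ using the Markov chain $U-X-Y$, which gives $P_{U|X,Y}(u|x,y) = P_{U|X}(u|x)$ and hence
\begin{align*}
P_{U|Y}(u|y) = \sum_{x \in \mathcal{X}} P_{U|X}(u|x)\, P_{X|Y}(x|y).
\end{align*}
Next I would substitute Bayes' rule $P_{X|Y}(x|y) = P_X(x) P_{Y|X}(y|x) / P_Y(y)$. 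Observe that any $x$ with $P_{X|Y}(x|y) = 0$ contributes nothing to this sum, so the sum is effectively over $\{x : P_{X|Y}(x|y) > 0\}$; on that set we may replace $P_{Y|X}(y|x)$ by $\max_{x' : P_{X|Y}(x'|y) > 0} P_{Y|X}(y|x')$ to obtain an upper bound.

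Factoring that maximum out of the sum leaves $\sum_{x} P_{U|X}(u|x) P_X(x) = P_U(u)$, so we arrive at
\begin{align*}
P_{U|Y}(u|y) \;\le\; \frac{\max_{x : P_{X|Y}(x|y) > 0} P_{Y|X}(y|x)}{P_Y(y)}\, P_U(u) \;\le\; \frac{\max_{x : P_{X|Y}(x|y) > 0} P_{Y|X}(y|x)}{P_Y(y)}\, \max_{u' \in \mathcal{U}} P_U(u').
\end{align*}
Taking the maximum over $u \in \mathcal{U}$ on the left-hand side and dividing both sides by $\max_{u' \in \mathcal{U}} P_U(u')$ (which is strictly positive) yields exactly the claimed inequality.

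There is no real obstacle here: the argument is a short computation. The only point requiring a moment of care is the support bookkeeping — making sure the restriction of the maximization over $x$ to $\{x : P_{X|Y}(x|y) > 0\}$ is legitimate, which follows because the excluded terms vanish in the sum. One should also note that this is precisely the per-$y$ estimate that, when combined with averaging against $P_Y$, upgrades $\ml{X}{Y}[U]$ to the opportunistic version and thus (together with the trivial direction $\mly{X}{Y} \ge \ml{X}{Y}$ and Theorem~\ref{thmmainthm}) closes the proof of Theorem~\ref{thmmaxleakY}.
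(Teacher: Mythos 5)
Your proof is correct and follows essentially the same chain of inequalities as the paper: expand $P_{U|Y}(u|y)$ via the Markov chain, apply Bayes' rule, pull out $\max_{x:P_{X|Y}(x|y)>0} P_{Y|X}(y|x)/P_Y(y)$, and bound the remaining (restricted) sum by $\max_u P_U(u)$. The one cosmetic point is that after restricting the sum to $\{x : P_{X|Y}(x|y)>0\}$ you get $\sum_{x} P_{U|X}(u|x)P_X(x) \leq P_U(u)$ rather than equality, but since you only need an upper bound this changes nothing.
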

It follows from the proposition that
\begin{align*}
\exp\{\mly{X}{Y}\} \leq \sum_{y \in \mathrm{supp}(Y)} \max_{x: P_{X|Y}(x|y) > 0} P_{Y|X}(y|x) \leq \sum_{y \in \mathcal{Y}} \max_{x: P_{X}(x)>0} P_{Y|X}(y|x).
\end{align*}
Then it remains to prove Proposition~\ref{prop:guessboundY}. To that end, consider a triple of discrete random variables $U$, $X$, and $Y$ satisfying $U-X-Y$, and fix $y \in \mathrm{supp}(Y)$. Then
\begin{align*}
\max_{u \in \mathcal{U}} P_{U|Y}(u|y) & = \max_{u \in \mathcal{U}}  \sum_{x: P_{X|Y}(x|y) > 0 } P_{U|X}(u|x) P_{X|Y}(x|y) \\
& =     \max_{u \in \mathcal{U}}  \sum_{x: P_{X|Y}(x|y) > 0 } P_{U|X}(u|x) P_{X}(x)\frac{P_{Y|X}(y|x)}{P_Y(y)} \\
& \leq  \max_{x': P_{X|Y}(x'|y) > 0 } \frac{P_{Y|X}(y|x')}{P_Y(y)} \max_{u \in \mathcal{U}} \sum_{x: P_{X|Y}(x|y) > 0 } P_{UX}(u,x) \\
& \leq \max_{x': P_{X|Y}(x'|y) > 0 } \frac{P_{Y|X}(y|x')}{P_Y(y)} \max_{u \in \mathcal{U}} P_{U}(u),
\end{align*}
as desired.
\end{proof}

\subsection{Approximate Guessing}
Consider the case in which the adversary only needs the guess to be within a certain distance of the true function value, according to a given distance metric. As such, the random variable $U$, over which we are optimizing, now lives in a given metric space $\mathcal{U}$ and is no longer restricted to be discrete. We call this modified measure maximal locational leakage. The term ``locational'' is motivated by the scenario in which the variable of interest $U$ is a geographical location, such as a person's home address (potentially revealed by GPS traces~\cite{HomeGPS}) or a person's physical location (potentially revealed by cellular tracking data~\cite{WickerLocationPrivacy}). 
\vspace{1mm}
\begin{Definition}[Maximal Locational Leakage]\label{defleakcont}
Given a joint distribution $P_{XY}$ on finite alphabets $\mathcal{X}$ and $\mathcal{Y}$, and  a metric space $\mathcal{U}$ (with its associated Borel $\sigma$-field), the maximal locational leakage from $X$ to $Y$ is defined as
\begin{equation} \label{eqdefleakcont}
\mathcal{L}_{\mathcal{U}}(X \!\! \to \!\! Y)  = \sup_{\substack{U: U-X-Y \\ \exists u: \Pr(U \in B(u)) > 0}}  \log \frac{\sup_{\hat{u}(\cdot)} \Pr(U \in B(\hat u(Y))) }{\sup_{\hat{u}} \Pr(U \in B(\hat u)) },
\end{equation}
where $B(u)$ is the closed unit ball centered at $u \in \mathcal{U}$. 
\end{Definition}
\vspace{1mm}
\begin{Theorem} \label{lemmaconteqdisc}
For any joint distribution $P_{XY}$ on finite alphabets $\mathcal{X}$ and $\mathcal{Y}$, and any metric space $\mathcal{U}$,
\begin{equation*}
\mathcal{L}_{\mathcal{U}}(X \!\! \to \!\! Y) \leq\mathcal{L}(X \!\! \to \!\! Y), 
\end{equation*}
with equality if $\mathcal{U}$ has a countably infinite subset $S$ such that no pair of its elements can be contained in a single unit ball.
\end{Theorem}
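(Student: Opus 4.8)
The claim has two halves: the inequality $\mathcal{L}_{\mathcal{U}}(X \!\! \to \!\! Y) \le \ml{X}{Y}$, valid for \emph{every} metric space $\mathcal{U}$, and the reverse inequality under the separation hypothesis on $S$. Throughout I would assume, as in the proof of Theorem~\ref{thmmainthm}, that $P_X$ has full support. For the upper bound the plan is to replay the easy direction of Theorem~\ref{thmmainthm}. Fix any $U$ valued in $\mathcal{U}$ with $U-X-Y$ and any measurable guessing map $\hat u(\cdot):\mathcal{Y}\to\mathcal{U}$, and write $\mu_x(\cdot)=\Pr(U\in\cdot\mid X=x)$ for the conditional law, which is elementary since $X$ is discrete. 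Expanding over $y$ and $x$ and using the Markov chain,
\[
\Pr\bigl(U \in B(\hat u(Y))\bigr) \;=\; \sum_{y\in\mathcal{Y}}\sum_{x\in\mathcal{X}} P_X(x)\,\mu_x\bigl(B(\hat u(y))\bigr)\,P_{Y|X}(y|x) \;\le\; \sum_{y\in\mathcal{Y}}\Bigl(\max_{x'\in\mathcal{X}} P_{Y|X}(y|x')\Bigr)\Pr\bigl(U \in B(\hat u(y))\bigr),
\]
and since each $\Pr(U\in B(\hat u(y)))\le\sup_{\hat u}\Pr(U\in B(\hat u))$, dividing by this denominator (positive thanks to the constraint in~\eqref{eqdefleakcont}) gives a ratio at most $\sum_y\max_{x'}P_{Y|X}(y|x')=e^{\ml{X}{Y}}$ by Theorem~\ref{thmmainthm}. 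Taking the supremum over $\hat u(\cdot)$ and $U$, then the logarithm, yields $\mathcal{L}_{\mathcal{U}}(X\!\!\to\!\!Y)\le\ml{X}{Y}$. (Measurability is a non-issue: as $\mathcal Y$ is finite, $\{U\in B(\hat u(Y))\}$ is a finite union of Borel events.)

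For the matching lower bound, the idea is that the separation property of $S$ collapses ``guessing within a unit ball'' back to exact guessing. First observe that if no two points of $S$ lie in a common unit ball, then every ball $B(u)$ meets $S$ in at most one point. Next take the ``shattering'' conditional law $P_{U'|X}$ of Proposition~\ref{prop:shattering}: it lives on a \emph{finite} alphabet $\mathcal{U}'$ (finite because $|\mathcal{X}|<\infty$ and $p^\star>0$ in~\eqref{eqshattering}), say of size $m$, and achieves $\ml{X}{Y}[U']=\ml{X}{Y}$ in the notation of~\eqref{eqdefleakU}. Since $S$ is countably infinite I can pick distinct $s_1,\dots,s_m\in S$, relabel $\mathcal{U}'=\{1,\dots,m\}$, and set $U:=s_{U'}$; then $U-X-Y$ still holds (as $U$ is a deterministic relabelling of $U'$), and $U$ meets the nondegeneracy constraint since $\Pr(U=s_{j^\star})>0$ at the likeliest symbol $j^\star\in\argmax_j P_{U'}(j)$.

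Finally I would evaluate the two quantities in~\eqref{eqdefleakcont} for this $U$. Since $B(u)\cap\{s_1,\dots,s_m\}$ is at most a singleton, $\Pr(U\in B(\hat u))\le\max_j P_{U'}(j)$ for every $\hat u\in\mathcal U$, with equality at $\hat u=s_{j^\star}$; hence the denominator equals $\max_u P_{U'}(u)$. For the numerator, choosing $\hat u(y)=s_{g(y)}$ with $g(y)\in\argmax_j P_{U'Y}(j,y)$ gives $\Pr(U\in B(\hat u(Y)))=\sum_y\max_j P_{U'Y}(j,y)$ exactly, again by the singleton property. Therefore $\mathcal{L}_{\mathcal{U}}(X\!\!\to\!\!Y)\ge\log\bigl(\sum_y\max_j P_{U'Y}(j,y)/\max_u P_{U'}(u)\bigr)=\ml{X}{Y}[U']=\ml{X}{Y}$, which with the first half gives equality. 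I do not expect a genuinely hard step here; the only things warranting care are the bookkeeping for the conditional law $\mu_x$ of a possibly non-discrete $U$ (harmless since $X$ is discrete, so no regular-conditional-distribution machinery is needed) and confirming that the shattering $U'$ has finite range (immediate from~\eqref{eqshattering}).
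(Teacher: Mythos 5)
Your proof is correct and follows essentially the same route as the paper: the upper bound is the same Markov-chain decomposition pulling out $\max_{x}P_{Y|X}(y|x)$, and your lower bound is a careful spelling-out of the paper's one-line remark that exact guessing of a discrete $U$ can be simulated by embedding its (finite) shattering alphabet into the separated set $S$, so that each unit ball meets the support in at most one point. No gaps; you simply supply more detail on the equality case than the paper does.
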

\vspace{1mm}
\begin{proof} Assume, without loss of generality, that $X$ has full support.  Now consider any $U$ and $\hat{u}(Y)$ in the maximization of \eqref{eqdefleakcont}:
\begin{align*}
 \Pr(U \in B(\hat u(Y))
& \leq \sum_{y \in \mathcal{Y}} \sup_{u \in \mathcal{U}} P(U \in B(u), Y=y) \\
& = \sum_{y \in \mathcal{Y}} \sup_{u \in \mathcal{U}} \sum_{x \in \mathcal{X}} P(U \in B(u), X=x, Y=y) \\
& = \sum_{y \in \mathcal{Y}}  \sup_{u \in \mathcal{U}}  \sum_{x \in \mathcal{X}}  P(U  \in  B(u))  P(X=x|U  \in  B(u)) P_{Y|X}(y|x) \\
& \leq \sum_{y \in \mathcal{Y}} \sup_{u \in \mathcal{U}} P(U \in B(u)) \sup_{x \in \mathcal{X}} P_{Y|X}(y|x) \\
& = \left[ \sum_{y \in \mathcal{Y}} \sup_{x \in \mathcal{X}} P_{Y|X}(y|x) \right] \sup_{u \in \mathcal{U}} P(U \in B(u)).
\end{align*}
Therefore,
\begin{align*}
\mathcal{L}_{\mathcal{U}}(X \!\! \to \!\! Y)\leq \log \sum_{y \in \mathcal{Y}} \sup_{x \in \mathcal{X}} P_{Y|X}(y|x) = \mathcal{L}(X \!\! \to \!\! Y). 
\end{align*}
If $\mathcal{U}$ satisfies the given condition (e.g., $\mathcal{U}$ is unbounded), then exact guessing of discrete functions can be simulated by choosing $S$ to be the support of $U$. Hence $\mathcal{L}_{\mathcal{U}}(X \!\! \to \!\! Y) \geq \mathcal{L}(X \!\! \to \!\! Y)$, which implies the equality. 
\end{proof}

\subsection{Multiple Guesses}

The definition of maximal leakage (Definition~\ref{defmaxleakage}) allowed the adversary a single guess. However, an adversary might be able to make several guesses in some practical scenarios.
For example, if the adversary is trying to guess a password $U$ of some system, they can typically try several passwords before they are locked out. Similarly, if they are trying to guess a secret key to decrypt an encrypted message, they can make several attempts. 
 We modify the definition to allow for $k$ guesses, for any integer $k$, as follows. 
\vspace{1mm}
\begin{Definition}[$k$-Maximal Leakage]\label{defkleakage}
Given a joint distribution $P_{XY}$ on finite alphabets $\mathcal{X}$ and $\mathcal{Y}$, and a positive integer $k$, the $k$-maximal leakage from $X$ to $Y$ is defined as
\begin{equation*}
\kml{X}{Y} = \sup_{U - X - Y - (\hat{U}_i)_{i=1}^k} \log 
\frac{ \Pr\left(\bigvee_{i=1}^k U=\hat{U}_i    \right) }{ \max_{\substack{S \subseteq \mathcal{U}\\ |S|\leq k }} P_U(S)},
\end{equation*}
where $U$ takes values in a finite, but arbitrary, alphabet.
\end{Definition}
\vspace{1mm}
It turns out that $k$-maximal leakage and maximal leakage are equivalent.
\vspace{1mm}
\begin{Theorem} \label{lemmakeq1}
For any joint distribution $P_{XY}$ on finite alphabets $\mathcal{X}$ and $\mathcal{Y}$, and any $k \in \mathbb{N}$, 
\begin{align*}
\kml{X}{Y} = \ml{X}{Y}.
\end{align*}
\end{Theorem}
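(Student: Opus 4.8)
The plan is to prove $\kml{X}{Y}=\ml{X}{Y}$ by establishing the two inequalities separately. The direction $\kml{X}{Y}\le\ml{X}{Y}$ is a routine adaptation of the converse computation in the proof of Theorem~\ref{thmmainthm}, in which singletons are replaced by sets of size at most $k$. The direction $\kml{X}{Y}\ge\ml{X}{Y}$ is obtained by exhibiting an explicit feasible choice of $U$ and of a $k$-tuple of guesses, built from a refinement of the ``shattering'' conditional distribution of Proposition~\ref{prop:shattering}. As there, I assume without loss of generality that $P_X(x)>0$ for all $x\in\mathcal{X}$.

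For the upper bound, fix any $U-X-Y-(\hat U_i)_{i=1}^k$. Conditioned on $\{Y=y\}$ the tuple $(\hat U_i)_{i=1}^k$ is independent of $U$ and its range is a set of at most $k$ elements, so $\Pr\!\big(\bigvee_{i=1}^k U=\hat U_i \mid Y=y\big)\le P_{U|Y}(T_k(y)\mid y)$, where $T_k(y)\subseteq\mathcal{U}$ denotes the set of $k$ values $u$ maximising $P_{UY}(u,y)$. Using $P_{UY}(u,y)=\sum_x P_{UX}(u,x)P_{Y|X}(y|x)$ (which holds because $U-X-Y$),
\begin{align*}
\Pr\!\Big(\bigvee_{i=1}^k U=\hat U_i\Big)
&\le \sum_{y\in\mathcal{Y}}\sum_{u\in T_k(y)}\sum_{x\in\mathcal{X}}P_{UX}(u,x)P_{Y|X}(y|x)\\
&\le \sum_{y\in\mathcal{Y}}\Big(\max_{x\in\mathcal{X}}P_{Y|X}(y|x)\Big)\sum_{u\in T_k(y)}P_U(u)
\le \Big(\sum_{y\in\mathcal{Y}}\max_{x\in\mathcal{X}}P_{Y|X}(y|x)\Big)\max_{|S|\le k}P_U(S),
\end{align*}
the last step because $|T_k(y)|\le k$. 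Dividing by $\max_{|S|\le k}P_U(S)$, taking logarithms, and invoking Theorem~\ref{thmmainthm} gives $\kml{X}{Y}\le\ml{X}{Y}$.

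For the lower bound I would run the shattering construction from the proof of Theorem~\ref{thmmainthm} with shard probability $p^{\star}/k$ in place of $p^{\star}=\min_{x}P_X(x)$: put $\tilde k(x)=kP_X(x)/p^{\star}\ (\ge k)$, let $\mathcal{U}=\bigcup_{x\in\mathcal{X}}\{(x,1),\dots,(x,\lceil\tilde k(x)\rceil)\}$, and define $P_{U|X}$ by the analogue of \eqref{eqshattering} with $p^{\star}$ replaced by $p^{\star}/k$ and $k(x)$ by $\tilde k(x)$. Then every $x$ is cut into at least $\lfloor\tilde k(x)\rfloor\ge k$ ``full'' shards of probability exactly $p^{\star}/k$, while the one possible ``partial'' shard of each $x$ has probability strictly below $p^{\star}/k$; hence $\max_{|S|\le k}P_U(S)=k\cdot(p^{\star}/k)=p^{\star}$. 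For the numerator, for each $y$ choose $x^{\star}(y)\in\argmax_{x}P_{Y|X}(y|x)$ and let $\hat U_1(y),\dots,\hat U_k(y)$ be $k$ distinct full shards of $x^{\star}(y)$. Since $P_{UY}((x,j),y)=(p^{\star}/k)P_{Y|X}(y|x)$ on full shards and the events $\{U=\hat U_i(Y)\}$ are pairwise disjoint,
\begin{equation*}
\Pr\!\Big(\bigvee_{i=1}^k U=\hat U_i\Big)=\sum_{y\in\mathcal{Y}}k\cdot\frac{p^{\star}}{k}\,\max_{x\in\mathcal{X}}P_{Y|X}(y|x)=p^{\star}\sum_{y\in\mathcal{Y}}\max_{x\in\mathcal{X}}P_{Y|X}(y|x),
\end{equation*}
so the ratio is $\sum_{y}\max_{x}P_{Y|X}(y|x)=e^{\ml{X}{Y}}$ by Theorem~\ref{thmmainthm}, giving $\kml{X}{Y}\ge\ml{X}{Y}$.

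I expect the only real care-point to be the bookkeeping in the lower bound: the unrefined shattering of Theorem~\ref{thmmainthm} may split some $x$ into fewer than $k$ equiprobable pieces, so one cannot simply fill the $k$ guesses with shards of the single best $x^{\star}(y)$. Rescaling the shard size to $p^{\star}/k$ repairs this while keeping $\max_{|S|\le k}P_U(S)$ exactly equal to $p^{\star}$, which is precisely what makes numerator and denominator pick up the same factor $k$ and the ratio collapse to $e^{\ml{X}{Y}}$. (A limiting argument with ever-finer shards would also work, but is unnecessary.) The upper bound needs nothing beyond the cardinality bound $|T_k(y)|\le k$ and the same one-step inequality used in the converse of Theorem~\ref{thmmainthm}.
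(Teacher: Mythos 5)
Your proof is correct, but it takes a different route from the paper's. The paper never invokes the closed-form expression of Theorem~\ref{thmmainthm}: it proves the equality by a two-way reduction between the $k$-guess and single-guess problems at the level of arbitrary $U$. For $\kml{X}{Y}\ge\ml{X}{Y}$ it splits each value of a given $U$ into $k$ equiprobable copies $(u,1),\dots,(u,k)$ so that $k$ guesses on the refined variable reproduce one guess on $U$, both with and without the observation; for $\ml{X}{Y}\ge\kml{X}{Y}$ it maps a given $V$ to a set-valued $U$ supported on the $k$-subsets of $\mathcal{V}$, with $P_{U|X}(u|x)=c\sum_{v\in u}P_{V|X}(v|x)$ and $c=1/\binom{\ell-1}{k-1}$, so that single-guess statistics of $U$ equal $c$ times the $k$-guess statistics of $V$; the common factor $c$ cancels in the ratio. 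Your argument instead pins both quantities to $\log\sum_y\max_x P_{Y|X}(y|x)$: the upper bound is a direct top-$k$ adaptation of the one-step inequality from the converse of Theorem~\ref{thmmainthm}, and the lower bound is the shattering construction with shard size $p^\star/k$ (which is, in effect, the paper's $k$-fold splitting applied specifically to the shattering $U$, and your bookkeeping that $\max_{|S|\le k}P_U(S)=p^\star$ while the numerator picks up the same factor $k$ is right). What the paper's approach buys is a formula-free statement that the two operational problems are equivalent achievable-ratio by achievable-ratio; what yours buys is brevity and self-containedness given that Theorem~\ref{thmmainthm} is already in hand.
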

The proof is given in Appendix~\ref{app:kleakage}.

\subsection{General Gains}
We now consider the case in which different realizations of $U$ might have different significance for the adversary. For example, an adversary monitoring the timing of packet transmissions over a given network~\cite{ParvAnonymous} might seek to deduce source-destination pairs. However, they might be more interested in detecting communication between specific pairs, corresponding to (say) suspicious persons, governmental agencies, etc. Hence, there is more value to the detection of the existence of a link, rather than its absence. 
This mirrors the asymmetric cost of false alarm and missed detection in hypothesis testing. To account for this, we use a gain function $g: \mathcal{U} \times \hat{\mathcal{U}} \rightarrow [0, \infty)$ and maximize over gain functions as follows.
\vspace{1mm}
\begin{Definition}[Maximal Gain Leakage] \label{defmaxgains}
Given a joint distribution $P_{XY}$ on finite alphabets $\mathcal{X}$ and $\mathcal{Y}$, the \emph{maximal gain leakage} is defined as
\begin{align*}
\mlg{X}{Y} = \sup_{ \substack{U: U-X-Y \\ \mathcal{\hat{U}}, g: \mathcal{U} \times \hat{\mathcal{U}} \rightarrow [0, \infty): \\ \sup_{\hat{u}} \E[g(U,\hat{u})] > 0 }} \log \frac{\sup_{\hat{u}(\cdot)} \E[g(U,\hat{u}(Y))]}{\sup_{\hat{u}} \E[g(U,\hat{u})]},
\end{align*}
where $\mathcal{U}$ is a finite, but arbitrary, alphabet.
\end{Definition}
\vspace{1mm}
Similarly to previous variations, maximal gain leakage turns out to be equivalent to maximal leakage.
\begin{Theorem} \label{thmmaxgains}
For any joint distribution $P_{XY}$ on finite alphabets $\mathcal{X}$ and $\mathcal{Y}$, \[ \ml{X}{Y} = \mlg{X}{Y}. \]
\end{Theorem}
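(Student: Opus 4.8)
\begin{proof}
The plan is to mirror the two-sided argument used for Theorem~\ref{thmmainthm}: the inequality $\mlg{X}{Y} \geq \ml{X}{Y}$ comes from a judicious choice of gain function, and the reverse follows by generalizing the upper-bound computation in Section~\ref{sec:proofmain}.

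First I would handle the easy direction. Given any $U$ and $\hat U$ with $U-X-Y-\hat U$ attaining (or nearly attaining) the supremum in~\eqref{eq:maxleakagedef}, take $\hat{\mathcal{U}} = \mathcal{U}$ and $g(u,\hat u) = \mathbb{1}\{u = \hat u\}$. Then $\E[g(U,\hat u(Y))] = \Pr(U = \hat u(Y))$, so $\sup_{\hat u(\cdot)} \E[g(U,\hat u(Y))] = \sup_{\hat u(\cdot)}\Pr(U=\hat u(Y))$, while $\sup_{\hat u} \E[g(U,\hat u)] = \max_{u \in \mathcal{U}} P_U(u) > 0$ (assuming $\mathcal{U}$ nonempty). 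Hence this choice is admissible in the supremum defining $\mlg{X}{Y}$ and gives $\mlg{X}{Y} \geq \ml{X}{Y}$.

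For the reverse inequality, assume without loss of generality that $P_X$ has full support, and fix any admissible triple $(U, \hat{\mathcal{U}}, g)$ with $U-X-Y$ and $\sup_{\hat u}\E[g(U,\hat u)] > 0$. Since $\mathcal{Y}$ is finite, the optimal test function can be chosen coordinatewise:
\begin{align*}
\sup_{\hat u(\cdot)} \E[g(U,\hat u(Y))]
= \sum_{y \in \mathcal{Y}} \sup_{\hat u \in \hat{\mathcal{U}}} \sum_{u \in \mathcal{U}} g(u,\hat u)\, P_{UY}(u,y).
\end{align*}
Now for each $u,y$, using $U-X-Y$,
\begin{align*}
P_{UY}(u,y) = \sum_{x \in \mathcal{X}} P_X(x) P_{U|X}(u|x) P_{Y|X}(y|x)
\leq \Big(\max_{x' \in \mathcal{X}} P_{Y|X}(y|x')\Big) P_U(u).
\end{align*}
Because $g \geq 0$, this inequality may be inserted termwise inside the sum over $u$ for each fixed $\hat u$, and then the factor $\max_{x'} P_{Y|X}(y|x')$ pulled out, yielding
\begin{align*}
\sup_{\hat u \in \hat{\mathcal{U}}} \sum_{u} g(u,\hat u) P_{UY}(u,y)
\leq \Big(\max_{x' \in \mathcal{X}} P_{Y|X}(y|x')\Big) \sup_{\hat u \in \hat{\mathcal{U}}} \sum_u g(u,\hat u) P_U(u)
= \Big(\max_{x'} P_{Y|X}(y|x')\Big) \sup_{\hat u}\E[g(U,\hat u)].
\end{align*}
Summing over $y$ and dividing by $\sup_{\hat u}\E[g(U,\hat u)] > 0$ gives
\begin{align*}
\frac{\sup_{\hat u(\cdot)} \E[g(U,\hat u(Y))]}{\sup_{\hat u}\E[g(U,\hat u)]} \leq \sum_{y \in \mathcal{Y}} \max_{x' \in \mathcal{X}} P_{Y|X}(y|x') = e^{\ml{X}{Y}},
\end{align*}
where the last equality is Theorem~\ref{thmmainthm}. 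Taking $\log$ and then the supremum over admissible $(U,\hat{\mathcal{U}},g)$ yields $\mlg{X}{Y} \leq \ml{X}{Y}$, completing the proof.
\end{proof}

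I do not anticipate a serious obstacle here: the only points requiring care are the coordinatewise decomposition of the supremum over test functions (valid since $\mathcal{Y}$ is finite) and the use of nonnegativity of $g$ to push the pointwise bound on $P_{UY}$ through the inner sum and supremum; neither is delicate. The content is essentially that the guessing bound $P_{UY}(u,y) \le (\max_{x'} P_{Y|X}(y|x')) P_U(u)$ — the heart of the upper bound in Theorem~\ref{thmmainthm} — is robust to replacing the zero-one guessing payoff by an arbitrary nonnegative gain.
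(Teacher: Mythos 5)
Your proposal is correct and follows essentially the same route as the paper: the easy direction via the zero-one gain function, and the reverse via the coordinatewise decomposition of the supremum over estimators followed by the bound $P_{UY}(u,y) \le \bigl(\max_{x'} P_{Y|X}(y|x')\bigr) P_U(u)$ pushed through the nonnegative gain. The paper performs the same manipulation with the triple sum over $(u,x)$ kept expanded rather than stating the pointwise bound on $P_{UY}$ separately, but the content is identical.
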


\begin{Remark}
For a similar result in which $U = X$ but one takes the supremum over
all $X$ distributions, see Alvim 
\emph{et al.}\cite{AlvimAdditiveMultiplicative}.
\end{Remark}

\begin{IEEEproof}
It follows straightforwardly from the definitions that $\mlg{X}{Y} \geq \ml{X}{Y}$. For the reverse direction, consider any $U$ satisfying $U-X-Y$, any (non-empty) set $\hat{\mathcal{U}}$ and function $g: \mathcal{U} \times \hat{\mathcal{U}} \rightarrow [0, \infty)$. Then
\begin{align*}
\sup_{\hat{u}(\cdot)} \E[g(U,\hat{u}(Y))] & = \sum_{y \in \mathcal{Y}} \sup_{\hat{u} \in \hat{\mathcal{U}}} \sum_{u \in \mathcal{U}} g(u,\hat{u}) P_{UY} (u,y)   \\
& = \sum_{y \in \mathcal{Y}} \sup_{\hat{u} \in \hat{\mathcal{U}}} \sum_{u \in \mathcal{U}}
\sum_{x \in \mathrm{supp}(X) } g(u, \hat{u}) P_X(x) P_{U|X}(u|x) P_{Y|X}(y|x) \\
& \leq \sum_{y \in \mathcal{Y}} \Big(\max_{x' \in \mathrm{supp}(X) } P_{Y|X}(y|x')\Big) \sup_{\hat{u} \in \hat{\mathcal{U}}} \sum_{u \in \mathcal{U}}
\sum_{x \in \mathrm{supp}(X) } g(u, \hat{u}) P_X(x) P_{U|X}(u|x) \\
& = \sum_{y \in \mathcal{Y}} \Big(\max_{x' \in \mathrm{supp}(X) } P_{Y|X}(y|x')\Big) \sup_{\hat{u}} \E[g(U,\hat{u})],
\end{align*}
as desired.
\end{IEEEproof}

\subsection{Conditional Maximal Leakage}

One of the main challenges in information leakage problems comes from the fact that the adversary can acquire information from multiple sources. This prompted researchers in database security to make very conservative assumptions about the knowledge of the adversary: differential privacy is introduced in a setup in which the adversary knows all the entries of the database except one~\cite{DworkCalibrating}. This also raises interest in the behavior of mechanisms under \emph{composition}~\cite{KairouzComposition}. That is, if the adversary receives multiple independent observations released by a given secure mechanism, how do the security guarantees degrade? 	
In order to answer these questions, we propose a conditional form of maximal leakage, which is defined analogously to Definition~\ref{defmaxleakage}.
\vspace{1mm}
\begin{Definition}[Conditional Maximal Leakage] \label{defcondmaxleakage}
Given a joint distribution $P_{XYZ}$ on alphabets $\mathcal{X}$, $\mathcal{Y}$ and $\mathcal{Z}$, the \emph{conditional maximal leakage} from $X$ to $Y$ given $Z$ is defined as
\begin{equation} 
\cml{X}{Y}{Z} = \sup_{\substack{U: U-X-Y|Z}} \log \frac{\Pr(U=\hat{U}(Y,Z))}{\Pr(U=\tilde{U}(Z))},
\end{equation}
where $U$ takes values in a finite, but arbitrary, alphabet, and $\hat{U}(Y,Z)$ and $\tilde{U}(Z)$ are the optimal (i.e., MAP) estimators of $U$ given $(Y,Z)$ and $Z$, respectively.
\end{Definition}
\vspace{1mm}
\begin{Remark}
The Markov chain $U-X-Y|Z$ is equivalent to $U-(X,Z)-Y$. The above definition is hence conservative, in that it allows the channel from $X$ to $U$ to depend on $Z$. One could instead consider $U$s satisfying the Markov chain $U-X-(Y,Z)$.
The quantity so modified appears to be considerably more difficult to analyze.
\end{Remark}
\vspace{1mm}
\begin{Theorem} \label{thmcondmaxleakage}
Given a joint distribution $P_{XYZ}$ on finite alphabets $\mathcal{X}$, $\mathcal{Y}$ and $\mathcal{Z}$, the conditional maximal leakage from $X$ to $Y$ given $Z$ is given by
\begin{equation} \label{eqcondmaxleakage}
\cml{X}{Y}{Z} = \log \left( \max_{z: P_Z(z)>0} \sum_y \max_{x: P_{X|Z}(x|z) > 0} P_{Y|XZ} (y|x,z) \right).
\end{equation}
\end{Theorem}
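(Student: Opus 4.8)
The plan is to follow the proof of Theorem~\ref{thmmainthm} ``slice by slice'' in $Z$, using the remark that $U-X-Y|Z$ is the same as $U-(X,Z)-Y$. First I would rewrite the quantity: since $\hat U(Y,Z)$ and $\tilde U(Z)$ are MAP estimators, $\Pr(U=\hat U(Y,Z)) = \sum_{y,z}\max_u P_{UYZ}(u,y,z)$ and $\Pr(U=\tilde U(Z)) = \sum_z \max_u P_{UZ}(u,z)$, so that $\exp\{\cml{X}{Y}{Z}\}$ is the supremum over $U-(X,Z)-Y$ of the ratio of these two sums. I would assume without loss of generality that $P_Z(z)>0$ for every $z$, write $\mathcal{X}_z=\{x:P_{X|Z}(x|z)>0\}$, and abbreviate $M=\max_z \sum_y \max_{x\in\mathcal{X}_z}P_{Y|XZ}(y|x,z)$.

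For the upper bound $\cml{X}{Y}{Z}\le \log M$, for any admissible $U$ and any $(y,z)$ I would bound $P_{UYZ}(u,y,z)=\sum_x P_{XZ}(x,z)P_{U|XZ}(u|x,z)P_{Y|XZ}(y|x,z)\le \big(\max_{x'\in\mathcal{X}_z}P_{Y|XZ}(y|x',z)\big)P_{UZ}(u,z)$, using that $P_{XZ}(x,z)>0$ forces $x\in\mathcal{X}_z$. Taking $\max_u$, then $\sum_y$, then $\sum_z$, the inner sums over $y$ are each at most $M$ times $\max_u P_{UZ}(u,z)$, giving $\sum_{y,z}\max_u P_{UYZ}(u,y,z)\le M\sum_z\max_u P_{UZ}(u,z)$. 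This is the conditional analogue of the first half of the proof of Theorem~\ref{thmmainthm}.

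For the lower bound I would fix $z^\star\in\argmax_z\sum_y\max_{x\in\mathcal{X}_z}P_{Y|XZ}(y|x,z)$ and set $p^\star=\min_{x\in\mathcal{X}_{z^\star}}P_{X|Z}(x|z^\star)$. Applying Proposition~\ref{prop:shattering} (equivalently, the construction~\eqref{eqshattering} together with the computations~\eqref{eqleakdenom} and~\eqref{eqleaknum}) to the distribution $P_{X|Z=z^\star}$ and the channel $P_{Y|X,Z=z^\star}$ yields a finite alphabet $\mathcal{V}$ and a ``shattering'' $P_{V|X}$ with $\max_v P_{V|Z=z^\star}(v)=p^\star$ and $\sum_y\max_v P_{VY|Z=z^\star}(v,y)=p^\star M$. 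Then, for each integer $N$, I would take $U^{(N)}-(X,Z)-Y$ defined by $P_{U|XZ}(\cdot|x,z^\star)=P_{V|X}(\cdot|x)$ on the critical slice and $P_{U|XZ}(\cdot|x,z)=\mathrm{Unif}(D_{x,z}^{(N)})$ for $z\ne z^\star$, where the $D_{x,z}^{(N)}$ are pairwise disjoint ``dummy'' symbol sets of size $N$, disjoint also from $\mathcal V$. A direct computation shows the critical slice contributes $P_Z(z^\star)p^\star M$ to the numerator and $P_Z(z^\star)p^\star$ to the denominator, while each slice $z\ne z^\star$ contributes at most $P_Z(z)M/N$ to the numerator and at most $P_Z(z)/N$ to the denominator (the disjointness of the $D_{x,z}^{(N)}$ makes the per-symbol maxima easy to read off). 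Hence the ratio is at least $P_Z(z^\star)p^\star M/\big(P_Z(z^\star)p^\star+1/N\big)$, which tends to $M$ as $N\to\infty$; combined with the upper bound this gives $\cml{X}{Y}{Z}=\log M$.

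The main obstacle — and the only genuinely new point relative to Theorem~\ref{thmmainthm} — is that the supremum is \emph{not} attained: every slice $z\ne z^\star$ unavoidably contributes a strictly positive amount to the denominator for any finite $\mathcal{U}$, and the only way to render this contribution negligible is to ``dilute'' $U$ on those slices over an unboundedly large set of dummy symbols, which forces the limiting argument above rather than an explicit optimizer. A secondary bookkeeping point is that all support conditions must be taken conditionally on $Z$ (the sets $\mathcal{X}_z$ playing the role that $\mathrm{supp}(P_X)$ plays in Theorem~\ref{thmmainthm}), and one should note at the outset that slices $z$ with $P_Z(z)=0$ may simply be discarded.
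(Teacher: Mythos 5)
Your proof is correct and follows essentially the same route as the paper's: the upper bound is the same per-slice bounding (the paper phrases it as a mediant inequality, ratio of sums $\le$ max of slice ratios, then invokes Theorem~\ref{thmmainthm} on each slice, which is equivalent to your term-by-term estimate), and the lower bound is the same construction of shattering on the maximizing slice $z^\star$ combined with diluting $U$ over $N$ dummy symbols on the other slices and letting $N\to\infty$ (the paper simply takes $U\sim\mathrm{Unif}([n])$ independent of $X$ there, which is a marginally leaner version of your disjoint sets $D^{(N)}_{x,z}$). Your remark that the supremum is not attained and genuinely requires the limiting argument is exactly the point the paper's $\epsilon_n=1/n$ device addresses.
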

\vspace{1mm}
In other terms, $\cml{X}{Y}{Z} = \max_{z \in \mathrm{supp}(Z)} \mathcal{L}(X \!\! \to \!\! Y | Z=z)$, where $\mathcal{L}(X \!\! \to \!\! Y | Z=z)$ is interpreted as the unconditional maximal leakage evaluated with respect to the joint distribution $P_{XY|Z=z}$. 
The following corollary summarizes important properties of conditional maximal leakage.
\vspace{1mm}
\begin{Corollary} \label{corrpropcond}
Given a joint distribution $P_{XYZ}$ on finite alphabets $\mathcal{X}$, $\mathcal{Y}$ and $\mathcal{Z}$,
\begin{enumerate}
\item (\emph{Data Processing Inequality}) If the Markov chain $X-Y-V|Z$ holds for a discrete random variable $V$, then $\cml{X}{V}{Z} \leq \min\{\cml{X}{Y}{Z},\cml{Y}{V}{Z}  \}$.
\item $\cml{X}{Y}{Z} \leq \min\{ \log |\mathcal{X}|, \log |\mathcal{Y}|\}$.
\item $\cml{X}{Y}{Z} = 0$ iff $X-Z-Y$ holds.
\item (\emph{Additivity}) 
If $\{(X_i,Y_i,Z_i)\}_{i=1}^{n}$ are mutually independent, then
\begin{equation*}
\cml{X_1^n}{Y_1^n}{Z_1^n} = \sum_{i=1}^n \cml{X_i}{Y_i}{Z_i}.
\end{equation*}
\item $\cml{X}{Y}{Z} \geq I(X ; Y|Z)$.
\item $\cml{X}{Y}{Z}$ is not symmetric in $X$ and $Y$.
\item If $Z-X-Y$ holds, then \begin{align*}
\cml{X}{Y}{Z}  \leq \ml{X}{Y},
\end{align*}
with equality if for some $z \in \mathrm{supp}(Z)$, $\mathrm{supp}(P_{X|Z=z}) = \mathrm{supp}(P_X)$.
\item $\ml{X}{(Y,Z)} \leq \ml{X}{Z} + \cml{X}{Y}{Z}.$
\end{enumerate}
\end{Corollary}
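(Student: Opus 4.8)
The plan is to reduce everything to the unconditional theory. The engine is the identity from Theorem~\ref{thmcondmaxleakage}, namely $\cml{X}{Y}{Z} = \max_{z \in \mathrm{supp}(Z)} \mathcal{L}(X \to Y|Z=z)$, where $\mathcal{L}(X \to Y|Z=z)$ denotes ordinary maximal leakage computed for the joint law $P_{XY|Z=z}$. Given this, items 1)--3), 5), and 6) are obtained by invoking the corresponding statement from Lemma~\ref{lemmadefprop}, Corollary~\ref{corrprop} or Lemma~\ref{lemmacompareMI} on each slice $P_{XY|Z=z}$ and then taking $\max_z$. For 1), since $X-Y-V|Z$ is (by the Remark after Definition~\ref{defcondmaxleakage}) the same as $X-(Y,Z)-V$, for each fixed $z \in \mathrm{supp}(Z)$ the chain $X-Y-V$ holds under $P_{\cdot|Z=z}$; the unconditional data-processing inequality gives $\mathcal{L}(X\to V|Z=z) \le \min\{\mathcal{L}(X\to Y|Z=z),\,\mathcal{L}(Y\to V|Z=z)\}$, and applying $\max_z$ with the elementary bound $\max_z \min\{a_z,b_z\} \le \min\{\max_z a_z,\max_z b_z\}$ yields the claim. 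Item 2) is immediate since each slice is at most $\log|\mathrm{supp}(X|Z=z)| \le \log|\mathcal{X}|$ and $\log|\mathrm{supp}(Y|Z=z)| \le \log|\mathcal{Y}|$. For 3), a maximum of non-negative numbers vanishes iff each does, so $\cml{X}{Y}{Z}=0$ iff $\mathcal{L}(X\to Y|Z=z)=0$ for every $z\in\mathrm{supp}(Z)$, which by Corollary~\ref{corrprop}(1) means $X$ and $Y$ are conditionally independent given each such $z$, i.e.\ $X-Z-Y$. Item 5) follows from $I(X;Y|Z) = \sum_z P_Z(z)\,I(X;Y|Z=z) \le \max_z I(X;Y|Z=z) \le \max_z \mathcal{L}(X\to Y|Z=z)$ by Lemma~\ref{lemmacompareMI}. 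Item 6) follows by specializing to a deterministic $Z$, whereupon $\cml{X}{Y}{Z}=\ml{X}{Y}$, which is not symmetric by Corollary~\ref{corrprop}(5).

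For 4), conditioning the mutually independent triples $\{(X_i,Y_i,Z_i)\}_{i=1}^n$ on $Z_1^n=z_1^n$ (any point of the product support $\prod_i \mathrm{supp}(Z_i)$) keeps the pairs $\{(X_i,Y_i)\}_{i=1}^n$ mutually independent, so Corollary~\ref{corrprop}(2) gives $\mathcal{L}(X_1^n\to Y_1^n|Z_1^n=z_1^n) = \sum_{i} \mathcal{L}(X_i\to Y_i|Z_i=z_i)$; since the support factorizes, $\max_{z_1^n}$ of this sum equals $\sum_i \max_{z_i}\mathcal{L}(X_i\to Y_i|Z_i=z_i) = \sum_i \cml{X_i}{Y_i}{Z_i}$. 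For 7), the Markov chain $Z-X-Y$ gives $P_{Y|XZ}(y|x,z)=P_{Y|X}(y|x)$ whenever $P_{XZ}(x,z)>0$; substituting into the formula of Theorem~\ref{thmcondmaxleakage} shows that for each $z\in\mathrm{supp}(Z)$, $\mathcal{L}(X\to Y|Z=z) = \log\sum_y \max_{x:\,P_{X|Z}(x|z)>0} P_{Y|X}(y|x)$. Since $\{x:P_{X|Z}(x|z)>0\}\subseteq\mathrm{supp}(P_X)$, each summand, hence the whole slice, is at most the corresponding quantity for $\ml{X}{Y}$, so $\max_z$ of it is $\le \ml{X}{Y}$; and if $\mathrm{supp}(P_{X|Z=z})=\mathrm{supp}(P_X)$ for some $z$, that slice equals $\ml{X}{Y}$, forcing equality.

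The only item requiring genuine computation is 8). The plan is to expand $\ml{X}{(Y,Z)}$ through Theorem~\ref{thmmainthm}, write $P_{YZ|X}(y,z|x)=P_{Z|X}(z|x)\,P_{Y|XZ}(y|x,z)$, and note that $z\notin\mathrm{supp}(Z)$ contributes zero (such $z$ force $P_{Z|X}(z|x)=0$ on $\mathrm{supp}(P_X)$), while for $z\in\mathrm{supp}(Z)$ the terms with $P_{XZ}(x,z)=0$ also vanish, so the inner maximum over $\mathrm{supp}(P_X)$ may be restricted to $\{x:P_{X|Z}(x|z)>0\}$; bounding $P_{Z|X}(z|x)\le\max_{x'\in\mathrm{supp}(P_X)}P_{Z|X}(z|x')$ and summing over $y$ produces $\exp\{\mathcal{L}(X\to Y|Z=z)\}\le\exp\{\cml{X}{Y}{Z}\}$, after which summing $\max_{x'}P_{Z|X}(z|x')$ over $z\in\mathrm{supp}(Z)$ gives $\exp\{\ml{X}{Z}\}$, yielding $\ml{X}{(Y,Z)}\le \cml{X}{Y}{Z}+\ml{X}{Z}$. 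I expect the support bookkeeping in 7) and 8)---ensuring the conditional probabilities are well defined and that the vanishing terms are discarded before exchanging the various maxima---to be the only delicate point; everything else is a slice-by-slice appeal to the unconditional results followed by a maximization over $z$.
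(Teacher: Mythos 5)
Your proposal is correct and follows essentially the same route as the paper's proof: everything is reduced to the slice identity $\cml{X}{Y}{Z} = \max_{z \in \mathrm{supp}(Z)} \mathcal{L}(X \!\! \to \!\! Y\,|\,Z=z)$ together with the unconditional results, with items 5), 7) and 8) handled by the same computations the paper uses (your item 8) bounds the numerator directly rather than first passing to a maximum of ratios, but the algebra and the support bookkeeping are identical). The only cosmetic difference is that the paper proves the data processing inequality directly from Definition~\ref{defcondmaxleakage} rather than slice-by-slice, but your version via $\max_z \min\{a_z,b_z\} \leq \min\{\max_z a_z, \max_z b_z\}$ is equally valid.
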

\vspace{1mm}
Similarly to maximal leakage, properties 1)-4) can be seen as axiomatic for a conditional leakage metric.  
Property 5) is analogous to the relationship between maximal leakage and mutual information. Property 7) is interesting in that it exhibits a behavior similar to mutual information. Indeed, if $Z-X-Y$ holds, then $I(X;Y|Z) \leq I(X;Y)$. Property 8) can be viewed as a one-sided chain rule.
A simple consequence of properties 7) and 8) is the following composition lemma.
\vspace{1mm}
\begin{Lemma}[Composition Lemma] 
Given a joint distribution $P_{XYZ}$ on finite alphabets $\mathcal{X}$, $\mathcal{Y}$ and $\mathcal{Z}$, if $Z-X-Y$ holds, then 
\[ \ml{X}{(Y,Z)} \leq \ml{X}{Z}+\ml{X}{Y}. \]
\end{Lemma}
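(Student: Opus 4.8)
The plan is to derive the Composition Lemma as an immediate consequence of two facts already recorded in Corollary~\ref{corrpropcond}: the one-sided chain rule (property~8) and the comparison between conditional maximal leakage and maximal leakage under the Markov condition $Z-X-Y$ (property~7).

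Concretely, I would proceed in two steps. First, apply property~8 of Corollary~\ref{corrpropcond}, which holds for any $P_{XYZ}$, to obtain
\begin{equation*}
\ml{X}{(Y,Z)} \leq \ml{X}{Z} + \cml{X}{Y}{Z}.
\end{equation*}
Second, invoke the hypothesis $Z-X-Y$ and apply property~7 of the same corollary, which gives $\cml{X}{Y}{Z} \leq \ml{X}{Y}$. Substituting this bound into the inequality above yields
\begin{equation*}
\ml{X}{(Y,Z)} \leq \ml{X}{Z} + \ml{X}{Y},
\end{equation*}
which is the claim. One should also note that all three quantities are finite (the alphabets are finite), so the chain of inequalities is meaningful; this is automatic from Lemma~\ref{lemmadefprop}, parts~2) and~3).

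There is no real obstacle here once Corollary~\ref{corrpropcond} is in hand — the lemma is purely a bookkeeping combination of properties~7) and~8). If one wanted a self-contained argument, the main work would have been pushed into establishing those two properties: property~8) requires expanding $\exp\{\ml{X}{(Y,Z)}\}$ via Theorem~\ref{thmmainthm} as $\sum_{y,z}\max_{x}P_{YZ|X}(y,z|x)$ and bounding $\max_x P_{YZ|X}(y,z|x) \le \big(\max_x P_{Z|X}(z|x)\big)\big(\max_{x: P_{X|Z}(x|z)>0} P_{Y|XZ}(y|x,z)\big)$ before summing; and property~7) requires noting that under $Z-X-Y$ the support of $P_{X|Z=z}$ is contained in the support of $P_X$ for every $z$, so that each inner maximum over $x$ in the formula of Theorem~\ref{thmcondmaxleakage} is dominated by the corresponding maximum in Theorem~\ref{thmmainthm}. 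But since both are proved earlier, the proof of the Composition Lemma itself is a one-line deduction, and I would present it as such.
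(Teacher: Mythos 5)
Your proof is correct and is exactly the paper's argument: the lemma is stated there as "a simple consequence of properties 7) and 8)" of Corollary~\ref{corrpropcond}, which is precisely the two-step combination you give. Nothing further is needed.
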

\vspace{1mm}
Hence, if an adversary has access to side information $Z$ and this is not known to the system designer (which is often the case in practice), then minimizing $\ml{X}{Y}$ (irrespective of $Z$) is still a reasonable objective. 

The proofs of Theorem~\ref{thmcondmaxleakage} and Corollary~\ref{corrpropcond} are given in Appendices~\ref{app:conditional} and~\ref{app:CorrConditionalProof}, respectively.

\subsection{General Alphabets} \label{sec:general}
Finally, we generalize Theorem~\ref{thmmainthm} to allow for a large class of random variables and stochastic processes. We use the general formula to study a simple model of the SSH side-channel and analyze the performance of commonly used mechanisms. Our analysis suggests that memoryless schemes generally do not perform well under maximal leakage.

Before stating the theorem for general alphabets, we introduce the following notation. For a given probability distribution $P_X$, and a measurable function $f: \mathcal{X} \rightarrow \mathbb{R}$, the \emph{essential supremum} of $f$ with respect to $P_X$ is defined as:
\begin{align}
\label{eqdefessup}
\essup_{P_X} f(X) = \inf \{ \alpha: P_X( \{x: f(x) > \alpha \}) =0 \}.
\end{align}
Equivalently, 
\begin{align}
\label{eqdefessup2}
\essup_{P_X} f(X) = \sup \{ \beta: {P_X}( \{x: f(x) > \beta \}) > 0 \}.
\end{align}
\begin{Theorem} \label{thmgeneralformula}
Let $(\mathcal{X} \times \mathcal{Y}, \sigma_{XY}, P_{XY})$ be a probability space with associated probability spaces $(\mathcal{X}, \sigma_X, P_{X})$ and $(\mathcal{Y}, \sigma_Y, P_{Y})$, where $\sigma_{XY}$ is the product sigma-algebra.
 \begin{enumerate}
\item If $P_{XY} \ll P_X \times P_Y$ and $\sigma_X$ is generated by a countable set, then
\begin{align} \label{eq:generalformula}
\ml{X}{Y} =\log  \int_\mathcal{Y} \essup_{P_X}  f(X,y)   P_Y(dy),
\end{align}
where $f(x,y) = \frac{dP_{XY}}{d(P_X \times P_Y)} (x,y) .$
\item If absolute continuity fails, then $\ml{X}{Y} = +\infty$.
\end{enumerate}
\end{Theorem}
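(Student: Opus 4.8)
The plan is to prove the two inequalities of part~1) separately and to reduce part~2) to the finite-alphabet formula of Theorem~\ref{thmmainthm}. Throughout I would use the observation that for a finite-valued $U$ with $U-X-Y$ the optimal (MAP) estimator satisfies $\Pr(U=\hat U(Y))=\int_\mathcal{Y}\max_u P_{U|Y}(u|y)\,P_Y(dy)$, where $P_{U|Y}(u|\cdot)=\frac{dP_{UY}(u,\cdot)}{dP_Y}$ is well defined since $P_{UY}(u,\cdot)\ll P_Y$ always; hence $\ml{X}{Y}$ is the supremum over finite $U$ of $\log\big(\int_\mathcal{Y}\max_u P_{U|Y}(u|y)P_Y(dy)\,/\,\max_u P_U(u)\big)$. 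For the upper bound of part~1), assume $P_{XY}\ll P_X\times P_Y$ with density $f$. Tonelli's theorem gives, for measurable $A\subseteq\mathcal Y$, $P_{UY}(u,A)=\int_A\big(\int_\mathcal{X}P_{U|X}(u|x)f(x,y)P_X(dx)\big)P_Y(dy)$, so $P_{U|Y}(u|y)=\int_\mathcal{X}P_{U|X}(u|x)f(x,y)P_X(dx)$ for $P_Y$-a.e.\ $y$. Bounding $f(x,y)$ by $\essup_{P_X}f(X,y)$ (legitimate for $P_X$-a.e.\ $x$) and using $\int P_{U|X}(u|x)P_X(dx)=P_U(u)$ yields $\max_u P_{U|Y}(u|y)\le\essup_{P_X}f(X,y)\cdot\max_u P_U(u)$; integrating over $P_Y$ and taking the supremum over $U$ gives $\ml{X}{Y}\le\log\int_\mathcal{Y}\essup_{P_X}f(X,y)\,P_Y(dy)$.

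For the matching lower bound I would adapt the ``shattering'' construction from the proof of Theorem~\ref{thmmainthm}. Since $\sigma_X$ is countably generated, fix a refining sequence of finite measurable partitions $\mathcal P_1\subseteq\mathcal P_2\subseteq\cdots$ with $\sigma(\bigcup_n\mathcal P_n)=\sigma_X$. For fixed $n$, put $\delta=\min\{P_X(A):A\in\mathcal P_n,\ P_X(A)>0\}$ and split each positive-probability cell $A$ into shards of $P_X$-probability $\delta$ (plus one residual shard of probability $<\delta$); let $U_n$ record the shard containing $X$, with $P_{U_n|X=x}$ uniform over the shards of the cell of $x$. Exactly as in Theorem~\ref{thmmainthm} (the residual shards are harmless since their probability is $<\delta$, so the relevant maxima are attained at full shards), one gets $\max_u P_{U_n}(u)=\delta$ and $\max_u P_{U_n|Y}(u|y)=\delta\cdot\max_{A\in\mathcal P_n,\,P_X(A)>0}\frac{1}{P_X(A)}\int_A f(x,y)P_X(dx)=\delta\cdot\essup_{P_X}f_n(X,y)$, where $f_n:=\E_{P_X}[f(\cdot,y)\mid\mathcal P_n]$, so $\ml{X}{Y}\ge\ml{X}{Y}[U_n]=\log\int_\mathcal{Y}\essup_{P_X}f_n(X,y)\,P_Y(dy)$. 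Now $n\mapsto\essup_{P_X}f_n(X,y)$ is nondecreasing (by the tower property, conditional expectation cannot raise an essential supremum), and since $f(\cdot,y)\in L^1(P_X)$ for $P_Y$-a.e.\ $y$, L\'evy's upward theorem gives $f_n(\cdot,y)\to f(\cdot,y)$ $P_X$-a.s.\ there, whence $\essup_{P_X}f_n(X,y)\uparrow\essup_{P_X}f(X,y)$; in particular $y\mapsto\essup_{P_X}f(X,y)$ is measurable, being an increasing limit of functions that are measurable by Tonelli's theorem. Monotone convergence then gives $\ml{X}{Y}\ge\log\int_\mathcal{Y}\essup_{P_X}f(X,y)P_Y(dy)$, completing part~1); the same argument works, with both sides $+\infty$, when the integral diverges.

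For part~2), suppose $P_{XY}\not\ll P_X\times P_Y$, so there is $C\in\sigma_{XY}$ with $(P_X\times P_Y)(C)=0$ and $P_{XY}(C)=\gamma>0$. By the Carath\'eodory construction of the product measure, for each $\delta>0$ there is a countable union of measurable rectangles covering $C$ with total $(P_X\times P_Y)$-measure $<\delta$; truncating to finitely many, I obtain $D=\bigcup_{l\le L}S_l\times T_l$ with $(P_X\times P_Y)(D)<\delta$ and $P_{XY}(D)\ge\gamma/2$. Let $\mathcal A$ be the finite partition of $\mathcal X$ generated by $S_1,\dots,S_L$, let $\mathcal B$ be that of $\mathcal Y$ generated by $T_1,\dots,T_L$, and set $X'=\phi_{\mathcal A}(X)$, $Y'=\psi_{\mathcal B}(Y)$; over each $\mathcal B$-cell $B$ meeting the projection of $D$, $D$ restricts to a rectangle $S_B\times B$ with $S_B$ a union of $\mathcal A$-cells, and $D=\bigsqcup_B(S_B\times B)$. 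The mediant inequality gives $\sum_{y'}\max_{x'}P_{Y'|X'}(y'|x')\ge\sum_B\frac{P_{XY}(S_B\times B)}{P_X(S_B)}$, and Cauchy--Schwarz together with $P_{XY}(S_B\times B)\le P_Y(B)$ bounds this below by $\frac{(\sum_B P_{XY}(S_B\times B))^2}{\sum_B P_{XY}(S_B\times B)\,P_X(S_B)}\ge\frac{(\gamma/2)^2}{(P_X\times P_Y)(D)}>\frac{\gamma^2}{4\delta}$. Since composing any configuration for $(X',Y')$ with $\phi_{\mathcal A},\psi_{\mathcal B}$ yields an admissible configuration for $(X,Y)$, Theorem~\ref{thmmainthm} gives $\ml{X}{Y}\ge\ml{X'}{Y'}=\log\sum_{y'}\max_{x'}P_{Y'|X'}(y'|x')>\log(\gamma^2/4\delta)$, and letting $\delta\to0$ yields $\ml{X}{Y}=+\infty$.

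I expect the main obstacle to be the measure-theoretic bookkeeping in the lower bound of part~1): checking that one finite partition-based $U_n$ achieves $\log\int_\mathcal{Y}\essup_{P_X}f_n(X,y)P_Y(dy)$ exactly (the residual-shard accounting), and that these quantities increase to $\log\int_\mathcal{Y}\essup_{P_X}f(X,y)P_Y(dy)$ — which is precisely where the hypothesis that $\sigma_X$ is countably generated is used, via the refining-partition/martingale device, and which simultaneously supplies the measurability of $y\mapsto\essup_{P_X}f(X,y)$ needed to even state the formula. In part~2), the one non-routine step is choosing the quantization so that $\sum_{y'}\max_{x'}P_{Y'|X'}(y'|x')$ blows up; the Cauchy--Schwarz estimate is what converts ``$(P_X\times P_Y)$-negligible but not $P_{XY}$-negligible'' into an unbounded sum without any control on the individual cell probabilities $P_Y(B)$.
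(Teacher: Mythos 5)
Your proof is correct. For part~1) you follow essentially the paper's route: the upper bound is the same pointwise bound $f(x,y)\le\essup_{P_X}f(X,y)$ inside the integral, and the lower bound uses the same device of refining finite partitions generated by the countable generating class, the shattering construction on each partition, L\'evy's upward theorem, and monotone convergence. The paper reaches the same place by quantizing $X$ to the cell index $U_n$, invoking the finite-alphabet Theorem~\ref{thmmainthm} plus the data processing inequality, and then handling the limit $\essup_{P_X}f_n(X,y)\to\essup_{P_X}f(X,y)$ via Egoroff's theorem; your version applies the shattering directly and asserts this limit from monotonicity plus a.s.\ convergence. That assertion is the one terse step, but it is sound: $\essup_{P_X}f_n\le\essup_{P_X}f$ since conditioning cannot raise the essential supremum, and for any $r<\essup_{P_X}f$ the set $\{f>r\}$ has positive measure, on which $f_n\to f$ a.s.\ forces $P_X(f_n>r)>0$ for all large $n$, so $\essup_{P_X}f_n\ge r$ eventually --- no Egoroff needed. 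Where you genuinely depart from the paper is part~2): the paper disposes of it in two lines by citing the classical fact that failure of absolute continuity forces $I(X;Y)=+\infty$ along some sequence of discretizations and then using $\ml{X_n}{Y_n}\ge I(X_n;Y_n)$, whereas you give a self-contained quantitative argument --- cover the singular set by finitely many rectangles of small product measure but non-small $P_{XY}$-measure, quantize, and lower-bound the Sibson sum by $(\gamma/2)^2/\delta$ via the mediant inequality and Cauchy--Schwarz. Your construction checks out (the degenerate cells with $P_X(S_B)=0$ or $P_Y(B)=0$ contribute nothing and can be dropped), and it buys an explicit rate of blow-up in terms of the covering efficiency at the cost of more bookkeeping; the paper's version is shorter but leans on an external result about mutual information under discretization.
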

\vspace{1mm}

In the discrete case, $\ml{X}{Y}$ depends on $P_{XY}$ only through $P_{Y|X}$ and the support of $P_X$. Although it is not immediately clear from~\eqref{eq:generalformula}, this holds true in the general case in the following sense. Define an equivalence relation on the set of probability measures on a given measurable space as follows: 
\begin{align} \label{eq:equivalence}
P \equiv Q \text{ if } P \ll Q \text{ and }Q \ll P.
\end{align}
Then $\ml{X}{Y}$ depends on $P_{XY}$ only through $P_{Y|X}$ and the equivalence class of $P_{X}$. We formalize this observation in the following lemma. 
\vspace{1mm}
\begin{Lemma} \label{lemmageneralformula}
Let $(\mathcal{X}, \sigma_X, P_{X_1})$ be a probability space, and let $(\mathcal{Y}, \sigma_Y)$ and $(\mathcal{X} \times \mathcal{Y}, \sigma_{XY})$ be measurable spaces, where $\sigma_{XY}$ is the product sigma-algebra. Fix a 
kernel $\mu$ from $\mathcal{X}$ to $\mathcal{Y}$, that is, a function
function $\mu: \mathcal{X} \times \sigma_Y \rightarrow [0,\infty)$ that satisfies:
\begin{enumerate}
\item For every $B \in \sigma_Y$, $\mu(\cdot,B)$ is $\sigma_X$-measurable.
\item For every $x \in \mathcal{X}$, $\mu(x,\cdot)$ is a probability measure on $(\mathcal{Y}, \sigma_Y)$.
\end{enumerate}
Let $P_{X_1 Y_1}$ and $P_{Y_1}$ be the probability measures induced by $P_{X_1}$ and $\mu(\cdot,\cdot)$ on $(\mathcal{X} \times \mathcal{Y}, \sigma_{XY})$ and $(\mathcal{Y}, \sigma_Y)$, respectively. 
 If $P_{X_1 Y_1} \ll P_{X_1} \times P_{Y_1}$, then $\mu(x,\cdot) \ll P_{Y_1}$. If, in addition, $\sigma_X$ is generated by a countable set,
 \begin{align} \label{eq:generalformula2}
 \ml{X_1}{Y_1} =\log \int_{\mathcal{Y}} \essup_{P_{X_1}} \left( \frac{d \mu(X,\cdot)}{dP_{Y_1}} \right) P_{Y_1}(dy)= \log \int_{\mathcal{Y}} \essup_{P_{X}} \left( \frac{d \mu(X,\cdot)}{dQ_{Y}} \right) Q_{Y}(dy),
\end{align} where $P_{X}$ is an arbitrary representative of the equivalence class (cf.~\eqref{eq:equivalence}) of $P_{X_1}$ and $Q_Y$ is any measure satisfying
$P_{Y_1} \ll Q_Y$. 

Consequently, if $P_{X_2}$ is a probability measure on $(\mathcal{X}, \sigma_X)$ satisfying $P_{X_2} \equiv P_{X_1}$, then $ P_{X_2 Y_2} \ll P_{X_2} \times P_{Y_2}$, $P_{Y_1} \equiv P_{Y_2}$, and
\begin{align*}
\ml{X_2}{Y_2} = \ml{X_1}{Y_1},
\end{align*}
where $P_{X_2 Y_2}$ and $P_{Y_2}$ are the induced probability measures on $(\mathcal{X} \times \mathcal{Y}, \sigma_{XY})$ and $(\mathcal{Y}, \sigma_Y)$, respectively. 
\end{Lemma}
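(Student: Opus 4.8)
The plan is to reduce everything to Theorem~\ref{thmgeneralformula} by pinning down the relevant Radon--Nikodym derivative, and then to transport the resulting formula across the two permitted changes of reference measure (the equivalence class of $P_{X_1}$, and any dominating $Q_Y$). Set $f := \frac{dP_{X_1Y_1}}{d(P_{X_1}\times P_{Y_1})}$, a nonnegative $\sigma_{XY}$-measurable function (which exists by hypothesis). Since $P_{X_1Y_1}$ is the measure induced by $P_{X_1}$ and the kernel $\mu$, Tonelli gives, for all $A\in\sigma_X$, $B\in\sigma_Y$, that $\int_A \mu(x,B)\,P_{X_1}(dx) = P_{X_1Y_1}(A\times B) = \int_A \big( \int_B f(x,y)\,P_{Y_1}(dy) \big) P_{X_1}(dx)$, so for each fixed $B$, $\mu(x,B) = \int_B f(x,y)\,P_{Y_1}(dy)$ for $P_{X_1}$-a.e.\ $x$. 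Taking $B$ with $P_{Y_1}(B)=0$ yields $\mu(x,B)=0$ for $P_{X_1}$-a.e.\ $x$, which is the first assertion; moreover $f(x,\cdot)$ is then a version of $\frac{d\mu(x,\cdot)}{dP_{Y_1}}$ for $P_{X_1}$-a.e.\ $x$.

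For the first displayed equality I would invoke Theorem~\ref{thmgeneralformula} (applicable since $\sigma_X$ is countably generated and $P_{X_1Y_1}\ll P_{X_1}\times P_{Y_1}$) to obtain $\ml{X_1}{Y_1} = \log \int_{\mathcal{Y}} \essup_{P_{X_1}} f(X,y)\, P_{Y_1}(dy)$, and then check that the integrand is insensitive to the chosen version of $\frac{d\mu(x,\cdot)}{dP_{Y_1}}$. If $f$ and $\tilde f$ are two jointly measurable versions, the set $D$ where they differ has $P_{Y_1}$-null $x$-sections for $P_{X_1}$-a.e.\ $x$, hence $(P_{X_1}\times P_{Y_1})(D)=0$ by Tonelli, hence (Tonelli again) $P_{X_1}$-null $y$-sections for $P_{Y_1}$-a.e.\ $y$; thus $\essup_{P_{X_1}} f(X,y) = \essup_{P_{X_1}} \tilde f(X,y)$ for $P_{Y_1}$-a.e.\ $y$, and the integrals coincide. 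This gives the first equality, with $\essup_{P_{X_1}}\frac{d\mu(X,\cdot)}{dP_{Y_1}}$ understood as a $P_{Y_1}$-a.e.\ defined object.

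For the second displayed equality I would treat the two changes separately. Passing to an equivalent $P_X$ changes nothing, since $\essup_{P} g = \inf\{\alpha : P(g>\alpha)=0\}$ depends on $P$ only through its null sets and $P_X \equiv P_{X_1}$. Passing to any $Q_Y$ with $P_{Y_1}\ll Q_Y$: since $\mu(x,\cdot)\ll P_{Y_1}\ll Q_Y$ for $P_{X_1}$-a.e.\ $x$, the chain rule gives $\frac{d\mu(x,\cdot)}{dQ_Y}(y) = \frac{d\mu(x,\cdot)}{dP_{Y_1}}(y)\,\frac{dP_{Y_1}}{dQ_Y}(y)$ for $Q_Y$-a.e.\ $y$; the factor $\frac{dP_{Y_1}}{dQ_Y}(y)\ge 0$ is independent of $x$, so it pulls out of the essential supremum, and then $\int_{\mathcal{Y}} h(y)\,\frac{dP_{Y_1}}{dQ_Y}(y)\,Q_Y(dy) = \int_{\mathcal{Y}} h(y)\,P_{Y_1}(dy)$ for nonnegative $h$ recovers the first expression (the set $\{dP_{Y_1}/dQ_Y=0\}$ being $P_{Y_1}$-null).

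Finally, for the consequence, let $P_{X_2}\equiv P_{X_1}$. Then $P_{Y_1}\equiv P_{Y_2}$: from $P_{Y_i}(B) = \int \mu(x,B)\,P_{X_i}(dx)$, $P_{Y_2}(B)=0$ iff $\mu(\cdot,B)=0$ $P_{X_2}$-a.e.\ iff $\mu(\cdot,B)=0$ $P_{X_1}$-a.e.\ iff $P_{Y_1}(B)=0$. Next, $\mu(x,\cdot)\ll P_{Y_1}\ll P_{Y_2}$ for $P_{X_2}$-a.e.\ $x$ with jointly measurable density $f(x,y)\,\frac{dP_{Y_1}}{dP_{Y_2}}(y)$, which by Tonelli is a density of $P_{X_2Y_2}$ with respect to $P_{X_2}\times P_{Y_2}$, so $P_{X_2Y_2}\ll P_{X_2}\times P_{Y_2}$. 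Applying the second equality to $(X_2,Y_2)$ and to $(X_1,Y_1)$ with the common choices $P_X := P_{X_1}$ and $Q_Y := \tfrac12(P_{Y_1}+P_{Y_2})$ (which dominates both), both reduce to $\log\int_{\mathcal{Y}} \essup_{P_{X_1}}\frac{d\mu(X,\cdot)}{dQ_Y}\,Q_Y(dy)$, hence $\ml{X_2}{Y_2} = \ml{X_1}{Y_1}$. I expect the main obstacle to be the measure-theoretic bookkeeping in the invariance steps — the double-Tonelli argument that essential-supremum-then-integrate is blind to the freedom in the kernel densities, together with the joint-measurability choices it rests on; everything else is routine change-of-measure algebra once Theorem~\ref{thmgeneralformula} is in hand.
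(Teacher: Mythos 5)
Your proposal is correct and follows essentially the same route as the paper: identify the density $f_1(x,y)=\frac{dP_{X_1Y_1}}{d(P_{X_1}\times P_{Y_1})}$ with $\frac{d\mu(x,\cdot)}{dP_{Y_1}}(y)$ via a Fubini/Tonelli argument, invoke Theorem~\ref{thmgeneralformula}, and then transport the formula using that the essential supremum depends on $P_X$ only through its null sets and that the factor $\frac{dP_{Y_1}}{dQ_Y}$ pulls out of the essential supremum. The only differences are cosmetic — your explicit version-independence check and your choice of $Q_Y=\tfrac12(P_{Y_1}+P_{Y_2})$ in the last step, where the paper simply takes $Q_Y=P_{Y_1}$ after noting $P_{Y_2}\ll P_{Y_1}$.
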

\vspace{1mm}
The proofs of Theorem~\ref{thmgeneralformula} and Lemma~\ref{lemmageneralformula} are given in Appendices~\ref{app:generalformula} and~\ref{app:lemmageneral}, respectively. We now discuss implications and examples of the theorem.
\vspace{1mm}
\begin{Corollary} \label{corrgeneralformula}
	Let $(\mathcal{X} \times \mathcal{Y}, \sigma_{XY}, P_{XY})$ be a probability space with associated probability spaces $(\mathcal{X}, \sigma_X, P_{X})$ and $(\mathcal{Y}, \sigma_Y, P_{Y})$. Assume $P_{XY} \ll P_X \times P_Y$ and $\sigma_X$ is generated by a countable set. Then
	\begin{enumerate}
		\item $\ml{X}{Y} = 0$ iff $X$ and $Y$ are independent.
		\item (\emph{Additivity}) 
		If $\{(X_i,Y_i)\}_{i=1}^{n}$ are mutually independent, then
		\begin{equation*}
			\ml{X_1^n}{Y_1^n} = \sum_{i=1}^n \ml{X_i}{Y_i}.
		\end{equation*}
		\item $\ml{X}{Y} \geq I(X ; Y)$.
	\end{enumerate}
\end{Corollary}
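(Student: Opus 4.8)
The plan is to read all three properties off the integral formula of Theorem~\ref{thmgeneralformula}. Throughout, write $f = \frac{dP_{XY}}{d(P_X \times P_Y)}$ and $g(y) = \essup_{P_X} f(X,y)$, so that $\ml{X}{Y} = \log \int_{\mathcal{Y}} g(y)\, P_Y(dy)$. The one preliminary fact I would record first is that $\int_{\mathcal{X}} f(x,y)\, P_X(dx) = 1$ for $P_Y$-almost every $y$: by Tonelli this function of $y$ is a version of $\frac{dP_Y}{dP_Y}$. Since an essential supremum always dominates the corresponding mean, this gives $g(y) \geq 1$ for $P_Y$-a.e.\ $y$, hence $\log g(y) \geq 0$ a.e., so every integral below is well-defined in $[0,+\infty]$.

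For property 3), note that $f(x,y) \leq g(y)$ for $(P_X\times P_Y)$-a.e.\ $(x,y)$, and hence for $P_{XY}$-a.e.\ $(x,y)$ by absolute continuity. Taking logarithms and integrating against $P_{XY}$,
\[
I(X;Y) = \int \log f \, dP_{XY} \leq \int_{\mathcal{Y}} \log g(y)\, P_Y(dy) \leq \log \int_{\mathcal{Y}} g(y)\, P_Y(dy) = \ml{X}{Y},
\]
where the second inequality is Jensen's inequality for the concave function $\log$ applied to $g(Y)$ under $P_Y$.

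Property 1) then follows quickly. If $X$ and $Y$ are independent, $P_{XY} = P_X \times P_Y$, so $f \equiv 1$ $(P_X\times P_Y)$-a.e., $g \equiv 1$ $P_Y$-a.e., the integral equals $1$, and $\ml{X}{Y} = 0$. Conversely, if $\ml{X}{Y} = 0$, then property 3) and nonnegativity of relative entropy give $0 = \ml{X}{Y} \geq I(X;Y) = D(P_{XY} || P_X\times P_Y) \geq 0$, so the divergence vanishes and $P_{XY} = P_X\times P_Y$, i.e.\ independence. (One can also argue directly: $\int g\,dP_Y = 1$ together with $g \geq 1$ a.e.\ forces $g \equiv 1$ $P_Y$-a.e., and combined with $\int f(x,y)\, P_X(dx) = 1$ this forces $f(\cdot,y) \equiv 1$ $P_X$-a.e.\ for $P_Y$-a.e.\ $y$, hence $f \equiv 1$ and $P_{XY} = P_X\times P_Y$.)

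For additivity, property 2), I would mirror the discrete argument behind Corollary~\ref{corrprop}. Marginalizing the hypothesis $P_{X_1^nY_1^n}\ll P_{X_1^n}\times P_{Y_1^n}$ (using that, by mutual independence, $P_{X_1^n}$ and $P_{Y_1^n}$ are product measures) gives $P_{X_iY_i}\ll P_{X_i}\times P_{Y_i}$ for each $i$, and the joint Radon--Nikodym derivative factorizes as $f^{(n)}(x^n,y^n) = \prod_{i=1}^n f_i(x_i,y_i)$ with $f_i = \frac{dP_{X_iY_i}}{d(P_{X_i}\times P_{Y_i})}$. The key lemma to verify is that the essential supremum of a product of functions of independent coordinates factorizes: $\essup_{P_{X_1^n}}\prod_i f_i(X_i,y_i) = \prod_i \essup_{P_{X_i}} f_i(X_i,y_i)=:\prod_i g_i(y_i)$, where ``$\le$'' is immediate and ``$\ge$'' uses that each $g_i(y_i)\geq 1>0$, so the sets $\{x_i: f_i(x_i,y_i)>g_i(y_i)-\varepsilon\}$ have positive $P_{X_i}$-measure and their product has positive product measure. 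Then Tonelli gives $\int_{\mathcal{Y}^n}\prod_i g_i(y_i)\,P_{Y_1^n}(dy^n)=\prod_i\int_{\mathcal{Y}_i}g_i(y_i)\,P_{Y_i}(dy_i)=\prod_i\exp\{\ml{X_i}{Y_i}\}$ by Theorem~\ref{thmgeneralformula} applied to each $(X_i,Y_i)$, and taking logarithms yields the claim. The only place needing genuine (if mild) care is this essential-supremum factorization when some $g_i(y_i)$ equals $+\infty$, and the Tonelli step when integrals are infinite; since all integrands are nonnegative these cause no real difficulty, all identities holding in $[0,+\infty]$. Measurability of $y\mapsto g(y)$, which is what makes $\int g\,dP_Y$ meaningful throughout, is already established in the proof of Theorem~\ref{thmgeneralformula}.
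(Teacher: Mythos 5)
Your proposal is correct and follows essentially the same route as the paper: all three properties are read off the integral formula of Theorem~\ref{thmgeneralformula}, with part 3) obtained from Jensen's inequality together with the bound $f(x,y)\leq \essup_{P_X}f(X,y)$ and the normalization $\int_{\mathcal{X}}f(x,y)\,P_X(dx)=1$ ($P_Y$-a.s.), part 1) deduced from part 3) and the definition, and part 2) from the factorization of the Radon--Nikodym derivative over independent pairs. The only differences are cosmetic: you apply Jensen to $\log g(Y)$ under $P_Y$ rather than to $\log\E[f(X,Y)]$ under $P_{XY}$, and you spell out the essential-supremum factorization and Tonelli steps that the paper leaves as a one-line remark.
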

\vspace{1mm}
\begin{proof}
For 3) it suffices to consider the case in which $P_{XY} \ll P_X \times P_Y$.
Let $f(\cdot,\cdot)$ denote the derivative of $P_{XY}$ with respect to
$P_X \times P_Y$ and consider the following.
\begin{align*}
	I(X;Y) = \E [ \log f(X,Y) ] \stackrel{\text{(a)}} \leq
	\log \E [ f(X,Y)] & = \log  \int_\mathcal{Y} \int_\mathcal{X}  f^2(x,y) P_X(dx)  P_Y(dy) \\
	& \leq \log \int_\mathcal{Y} \left(\essup_{P_X} f(X,y) \right) \int_\mathcal{X} f(x,y) P_X (dx) P_Y (dy) \\
	& \stackrel{\text{(b)}} = \ml{X}{Y},
\end{align*}
where (a) follows from Jensen's inequality, and (b) follows from the fact $\int_\mathcal{X} f(x,y) P_X (dx) = 1 \ \ \text{$Y$--a.s.}$ because
\begin{equation*}
\int_{\mathcal{Y}} \left[ \int_{\mathcal{X}} f(x,y) P_X(dx) \right]
     1(y \in B) P_Y(dy) = P(Y \in B)
\end{equation*}
for all $B$. 1) follows from the definition and 3). 2) follows from the fact that if $(X_1,Y_1)$ is independent of $(X_2,Y_2)$, then $f(x_1,x_2,y_1,y_2)=f(x_1,y_1)f(x_2,y_2)$.
\end{proof}

Recall that the data processing inequality also holds by Lemma~\ref{lemmadefprop}. Hence, the general formula of maximal leakage retains the axiomatic properties we required in (R3). It also covers all combinations of discrete, countable, or continuous random variables $X$ and $Y$.
\vspace{1mm} 
\begin{Corollary} \label{corrLXYcontinuous}
	If $X$ and $Y$ are jointly continuous real random variables,  
	\begin{align} \label{LXYcontinuous}
		\ml{X}{Y} = \log \int_{\mathbb{R}} \sup_{x: f_X(x) > 0} f_{Y|X}(y|x) dy,
	\end{align}
	where $f_X$ and $f_{Y|X}(\cdot|\cdot)$ are the marginal pdf of $X$ and the conditional pdf of $Y$ given $X$, respectively.
\end{Corollary}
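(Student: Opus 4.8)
The plan is to obtain Corollary~\ref{corrLXYcontinuous} as a direct specialization of Theorem~\ref{thmgeneralformula}. First I would check the hypotheses of part~1) of that theorem. The Borel $\sigma$-algebra $\sigma_X = \mathcal{B}(\mathbb{R})$ is generated by the countable family $\{(-\infty,q] : q \in \mathbb{Q}\}$, so the countable-generation condition holds. For absolute continuity, note that the joint density factors as $f_{XY}(x,y) = f_X(x) f_{Y|X}(y|x)$; since $f_X(x) = 0$ forces $f_{XY}(x,\cdot) = 0$ a.e.\ and $f_Y(y) = \int f_{XY}(x,y)\,dx = 0$ forces $f_{XY}(\cdot,y) = 0$ a.e., any set that is $(P_X \times P_Y)$-null is also $P_{XY}$-null. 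Hence $P_{XY} \ll P_X \times P_Y$ and Theorem~\ref{thmgeneralformula} gives $\ml{X}{Y} = \log \int_{\mathbb{R}} \essup_{P_X} f(X,y)\, P_Y(dy)$ with $f = \frac{dP_{XY}}{d(P_X \times P_Y)}$.

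Next I would identify $f$. Since $P_{XY}$ has Lebesgue density $f_{XY}$ and $P_X \times P_Y$ has Lebesgue density $(x,y) \mapsto f_X(x) f_Y(y)$, the chain rule for Radon--Nikodym derivatives gives, for $(P_X \times P_Y)$-almost every $(x,y)$,
\[ f(x,y) = \frac{f_{XY}(x,y)}{f_X(x) f_Y(y)} = \frac{f_{Y|X}(y|x)}{f_Y(y)}. \]
Writing $P_Y(dy) = f_Y(y)\,dy$ and noting that $f_Y(y)$ is strictly positive for $P_Y$-almost every $y$ and does not depend on $x$, it pulls out of the essential supremum and cancels, so that
\[ \ml{X}{Y} = \log \int_{\mathbb{R}} \essup_{P_X} \bigl[ f_{Y|X}(y|X) \bigr] \, dy. \]

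The remaining, and only delicate, step is to replace $\essup_{P_X} \bigl[ f_{Y|X}(y|X) \bigr]$ by $\sup_{x : f_X(x) > 0} f_{Y|X}(y|x)$ for Lebesgue-almost every $y$. The inequality ``$\leq$'' is immediate because $P_X$ charges only $\{x : f_X(x) > 0\}$. For the reverse I would work with the version of the conditional pdf given by the transition kernel $x \mapsto f_{Y|X}(\cdot|x)$, which is the version implicit in the statement and for which the right-hand side is unambiguous (cf.\ the version-independence recorded in Lemma~\ref{lemmageneralformula}, which shows $\ml{X}{Y}$ depends only on the kernel and the equivalence class of $P_X$); for such a regular version, $\{x : f_X(x) > 0\}$ carries positive $P_X$-mass on every relatively open subset, so the essential supremum of $x \mapsto f_{Y|X}(y|x)$ with respect to $P_X$ equals its pointwise supremum over $\{x : f_X(x) > 0\}$. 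Substituting this identity yields the claimed formula. I expect this matching of the essential supremum with the pointwise supremum to be the main obstacle; the rest is a mechanical unwinding of Theorem~\ref{thmgeneralformula}.
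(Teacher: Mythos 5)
Your overall route is exactly the paper's: the paper gives no separate argument for Corollary~\ref{corrLXYcontinuous}, treating it as an immediate specialization of Theorem~\ref{thmgeneralformula}, and your verification of the hypotheses together with the identification $f(x,y) = f_{XY}(x,y)/\bigl(f_X(x) f_Y(y)\bigr) = f_{Y|X}(y|x)/f_Y(y)$ and the cancellation of $f_Y$ against $P_Y(dy)=f_Y(y)\,dy$ is the intended computation.

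The one step that does not hold up as argued is the replacement of $\essup_{P_X} f_{Y|X}(y|X)$ by $\sup_{x:\, f_X(x)>0} f_{Y|X}(y|x)$. Your justification --- that $\{x: f_X(x)>0\}$ carries positive $P_X$-mass on every relatively open subset, hence the essential and pointwise suprema agree --- is not valid without a regularity assumption on the kernel: since $P_X$ is nonatomic, one may alter $f_{Y|X}(y|\cdot)$ on a Lebesgue-null set of $x$'s (keeping each $f_{Y|X}(\cdot|x)$ a probability density, so the object remains a legitimate transition kernel) and thereby make the pointwise supremum over the support strictly larger than the essential supremum; positive mass of open sets does not exclude this because $x \mapsto f_{Y|X}(y|x)$ need not be upper semicontinuous. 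The resolution is that the $\sup$ in \eqref{LXYcontinuous} should be read as the $P_X$-essential supremum --- which is precisely what Theorem~\ref{thmgeneralformula} delivers, making this final step vacuous --- or else one must additionally assume that $f_{Y|X}(y|\cdot)$ is (upper semi)continuous on $\{f_X>0\}$, under which your open-sets argument does go through. Either way, you should not attempt to prove the pointwise identity for an arbitrary version of the conditional density: stop at the essential-supremum formula and remark that the two expressions coincide for any sufficiently regular version.
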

\vspace{1mm}
\begin{Example}
	If $X$ and $Y$ are jointly Gaussian, then
	\begin{align*}
		\ml{X}{Y}= \begin{cases}
			0, & \text{if }X \text{ and } Y \text{ are independent,} \\
			+\infty, & \text{otherwise}.
		\end{cases}
	\end{align*}
\end{Example}
\vspace{1mm}
\begin{Example}
	Suppose $X$ is real and its pdf satisfies $f_X(x) > 0$ for all $x \in \mathbb{R}$. Let $Y = X +Z$, where $Z$ is a continuous real random variable independent of $X$. Let $z_0 = \argmax f_Z(z)$. Then
	\begin{align*}
		\ml{X}{Y}= \log \int_{\mathbb{R}} \sup_{x \in \mathbb{R}} f_{Y|X}(y|x) dy = \log \int_{\mathbb{R}} \sup_{x} f_{Z}(y-x) dy =  \log \int_{\mathbb{R}} f_Z (z_0) dy = + \infty.
	\end{align*}
\end{Example}
\vspace{1mm}
 
The above examples suggest that ``adding independent noise'' is not necessarily secure in the maximal leakage sense. The following example considers a simple model of the SSH side-channel and further illustrates this point.
\vspace{1mm}
\begin{Example} Consider the SSH side-channel and suppose we wish to perturb the packet timings before they are sent over the network so that we decrease information leakage. We represent the process of incoming packets as a Poisson process of a given rate, $\lambda$. More formally, 
	fix $T \in \mathbb{R}_+$ and let $\Omega_T$ be the set of all counting functions on $[0,T]$, i.e., $\omega \in \Omega_T$ is an integer-valued, nondecreasing, right-continuous function on $[0,T]$ and $\omega(0)=0$. Let $\{\mathcal{F}_t\}_{t=0}^T$ be the filtration over $\Omega_T$ generated by the mapping $\omega \mapsto \omega_t$. Let $X_0^T$ be a Poisson process of rate $\lambda$, representing the incoming packets. Let $Y_0^T$ be a point process on $(\Omega_T, \mathcal{F}_T)$ representing the outgoing packets.
	\paragraph{Memoryless scheme} Suppose we hold each packet for an independent random amount of time before releasing it into the network. More specifically, let $Y_0^T$ be the output of an initially-empty exponential-server queue with rate $\mu > \lambda$ and input $X_0^T$. Then
	\begin{align} \label{eq:poissonqueue}
		\frac{1}{T} \ml{X_0^T}{Y_0^T} = \mu, 
	\end{align}
	and, as $T \rightarrow \infty$, the average waiting time for a packet (between arrival and transmission) tends to $\frac{1}{\mu-\lambda}$. 
	Note that the system is unstable if $\mu < \lambda$. 
	\begin{proof}
	 Let $P_0$ be the probability measure on $(\Omega_T, \mathcal{F}_T)$ under which the output is distributed as a Poisson process of rate one. It is known~\cite{AaronReliabilityTiming}\cite[Ch. VI, Theorem T3]{BremaudPoint} that for $(x, y) \in \Omega_T \times \Omega_T$,
	\begin{align*}
		\frac{d P_{XY}}{d P_{X} \times P_0} (x,y) = \exp \left[ \int_0^T \log ( \mu \mathbb{I}(x_t > y_{t-}))dy_t + \int_0^T (1- \mu \mathbb{I} (x_t > y_t) )dt  \right] =: L(x,y). 
	\end{align*}
	Now note that,
	\begin{align*}
		\frac{d P_{XY}}{d P_{X} \times P_Y} dP_Y = \frac{d P_{XY}}{d P_{X} \times P_Y} \frac{d P_{X} \times P_0 }{d P_{X} \times P_0} \frac{dP_Y}{dP_0} dP_0 = L dP_0,
	\end{align*}
	where the equalities follow from~\cite[Ex. 32.6, p. 426]{Billingsley95} and the fact that $ \frac{d P_{X} \times P_0}{d P_{X} \times P_Y} = \frac{d P_0}{dP_Y}$. Then
	\begin{align*}
		\ml{X_0^T}{Y_0^T} = \log \int_{ \Omega_T} \essup_{P_X} L(X,y) P_0(dy).
	\end{align*}
	It is easy to verify that $\essup_{P_X} L(X,y) = \exp(y_T \log \mu +T)$. By noting that $y_T$ is distributed as $\mathrm{Poi}(T)$ under $P_0$, we get
	\begin{align*}
		\frac{1}{T} \ml{X_0^T}{Y_0^T} =  \frac{1}{T} \log \int_{ \Omega_T} \exp \left[ y_T \log \mu + T\right] P_0 (dy) = \frac{1}{T} \log \exp [T ( \mu -1) + T]  = \mu. 
	\end{align*}
	The computation of the average waiting time is standard 
(e.g.,~\cite[(3.26)]{kleinrock}).
	\end{proof}
	\paragraph{Accumulate-and-dump} Fix $\tau \in \mathbb{R}_+$ and $m \in \mathbb{N}$. Assume (for simplicity) that $\tau$ divides $T$, and consider the following scheme. The packets are accumulated, then released (``dumped'') only at integer multiples of $\tau$. If in a given interval more than $m$ packets are received, only the first $m$ are sent and the remaining ones are dropped. Then
	\begin{align} \label{eq:accanddump}
	\frac{1}{T} \ml{X_0^T}{Y_0^T} = \frac{1}{\tau} \log(m+1),
	\end{align}
	and the average waiting time for a packet is $\tau/2$ assuming it is
     not dropped. Moreover, 
	\begin{align} \label{eq:accanddumperror}
	P_e \leq e^{\lambda \tau (\nu - (1+\nu)\log(1+\nu))},
\end{align}	 
	 where $P_e$ is the probability that the number of packets exceeds $m$ in a given interval of length $\tau$, and $\nu = (m+1)/(\lambda \tau)-1$. 
	Hence, choosing $m$ to be  $(1+\hat{\nu}) \lambda \tau-1$, for some $\hat{\nu} >0$, yields 
	\[ \frac{1}{T} \ml{X_0^T}{Y_0^T} = \frac{1}{\tau}  \log((1+\hat{\nu})\lambda \tau),\]
	and a probability of dropping a packet that is exponentially small in $\hat{\nu}$. Note that, as opposed to the memoryless scheme above, accumulate-and-dump can make the leakage arbitrarily small. For a more direct comparison, suppose we wish the average waiting time to be no more than $1/\lambda$. Hence, for the memoryless scheme we choose $\mu=2\lambda$, which leads to a leakage of $2\lambda$. For the accumulate-and-dump scheme, choose $\tau=2/\lambda$ and $\hat{\nu} = e^3/2-1 (\approx 9)$. Then one can readily verify that the leakage is $3 \lambda/2$ and $P_e$ is on the order of $10^{-12}$.
	\begin{proof}
	Since the number of arrivals in a Poisson process are identically distributed and independent for non-overlapping intervals of the same length, 
	\begin{align*}
	\frac{1}{T}  \ml{X_0^T}{Y_0^T} = \frac{1}{T} \frac{T}{\tau} \ml{X_0^\tau}{Y_0^\tau} = \frac{1}{\tau} \ml{X_0^\tau}{Y_\tau} = \frac{1}{\tau} \log(m+1),
	\end{align*} where the last equality follows from the fact that $Y_\tau$ is a deterministic function of $X_{0}^{\tau}$ that takes values in $\{0,1,\dots,m\}$. To compute the average waiting time, it is enough to consider the waiting time for the packets that arrive in the first interval $[0, \tau]$. Conditioned on $X_{\tau} = N$, the arrival times are distributed as the ordered statistics of $N$ independent uniform random variables over $[0,\tau]$~\cite[Ch.~4, Theorem 4A]{Parzen99}. Therefore, the conditional average waiting time is $\tau/2$. Hence, the average waiting time is $\tau/2$. Finally, the upper bound on $P_e$ is an application of the Chernoff bound to Poisson random variables.
	\end{proof}
	\paragraph{Inject dummy packets} Song \emph{et al.}~\cite{SSHTiming} suggest using dummy packets to keep the rate of transmission fixed. That is, they use accumulate-and-dump with an extra parameter $m_b \in \mathbb{N}$. If in a given interval $N < m_b$ packets are received, we inject $m_b-N$ dummy packets. Then
	\begin{align} \label{eq:dummypacket}
	\frac{1}{T}  \ml{X_0^T}{Y_0^T}  = \frac{1}{\tau} \log(m-m_b+1).
	\end{align}
	\begin{Remark}
	In~\cite{SSHTiming}, the authors do not suggest an upper bound on the number of packets that can be released in a given interval. They implicitly assume that there exists an $m$ for which the number of arrivals in a given interval is at most $m$ almost surely. This is not true for the Poisson process, but for any process that satisfies this property, the leakage of accumulate-and-dump + inject-dummy-packets is upper-bounded by the right hand side of~\eqref{eq:dummypacket}.
	\end{Remark}
	\vspace{1mm}
	The choice of $m$ and $m_b$ provides a trade-off between the overhead of injecting dummy packets and the probability of dropping a packet. Song \emph{et al.}~\cite{SSHTiming} also point out an important drawback of the memoryless scheme. If the adversary observes several independent instances of the output for the same input (e.g., they eavesdrop several times on the same user while they are inputting their password),
	they can diminish the effect of randomization by considering the average (over the different observations) of the inter-arrival times between successive packets.
\end{Example}
Similar observations hold for the optimal mechanism for the Shannon
cipher system, to which we turn next.
%In the following section, we make these observations more concrete by deriving the optimal mechanisms for the Shannon cipher system. In particular, memoryless schemes are shown to be strictly suboptimal, in general.

\section{Shannon Cipher System} \label{sec:cipher}

The main goal in quantifying information leakage is to enable the design of mechanisms to mitigate it. As an application, we study a (traditional) secrecy setup known as
the Shannon cipher system~\cite{ShannonSecrecy}. 
 The setup consists of a transmitter and a legitimate receiver that are linked by a public noiseless channel and share a common key, and an eavesdropper who has access to the public channel and is aware of the source statistics and the used encryption schemes. The encryption schemes must allow the legitimate receiver to perfectly reconstruct the source sequence. Shannon~\cite{ShannonSecrecy} showed that perfect secrecy (i.e., making the source $X^n$ and the public message $M$ independent) requires a key rate as high as the message rate, which is typically not feasible in practice.  Hence several works~\cite{ShannonSecrecy,yamamoto1997shannoncipher,cuff2014henchman,
cuff2013secrecycausal,MerhavArikanShannonCipher,
MySecrecySystem} studied the optimal \emph{partial} secrecy achievable for a given key rate $r$, and used different measures to assess secrecy guarantees.
\begin{figure}[htp]
\centering
\includegraphics[scale=0.8]{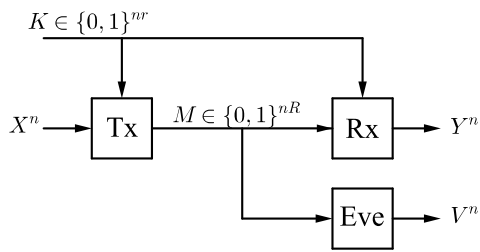}
\caption{The Shannon cipher system with lossy communication: the transmitter and the legitimate receiver have access to a common key $K$, which consists of $nr$ purely random bits, where $r$ is called the key rate. Using both the public message $M$ and the secret key $K$, the legitimate receiver generates a reconstruction $Y^n$ that should satisfy a given distortion constraint. The eavesdropper has access to the public message $M$ only.
}\label{figureShannonCipher2}
\end{figure}

In this section, we use maximal leakage to assess the performance of any feasible encryption scheme. Similarly to previous works, we are concerned with the dependence between the source and the public message (i.e., $\ml{X^n}{M}$ in our case, as opposed to the dependence between the secret key and the public message).
Moreover, we allow for lossy communication by introducing a distortion function $d$ at the legitimate receiver, as shown in Figure~\ref{figureShannonCipher2}. For a given distortion level $D$, we require that the probability of violating the distortion constraint decays as $2^{-n\alpha}$, for a given $\alpha >0$. Then, for a given $D$ and $\alpha$, we study the asymptotic behavior of the normalized maximal leakage.

For a discrete memoryless source (DMS), we derive the optimal (i.e., minimal) limit of the normalized maximal leakage. The scheme we propose for the primary user (i.e., the transmitter--legitimate receiver pair) operates on a type-by-type basis. With each type, we associate a good rate-distortion code. The codebooks are then divided into bins, and the key is used to randomize, within a bin, the choice of codeword associated with a particular source sequence. However, types with low enough probability are discarded, i.e., a dummy message is associated with all the source sequences belonging to such types. 

We also derive the optimal limit when the requirement of a decaying probability of violating the distortion constraint is replaced with an expected distortion constraint. 
In this scenario, one might expect that memoryless schemes are sufficient for optimality. We evaluate this claim by considering the case in which there is no common key and the rate of the channel is high (cf.~Figure~\ref{figurenokey}).
\begin{figure}[htp]
\centering
\includegraphics[scale=0.8]{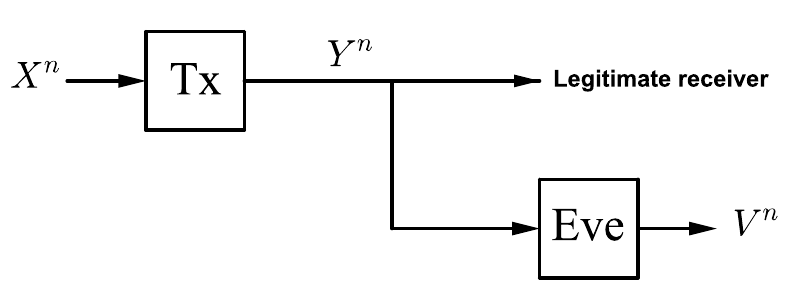}
\caption{Information blurring system.} \label{figurenokey}
\end{figure} 
This setup was dubbed the ``information blurring system'' in~\cite{MySecrecySystem}, and it represents a stylized model of side-channels. For instance, in the SSH setup,  $X^n$ could represent the timings of the incoming packets, $Y^n$ could represent the perturbed timings of the outgoing packets, and the distortion function could represent required quality (e.g., delay) constraints imposed on the system. We show that, even in this setup, memoryless schemes are strictly suboptimal in general. This strengthens our earlier observations in Section~\ref{sec:general} and suggests that commonly used memoryless schemes are generally outperformed by quantization-based schemes.

\subsection{Problem Setup and Statement of Result} \label{SecSCSResult}
Let $\mathcal{X}$ and $\mathcal{Y}$ be the alphabets associated with the transmitter and the legitimate receiver, respectively. The transmitter and the legitimate receiver are connected through a noiseless channel of rate $R$, and share common randomness $K_n \in \mathcal{K}_n = \{0,1\}^{nr}$, where $K_n$ is uniformly distributed over $\mathcal{K}_n$, and $r > 0$ is the rate of the key. The transmitter observes an $n$-length message $X^n=(X_1,X_2,\cdots,X_n)$, independent of $K_n$, and wishes to communicate it to the receiver. Let $f$ and $h$ be, respectively, the transmitter's encoding and the receiver's decoding functions. The transmitter then sends a message $M_n = f(X^n,K_n)$, $M_n \in \mathcal{M}_n =\{0,1\}^{nR}$, and the receiver generates a reconstruction $Y^n = h(M_n,K_n)$. We allow the functions $f$ and $h$ to be randomized (beyond the private randomness in $K_n$). 
For a given distortion function $d: \mathcal{X} \times \mathcal{Y} \rightarrow \mathbb{R}_+$, distortion level $D$, and excess distortion probability $\alpha$, we require that $\Pr(d(X^n,Y^n) > D) \leq 2^{-n\alpha}$, where $d(X^n,Y^n) = \frac{1}{n} \sum_{i=1}^n d(X_i,Y_i)$. 

An eavesdropper intercepts the message $M$. We assume they know the source statistics as well as the encoding and decoding functions, but do not have access to the key $K_n$. 

The primary user aims to minimize the maximal leakage to the eavesdropper $\ml{X^n}{M_n}$. We characterize the asymptotically-optimal normalized maximal leakage under the following assumptions\footnote{Note that it is necessary to have $R \geq \max_{Q: D(Q||P) \leq \alpha } R(Q,D)$ for the primary user's problem to be feasible.}:
\begin{enumerate}
\item[(A1)] The alphabets $\mathcal{X}$ and $\mathcal{Y}$ are finite.
\item[(A2)] The source is memoryless and has full support.
\item[(A3)] The distortion function $d$ is bounded, i.e., there exists $D_{\max}$ such that, for all $x \in \mathcal{X}$ and $y \in \mathcal{Y}$, $d(x,y) \leq D_{\max}$. Moreover, $D \geq D_{\min}$, where $D_{\min} = \max_{x \in \mathcal{X}} \min_{y \in \mathcal{Y}} d(x,y)$.
\item[(A4)] $R > \max_{Q: D(Q||P) \leq \alpha } R(Q,D)$, where $R(Q,D)$ is the rate-distortion function for source distribution $Q$.
\end{enumerate}
We denote the optimal limit by $L(P,D,\overrightarrow{R},\alpha)$, where $P$ is the source distribution, and $\overrightarrow{R}=(R,r)$:
\begin{align*}
%\label{eqdefnorm}
L(P,D,\overrightarrow{R},\alpha) = \lim_{n \rightarrow \infty} \min_{\{f_n \in \mathcal{F}_n \}} \frac{1}{n} \ml{X^n}{f(X^n,K_n)},
\end{align*}
where $\{\mathcal{F}_n\}$ is the set of feasible schemes, i.e., $\mathcal{F}_n = \{ f_n: \mathcal{X}^n \times \{0,1\}^{nr} \rightarrow \{0,1\}^{nR} ~\big|~ \text{there exists } g: \{0,1\}^{nR} \times \{0,1\}^{nr} \rightarrow \mathcal{Y}^n \text{ satisfying } \Pr\left( d\Big(X^n, 
g\big( f \left(X^n,K_n\right), K_n \big) \Big) \right) \leq 2^{-n \alpha}   \}$. 

It will be more notationally convenient in this section to give the answers in bits rather than nats. Hence we will use the logarithm to the base 2 when computing maximal leakage. To avoid confusion, we will explicitly mention the unit we are using.

The main result of this section is the characterization of the optimal limit as follows:
\vspace{1mm}
\begin{Theorem} \label{Thmmain}
Under assumptions (A1)-(A4), for any DMS P and distortion function $d$ with associated distortion level $D \geq D_{\min}$ and distortion excess probability $\alpha > 0$:
\begin{align}
\label{eqmainthm}
L(P,D,\overrightarrow{R},\alpha) =  \max_{Q: D(Q||P) \leq \alpha} [R(Q,D) -r ]^+ \quad (bits),
\end{align}
where $[a]^+ = \max\{0,a\}$.
\end{Theorem}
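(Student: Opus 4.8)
The plan is to prove \eqref{eqmainthm} by a matching converse and achievability, both organized around the method of types. Throughout, write $\gamma_n := \exp_2\{\ml{X^n}{M_n}\}$; since the source has full support, Theorem~\ref{thmmainthm} gives $\gamma_n = \sum_{m}\max_{x^n\in\mathcal{X}^n} P_{M_n|X^n}(m|x^n)$, and this is the only handle on the leakage I shall use. Since the decoder's private randomness $W_h$ is independent of $(X^n,K_n,M_n)$, conditioning on its best value leaves $\ml{X^n}{M_n}$ unchanged while only improving the excess-distortion probability, so I may assume $h$ deterministic; the encoder may still be randomized. I will also assume $r\le R$ without loss of generality, since key beyond the message rate is useless.

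\textbf{Converse.} Fix a type $Q$ of $\mathcal{X}^n$-sequences with $D(Q\|P)<\alpha$ and set $\epsilon_n := 2^{-n\alpha}/\Pr(\hat P_{X^n}=Q)\le\mathrm{poly}(n)\,2^{-n(\alpha-D(Q\|P))}\to 0$, which bounds the conditional excess-distortion probability on the type class $T_Q^n$. Averaging over keys and then over $T_Q^n$ and applying Markov's inequality twice produces a key value $k^\star$ and a set $\mathcal{G}\subseteq T_Q^n$ with $|\mathcal{G}|\ge(1-o(1))|T_Q^n|$ such that every $x^n\in\mathcal{G}$ is decoded within distortion $D$ with probability at least $1/2$ under key $k^\star$; for such $x^n$, writing $S(x^n):=\{m:d(x^n,h(m,k^\star))\le D\}$, this forces $\sum_{m\in S(x^n)}P_{M_n|X^n}(m|x^n)\ge 2^{-nr-1}$. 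Summing over $x^n\in\mathcal{G}$ and exchanging the order of summation, the inner set $\{x^n\in\mathcal{G}:m\in S(x^n)\}$ lies in the distortion-$D$ ball around $h(m,k^\star)$ intersected with $T_Q^n$, whose size is at most $\mathrm{poly}(n)\,2^{n(H(Q)-R(Q,D))}$ by the type-covering estimate (any joint type with $\mathcal{X}$-marginal $Q$ and average distortion $\le D$ has mutual information $\ge R(Q,D)$). Combining the two bounds gives $\gamma_n\ge\mathrm{poly}(n)^{-1}2^{n(R(Q,D)-r)}$, hence $\tfrac1n\ml{X^n}{M_n}\ge R(Q,D)-r-o(1)$; together with $\ml{X^n}{M_n}\ge 0$, optimizing over $Q$ with $D(Q\|P)<\alpha$, and invoking continuity of $Q\mapsto R(Q,D)$ (valid since $D\ge D_{\min}$ and $P$ has full support), I obtain the lower bound in \eqref{eqmainthm}.

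\textbf{Achievability.} Choose a constant $c$ so large that the types $Q$ with $D(Q\|P)>\alpha+c\tfrac{\log n}{n}$ have total probability at most $2^{-n\alpha}$ (this beats the polynomial prefactor in Sanov's bound); call these \emph{discarded} and the rest \emph{kept}. For each kept $Q$ the type-covering lemma supplies a codebook $\mathcal{C}_Q$ of size $N_Q\le\mathrm{poly}(n)\,2^{nR(Q,D)}$ covering $T_Q^n$ to within distortion $D$; arrange its codewords in a grid with $2^{nr}$ rows and $B_Q=\lceil N_Q/2^{nr}\rceil$ columns. On input $x^n$ of kept type $Q$ the encoder picks a covering codeword and transmits $M_n=(Q,\ \mathrm{column},\ \mathrm{row}\oplus K_n)$ (the type descriptor costing $O(\log n)$ bits), and on a discarded type it sends a fixed dummy symbol; the receiver recovers the row from $K_n$, hence the codeword, hence a reconstruction meeting the distortion constraint except on discarded types, a $\le 2^{-n\alpha}$ event, while the rate is $r+\max_{Q\text{ kept}}[R(Q,D)-r]^+ + o(1)$, below $R$ for large $n$ by (A4). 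Because $K_n$ is uniform, for every kept $Q$ and every used column $b$ one has $\max_{x^n}P_{M_n|X^n}((Q,b,s)|x^n)=2^{-nr}$ for all $2^{nr}$ values of $s$ and zero otherwise, so $\gamma_n=1+\sum_{Q\text{ kept}}(\#\text{used columns in }Q)\le\mathrm{poly}(n)\,2^{n\max_{Q\text{ kept}}[R(Q,D)-r]^+}$; dividing by $n$ and letting $n\to\infty$ (using the same continuity argument to absorb the $c\tfrac{\log n}{n}$ slack) yields the upper bound in \eqref{eqmainthm}.

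\textbf{Main obstacle.} The delicate point is the converse for \emph{randomized} encoders: estimating $\gamma_n$ by the number of messages ever used collapses (a uniformly randomized encoder has vanishing leakage yet useless reconstruction), so one must weight messages by $\max_{x^n}P_{M_n|X^n}(\cdot|x^n)$ and extract the bound from the double-counting identity above, which crucially exploits the distortion constraint on a single heavy type class. A secondary nuisance, on both sides, is controlling the boundary of $\{Q:D(Q\|P)\le\alpha\}$: the converse must approach it from inside and the achievability from slightly outside (with a $\Theta(\tfrac{\log n}{n})$ margin to kill Sanov's polynomial prefactor), and closing the gap needs the (uniform) continuity of $R(Q,D)$ in $Q$ on the simplex, which holds because $D\ge D_{\min}$.
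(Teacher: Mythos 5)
Your proof is correct. The achievability half is essentially the paper's own argument: type-by-type covering codes, binning into groups of $2^{nr}$ codewords, XOR-ing the within-bin index with the key, a dummy message for types outside a slightly enlarged divergence ball, and the observation that each used bin contributes exactly $1$ to $\sum_m \max_{x^n} P_{M_n|X^n}(m|x^n)$ (your $\Theta(\log n/n)$ margin versus the paper's fixed $\delta>0$ is immaterial). The converse, however, is genuinely different in organization. The paper lower-bounds the operational quantity directly: it instantiates the ``shattering'' $P_{U|X^n}$ of~\eqref{eqshattering} and a three-stage sub-optimal guesser (uniform key guess, then a guessing function for $x^n$ from the reconstruction supplied by Lemma~\ref{lemmaguessing}, then MAP for $U$), and tracks $\Pr(U=g(M))/p^\star$ through all types simultaneously via $P_f(\mathcal{A}^c\cap T_{Q})$. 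You instead work directly with the closed form $\gamma_n=\sum_m\max_{x^n}P_{M_n|X^n}(m|x^n)$ from Theorem~\ref{thmmainthm}, extract a single good key $k^\star$ and a heavy subset of one type class by two applications of Markov's inequality, and obtain the bound from a double-counting identity together with the estimate $|\{x^n\in T_Q: d(x^n,y^n)\le D\}|\le \mathrm{poly}(n)\,2^{n(H(Q)-R(Q,D))}$. That estimate is the same combinatorial core as the paper's Lemma~\ref{lemmaguessing} (a guessing function with the stated success probability exists precisely because the ball it must search has this cardinality), so the two converses rest on the same fact; what your version buys is the elimination of the auxiliary variable $U$ and of the external guessing lemma, at the cost of being tied to the $I_\infty$ formula rather than the operational definition. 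Your handling of randomized encoders (weighting messages by $\max_{x^n}P_{M_n|X^n}(\cdot|x^n)$ rather than counting used messages) and of the boundary of $\{Q:D(Q\|P)\le\alpha\}$ matches the paper's Lemmas~\ref{lemmacontinuity} and~\ref{lemmalimit}; the only loose end is your rate accounting $r+\max_Q[R(Q,D)-r]^+$, which should be $\max_Q R(Q,D)+o(1)$ since the within-bin index never needs more than $\log_2|\mathcal{C}^n_{Q}|$ bits of key---a bookkeeping point, not a gap.
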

\vspace{1mm}
Note that the case $\alpha=\infty$ (i.e., when the distortion constraint is imposed almost surely) is included in the theorem. Moreover, in that case, the theorem holds even if the source is not memoryless, as long as the support of $X^n$ is $\mathcal{X}^n$. This follows from the fact that $\ml{X^n}{M_n}$ and the constraint, when imposed almost surely, depend on the distribution of $X^n$ only through its support. Therefore, solving for any specific distribution on that support is equivalent to solving for all distributions on the same support.

Before proving the theorem (in Sections~\ref{SecSCSAch} and~\ref{SecSCSConv} for achievability and converse, respectively), we discuss a variation using an expected distortion constraint and its implication on the performance of memoryless schemes.

\subsection{Memoryless Schemes with Expected Distortion } \label{SecSCSExpected}
Instead of requiring a decaying probability of violating the distortion constraint, we could require that the distortion constraint holds only in expectation---as is common in many works in the literature. In that case, we modify assumption (A4) to be:
\begin{enumerate}
\item[(A4')] $R > R(P,D)$.
\end{enumerate}
%Then, the following theorem holds.
\vspace{1mm}
\begin{Theorem} \label{Thmvary}
Under assumptions (A1)-(A3) and (A4'), for any DMS P and distortion function $d$ with associated distortion level $D \geq D_{\min}$:
\begin{align}
\label{eqthmvary}
L(P,D,\overrightarrow{R}) =  [R(P,D) -r ]^+  \quad (bits).
\end{align}
\end{Theorem}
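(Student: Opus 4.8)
The plan is to establish matching asymptotic upper (achievability) and lower (converse) bounds on $\frac1n\ml{X^n}{M_n}$. Since the converse will in fact hold for every blocklength $n$ and the achievability will be asymptotic, the limit defining $L(P,D,\overrightarrow R)$ exists and equals $[R(P,D)-r]^+$ bits; in particular it also forces $\mathcal F_n\neq\emptyset$ for large $n$. The argument is essentially that of Theorem~\ref{Thmmain} but without the type-by-type construction, since a single codebook tuned to $P$ suffices.

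\textbf{Converse.} Fix any feasible scheme for blocklength $n$, producing the public message $M_n$ and a reconstruction with $\E[d(X^n,Y^n)]\le D$. Derandomizing the decoder (fixing the best realization of its private randomness) we may assume $Y^n$ is a deterministic function of $(M_n,K_n)$ while keeping $\E[d(X^n,Y^n)]\le D$, so $X^n-(M_n,K_n)-Y^n$. By property~3) of Corollary~\ref{corrgeneralformula} (or Lemma~\ref{lemmacompareMI}), $\ml{X^n}{M_n}\ge I(X^n;M_n)$. Now $I(X^n;M_n,K_n)=I(X^n;M_n)+I(X^n;K_n\mid M_n)$ with $I(X^n;K_n\mid M_n)\le H(K_n)=nr$ bits, and $I(X^n;M_n,K_n)\ge I(X^n;Y^n)$ by data processing, so $I(X^n;M_n)\ge I(X^n;Y^n)-nr$. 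Since $X^n$ is i.i.d.\ $P$, $I(X^n;Y^n)\ge\sum_iI(X_i;Y_i)\ge\sum_iR(P,\E[d(X_i,Y_i)])\ge nR\!\left(P,\tfrac1n\sum_i\E[d(X_i,Y_i)]\right)\ge nR(P,D)$, using convexity and monotonicity of $R(P,\cdot)$. Combining with $\ml{X^n}{M_n}\ge0$ gives $\tfrac1n\ml{X^n}{M_n}\ge[R(P,D)-r]^+$ for every feasible scheme.

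\textbf{Achievability.} Fix small $\delta,\epsilon>0$ with $R(P,D-\delta)+\epsilon<R$; this is possible by (A4') and continuity of $R(P,\cdot)$ on $(D_{\min},\infty)$ (the boundary $D=D_{\min}$ is treated separately below). For $n$ large pick a rate-distortion codebook $\mathcal C$ of size $2^{n(R(P,D-\delta)+\epsilon)}$ for source $P$ whose closest-codeword reconstruction has expected distortion at most $D-\delta+o(1)\le D$ (boundedness of $d$ by $D_{\max}$ absorbs the rare bad sequences). Partition $\mathcal C$ into bins of size $2^{\min\{nr,\ n(R(P,D-\delta)+\epsilon)\}}$, index each codeword by its (bin number, within-bin position), and let the public message be $\bigl(\text{bin number},\ (\text{within-bin position})\oplus K_n\bigr)$, one-time-padding the position with a prefix of the key; the receiver recovers the codeword from $(M_n,K_n)$. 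The key-masked component is uniform and independent of $X^n$, so substituting the induced $P_{M_n\mid X^n}$ into the closed form of Theorem~\ref{thmmainthm} collapses it to the count of occupied bins: $\ml{X^n}{M_n}\le\log_2(\#\text{bins})\le n\,[R(P,D-\delta)+\epsilon-r]^++O(1)$ bits. Letting $\delta,\epsilon\downarrow0$ yields $\limsup_n\min_{f_n}\tfrac1n\ml{X^n}{M_n}\le[R(P,D)-r]^+$. (Alternatively, apply Theorem~\ref{Thmmain} at level $D-\delta$ with any fixed $\alpha>0$; its $2^{-n\alpha}$ excess-distortion probability makes $\E[d]\le D-\delta+D_{\max}2^{-n\alpha}\le D$ for large $n$, and $\max_{Q:D(Q\|P)\le\alpha}[R(Q,D-\delta)-r]^+\to[R(P,D)-r]^+$ as $\alpha,\delta\downarrow0$.)

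\textbf{Main obstacle.} The information-theoretic content is routine; the delicate point is the achievability bookkeeping required to meet the \emph{expected}-distortion constraint with $D$ exactly rather than $D+\epsilon$, which forces the shrink to $D-\delta$ and the appeal to continuity of $R(P,\cdot)$, and which genuinely breaks at $D=D_{\min}$ (where $D-\delta$ is infeasible). There one instead reconstructs $y^\star(X_i)\in\argmin_y d(X_i,y)$ — achieving per-letter distortion at most $D_{\min}$ almost surely — via an essentially lossless description of $(y^\star(X_i))_i$ binned against the key; one then checks this matches $[R(P,D_{\min})-r]^+$. One must also verify carefully, using the formula of Theorem~\ref{thmmainthm}, that the key-masked part of the message contributes nothing to the leakage, and that the ceiling/padding overhead is $o(n)$.
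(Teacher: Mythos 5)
Your proof is correct and follows essentially the same route as the paper: achievability via a single expected-distortion rate-distortion codebook, binned and one-time-padded by the key (the paper's stated construction, or equivalently the limit of Theorem~\ref{Thmmain} as the excess-distortion parameters vanish), and a converse that reduces to the mutual-information bound through $\ml{X^n}{M_n}\ge I(X^n;M_n)$. The only differences are cosmetic: you derive the mutual-information converse by the standard single-letterization where the paper simply cites Corollary~5 of Schieler and Cuff, and you explicitly flag the $D=D_{\min}$ boundary of the $D-\delta$ continuity argument, an edge case the paper's terse proof glosses over.
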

\vspace{1mm}
\begin{proof}
The achievability argument follows by a similar manner as the one given in subsection~\ref{SecSCSAch}. However, instead of encoding on a type-by-type basis, we simply use a good rate-distortion code that satisfies the expected distortion requirement and divide it into bins of size $2^{nr}$. One could also derive it from Theorem~\ref{Thmmain} as follows:
\begin{align*}
L(P,D,\overrightarrow{R}) \leq \lim_{D' \rightarrow D^+} \lim_{\alpha \rightarrow 0} L(P,D',\overrightarrow{R},\alpha) = [R(P,D)-r]^+.
\end{align*}
As for the lower bound, we use the fact that $I_\infty(X;Y) \geq I(X;Y)$~\cite{Verdualpha}. This problem, with mutual information replacing maximal leakage, has already been solved by Schieler and Cuff~\cite{cuff2013secrecycausal}. More specifically, Corollary 5 of~\cite{cuff2013secrecycausal} yields  that the optimal normalized mutual information is indeed given by $[R(P,D) - r]^+$.
\end{proof}

With the expected distortion constraint, one might venture that the optimal limit is achievable with a memoryless scheme, in which the encoder passes the source through i.i.d copies of an optimal conditional distribution $P_{Y|X}$. Counter to this common intuition, and counter to the case in which leakage is measured via mutual information, we show that this is generally not the case when the objective is maximal leakage.

To that end, consider the case in which $r=0$ and $R = \log |\mathcal{Y} |$ (cf. Figure~\ref{figurenokey}). By Theorem~\ref{Thmvary}, $L(P,D) = R(P,D)$. Now define
\begin{align} \label{eqHammingMemoryless0}
L^{\text{mem}}_n(P,D)  =  \min_{P_{Y|X}: \E[d(X,Y)] \leq D} \frac{1}{n} \ml{X^n}{Y^n},
\end{align}
where $P_{X^n Y^n} = \prod_{i=1}^n P_{X_i}P_{Y_i|X_i}$. By the additive property of maximal leakage, it follows straightforwardly that $L^{\text{mem}}_n(P,D) $ does not depend on $n$, so we will drop the $n$ subscript. 
\vspace{1mm}
\begin{Lemma} \label{lemmaMemWeak}
$L^{\text{mem}}(P,D) = R(P,D)$ if and only if there exists $P_{Y|X}$ that achieves the rate-distortion function and satisfies 1) $P_{XY}(x,y)P_{XY}(x',y) > 0 \Rightarrow P_{Y|X}(y|x) = P_{Y|X}(y|x')$, and 2) $\sum_{x: P_{XY}(x,y) >0} P_X(x) = \sum_{x': P_{XY}(x',y')>0} P_X(x')$ for all $y,y' \in \mathrm{supp}(Y)$.
\end{Lemma}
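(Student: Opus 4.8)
The plan is to combine Theorem~\ref{Thmvary} with the equality conditions of Lemma~\ref{lemmacompareMI}, applied single-letter. First I would recall that Theorem~\ref{Thmvary} (with $r=0$, $R=\log|\mathcal{Y}|$) gives $L(P,D)=R(P,D)$, and that by additivity $L^{\text{mem}}(P,D)=\ml{X}{Y}$ for any single-letter $P_{Y|X}$ with $\E[d(X,Y)]\le D$. Hence $L^{\text{mem}}(P,D)=\min_{P_{Y|X}:\E[d(X,Y)]\le D}\ml{X}{Y}$. Next, since $\ml{X}{Y}\ge I(X;Y)\ge R(P,D)$ for any such $P_{Y|X}$ (the first inequality is Lemma~\ref{lemmacompareMI}/Corollary~\ref{corrprop}, the second is the definition of the rate--distortion function as the infimum of $I(X;Y)$ over distortion-admissible channels), we always have $L^{\text{mem}}(P,D)\ge R(P,D)=L(P,D)$. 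So the question is exactly when this chain of inequalities is tight for \emph{some} admissible $P_{Y|X}$.

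The key step is then: $L^{\text{mem}}(P,D)=R(P,D)$ iff there exists an admissible $P_{Y|X}$ for which both $\ml{X}{Y}=I(X;Y)$ \emph{and} $I(X;Y)=R(P,D)$ hold simultaneously. The second condition says $P_{Y|X}$ achieves the rate--distortion function. The first condition is characterized by Lemma~\ref{lemmacompareMI}: $\ml{X}{Y}=I(X;Y)$ iff (1) $P_{XY}(x,y)>0$ and $P_{XY}(x',y)>0$ imply $P_{Y|X}(y|x)=P_{Y|X}(y|x')$ (the ``singular'' condition), and (2) for all $y,y'\in\mathrm{supp}(Y)$, $\sum_{x:P_{XY}(x,y)>0}P_X(x)=\sum_{x':P_{XY}(x',y')>0}P_X(x')$. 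Substituting these two conditions yields precisely the statement of the lemma. One small point to check in the forward direction: if the minimum defining $L^{\text{mem}}$ is attained, the minimizer must satisfy both tightness conditions; if it is only approached, a compactness argument (the admissible set of $P_{Y|X}$ is compact and $\ml{X}{Y}$ is continuous on it) shows the minimum is attained, so ``there exists'' is legitimate rather than ``there is a sequence approaching.''

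I expect the main obstacle to be purely bookkeeping: verifying that the infimum in $R(P,D)=\inf I(X;Y)$ and the minimum in the definition of $L^{\text{mem}}$ can be taken as attained minima over the same compact set, so that ``$P_{Y|X}$ achieving the rate--distortion function'' and ``$P_{Y|X}$ achieving $L^{\text{mem}}$'' can be talked about as honest minimizers, and confirming that a channel can simultaneously meet conditions (1) and (2) of Lemma~\ref{lemmacompareMI} while being distortion-admissible and rate-optimal — i.e., that there is no hidden incompatibility forcing strict suboptimality in general (indeed the subsequent discussion of Hamming distortion shows the conditions genuinely fail for natural examples, which is the point). No genuinely hard estimate is involved; the content is the reduction to the already-proven Lemma~\ref{lemmacompareMI} via Theorem~\ref{Thmvary} and the single-letterization afforded by additivity.
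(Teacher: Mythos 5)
Your argument is exactly the paper's: the paper's entire proof is the remark that the claim ``follows straightforwardly from Lemma~\ref{lemmacompareMI},'' i.e., from the chain $\ml{X}{Y}\geq I(X;Y)\geq R(P,D)$ together with the equality conditions of that lemma, which is precisely the reduction you carry out. Your added care about attainment of the minimum (continuity of $\exp\{\ml{X}{Y}\}$ in $P_{Y|X}$ on the compact admissible set) is a harmless elaboration of a point the paper leaves implicit.
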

\vspace{1mm}
\begin{proof}
The proof follows straightforwardly from Lemma~\ref{lemmacompareMI}.
\end{proof}
The above conditions imply that for some conditional achieving the rate-distortion function $P^\star_{Y|X}$, $L^{\text{mem}}(P,D)= \ml{X}{Y} =I(X;Y)=R(P,D) $. If $X$ has full support, then $\ml{X}{Y}= I(X;Y)  \Rightarrow \ml{X}{Y} = C(P^\star_{Y|X})$. Hence, $R(P,D) = C(P^\star_{Y|X})$. The latter equality is not a sufficient condition, however. Hence, memoryless schemes are strictly suboptimal, except  in very special cases.

We next strengthen this observation by relaxing the constraint in~\eqref{eqHammingMemoryless0} by allowing the choice of the conditional distribution $P_{Y_i|X_i}$ to depend on the index $i$. So define
\begin{align} \label{eqHammingMemoryless}
L^{\text{mem,i}}_n(P,D)  = & \min_{P_{Y^n|X^n}: P_{Y^n|X^n} = \prod_{i=1}^n P^{(i)}_{Y_i|X_i}} \frac{1}{n} \ml{X^n}{Y^n} \\ 
\text{subject to } & \E[d(X^n,Y^n)]  \leq D. \notag
\end{align}
This is still not sufficient to achieve optimality in general, as the following lemma shows.
%and suppose the source is binary and the distortion constraint is the Hamming distance. More specifically, let $0 \leq D \leq p \leq 1/2$,  and let $X^n$ be i.i.d $\sim \mathrm{Ber}(p)$. Define
%The minimization only considers product distributions, that is, it corresponds to schemes that can be implemented memorylessly. 
\vspace{1mm}
\begin{Lemma} \label{LemmaMemoryless}
Suppose $X^n$ is i.i.d $\sim \mathrm{Ber}(p)$, $p \in (0,1/2]$, $d$ is the Hamming distortion, and $D \in [0,p]$. Then
\begin{align*}
L_n^{\text{mem,i}}(P,D) \geq (1-D/p) \quad (bits). 
\end{align*}
\end{Lemma}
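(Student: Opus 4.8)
The plan is to use additivity of maximal leakage to reduce the claim to a single coordinate, then prove a sharp per-coordinate lower bound on $\ml{X_i}{Y_i}$ in terms of the per-coordinate Hamming distortion, and finally recombine by linearity. First, since $X^n$ is i.i.d.\ and $P_{Y^n|X^n}=\prod_{i=1}^n P^{(i)}_{Y_i|X_i}$, the pairs $\{(X_i,Y_i)\}_{i=1}^n$ are mutually independent, so the additivity property in Corollary~\ref{corrprop} gives $\ml{X^n}{Y^n}=\sum_{i=1}^n\ml{X_i}{Y_i}$. Writing $\delta_i=\E[d(X_i,Y_i)]$, the distortion constraint becomes $\frac1n\sum_{i=1}^n\delta_i=\E[d(X^n,Y^n)]\le D$. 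Since $X_i\sim\mathrm{Ber}(p)$ has full support ($0<p\le 1/2<1$), Theorem~\ref{thmmainthm} gives, in bits, $\ml{X_i}{Y_i}=\log_2 S_i$ with $S_i=\sum_{y}\max\{P_{Y_i|X_i}(y|0),P_{Y_i|X_i}(y|1)\}$; I would write $S_i=2-O_i$, where $O_i=\sum_{y}\min\{P_{Y_i|X_i}(y|0),P_{Y_i|X_i}(y|1)\}$, using $\max\{a,b\}+\min\{a,b\}=a+b$ together with the fact that $P_{Y_i|X_i}(\cdot|0)$ and $P_{Y_i|X_i}(\cdot|1)$ are probability vectors.

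The heart of the argument is the per-coordinate inequality $O_i\le\delta_i/p$. To prove it, I would bound the $y=0$ term of $O_i$ by $P_{Y_i|X_i}(0|1)=\Pr(Y_i=0\mid X_i=1)$ and each $y\neq 0$ term by $P_{Y_i|X_i}(y|0)$, which gives $O_i\le\Pr(Y_i\neq0\mid X_i=0)+\Pr(Y_i=0\mid X_i=1)\le\Pr(Y_i\neq0\mid X_i=0)+\Pr(Y_i\neq1\mid X_i=1)$, the last step using $\{Y_i=0\}\subseteq\{Y_i\neq1\}$. On the other hand, because $d$ is Hamming distortion, $\delta_i=(1-p)\Pr(Y_i\neq0\mid X_i=0)+p\,\Pr(Y_i\neq1\mid X_i=1)$, so $\delta_i/p=\frac{1-p}{p}\Pr(Y_i\neq0\mid X_i=0)+\Pr(Y_i\neq1\mid X_i=1)$; since $p\le 1/2$ forces $\frac{1-p}{p}\ge 1$, a term-by-term comparison of the two expressions yields $O_i\le\delta_i/p$. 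I expect this to be the main obstacle: it is the only place the source asymmetry $p\le 1/2$ enters, and the statement is false without it; everything else is bookkeeping.

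Finally I would assemble the bound. From $O_i\le\delta_i/p$ we obtain $S_i\ge 2-\delta_i/p$. If $\delta_i\ge p$ then $\ml{X_i}{Y_i}=\log_2 S_i\ge 0\ge 1-\delta_i/p$; if $\delta_i<p$ then $2-\delta_i/p\in(1,2]$ and $\ml{X_i}{Y_i}\ge\log_2(2-\delta_i/p)\ge 1-\delta_i/p$, the last inequality because $t\mapsto\log_2(2-t)$ is concave on $[0,1]$ and agrees with $1-t$ at $t=0$ and $t=1$, hence lies above the line $1-t$ there (apply this at $t=\delta_i/p\in[0,1)$). In every case $\ml{X_i}{Y_i}\ge 1-\delta_i/p$, and therefore
\[
\frac1n\ml{X^n}{Y^n}=\frac1n\sum_{i=1}^n\ml{X_i}{Y_i}\ \ge\ \frac1n\sum_{i=1}^n\Big(1-\frac{\delta_i}{p}\Big)\ =\ 1-\frac1p\cdot\frac1n\sum_{i=1}^n\delta_i\ \ge\ 1-\frac{D}{p},
\]
which, minimized over feasible product channels, is exactly the claimed lower bound on $L_n^{\text{mem,i}}(P,D)$ (in bits).
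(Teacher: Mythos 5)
Your proof is correct and follows the same skeleton as the paper's: additivity reduces the claim to the per-coordinate bound $\ml{X_i}{Y_i}\ge\log_2(2-\delta_i/p)$, which is then relaxed to $1-\delta_i/p$ by concavity of $t\mapsto\log_2(2-t)$ and summed against the average-distortion constraint. Where you genuinely diverge is in how that per-coordinate bound is obtained. The paper proves it as an exact statement (equation~\eqref{eqHammingresult}, Appendix~\ref{app:maxleakhamming}) by solving the constrained minimization over binary channels, parametrizing by the crossover probabilities $W_{10},W_{01}$, showing the objective collapses to $2-(W_{10}+W_{01})$ under the constraint, and exhibiting the minimizer $W_{10}^\star=0$, $W_{01}^\star=D_i/p$. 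You instead use the identity $S_i+O_i=2$ for the row-wise maxima and minima of a two-row stochastic matrix and bound $O_i\le\delta_i/p$ by a term-by-term comparison; this yields only the one-sided inequality, but that is all the lemma requires, it does not presume a binary reconstruction alphabet, and it isolates cleanly the single point where $p\le 1/2$ is used. Your explicit treatment of the case $\delta_i\ge p$ also fills in what the paper dismisses with an unelaborated ``without loss of generality, $D_i\le p$.'' Both arguments are sound; yours trades the exact single-letter characterization for a shorter and slightly more general derivation of the bound actually needed.
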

\vspace{1mm}
On the other hand, by Theorem~\ref{Thmvary}, $L(P,D)=R(P,D) = H(p) - H(D)$.
Since $H(p) - H(D) < 1-D/p$ in general (where the inequality can be checked using convexity), memoryless schemes are strictly suboptimal.
\begin{proof}
For any $P_{Y^n|X^n}$ in the minimization, let $D_i = \E[d(X_i,Y_i)]=\Pr(X_i \neq Y_i)$. Without loss of generality, we can assume $D_i \leq p$. Then
\begin{align*}
\ml{X^n}{Y^n}  = \sum_{i=1}^n \ml{X_i}{Y_i} \geq \sum_{i=1}^n \min_{ \substack{ P_{Y_i | X_i}: \\ \Pr(X_i \neq Y_i) \leq D_i }} \ml{X_i}{Y_i}.
\end{align*}
We show in Appendix~\ref{app:maxleakhamming} that 
\begin{align} \label{eqHammingresult}
\min_{ \substack{ P_{Y_i | X_i}: \\ \Pr(X_i \neq Y_i) \leq D_i }} \ml{X_i}{Y_i} = \log_2 (2-D_i/p) \quad (bits).
\end{align}
Thus,
\begin{align*}
\ml{X^n}{Y^n} 
& \geq \sum_{i=1}^n \log_2(2-D_i/p) \\
& = \sum_{i=1}^n \log_2 \Big(2 - (D_i/p)(1)-(1-D_i/p)(0) \Big) \\
& \stackrel{\text{(a)}} \geq \sum_{i=1}^n(D_i/p) \log_2(1) +(1-D_i/p) \log_2(2) \\
& = \sum_{i=1}^n (1-D_i/p)  \\
& \stackrel{\text{(b)}} \geq n(1-D/p) ,
\end{align*}
where (a) follows from the fact that $\log_2(2-x)$ is concave in $x$, and (b) follows from the constraint in~\eqref{eqHammingMemoryless}. 
\end{proof}

\subsection{Notation}
In the following, $\mathcal{Z}$ is an arbitrary discrete set, and $Z$ is a random variable over $\mathcal{Z}$.
\begin{enumerate} 
\item[-] For a sequence $z^n \in \mathcal{Z}^n$, $Q_{z^n}$ is the empirical PMF of $z^n$, also referred to as its type.
\item[-] $\mathcal{Q}_\mathcal{Z}^n$ is the set of types in $\mathcal{Z}^n$, i.e., the set of rational PMF's with denominator $n$.
\item[-] For $Q_Z \in \mathcal{Q}_\mathcal{Z}^n$, the type class of $Q_Z$ is $T_{Q_Z} \triangleq \{z^n \in \mathcal{Z}^n: Q_{z^n}=Q_Z \}$.
\item[-] $\E_Q[\cdot]$, $H_Q(\cdot)$, and $I_Q(\cdot;\cdot)$ denote respectively expectation, entropy, and mutual information taken with respect to distribution $Q$.
\item[-] $\exp_{2} \{.\}$ denotes $2^{(\cdot)}$.
\end{enumerate}

\subsection{Achievability Proof of Theorem~\ref{Thmmain}} \label{SecSCSAch}
We will slightly abuse notation and shorten $L(P,D,\overrightarrow{R},\alpha)$ to $L$ in the following. We now show that the right-hand side of \eqref{eqmainthm} upper-bounds $L$.

Consider any $\epsilon > 0$ and let $n$ be large enough such that we can construct a rate-distortion code $\mathcal{C}_{Q_X}^n$, for each type $Q_X \in \mathcal{Q}_{\mathcal{X}}^n$, satisfying the following: each sequence $x^n \in T_{Q_X}$ is covered and $| \mathcal{C}_{Q_X}^n | \leq 2^{n(R(Q_X,D)+ \epsilon)}$. Such construction is guaranteed by the type covering lemma (Lemma 9.1 in~\cite{korner}). We divide the codebook $ \mathcal{C}_{Q_X}^n $ into  $ \left\lceil \left| \mathcal{C}_{Q_X}^n \right| /2^{nr} \right\rceil $ bins, each of size $2^{nr}$, except for possibly the last one. We denote by $ \mathcal{C}_{Q_X}^n (i,\cdot)$ the $i$th partition of the codebook, and by $ \mathcal{C}_{Q_X}^n (i,j)$ the $j$th codeword in the $i$th partition. For each $x^n \in T_{Q_X}$, let $i_{x^n}$ and $j_{x^n}$ denote, respectively, the index of the partition containing the codeword associated with $x^n$ and the index of the codeword within the partition (Note that if more than one codeword can be associated with $x^n$, we fix any one of them arbitrarily). Finally, let $m(Q_X,i,j)$ be a message consisting of the following:
\begin{itemize}
\item $\left\lceil \log_2 | \mathcal{Q}_{\mathcal{X}}^n | \right\rceil$ bits to describe the type $Q_X$. 
\item $ \left\lceil \log_2 \left\lceil \left| \mathcal{C}_{Q_X}^n \right| \big/2^{nr} \right\rceil \right\rceil$ bits to describe the index $i$, where $1 \leq i \leq \left\lceil \left| \mathcal{C}_{Q_X}^n \right| \big/2^{nr} \right\rceil $.
\item $ \left\lceil \log_2 \left| \mathcal{C}_{Q_X}^n(i,\cdot) \right| \right\rceil $ bits to describe the index $j$, where ${ 0   \leq   j   \leq   \exp_2  {\left\lceil \log_2 \left| \mathcal{C}_{Q_X}^n(i,\cdot) \right| \right\rceil}   -   1 }$.
\end{itemize}

Now, for any $\delta \in \mathbb{R}$, let $\mathcal{Q}(\alpha,\delta) = \{Q_X : D(Q_X||P) \leq \alpha + \delta \}$, $\mathcal{Q}_n(\alpha,\delta) = \{Q_X \in \mathcal{Q}_{\mathcal{X}}^n : D(Q_X||P) \leq \alpha + \delta \}$, and consider the following lemma.
\begin{Lemma} \label{lemmacontinuity}
\begin{align*}
\lim_{\delta \rightarrow 0} \max_{Q_X \in \mathcal{Q}(\alpha,\delta) } R(Q_X,D) = \max_{Q_X \in \mathcal{Q}(\alpha,0) } R(Q_X,D).
\end{align*}
\end{Lemma}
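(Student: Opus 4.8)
The plan is to show that $\delta \mapsto M(\delta) := \max_{Q_X \in \mathcal{Q}(\alpha,\delta)} R(Q_X,D)$ is continuous from the right at $\delta = 0$ by a standard compactness-plus-continuity argument. First, I would record the basic topological facts: the probability simplex $\mathcal{P}(\mathcal{X})$ on the finite alphabet $\mathcal{X}$ is a compact subset of $\mathbb{R}^{|\mathcal{X}|}$; the map $Q \mapsto D(Q\|P)$ is lower semicontinuous (in fact continuous on the set where it is finite, which includes a neighborhood of every $Q$ with $\mathrm{supp}(Q)\subseteq\mathrm{supp}(P)=\mathcal{X}$ since $P$ has full support by (A2)); hence each sublevel set $\mathcal{Q}(\alpha,\delta) = \{Q: D(Q\|P)\le \alpha+\delta\}$ is closed, and being a subset of the simplex it is compact and nonempty (it contains $P$ itself). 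I would also invoke continuity of $Q \mapsto R(Q,D)$ in $Q$ for fixed $D$ — this is a classical property of the rate–distortion function on a finite alphabet (it is finite, convex, and continuous in the source distribution for $D \ge D_{\min}$), so the maximum defining $M(\delta)$ is attained.

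Next, the easy direction: since $\mathcal{Q}(\alpha,0) \subseteq \mathcal{Q}(\alpha,\delta)$ for every $\delta > 0$, we have $M(\delta) \ge M(0)$, and $M$ is nondecreasing in $\delta$, so $\lim_{\delta \downarrow 0} M(\delta)$ exists and is $\ge M(0)$. For the reverse inequality, take a sequence $\delta_k \downarrow 0$ and maximizers $Q_k \in \mathcal{Q}(\alpha,\delta_k)$ with $R(Q_k,D) = M(\delta_k)$. By compactness of the simplex, pass to a subsequence with $Q_k \to Q^\star$. Then $D(Q^\star\|P) \le \liminf_k D(Q_k\|P) \le \liminf_k (\alpha+\delta_k) = \alpha$ by lower semicontinuity of relative entropy, so $Q^\star \in \mathcal{Q}(\alpha,0)$; and by continuity of $R(\cdot,D)$, $R(Q^\star,D) = \lim_k R(Q_k,D) = \lim_{\delta \downarrow 0} M(\delta)$. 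Hence $\lim_{\delta\downarrow 0} M(\delta) = R(Q^\star,D) \le M(0)$, which combined with the easy direction gives equality.

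The only genuine subtlety — and the step I would be most careful about — is justifying the two continuity inputs near the boundary of the simplex: that $Q \mapsto D(Q\|P)$ behaves well (here full support of $P$ is exactly what makes $D(Q\|P)$ finite and continuous everywhere on the simplex, not merely lower semicontinuous, so the argument is clean), and that $Q \mapsto R(Q,D)$ is continuous even when $Q$ has reduced support. The latter is where one must be slightly cautious, since $R(Q,D)$ as a function of the source can in principle be discontinuous at distributions with zero entries for $D=0$; but under assumption (A3) we have $D \ge D_{\min} > 0$ in the relevant regime handled by the type-covering construction (and even at $D = D_{\min}$ continuity holds on the finite simplex), so this is not an obstacle. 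I would cite the standard rate–distortion continuity result (e.g., the treatment in Csiszár–Körner) rather than reprove it. Everything else is routine compactness bookkeeping.
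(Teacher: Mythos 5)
Your argument is clean and correct as far as it goes, but it only establishes half of the statement: you prove $\lim_{\delta \downarrow 0} M(\delta) = M(0)$, i.e.\ right-continuity of $M(\delta) = \max_{Q \in \mathcal{Q}(\alpha,\delta)} R(Q,D)$ at $\delta = 0$. The lemma asserts the two-sided limit, and the set $\mathcal{Q}(\alpha,\delta)$ is deliberately defined for any $\delta \in \mathbb{R}$: in the converse proof of Theorem~\ref{Thmmain} the lemma is invoked with $\mathcal{Q}_n(\alpha,-\epsilon)$, so the left limit $\lim_{\delta \uparrow 0} M(\delta) = M(0)$ is the part actually doing work there. Your compactness-and-subsequence argument does not give this direction: for $\delta < 0$ one trivially has $M(\delta) \le M(0)$ by monotonicity, and the nontrivial claim is $\liminf_{\delta \uparrow 0} M(\delta) \ge M(0)$, which requires approximating a maximizer $Q^\star$ of $M(0)$ (which may sit exactly on the boundary $D(Q^\star\|P) = \alpha$) by distributions strictly inside the constraint set. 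This is precisely where the convexity of $D(\cdot\|P)$ --- the one ingredient the paper's one-line proof explicitly names --- enters: setting $Q_\lambda = (1-\lambda)Q^\star + \lambda P$ gives $D(Q_\lambda\|P) \le (1-\lambda)\alpha$, so $Q_\lambda \in \mathcal{Q}(\alpha,-\lambda\alpha)$, and continuity of $R(\cdot,D)$ gives $R(Q_\lambda,D) \to R(Q^\star,D) = M(0)$. Adding this paragraph closes the gap; the same convex-combination trick also gives an alternative, subsequence-free proof of your right-limit direction (every $Q \in \mathcal{Q}(\alpha,\delta)$ is within total-variation distance $O(\delta)$ of $\mathcal{Q}(\alpha,0)$ along the segment toward $P$, so uniform continuity of $R(\cdot,D)$ finishes), which is essentially the route the paper takes via its citations.

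Two smaller points. First, your parenthetical ``$D \ge D_{\min} > 0$'' is not warranted: $D_{\min} = \max_x \min_y d(x,y)$ is zero for Hamming-type distortions; what you actually need, and what (A3) gives, is $D \ge D_{\min}$, under which continuity of $R(\cdot,D)$ on the finite simplex holds (this is the content of the cited Propositions 12 and 13 of~\cite{MySecrecySystem}, or the standard Csisz\'ar--K\"orner treatment), so the conclusion survives. Second, for your hard direction you only need upper semicontinuity of $R(\cdot,D)$ at the limit point $Q^\star$ (lower semicontinuity is what makes the left limit work), so it is worth being precise about which half of continuity each direction consumes.
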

\begin{proof}
This follows directly from the convexity of $D(Q||P)$, and Propositions 12 and 13 in~\cite{MySecrecySystem}. 
\end{proof}
Now let $\delta > 0$ be such that $  \max_{Q_X \in \mathcal{Q}(\alpha,\delta)} R(Q_X,D) < R$ (Such $\delta$ exists by Lemma~\ref{lemmacontinuity} and  (A4)).  Finally, for each sequence $x^n$, let $s(x^n) = \left\lceil \log \left| \mathcal{C}_{Q_X}^n(i_{x^n},\cdot) \right| \right\rceil$, and let $K_{s(x^n)}$ be the first $s(x^n)$ bits of $K_n$. The transmitter encodes as follows. Given $x^n$, if $Q_{x^n} \in \mathcal{Q}_n (\alpha,\delta)$, then
\begin{equation} \label{eqencf}
f(x^n,K_n) = m \left(Q_{x^n}, i_{x^n}, j_{x^n} \oplus K_{s(x^n)} \right), 
\end{equation}
where the XOR-operation is performed bitwise. Note that, in this case, the legitimate receiver can retrieve the type of the transmitted sequence and the index of the bin from the first two parts of the message, and the index of the sequence within the bin using the last part of the message and the key $K_n$, so that $h(M_n,K_n) = \mathcal{C}_{\mathcal{Q}_{x^n}}^n (i_{x^n},j_{x^n})$. Now, consider an $m_0 \in \mathcal{M}_n$ that has not been used by the previous encoding (Assumption (A4) and the choice of $\delta$ ensures the existence of such $m_0$). Then, for all $x^n$ such that $Q_{x^n} \notin \mathcal{Q}_n (\alpha,\delta)$,
\begin{equation}
\label{eqencf2}
f(x^n,K_n)=m_0.
\end{equation}
\begin{Remark}
To verify that the suggested scheme satisfies the excess distortion probability constraint, consider the following:
\begin{align*}
 \Pr(d(X^n,Y^n) > D) & \leq \sum_{Q_X \notin \mathcal{Q}_n(\alpha,\delta)} P(Q) \leq \sum_{Q_X \notin \mathcal{Q}_n(\alpha,\delta)}  2^{-nD(Q_X||P)}  \leq (n+1)^{|\mathcal{X}|} 2^{-n(\alpha+\delta)}  < 2^{-n\alpha},
\end{align*}
where the last inequality holds for large enough $n$.
\end{Remark}

Effectively, we are leaking the first two parts of the message $Q_{X^n}$ and $i_{X^n}$, and \emph{hiding} completely the last part $j_{X^n}$. Since there are only polynomially many types, the first part does not affect the normalized leakage. The second part, however, consists roughly of $R(Q,D) - r$ bits, whenever $R(Q,D) > r$; otherwise, i.e., when $R(Q,D) \leq r$, there is only one bin and there is no information to be leaked.

For a more rigorous analysis, let $P_f$ be the induced joint probability distribution of $(X^n,M_n)$. Then, for $x^n$ satisfying $Q_{x^n} \in \mathcal{Q}_n (\alpha,\delta)$, we get from~\eqref{eqencf}:
\begin{equation*}
%\label{eqbinprob}
P_f   \left( m(Q_{x^n},i_{x^n},j) \big| x^n \right)  =  2^{-s(x^n)}, ~~ 0 \leq j \leq 2^{s(x^n)} - 1.
\end{equation*}
Let $S(x^n)=2^{s(x^n)}$. Note that we can equivalently denote $S(x^n)$ by $S(Q_{x^n}, i_{x^n})$, since the dependence on the sequence is only through the type and the index of the bin. Therefore, we get
\begin{align}
 \exp_2\{\ml{X^n}{M_n}\} 
 & = \sum_{m \in \mathcal{M}_n} \max_{x^n \in \mathcal{X}^n} P_f(m|x^n) \notag \\
& = \max_{x^n \in \mathcal{X}^n} P_f(m_0|x^n) + \sum_{ \substack{Q_X \in \\ \mathcal{Q}_n(\alpha,\delta)}}     
\sum_{i=1}^{\left\lceil | \mathcal{C}_{Q_X}^n | /2^{nr} \right\rceil}   \sum_{j=0}^{ S(Q_X,i)-1 }     \max_{x^n \in \mathcal{X}^n} P_f(m(Q_X,i,j)|x^n) \notag \\
& = 1 + \sum_{ \substack{Q_X \in \\ \mathcal{Q}_n(\alpha,\delta)}}     
\sum_{i=1}^{\left\lceil | \mathcal{C}_{Q_X}^n | /2^{nr} \right\rceil}   \sum_{j=0}^{ S(Q_X,i)-1 } S(Q_X,i)^{-1} \notag \\
& \leq  1 + \sum_{ \substack{Q_X \in  \mathcal{Q}_n(\alpha,\delta)}} (2^{n(R(Q_X,D)+\epsilon-r)} +1 ) \notag \\
& \leq  1 + 2    \sum_{ \substack{Q_X \in \mathcal{Q}_n(\alpha,\delta)}} 2^{ n \max \{R(Q_X,D)+\epsilon-r,0\}} \notag \\
& \leq 4(n+1)^{|\mathcal{X}|} \exp_2 \{ n      \max_{Q_X \in \mathcal{Q}_n(\alpha,\delta)}      [R(Q_X,D)+\epsilon-r]^+   \}. \label{eqachmainchain}
\end{align}
Taking the limit as $n$ tends to infinity, and noting that $\epsilon$ and $\delta$ were arbitrary, we get that
\begin{align*}
L \leq \max_{Q: D(Q||P) \leq \alpha} [R(Q,D) -r ]^+,
\end{align*}
where the inequality follows from Lemma~\ref{lemmacontinuity} and the following lemma, the simple of proof of which is omitted.
\begin{Lemma} \label{lemmalimit}
\begin{align*}
\lim_{n \rightarrow \infty} \max_{\substack{Q \in \mathcal{Q}_{\mathcal{X}}^n:  D(Q||P) \leq \alpha}}      R(Q,D) =    \max_{\substack{Q: D(Q||P) \leq \alpha}}     R(Q,D).
\end{align*}
\end{Lemma}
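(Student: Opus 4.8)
The plan is to prove the two inequalities $\limsup_n \rho_n \le \rho_\infty$ and $\liminf_n \rho_n \ge \rho_\infty$ separately, where I write $\rho_n := \max_{Q \in \mathcal{Q}_{\mathcal{X}}^n:\, D(Q\|P) \le \alpha} R(Q,D)$ for the pre-limit quantity on the left and $\rho_\infty := \max_{Q:\, D(Q\|P) \le \alpha} R(Q,D)$ for the right-hand side. The first inequality is immediate: for every $n$, each type $Q \in \mathcal{Q}_{\mathcal{X}}^n$ with $D(Q\|P) \le \alpha$ is in particular a PMF with $D(Q\|P) \le \alpha$, so the feasible set defining $\rho_n$ is contained in the feasible set defining $\rho_\infty$, whence $\rho_n \le \rho_\infty$ for all $n$ and therefore $\limsup_n \rho_n \le \rho_\infty$.

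For the reverse inequality, fix $\epsilon > 0$ and let $Q^\star$ attain $\rho_\infty$, so $D(Q^\star\|P) \le \alpha$ and $R(Q^\star,D) = \rho_\infty$; such a maximizer exists because $\{Q : D(Q\|P) \le \alpha\}$ is a closed subset of the simplex (hence compact) — here $D(\cdot\|P)$ is continuous and finite on the simplex since $P$ has full support by (A2) — and $R(\cdot,D)$ is continuous in the source distribution, the same fact used in the proof of Lemma~\ref{lemmacontinuity} via Propositions~12 and~13 of~\cite{MySecrecySystem}. The delicate point is that $Q^\star$ may lie on the boundary $D(Q^\star\|P) = \alpha$, where nearby types need not be feasible. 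To get around this, for $t \in (0,1]$ set $Q_t = (1-t)Q^\star + tP$. By convexity of $D(\cdot\|P)$,
\[
D(Q_t\|P) \le (1-t)\,D(Q^\star\|P) + t\,D(P\|P) = (1-t)\,D(Q^\star\|P) \le (1-t)\alpha < \alpha,
\]
so $Q_t$ is strictly feasible, and $Q_t \to Q^\star$ as $t \to 0^+$. Using continuity of $R(\cdot,D)$ and of $D(\cdot\|P)$, choose $t$ small enough that $R(Q_t,D) > \rho_\infty - \epsilon/2$ and $D(Q_t\|P) < \alpha$ with strict slack.

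Now approximate $Q_t$ by types. For each $n$ there is a type $\hat Q_n \in \mathcal{Q}_{\mathcal{X}}^n$ with $\max_{a \in \mathcal{X}} |\hat Q_n(a) - Q_t(a)| \le 1/n$, so $\hat Q_n \to Q_t$. By continuity of $D(\cdot\|P)$ and the strict slack, $D(\hat Q_n\|P) \le \alpha$ for all large $n$, i.e. $\hat Q_n$ is feasible in the maximization defining $\rho_n$; and by continuity of $R(\cdot,D)$ we have $R(\hat Q_n,D) > R(Q_t,D) - \epsilon/2 > \rho_\infty - \epsilon$ for all large $n$. Hence $\rho_n \ge R(\hat Q_n,D) > \rho_\infty - \epsilon$ for all large $n$, so $\liminf_n \rho_n \ge \rho_\infty - \epsilon$; since $\epsilon > 0$ was arbitrary, $\liminf_n \rho_n \ge \rho_\infty$. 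Combining the two bounds gives $\lim_n \rho_n = \rho_\infty$.

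The main (and essentially only) obstacle is the boundary case $D(Q^\star\|P) = \alpha$; everything else is routine use of the density of types in the simplex together with continuity of $Q \mapsto D(Q\|P)$ (guaranteed by the full-support assumption (A2)) and of $Q \mapsto R(Q,D)$. The perturbation $Q_t$ toward $P$ resolves it cleanly: convexity of relative entropy and $D(P\|P)=0$ push the divergence strictly below $\alpha$ while displacing $Q^\star$ by an arbitrarily small amount, which is exactly what legitimizes the type-approximation step. (Alternatively, one could invoke Lemma~\ref{lemmacontinuity} with a negative perturbation parameter to the same effect, but the explicit $Q_t$ argument is self-contained.)
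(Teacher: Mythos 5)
Your proof is correct. The paper omits its own proof of Lemma~\ref{lemmalimit}, describing it as simple, so there is nothing to compare against; your argument (one inclusion for the upper bound, and density of types plus continuity of $Q \mapsto D(Q\Vert P)$ and $Q \mapsto R(Q,D)$ for the lower bound) is the natural one, and you correctly identify and resolve the only non-routine point, namely that a maximizer on the boundary $D(Q^\star\Vert P)=\alpha$ need not be approximable by \emph{feasible} types --- the convex perturbation $Q_t=(1-t)Q^\star+tP$ handles this cleanly.
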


\subsection{Converse Proof of Theorem~\ref{Thmmain}} \label{SecSCSConv}
We now show that $ L $ is lower-bounded by the right-hand side of~\eqref{eqmainthm}. To that end, consider any valid encoding function $f$. To lower-bound $\ml{X^n}{M_n}$, we consider a specific $P_{U|X^n}$. In particular, we consider the ``shattering'' $P_{U|X^n}$ given in~\eqref{eqshattering}. Recall %Let $p^\star = \argmin_{x^n} P(x^n)$, and for each $x^n \in \mathcal{X}^n$, let $k(x^n) = P(x^n)/p^{\star}$, and let $ \mathcal{U} = \bigcup_{x^n \in \mathcal{X}^n} \{ (x^n,1), (x^n,2), \ldots, (x^n,\lceil k(x^n)\rceil) \}$.  For each $u=(i_u,j_u) \in \mathcal{U}$ and $x^n \in \mathcal{X}$, let $P_{U|X^n}(u|x^n)$ be:
\begin{align*}
P_{U|X^n}((i_u,j_u)|x^n)  = \begin{cases}
\frac{p^{\star}}{P(x^n)}, &      i_u=x^n, ~1 \leq j_u \leq \lfloor k(x^n) \rfloor , \\
1 - \frac{(\lceil k(x^n) \rceil -1)p^{\star}}{P(x^n)}, &     i_u=x^n,~ j_u=\lceil k(x^n) \rceil, \\
0, &      i_u \neq x^n, ~ 1 \leq j_u \leq \lceil k(i_u) \rceil.
\end{cases}
\end{align*} 

Therefore, $\max_{u \in \mathcal{U}} P_U(u) = p^\star$. We will also consider a sub-optimal guessing function for $U$.  The scheme is as follows: the eavesdropper first tries to guess the key $K_n$ by choosing an element uniformly at random from $\{0,1\}^{nr}$. We denote this guess by $\tilde{K}_n$. Then, proceeding by assuming that the key guess was correct, they try to guess the sequence $x^n$ using a guessing function given by Lemma~\ref{lemmaguessing} below. We denote this stage by $g_1$. Finally, again proceeding by assuming that the source sequence guess was correct, the eavesdropper attempts to guess $U$ by using the MAP rule. We denote this stage by $g_2$, and we get for each $x^n \in \mathcal{X}^n$,
\begin{align}
g_2(x^n) & = (x^n, 1), 
\text{ and } \Pr(g_2(x^n)  = U^n|x^n)  = p^\star/P(x^n). \label{eqmap}
\end{align}
\vspace{1mm}
\begin{Lemma} \label{lemmaguessing}
There exists a function $g_1: {\mathcal{Y}^n \rightarrow \mathcal{X}^n}$ such that, for all $(x^n,y^n) $ satisfying $d(x^n,y^n) \leq D,$ 
$\Pr \left(x^n = g(y^n) \right) \geq c_n 2^{-n(H_{Q_{x^n}}(X)-R(Q_{x^n},D))}$, where $c_n=(n+1)^{ -|\mathcal{X}||\mathcal{Y}|(|\mathcal{X}|+1)}$.
\end{Lemma}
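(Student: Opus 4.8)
The plan is to construct $g_1$ as a \emph{randomized} guessing rule that is ``type-aware'', and then lower-bound its success probability by standard type counting. Allowing $g_1$ to be randomized is harmless: the eavesdropper's strategy in Section~\ref{SecSCSConv} already uses private randomness (to guess the key), and the bound demanded by the lemma is pointwise over admissible pairs $(x^n,y^n)$ — something a randomized rule can deliver whereas a deterministic one cannot.

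First I would fix the observed $y^n$ and let $\mathcal{V}(y^n)$ be the set of joint types $Q_{XY}$ on $\mathcal{X}\times\mathcal{Y}$ whose $\mathcal{Y}$-marginal equals $Q_{y^n}$ and which satisfy $\sum_{x,y}Q_{XY}(x,y)d(x,y)\le D$. For a joint type $Q$ let $T_{Q}(y^n)=\{x'^n\in\mathcal{X}^n:(x'^n,y^n)\in T_{Q}\}$ be the corresponding conditional type class. Define $g_1(y^n)$ to output $\widehat x^n$ with probability $\frac{1}{|\mathcal{V}(y^n)|}\cdot\frac{1(d(\widehat x^n,y^n)\le D)}{|T_{Q_{\widehat x^n y^n}}(y^n)|}$; equivalently, draw a joint type uniformly at random from $\mathcal{V}(y^n)$ and then draw $\widehat x^n$ uniformly at random from the conditional type class it determines. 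This is a genuine probability distribution on $\mathcal{X}^n$, since each $Q\in\mathcal{V}(y^n)$ contributes total mass $1/|\mathcal{V}(y^n)|$, and it is well defined precisely when some $x^n$ with $d(x^n,y^n)\le D$ exists — the only case the lemma concerns. (If $\mathcal V(y^n)=\emptyset$, set $g_1(y^n)$ arbitrarily.)

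Next I would fix an admissible pair $(x^n,y^n)$, write $Q_0:=Q_{x^n y^n}$ for its joint type, and note $Q_0\in\mathcal{V}(y^n)$ with $\mathcal{X}$-marginal equal to $Q_{x^n}$. By construction $\Pr\!\left(g_1(y^n)=x^n\right)=\bigl(|\mathcal{V}(y^n)|\,|T_{Q_0}(y^n)|\bigr)^{-1}$. Two routine type-counting facts finish the estimate: the number of joint types on $\mathcal{X}\times\mathcal{Y}$ (with any prescribed marginal) is polynomial in $n$, and the conditional type class obeys $|T_{Q_0}(y^n)|\ge(\text{poly in }n)^{-1}\,2^{nH_{Q_0}(X|Y)}$. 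Bounding both polynomial factors crudely — their product being absorbed into $c_n=(n+1)^{-|\mathcal{X}||\mathcal{Y}|(|\mathcal{X}|+1)}$, the precise exponent merely reflecting an unstingy choice of type-counting constants — yields $\Pr\!\left(g_1(y^n)=x^n\right)\ge c_n\,2^{-nH_{Q_0}(X|Y)}$.

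Finally I would convert the conditional entropy via the identity $H_{Q_{x^n}}(X)-R(Q_{x^n},D)=\max\{H(X|Y):P_{Y|X}\ \text{with}\ \E_{Q_{x^n}\times P_{Y|X}}[d]\le D\}$, which is immediate from $R(\cdot,D)=\min I(X;Y)$ and $I(X;Y)=H(X)-H(X|Y)$. Since $Q_0$ is a joint law with $\mathcal{X}$-marginal $Q_{x^n}$ and $\E_{Q_0}[d]\le D$, its conditional part is feasible for that optimization, so $H_{Q_0}(X|Y)\le H_{Q_{x^n}}(X)-R(Q_{x^n},D)$; substituting gives $\Pr\!\left(g_1(y^n)=x^n\right)\ge c_n\,2^{-n(H_{Q_{x^n}}(X)-R(Q_{x^n},D))}$, as claimed. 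The one conceptually delicate point is the choice of rule: a naive uniform guess over the whole distortion ball $\{x'^n:d(x'^n,y^n)\le D\}$ loses too much, because that ball is dominated by sequences whose $\mathcal{X}$-type is far from $Q_{x^n}$; routing the guess through the joint type of the actual pair is exactly what makes the bound depend only on the type of $x^n$. The remaining ingredients — the type-counting inequalities and the rate-distortion identity — are standard bookkeeping.
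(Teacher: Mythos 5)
Your construction is essentially correct but takes a genuinely different route from the paper: the paper disposes of this lemma in two lines by invoking Lemma~5 of~\cite{MySecrecySystem} (a list-decoding/guessing lemma, instantiated with $\mathcal{V}=\mathcal{X}$, Hamming $d_e$, and $D_e=0$) and then performing exactly the same rate--distortion conversion $H_{Q_{x^ny^n}}(X|Y)\le H_{Q_{x^n}}(X)-R(Q_{x^n},D)$ that you carry out in your last paragraph. Your version replaces the black-box citation with an explicit two-stage randomized guesser (uniform over feasible joint types, then uniform over the conditional type class), which is self-contained, makes transparent why the bound depends on $x^n$ only through its type, and correctly identifies that the rule must be randomized and must not be a naive uniform guess over the distortion ball. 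One correction is needed in the type-counting step: to lower-bound $\Pr(g_1(y^n)=x^n)=\bigl(|\mathcal{V}(y^n)|\,|T_{Q_0}(y^n)|\bigr)^{-1}$ you need an \emph{upper} bound on the conditional type class, namely the standard exact inequality $|T_{Q_0}(y^n)|\le 2^{nH_{Q_0}(X|Y)}$, not the lower bound $|T_{Q_0}(y^n)|\ge(\text{poly})^{-1}2^{nH_{Q_0}(X|Y)}$ that you quote; as written that step does not close. With the correct direction the polynomial loss comes only from $|\mathcal{V}(y^n)|\le(n+1)^{|\mathcal{X}||\mathcal{Y}|}$, which is comfortably absorbed into $c_n$, and the rest of your argument (feasibility of $Q_0$'s conditional for the rate--distortion optimization, hence $H_{Q_0}(X|Y)\le H_{Q_{x^n}}(X)-R(Q_{x^n},D)$) is sound.
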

\vspace{1mm}
\begin{proof}
This is an application of Lemma 5 in~\cite{MySecrecySystem}. In particular, we set in Lemma 5 $\mathcal{V}$ to be $\mathcal{X}$, $d_e$ to be the Hamming distortion function, and $D_e$ to be zero. Then, $I_{P^\star_n(Q_{x^n y^n})}(X;V|Y)$ (as defined in~\cite{MySecrecySystem}) satisfies:
\begin{align*}
I_{P^\star_n(Q_{x^n y^n})}(X;V|Y)  = H_{Q_{x^ny^n}} (X|Y) 
& = H_{Q_{x^n}}(X) - H_{Q_{x^n}}(X) + H_{Q_{x^ny^n}} (X|Y) \\
& \leq H_{Q_{x^n}}(X)-R(Q_{x^n},D). 
\end{align*}  
\end{proof}
To analyze the above scheme, fix $\epsilon >0$, and let $P_f$ denote the induced joint probability on $(X^n,K_n,M_n)$. Furthermore, without loss of generality, we can assume that the decoding function $h$ is a deterministic function of $M_n$ and $K_n$. Finally, define
\begin{align}
\mathcal{M}_D(x^n,k) & = \{ m \in \mathcal{M}_n: d(x^n,h(m,k)) \leq D\}, ~~ x^n \in \mathcal{X}^n, k \in \mathcal{K}_n, \label{eqdefMD} \\
\text{and } \mathcal{A} & = \{ (x^n,y^n)   \in    \mathcal{X}^n    \times    \mathcal{Y}^n    :     d(x^n,y^n)    >     D\}. 
\end{align} 
Letting $g$ be the concatenation of the two stages, we get 
\begin{align}
& \Pr(U = g(M) ) \notag \\
& = \sum_{x^n \in \mathcal{X}^n} \sum_{u \in \mathcal{U}} \sum_{k \in \mathcal{K}_n} \sum_{m \in \mathcal{M}_n}       P(x^n) P_{U|X^n}(u|x^n) P_{K_n}(k) 
 P_f (m|x^n,k) P(u = g(m)|x^n,m,k ) \notag \\
& \geq \sum_{x^n \in \mathcal{X}^n} \sum_{u \in \mathcal{U}} \sum_{k \in \mathcal{K}_n} \sum_{m \in \mathcal{M}_D(x^n,k)}  P(x^n) P_{U|X^n}(u|x^n)    P_{K_n}(k) P_f (m|x^n,k) P(u = g(m)|x^n,m,k ) \notag \\
& \geq \sum_{x^n \in \mathcal{X}^n} \sum_{u \in \mathcal{U}} \sum_{k \in \mathcal{K}_n} \sum_{m \in \mathcal{M}_D(x^n,k)} P(x^n) P_{U|X^n}(u|x^n)    P_{K_n}(k) P_f (m|x^n,k) P(\tilde{K}_n=k) . \notag \\
& \qquad \qquad \qquad \qquad \qquad P(g_1(h(m,k))=x^n) P(g_2(x^n) = u | x^n) \notag \\
& \stackrel{(a)} \geq c_n   \sum_{x^n \in \mathcal{X}^n}  \sum_{k \in \mathcal{K}_n} \sum_{m \in \mathcal{M}_D(x^n,k)}  P(x^n) P_{K_n}(k) P_f (m|x^n,k)    2^{-nr} 2^{-n(H_{Q_{x^n}}(X)-R(Q_{x^n},D))} p^\star/P(x^n) \notag \\
&  = c_n p^\star 2^{-nr} \sum_{Q_X \in \mathcal{Q}_{\mathcal{X}}^n}  \sum_{x^n \in T_{Q_X}} \sum_{k \in \mathcal{K}_n}  \sum_{m \in \mathcal{M}_D(x^n,k)} P(x^n) P_{K_n}(k) P_f (m|x^n,k) 2^{-n(H_{Q_X}(X)-R(Q_{X},D))}/P(x^n) \notag \\
&  = c_n p^\star 2^{-nr} \sum_{Q_X \in \mathcal{Q}_{\mathcal{X}}^n}  \sum_{x^n \in T_{Q_X}} \sum_{k \in \mathcal{K}_n}  \sum_{m \in \mathcal{M}_D(x^n,k)}  P(x^n) P_{K_n}(k) P_f (m|x^n,k) 2^{n(R(Q_X,D)+D(Q_X||P))} \notag \\
& = c_n p^\star 2^{-nr} \sum_{Q_X \in \mathcal{Q}_{\mathcal{X}}^n} 2^{n(R(Q_X,D)+D(Q_X||P))}  P_f( \mathcal{A}^c \cap T_{Q_X}), \label{eqconvmainchain1}
\end{align}
where (a) follows from Lemma~\ref{lemmaguessing}, \eqref{eqmap}, and~\eqref{eqdefMD}.
Now, note that for any $Q$, 
\begin{align*}
 P_f(\mathcal{A}^c|T_Q) = 1 - P_f(\mathcal{A}|T_Q) 
& \geq 1 - \min\{1,P_f(\mathcal{A})/P(T_Q)\} \\
& \geq 1   -   \min\{1, 2^{-n(\alpha - D(Q||P) - \frac{|\mathcal{X}|}{n} \log(n+1))}\} \\
& =   \max\{0, 1 - 2^{-n(\alpha - D(Q||P) - \frac{|\mathcal{X}|}{n} \log(n+1))}   \}.
\end{align*}
Then, continuing~\eqref{eqconvmainchain1}, we get
\begin{align}
  \Pr(U = g(M) ) 
& \geq c_n p^\star 2^{-nr} \sum_{Q_X \in \mathcal{Q}_{\mathcal{X}}^n} 2^{n(R(Q_X,D)+D(Q_X||P))} P(T_{Q_X}) \max\{0, 1-2^{-n(\alpha - D(Q_X ||P) - \frac{|\mathcal{X}|}{n} \log(n+1))} \} \notag \\
& \stackrel{(a)} \geq c'_n p^\star 2^{-nr} \sum_{Q_X \in \mathcal{Q}_n(\alpha,-\epsilon)} 2^{nR(Q_X,D)} (1-2^{-n(\alpha - D(Q_X ||P) - \frac{|\mathcal{X}|}{n} \log(n+1))}) \notag \\
& \stackrel{(b)} \geq c'_n p^\star 2^{-nr} \sum_{Q_X \in \mathcal{Q}_n(\alpha,-\epsilon)} 2^{nR(Q_X,D)} (1/2) \notag \\
& \geq (c'_n p^\star /2) \max_{Q_X \in \mathcal{Q}_n(\alpha,-\epsilon)} \exp_2\{n(R(Q_X,D)-r)\},
\end{align}
where (a) and (b) hold for large enough $n$, and $c'_n={(n+1)^{-|\mathcal{X}|}c_n}$. Finally taking the ratio of $\Pr(U=g(M))$ and $\max_u P_U(u)$, and taking the limit as $n$ tends to infinity, and noting that $\epsilon$ is arbitrary, we get
\begin{align*}
L \geq \max_{Q: D(Q||P) \leq \alpha} R(Q,D) -r,
\end{align*}
where the inequality follows from Lemmas~\ref{lemmacontinuity} and~\ref{lemmalimit}. 
Since $L$ is positive by definition,
\begin{align*}
L \geq [\max_{Q: D(Q||P) \leq \alpha} R(Q,D) -r ]^+  = \max_{Q: D(Q||P) \leq \alpha} [R(Q,D) -r ]^+.
\end{align*}

\section{Learning Maximal Leakage from Data} \label{sec:learning}

In the previous two sections, we analyzed leakage-mitigating schemes for a simple model of the SSH side-channel and derived the (asymptotically) optimal scheme for the Shannon cipher system. In general, computing the maximal leakage induced by a given scheme might become intractable for complicated schemes.  Consider, for instance, an on-chip network with several processes sharing its memory. Suppose one of the processes is malicious and another process is decrypting a message using a secret key. As we mentioned in the introduction, a side-channel exists between these two processes because the memory access patterns of the latter affect the memory access delays of the former. This side-channel, however, is determined by the operation of the memory controller which could depend on many variables, as well as the behavior of other processes on the chip which might be difficult to model.

For such complicated schemes, one might simulate the system and attempt to estimate maximal leakage from data traces. This section investigates the complexity of this task, i.e., the number of samples needed to estimate $\ml{X}{Y}$, which we equivalently denote by $\mathcal{L}(P_X;P_{Y|X})$. 
To this end, an estimator is defined as a randomized function $f: (\mathcal{X} \times \mathcal{Y})^\star \rightarrow \mathbb{R}$, which maps a sequence of samples drawn from a joint distribution to an estimate of its maximal leakage.  Given a desired level of accuracy $\delta$ and a probability of error $\epsilon$, the sample complexity of an estimator $f$ is defined as:
\begin{align}
\label{eqsamplecomplexityf}
S_{\delta, \epsilon} \big( |\mathcal{X}|, |\mathcal{Y}| \big) [f] = \min \{n: P_{XY} \left( \left| \mathcal{L}(P_X;P_{Y|X}) - f(X^n,Y^n) \right| > \delta \right) < \epsilon, \text{ for all } P_{XY} \in \mathcal{P}_{\mathcal{X} \times \mathcal{Y}} \}, 
\end{align}
where $P_{XY} \in \mathcal{P}_{\mathcal{X} \times \mathcal{Y}}$ is the set of all probability distributions on $\mathcal{X} \times \mathcal{Y}$, and $(X^n,Y^n)$ are drawn independently from $P_{XY}$. Then the sample complexity of maximal leakage is defined as:
\begin{align}
\label{eqsamplecomplexity}
S_{\delta, \epsilon} \big( |\mathcal{X}|, |\mathcal{Y}| \big) = \inf_f S_{\delta,\epsilon} \big( |\mathcal{X}|, |\mathcal{Y}|,\theta \big)[f] .
\end{align}

We show that $S_{\delta, \epsilon} ( |\mathcal{X}|, |\mathcal{Y}| )$ turns out to be infinity for interesting values of the parameters. Hence, the design of secure systems should take amenability to analysis into consideration. That is, it is preferable to design, for instance, a memory controller that we can study analytically, rather than one that follows complicated ad-hoc rules that are (only) believed to be secure. \\

The impossibility result is mainly due to the discontinuity of maximal leakage in the support of $X$. More precisely, let $\theta$ be a lower bound on the minimum strictly positive probability of an element in $\mathcal{X}$, and define
\begin{align}
\label{eqPtheta}
\mathcal{P}^\theta_{\mathcal{X} \times \mathcal{Y}} & = \{ P_{XY} \in \mathcal{P}_{\mathcal{X} \times \mathcal{Y}}: 
\min_{x \in \mathcal{X}:  P_X(x) >0}  P_X(x) \geq \theta \}, \\
\label{eqsamplecomplexitytheta}
S_{\delta, \epsilon} \big( |\mathcal{X}|, |\mathcal{Y}|,\theta \big) & = \! \inf_f \min \{n \! : P_{XY} \left( \left| \mathcal{L}(P_X;P_{Y|X}) \! - \! f(X^n,Y^n) \right| \! > \delta \right) \! < \epsilon, \! \text{ for all } P_{XY} \! \in \! \mathcal{P}^\theta_{\mathcal{X} \times \mathcal{Y}}  \}.
\end{align}
Then the following lower bound holds. 
\vspace{1mm}
\begin{Theorem} \label{thmlowerbound}
For $\epsilon = 0.1$ and $c_0 < 1/2$ there exists $c$ such that for all
$\theta$, all $|\mathcal{X}|$, all sufficiently large $|\mathcal{Y}|$,
and all $1/\mathcal{|Y|} < \delta < c_0$, we have
\begin{equation}
\label{eq:lowerJA}
S_{\delta,\epsilon}(|\mathcal{X}|,|\mathcal{Y}|, \theta)
   \ge c \frac{\theta |\mathcal{Y}|}{\log |\mathcal{Y}|} \log^2 \frac{1}{\delta}.
\end{equation}
%Given $\epsilon \in (0,1/3)$ and $\eta >0$, there exists $c_{\epsilon,\eta} > 0$ (depending only on $\epsilon$ and $\eta$) and $ \delta_{\eta} > 0$ (depending only on $\eta$) such that for all $\theta \in (0,1)$, for all finite alphabets $\mathcal{X}$, and for infinitely many choices of $|\mathcal{Y}|$ with $\mathcal{Y}$ finite,
%\begin{align} \label{eqlowerbound}
%S_{\delta_\eta, \epsilon} \big( |\mathcal{X}|, |\mathcal{Y}|,\theta \big) \geq  c_{\epsilon,\eta} \frac{|\mathcal{Y}|^{1-\eta}}{\theta}. 
%\end{align}
\end{Theorem}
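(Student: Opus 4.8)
The plan is to prove~\eqref{eq:lowerJA} by the standard reduction from estimation to composite hypothesis testing (Le Cam's two-point method, in its mixture form). Fix an estimator $f$ with $n = S_{\delta,\epsilon}(|\mathcal{X}|,|\mathcal{Y}|,\theta)[f]$. If we can exhibit two priors $\pi_0,\pi_1$, each supported on a subset of $\mathcal{P}^\theta_{\mathcal{X}\times\mathcal{Y}}$, such that $\ml{X}{Y} \le a$ for every distribution in the support of $\pi_0$ while $\ml{X}{Y}\ge a + 2\delta + \eta$ for every distribution in the support of $\pi_1$ (some $\eta>0$), then thresholding $f(X^n,Y^n)$ at $a + \delta + \eta/2$ gives a test between the two mixtures with error probability at most $\epsilon$ under each, so $n$ is at least the testing sample complexity of $(\pi_0,\pi_1)$. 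By Le Cam's inequality the latter is at least the smallest $n$ for which the two induced $n$-sample laws reach total variation distance $1 - 2\epsilon = 0.8$, which I would bound below using the $\chi^2$/Ingster--Suslina estimate for the total variation between mixtures of product measures.

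To build the priors I would reduce to the known minimax lower bounds for estimating $\ell_1$ distances (Valiant and Valiant; Jiao, Han and Weissman). Take $\mathcal{X}$ with a two-element support, $P_X(1) = \theta$ and $P_X(2) = 1-\theta$ (admissible whenever $\theta \le 1/2$; any unused symbols of $\mathcal{X}$ get zero mass, so the joint law lies in $\mathcal{P}^\theta_{\mathcal{X}\times\mathcal{Y}}$ for every allowed $|\mathcal{X}|$). By Theorem~\ref{thmmainthm} and the identity $\sum_y \max\{a_y,b_y\} = 1 + d_{\mathrm{TV}}(a,b)$,
\begin{align*}
\ml{X}{Y} = \log\!\Big(\sum_{y\in\mathcal{Y}} \max\{P_{Y|X}(y|1), P_{Y|X}(y|2)\}\Big) = \log\big(1 + d_{\mathrm{TV}}(P_{Y|X=1}, P_{Y|X=2})\big).
\end{align*}
Hold $P_{Y|X=2}$ equal to the uniform law on $\mathcal{Y}$ in both priors; then the samples with $X_i = 2$ are useless for the test, and everything reduces to the samples with $X_i = 1$. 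Since $X_i = 1$ with probability $\theta$, a Chernoff bound puts the number of such samples at $\Theta(n\theta)$ with overwhelming probability, so the problem becomes distinguishing two priors over $P_{Y|X=1}$ from $\Theta(n\theta)$ i.i.d.\ draws, with $d_{\mathrm{TV}}(P_{Y|X=1}, \mathrm{Unif}_{\mathcal{Y}})$ differing by $\Theta(\delta)$ (because $t \mapsto \log(1+t)$ is bi-Lipschitz on $[0,1]$, a $\Theta(\delta)$ separation in this distance is equivalent to a $\Theta(\delta)$ separation in the leakage). The minimax lower bound for estimating the $\ell_1$ distance of a distribution on $|\mathcal{Y}|$ symbols to a fixed reference distribution is precisely the source of the $|\mathcal{Y}|/\log|\mathcal{Y}|$ saving and the $\log^2(1/\delta)$ factor, and it is valid exactly in the regime $1/|\mathcal{Y}| < \delta < c_0$ with $|\mathcal{Y}|$ large; carrying it back through the $\Theta(n\theta)$ effective sample size yields~\eqref{eq:lowerJA}.

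I expect the technical heart to be the construction of the two priors over $P_{Y|X=1}$ that realize the $1/\log|\mathcal{Y}|$ improvement over the naive $|\mathcal{Y}|$ bound (and the stated $\delta$-dependence). This is a moment-matching / polynomial-method step: one builds two laws on the per-coordinate probability values that agree on a growing number of low-order moments, so that the $\chi^2$-divergence between the corresponding mixtures of product measures stays $O(1)$ while the expected $\ell_1$ distances to uniformity differ by the prescribed amount. Executing this under the $\mathcal{P}^\theta$ constraint --- each coordinate of $P_{Y|X=1}$ staying $O(1/|\mathcal{Y}|)$, the result a genuine conditional law, and the leakage of every realization falling in the correct half --- then passing the testing lower bound through the random number of $X_i = 1$ samples (controlling the rare event that it deviates far from $n\theta$, and transferring the ``$\Theta(n\theta)$ i.i.d.\ draws'' bound back to the original $n$-sample experiment), and finally tracking the constants so that they are consistent with $\epsilon = 0.1$ and $c_0 < 1/2$, is what makes the argument lengthy.
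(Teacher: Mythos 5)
Your first reduction is exactly the one in the paper: put $P_X(x_1)=\theta$, make every row of $P_{Y|X}$ other than the first uniform on $\mathcal{Y}$, observe via Theorem~\ref{thmmainthm} that the leakage collapses to $\log\bigl(\sum_y \max\{1/k,\,p_y\}\bigr) = \log\bigl(1+d_{\mathrm{TV}}(p,\mathrm{Unif})\bigr)$ with $p = P_{Y|X=x_1}$ and $k=|\mathcal{Y}|$, and note that only the $\Theta(n\theta)$ samples with $X_i=x_1$ carry information (the paper handles the random count by Poissonization, via Lemmas~\ref{lemmapoissondeviation} and~\ref{lemmaPoissonFixedequiv}, rather than a Chernoff bound on the binomial count; this is cosmetic). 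Where you diverge is the final step. You reduce to minimax estimation of the $\ell_1$ distance to uniformity and plan to import, or re-derive by moment matching, the corresponding two-prior lower bound; the paper instead restricts further to conditionals $p$ whose nonzero masses are all at least $1/k$, for which $\sum_y\max\{1/k,p_y\} = 2 - |\mathrm{supp}(p)|/k$, so that estimating the leakage becomes estimating the support size to within $\pm\delta k$, and then invokes Wu and Yang's support-size lower bound as a black box. This is where your attribution is off: the factor $\tfrac{k}{\log k}\log^2\tfrac{1}{\delta}$ in \eqref{eq:lowerJA}, valid precisely for $1/k<\delta<c_0$, is the Wu--Yang support-size bound, not the distance-to-uniformity bound, whose $\delta$-dependence is $\delta^{-2}$ rather than $\log^2(1/\delta)$. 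Your route can in principle still close the argument (a $\delta^{-2}$ dependence only strengthens the conclusion where it holds), but you would need to verify that the $\ell_1$ lower bound is valid throughout the stated range of $\delta$ and you would still have to execute the moment-matching prior construction you yourself identify as the technical heart. The paper's detour through support size buys exactly this: no hard-instance construction has to be carried out, and the $\log^2(1/\delta)$ factor together with the regime $1/|\mathcal{Y}|<\delta<c_0$ falls out directly from the cited result.
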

\vspace{1mm}
If $\theta \rightarrow 0$, the bound diverges to infinity, which justifies our earlier claim that $S_{\delta, \epsilon} ( |\mathcal{X}|, |\mathcal{Y}| )$ is $ + \infty$.   
Nevertheless, if a lower bound $\theta$ is known, then the following upper bound holds.
\vspace{1mm}
\begin{Theorem} \label{thmupperbound}
For all $\theta \in (0,1)$, finite alphabets $\mathcal{X}$ and $\mathcal{Y}$, $\delta >0$, and $\epsilon \in (0,1)$,
\begin{align}
\label{equpperbound}
S_{\delta, \epsilon} \big( |\mathcal{X}|, |\mathcal{Y}|,\theta \big) \leq  \frac{ 8 \big(\log (5/\epsilon) + |\mathcal{Y}|  \log |\mathcal{X}|\big) }{ \theta \big( (2-e^{-\delta}) \log (2-e^{-\delta}) +e^{-\delta}-1 \big)  }.
\end{align}
\end{Theorem}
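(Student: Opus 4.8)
The plan is to analyze the natural plug‑in estimator and show that $n=\big\lceil(\text{right‑hand side of }\eqref{equpperbound})\big\rceil$ samples suffice. Write $A:=\exp\{\ml{X}{Y}\}=\sum_{y}a_y$ with $a_y:=\max_{x:P_X(x)>0}P_{Y|X}(y|x)$ (Theorem~\ref{thmmainthm}); note $A\ge 1$. From $n$ i.i.d.\ samples $(X_i,Y_i)$ set $N(x)=|\{i:X_i=x\}|$, $N(x,y)=|\{i:(X_i,Y_i)=(x,y)\}|$, $\hat P_{Y|X}(y|x)=N(x,y)/N(x)$ when $N(x)>0$, and output $f(X^n,Y^n)=\log\sum_{y}\hat a_y$ with $\hat a_y:=\max_{x:N(x)>0}\hat P_{Y|X}(y|x)$. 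Since $|f(X^n,Y^n)-\ml{X}{Y}|\le\delta$ is exactly the event $e^{-\delta}A\le\sum_y\hat a_y\le e^{\delta}A$, it suffices to control $\sum_y\hat a_y$ multiplicatively about $A$.

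First I would dispose of two structural bad events. Because every $x$ with $P_X(x)>0$ satisfies $P_X(x)\ge\theta$, binomial Chernoff bounds give $\Pr(\exists\, x\in\mathrm{supp}(X):N(x)=0)\le|\mathcal X|(1-\theta)^n$ and $\Pr(\exists\, x\in\mathrm{supp}(X):N(x)<n\theta/2)\le|\mathcal X|e^{-n\theta/8}$, both far below $\epsilon$ for $n$ as above. On the complement $\{x:N(x)>0\}=\mathrm{supp}(X)$, so $\hat a_y$ is a genuine maximum over the true support, and $m_x:=N(x)\ge n\theta/2$ for every relevant $x$. Condition henceforth on the values $\{m_x\}$; the conditional law of $(N(x,\cdot))_x$ is a product over $x\in\mathrm{supp}(X)$ of independent $\mathrm{Mult}(m_x,P_{Y|X}(\cdot|x))$.

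For the upper tail, write $\sum_y\hat a_y=\max_{\sigma}\sum_y\hat P_{Y|X}(y|\sigma(y))$, the maximum over the at most $|\mathcal X|^{|\mathcal Y|}$ maps $\sigma:\mathcal Y\to\mathcal X$. Fix $\sigma$; then $\sum_y\hat P_{Y|X}(y|\sigma(y))=\sum_x Z_x$ where $Z_x:=\frac1{m_x}\sum_{y:\sigma(y)=x}N(x,y)=\frac1{m_x}\sum_{j=1}^{m_x}B_{x,j}$, the $B_{x,j}\in\{0,1\}$ being independent across $(x,j)$ with $\E B_{x,j}=q_x:=\sum_{y:\sigma(y)=x}P_{Y|X}(y|x)$ and $\sum_x q_x=\sum_y P_{Y|X}(y|\sigma(y))\le\sum_y a_y=A$. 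Applying the Chernoff bound with $\lambda=(\min_x m_x)\log(2-e^{-\delta})$, using $1-q+qe^{u}\le e^{q(e^{u}-1)}$ and the chord bound $c^{r}\le 1+r(c-1)$ ($r\in(0,1]$, $c>1$) on $e^{\lambda/m_x}=(2-e^{-\delta})^{(\min_x m_x)/m_x}$, all the $q_x$'s collapse into $\sum_x q_x\le A$ and one obtains, with $\eta(\delta):=(2-e^{-\delta})\log(2-e^{-\delta})+e^{-\delta}-1$,
\begin{equation*}
\Pr\Big(\textstyle\sum_x Z_x\ge(2-e^{-\delta})A\Big)\le\exp\big(-A\,(\min_x m_x)\,\eta(\delta)\big)\le\exp\big(-\tfrac{n\theta}{2}\,\eta(\delta)\big).
\end{equation*}
Since $2-e^{-\delta}<e^{\delta}$, a union bound over the $|\mathcal X|^{|\mathcal Y|}$ maps $\sigma$ gives $\sum_y\hat a_y<e^{\delta}A$ except with probability $\exp\big(|\mathcal Y|\log|\mathcal X|-\tfrac{n\theta}{2}\eta(\delta)\big)$. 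For the lower tail only the single map $\sigma^{\star}(y)\in\argmax_x P_{Y|X}(y|x)$ is needed, since $\sum_y\hat a_y\ge\sum_y\hat P_{Y|X}(y|\sigma^{\star}(y))$ and now $\sum_x q_x=A$ exactly; the symmetric lower Chernoff bound (with $\lambda=\delta\min_x m_x$) yields $\Pr(\sum_y\hat a_y\le e^{-\delta}A)\le\exp(-\tfrac{n\theta}{2}(1-e^{-\delta}(1+\delta)))$ with no union bound, so it is not the binding term. Collecting the two structural and the two tail events and choosing $n$ so that $\tfrac{n\theta}{8}\,\eta(\delta)\ge\log(5/\epsilon)+|\mathcal Y|\log|\mathcal X|$ -- i.e.\ $n$ equal to the right‑hand side of~\eqref{equpperbound} -- makes the total failure probability at most $\epsilon$; the constant $8$ (rather than the $2$ from the dominant upper‑tail rate $\tfrac{n\theta}{2}\eta(\delta)$) absorbs the weaker lower‑tail rate and the $N(x)\ge n\theta/2$ event.

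The main obstacle is making the upper‑tail estimate uniform over $\mathcal P^{\theta}_{\mathcal X\times\mathcal Y}$: the empirical maximizer for a given $y$ need not coincide with the true one, so one cannot simply sum $|\mathcal X||\mathcal Y|$ individual binomial deviations, and the conditional probabilities $P_{Y|X}(y|x)$ may be arbitrarily small. Passing to the maximum over all assignments $\sigma$ is what introduces the factor $|\mathcal X|^{|\mathcal Y|}$ (hence the $|\mathcal Y|\log|\mathcal X|$ in the numerator), and comparing $\sum_x Z_x$ to the \emph{fixed} threshold $(2-e^{-\delta})A$ rather than to the $\sigma$‑dependent and possibly tiny mean $\sum_x q_x$ is the device that keeps the exponent proportional to $\min_x m_x\ge n\theta/2$ and produces exactly the rate $\eta(\delta)=(2-e^{-\delta})\log(2-e^{-\delta})+e^{-\delta}-1$ of the theorem.
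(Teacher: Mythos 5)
Your proof is correct, and while it shares the two key devices of the paper's argument --- the union bound over all $|\mathcal{X}|^{|\mathcal{Y}|}$ selector maps $\sigma:\mathcal{Y}\to\mathcal{X}$, and the comparison of $\sum_x Z_x$ to the \emph{fixed} threshold $(2-e^{-\delta})A$ combined with $A\ge 1$ to keep the exponent proportional to the per-symbol sample count --- the probabilistic machinery is genuinely different. The paper Poissonizes the sample size and then deliberately \emph{subsamples} each $x$ down to a common $\Poi(n\theta/4)$ count, which makes every $\widetilde{N}_{x,y}$ exactly Poisson with a rate independent of $P_X(x)$; the $\sigma$-sums are then single Poisson variables, handled by a stochastic-domination lemma and the Poisson Chernoff bound, and a separate lemma converts Poisson sampling back to fixed length (these two steps are where the paper's factor $8 = 4\times 2$ comes from). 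You instead keep all the data, use the plug-in estimator $N(x,y)/N(x)$, condition on the empirical $X$-counts $m_x\ge n\theta/2$, and tame the resulting heterogeneous Bernoulli sums with the chord bound $c^{r}\le 1+r(c-1)$, which collapses the $q_x$'s into $\sum_x q_x\le A$ and reproduces exactly the rate $\eta(\delta)=(2-e^{-\delta})\log(2-e^{-\delta})+e^{-\delta}-1$; your factor $8$ then absorbs the structural events and the weaker lower tail. Your route is more elementary (no Poissonization, no discarded samples) and shows the subsampling trick is a convenience rather than a necessity; the paper's route buys cleaner one-line tail bounds at the cost of two auxiliary lemmas. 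The only point you assert without proof is that the lower-tail rate $1-e^{-\delta}(1+\delta)=\hat\delta+(1-\hat\delta)\log(1-\hat\delta)$ dominates $\eta(\delta)$; this is true (it is the same inequality the paper records as a remark) but deserves a line.
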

\vspace{1mm}
For small $\delta$, the denominator behaves as $\delta^2$. If $\theta$ is of the order of $1/|\mathcal{X}|$, we get $S(\theta,\delta, \epsilon) \leq O \left( |\mathcal{X}|( |\mathcal{Y}|  \log |\mathcal{X}| + \log(1/\epsilon) )/\delta^2 \right)$.
\vspace{1mm}
\begin{Remark}
In terms of the dependence on the alphabets and $\theta$, the upper and lower bounds are within sub-polynomial factors of each other.
\end{Remark}
\vspace{1mm}

We prove the achievability result in Section~\ref{seclearningupper} and the converse result in Section~\ref{seclearninglower}. Both proofs use the standard technique of Poisson sampling, so we now clarify the connection between Poisson and fixed-length sampling.

\subsection{Poisson Sampling}

With Poisson sampling, for a given $n$, we first generate $N \sim \Poi(n)$  and then generate $(X^N,Y^N)$ from $P_{XY}$. So we define the Poisson sample complexity as follows:
\begin{align} \label{eq:PoissonSampling}
\widetilde{S}_{\delta, \epsilon} \big( |\mathcal{X}|, |\mathcal{Y}|,\theta \big)  = \inf_f  \min  \{n:  N \sim \Poi (n),  \Pr \left( \left| \mathcal{L}(P_X;P_{Y|X}) - f(X^N,Y^N) \right| > \delta \right) < \epsilon, \text{ for all } P_{XY} \in \mathcal{P}^\theta_{\mathcal{X} \times \mathcal{Y}}  \}.
\end{align}
The following lemma will be useful for our analysis. It is a simple application of the Chernoff bound, hence its proof is omitted.
\vspace{1mm}
\begin{Lemma} \label{lemmapoissondeviation}
Consider $\delta \in (0,1)$, $\lambda >0$, and let $N \sim \Poi(\lambda)$.
\begin{align} \label{eqlemmapoissonrighttail}
\Pr(N \geq (1+\delta) \lambda ) & \leq \exp\{ \lambda( \delta - (1+\delta) \log (1+\delta))\}, \\
\intertext{and } \label{eqlemmapoissonlefttail}
\Pr(N \leq (1-\delta) \lambda) & \leq \exp \{ \lambda ( -\delta - (1-\delta) \log (1-\delta))\}.
\end{align} 
\end{Lemma}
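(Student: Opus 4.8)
The plan is to obtain each tail bound by the standard Chernoff--Markov argument, using the moment generating function of a Poisson variable, $\E[e^{tN}] = \exp\{\lambda(e^t-1)\}$, which is finite for every real $t$. The two tails are handled symmetrically, and the hypotheses $\lambda > 0$ and $\delta \in (0,1)$ enter only to guarantee that the optimal exponent parameter is admissible.

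For the right tail \eqref{eqlemmapoissonrighttail}, I would fix $t > 0$ and apply Markov's inequality to $e^{tN}$:
\[
\Pr(N \geq (1+\delta)\lambda) = \Pr\!\big(e^{tN} \geq e^{t(1+\delta)\lambda}\big) \leq e^{-t(1+\delta)\lambda}\,\E[e^{tN}] = \exp\{\lambda(e^t-1) - t(1+\delta)\lambda\}.
\]
Optimizing the exponent over $t > 0$, I differentiate to get the stationary point $e^t = 1+\delta$, i.e. $t = \log(1+\delta)$, which is strictly positive precisely because $\delta > 0$; since the exponent is convex in $t$, this is the minimizer. Substituting back yields exponent $\lambda\big(\delta - (1+\delta)\log(1+\delta)\big)$, which is exactly \eqref{eqlemmapoissonrighttail}.

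For the left tail \eqref{eqlemmapoissonlefttail}, I would again fix $t > 0$ but now apply Markov's inequality to $e^{-tN}$:
\[
\Pr(N \leq (1-\delta)\lambda) = \Pr\!\big(e^{-tN} \geq e^{-t(1-\delta)\lambda}\big) \leq e^{t(1-\delta)\lambda}\,\E[e^{-tN}] = \exp\{\lambda(e^{-t}-1) + t(1-\delta)\lambda\}.
\]
Optimizing over $t > 0$ gives the stationary point $e^{-t} = 1-\delta$, i.e. $t = -\log(1-\delta)$, which is strictly positive because $0 < \delta < 1$; convexity again makes this the minimizer. Substituting back yields exponent $\lambda\big(-\delta - (1-\delta)\log(1-\delta)\big)$, matching \eqref{eqlemmapoissonlefttail}.

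There is no substantive obstacle here: the only points requiring a word of care are checking that the optimizing $t$ lies in the allowed range $t > 0$ in each case (which is where $\delta > 0$ and $\delta < 1$ are used) and recalling the closed form of the Poisson moment generating function. One could equally well just invoke the standard multiplicative Chernoff bound for Poisson random variables, which is why the authors omit the details.
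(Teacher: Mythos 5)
Your proof is correct and is exactly the argument the paper has in mind: the authors state that the lemma ``is a simple application of the Chernoff bound'' and omit the details, and your exponential-moment/Markov optimization with $t=\log(1+\delta)$ and $t=-\log(1-\delta)$ is that standard Chernoff computation carried out in full.
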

\vspace{1mm}
\begin{Remark}
It is a simple exercise to check that the exponents are negative for all $\delta \in (0,1)$. 
\end{Remark}
\vspace{1mm}
We now show that fixed-length sampling and Poisson sampling are equivalent, up to constant factors.
\begin{Lemma} \label{lemmaPoissonFixedequiv}
Fix $\epsilon \in (0,1)$, $\delta > 0$, and $\theta \in (0,1)$. Suppose there exists $f$ such that, given $n_1 \geq \log(5/\epsilon)/\log(4/e)$ and $N \sim \Poi(n_1)$, 
\[ \Pr \left( \left| \mathcal{L}(P_X;P_{Y|X}) - f(X^N,Y^N) \right| > \delta \right) < \frac{4\epsilon}{5}. \]
Then
\begin{align}
S_{\delta, \epsilon} \big( |\mathcal{X}|, |\mathcal{Y}|,\theta \big) \leq  2 n_1.
\end{align}
On the other hand, if there exists $n_2 \geq  \log(1/\epsilon) / \log(e/2)$ such that, for all estimators $f$,
\[ \Pr \left( \left| \mathcal{L}(P_X;P_{Y|X}) - f(X^N,Y^N) \right| > \delta \right) > 2 \epsilon, \]
where $N \sim \Poi(n_2)$, then
\begin{align}
S_{\delta, \epsilon} \big( |\mathcal{X}|, |\mathcal{Y}|,\theta \big) \geq \frac{n_2}{2}.
\end{align}
\end{Lemma}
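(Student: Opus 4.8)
The plan is to reduce both directions to the Poisson tail estimates of Lemma~\ref{lemmapoissondeviation} via a coupling between fixed-length and Poisson sampling. The key observation is that if one draws $m$ i.i.d.\ samples $(X^m,Y^m)$ from $P_{XY}$ and, \emph{independently}, $N\sim\Poi(\lambda)$, then on the event $\{N\le m\}$ the prefix $(X^N,Y^N)$ has exactly the law produced by Poisson sampling with parameter $\lambda$ (conditioned on $\{N\le m\}$): conditioned on $N=k\le m$, the prefix is $k$ fresh i.i.d.\ draws. This lets us run a Poisson-sampling estimator inside a fixed-length experiment, and conversely.

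For the first claim, set $m=2n_1$ and, given the Poisson estimator $f$, define the fixed-length estimator $\tilde f(X^m,Y^m)$ that internally generates $N\sim\Poi(n_1)$ and outputs $f(X^N,Y^N)$ if $N\le m$ and an arbitrary value otherwise. By the coupling,
\begin{align*}
\Pr\left(\left|\mathcal{L}(P_X;P_{Y|X})-\tilde f(X^m,Y^m)\right|>\delta\right)
&\le \Pr(N>2n_1)+\Pr\left(\left|\mathcal{L}(P_X;P_{Y|X})-f(X^N,Y^N)\right|>\delta\right)\\
&< \Pr(N>2n_1)+ 4\epsilon/5 ,
\end{align*}
uniformly over $P_{XY}\in\mathcal{P}^\theta_{\mathcal{X}\times\mathcal{Y}}$. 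The right-tail bound~\eqref{eqlemmapoissonrighttail} with $\lambda=n_1$ and $\delta=1$ gives $\Pr(N>2n_1)\le \exp\{n_1(1-2\log 2)\}=(e/4)^{n_1}$, which is $\le \epsilon/5$ precisely because $n_1\ge \log(5/\epsilon)/\log(4/e)$. Hence the total error is below $\epsilon$ for every such $P_{XY}$, so $S_{\delta,\epsilon}(|\mathcal{X}|,|\mathcal{Y}|,\theta)\le 2n_1$.

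For the second claim I argue by contraposition: suppose some estimator achieves fixed-length error below $\epsilon$ with fewer than $n_2/2$ samples; padding it to ignore extra samples, we may assume it uses exactly $m=\lceil n_2/2\rceil$ samples. Given $N\sim\Poi(n_2)$ samples, run this estimator on the first $m$ of them whenever $N\ge m$ and output anything otherwise; by the same coupling its error is at most $\epsilon+\Pr(N<m)$, and the left-tail bound~\eqref{eqlemmapoissonlefttail} with $\lambda=n_2$ and $\delta=1/2$ bounds $\Pr(N<m)$ by $\exp\{-\tfrac{n_2}{2}\log(e/2)\}$, which is small for $n_2$ in the assumed range, so that this error is at most $2\epsilon$. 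This contradicts the hypothesis that every Poisson-$n_2$ estimator has error exceeding $2\epsilon$; therefore $S_{\delta,\epsilon}(|\mathcal{X}|,|\mathcal{Y}|,\theta)\ge n_2/2$.

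There is no genuine obstacle here; the reduction is standard. The one point that must be handled with care is that the auxiliary $N$ is drawn independently of the data, so that the conditional law of $(X^N,Y^N)$ given $N=k$ is exactly that of $k$ i.i.d.\ samples — this is what makes the coupling exact and, crucially, uniform over all $P_{XY}\in\mathcal{P}^\theta_{\mathcal{X}\times\mathcal{Y}}$, so the parameters $|\mathcal{X}|,|\mathcal{Y}|,\theta$ carry through unchanged. The thresholds imposed on $n_1$ and $n_2$ are calibrated exactly so that the Poisson tail probabilities supplied by Lemma~\ref{lemmapoissondeviation} are absorbed by the available slack ($\epsilon/5$ in the first direction, $\epsilon$ in the second).
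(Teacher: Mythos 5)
Your argument is the paper's own proof made explicit: the paper likewise simulates the $\Poi(n_1)$ estimator inside a fixed-length experiment with $2n_1$ samples (and, conversely, runs a fixed-length estimator with $n_2/2$ samples on a prefix of the Poisson sample), absorbing the tail events $\{N > 2n_1\}$ and $\{N < n_2/2\}$ via Lemma~\ref{lemmapoissondeviation}, so the approach is the same and the first direction is fully correct. The one quibble, which you share with the paper, is in the second direction: the left-tail bound of Lemma~\ref{lemmapoissondeviation} with $\delta = 1/2$ gives $\Pr(N < n_2/2) \le \exp\{-(n_2/2)\log(e/2)\}$, which under the stated threshold $n_2 \ge \log(1/\epsilon)/\log(e/2)$ is only $\le \sqrt{\epsilon}$ rather than $\le \epsilon$, so your final "at most $2\epsilon$" does not quite follow as written; the hypothesis on $n_2$ should carry an extra factor of $2$, a constant that is immaterial to how the lemma is used downstream.
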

\vspace{1mm}
\begin{proof}
 Consider an optimal fixed-length estimator with $2n_1$ samples. Then, a $\Poi(n_1)$ estimator can outperform it only if $N > 2n_1$. However, by Lemma~\ref{lemmapoissondeviation},
\begin{align*} 
\Pr ( \Poi(n_1) > 2 n_1 ) \leq e^{-n_1 (2 \log 2 -1)} \leq \epsilon/5.
\end{align*} 

Conversely, consider an optimal fixed-length estimator $n_2/2$ samples. Then, it can outperform a $\Poi(n_2)$ estimator only if $N < n_2/2$. However, by Lemma~\ref{lemmapoissondeviation},
\begin{align*}
\Pr( \Poi(n_2) < n_2/2 ) \leq e^{-n_2 (1+\log(1/2))} \leq \epsilon.
\end{align*}
\end{proof}

\subsection{Proof of Theorem~\ref{thmupperbound}} \label{seclearningupper}

%Fix $\theta \in (0,1)$, $\delta >0$, and $\epsilon \in (0,1)$.  
Let
\begin{align}
\label{eqMXY}
M(P_{X};P_{Y|X}) := \exp\{\mathcal{L} (P_{X};P_{Y|X})\} = \sum_{y \in  \mathcal{Y}} \max_{ \substack{ x \in \mathcal{X}: \\ P_X(x) > 0}} P_{Y|X}(y|x).
\end{align} 
It is straightforward to verify that a $(1-e^{-\delta})$-multiplicative estimator for $M(P_{X};P_{Y|X})$ translates to a $\delta$-additive estimator for $\mathcal{L} (P_{X};P_{Y|X})$, where a $\hat{\delta}$-multiplicative estimator means that $| M - \hat{M}| \leq \hat{\delta} M$. Therefore, in the remainder, we will analyze multiplicative estimators of $M$.
Now,
consider $n \in \mathbb{N}$ and let $N \sim \Poi(n)$. 
 Let $(X_1,Y_1), (X_2,Y_2), \dots, (X_N,Y_N)$ be $N$ independent samples drawn from a distribution $P_{XY}$. For each $x \in \mathcal{X}$ and $y \in \mathcal{Y}$, let $N_x$ denote the number of times $x$ appears, $N_y$ the number of times $y$ appears, and $N_{x,y}$ the number of times $(x,y)$ appears in the sequence. Then, $N_x \sim \Poi(nP_X(x))$, $N_y \sim \Poi(nP_Y(y))$, and $N_{x,y} \sim \Poi(nP_{XY}(x,y))$. Now, let $\theta' = \theta/4$. The estimator works as follows:
\begin{enumerate}
\item For each $x \in \mathcal{X}$ with $N_x >0$, generate a random variable $\widetilde{N}_x \sim \Poi(n\theta')$. If $N_x = 0$, set $\widetilde{N}_x =0$.
\item For each $x \in \mathcal{X}$ with $N_x > 0$, keep only the first $\widetilde{N}_x$ samples containing $x$ and disregard the rest. 
\begin{enumerate}
\item If there are not enough samples for some $x$ (i.e., $\widetilde{N}_x > N_x$), then let $\hat{M}=1$.
\item Otherwise, let 
\begin{align}
\label{eqestimator}
\hat{M} = \sum_{y \in \mathcal{Y}} \max_{x \in \mathcal{X}} \frac{\widetilde{N}_{x,y}}{n\theta'},
\end{align}
where $\widetilde{N}_{x,y}$ is the number of times $(x,y)$ appears in the truncated sequence.
\end{enumerate}
\end{enumerate}
To analyze the above estimator, we first consider a slightly modified setting. In particular, suppose the estimator has access to an infinite sequence $(X_1,Y_1), (X_2,Y_2), \dots$ Then, $N_x = +\infty$ with probability $1$ for each $x \in \mathrm{supp}(X)$. 
In this case, for each $(x,y)$ with $P_X(x) >0$, $\widetilde{N}_{x,y} \sim \Poi(n \theta' P_{Y|X} (y|x))$. 
For each $y \in \mathcal{Y}$, let $x(y) \in \argmax_{x: P_X(x) >0} P_{Y|X}(y|x)$.  Let $\hat{\delta} = 1-e^{-\delta}$, and consider the following:
\begin{align} 
\Pr \left( \hat{M} - M \leq -\hat{\delta} M \right) & = \Pr \left( \sum_{y \in \mathcal{Y}} \max_{x \in \mathcal{X}} \widetilde{N}_{x,y}/n\theta' \leq (1-\hat{\delta})M \right)   \notag  \\
& =  \Pr \left( \sum_{y \in \mathcal{Y}} \max_{x \in \mathcal{X}} \widetilde{N}_{x,y} \leq (1-\hat{\delta})M n \theta' \right)  \notag \\
& \leq \Pr \left( \sum_{y \in \mathcal{Y}}  \widetilde{N}_{x(y),y} \leq (1-\hat{\delta})M n \theta' \right)  \notag \\
& \stackrel{(a)} = \Pr \left(  \Poi \left(n \theta' M \right) \leq (1-\hat{\delta})M n \theta' \right)  \notag \\
& \stackrel{(b)} \leq  \exp \left\lbrace M n \theta' \left( -\hat{\delta} - (1-\hat{\delta}) \log (1-\hat{\delta}) \right) \right\rbrace  \notag \\
& \stackrel{(c)} \leq \exp \left\lbrace n \theta' \left( -\hat{\delta} - (1-\hat{\delta}) \log (1-\hat{\delta}) \right) \right\rbrace , \label{eqlefttailbound}
\end{align}
where (a) follows from the fact that $\widetilde{N}_{x,y}$'s are independent $\Poi\left(n \theta' P_{Y|X}(y|x(y))\right)$, (b) follows from Lemma~\ref{lemmapoissondeviation}, and (c) follows from the fact that $M \geq 1$.  
Now consider the probability that $\hat{M}$ exceeds $M$ by a factor of at least $\hat{\delta} M$:
\begin{align}
\Pr \left( \hat{M} - M \geq \hat{\delta} M \right) & = \Pr \left( \sum_{y \in \mathcal{Y}} \max_{x \in \mathcal{X}} \widetilde{N}_{x,y} \geq (1+\hat{\delta})Mn\theta' \right)   \notag \\
& = \Pr \left( \bigcup_{ (x_1,\dots,x_{|\mathcal{Y}|}) \in \mathcal{X}^{|\mathcal{Y}|} } \left( \sum_{y \in \mathcal{Y}} \widetilde{N}_{x_y,y} \geq (1+\hat{\delta})Mn\theta' \right) \right) \notag  \\
& = \Pr \left( \bigcup_{ (x_1,\dots,x_{|\mathcal{Y}|}) \in \mathcal{X}^{|\mathcal{Y}|} } \left( \Poi \left(n \theta' \sum_{y \in \mathcal{Y}} P_{Y|X}(y|x_y) \right) \geq (1+\hat{\delta})Mn\theta' \right) \right)  \notag \\
& \stackrel{(a)} \leq |\mathcal{X}|^{|\mathcal{Y}|} \Pr \left( \Poi(n \theta'M) \geq (1+\hat{\delta}) M n \theta' \right)  \notag \\
& \stackrel{(b)} \leq |\mathcal{X}|^{|\mathcal{Y}|} \exp \left\lbrace M n \theta'\left( \hat{\delta} - (1+\hat{\delta}) \log (1+\hat{\delta}) \right) \right\rbrace \notag \\
& \stackrel{(c)} \leq |\mathcal{X}|^{|\mathcal{Y}|} \exp \left\lbrace n \theta'\left( \hat{\delta} - (1+\hat{\delta}) \log (1+\hat{\delta}) \right) \right\rbrace, \label{eqrighttailbound}
\end{align}
where (a) follows from Lemma~\ref{lemmatwopoissons} below and the fact that for any $(x_1,\dots,x_{|\mathcal{Y}|}) \in \mathcal{X}^{|\mathcal{Y}|}$, $\sum_{y \in \mathcal{Y}} P_{Y|X}(y|x_y) \leq M$, (b) follows from Lemma~\ref{lemmapoissondeviation}, and (c) follows from the fact that $M \geq 1$. 
\begin{Lemma} \label{lemmatwopoissons}
Consider $\lambda_1 > 0$, $\lambda_2 >0$ such that $\lambda_1 \geq \lambda_2$, and let $N_1 \sim \Poi(\lambda_1)$, and $N_2 \sim \Poi(\lambda_2)$. Then, for all $k$,
\begin{align*}
\Pr(N_1 \geq k) \geq \Pr(N_2 \geq k).
\end{align*}
\end{Lemma}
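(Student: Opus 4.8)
The statement is the elementary fact that a Poisson random variable is stochastically increasing in its rate parameter. The plan is to establish the coupling between the two variables explicitly, which makes the tail comparison immediate. First I would invoke the infinite divisibility (superposition) property of the Poisson distribution: since $\lambda_1 \geq \lambda_2 > 0$, write $\lambda_1 = \lambda_2 + (\lambda_1 - \lambda_2)$ with $\lambda_1 - \lambda_2 \geq 0$, and let $N_2 \sim \Poi(\lambda_2)$ and $N' \sim \Poi(\lambda_1 - \lambda_2)$ be independent. Then $N_2 + N'$ has the $\Poi(\lambda_1)$ distribution, so we may take $N_1 = N_2 + N'$ on a common probability space.

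With this coupling in place, the rest is one line: for every $k$,
\begin{align*}
\Pr(N_1 \geq k) = \Pr(N_2 + N' \geq k) \geq \Pr(N_2 \geq k),
\end{align*}
since $N' \geq 0$ almost surely implies $\{N_2 \geq k\} \subseteq \{N_2 + N' \geq k\}$.

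There is essentially no obstacle here — the only point requiring a word of care is the degenerate case $\lambda_1 = \lambda_2$, for which $N' \equiv 0$ and the inequality holds trivially (indeed with equality), and the edge case $k \leq 0$, for which both probabilities equal $1$. An alternative route, if one prefers not to invoke superposition, is to note that $\Pr(\Poi(\lambda) \geq k) = \Pr(T_k \leq \lambda)$ where $T_k$ is a sum of $k$ i.i.d.\ unit-rate exponentials (the Erlang/Poisson process identity), and this is manifestly nondecreasing in $\lambda$; but the coupling argument above is the cleanest and is what I would write.
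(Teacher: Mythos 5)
Your proof is correct and takes essentially the same approach as the paper: the paper states that the lemma ``follows from a simple coupling argument and is omitted,'' and your superposition coupling $N_1 = N_2 + N'$ with $N' \sim \Poi(\lambda_1 - \lambda_2)$ independent of $N_2$ is exactly that argument, carried out in full.
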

%\begin{proof}
%Let $\lambda_3 = \lambda_1-\lambda_2$, and $N_3 \sim \Poi(\lambda_3)$ independent of $N_2$. Then $N_2+N_3 \sim \Poi(\lambda_1)$. Hence $\Pr( N_1 \geq k) = \Pr( N_2+N_3 \geq k) \geq \Pr(N_2 \geq k)$. 
%\end{proof}
\vspace{1mm}
The proof follows from a simple coupling argument and is omitted. Now let 
\begin{align} \label{eqnstarupper}
n^\star = \frac{\log (5/\epsilon) + |\mathcal{Y}|  \log |\mathcal{X}|}{ \theta' \left( (1+\hat{\delta}) \log (1+\hat{\delta}) - \hat{\delta} \right)  } .
\end{align} 
For such a choice, we get by~\eqref{eqlefttailbound} and~\eqref{eqrighttailbound},
\begin{align} \label{eqleftrightbound}
\Pr \left( | \hat{M} - M | \geq \hat{\delta} M \right) \leq 2\epsilon/5.
\end{align}
\begin{Remark}
For all $\hat{\delta} \in (0,1)$, $\hat{\delta} + (1-\hat{\delta}) \log ( 1- \hat{\delta}) \geq (1+\hat{\delta}) \log (1+\hat{\delta}) - \hat{\delta}$.
\end{Remark}
%From equations~\eqref{eqlefttailbound} and~\eqref{eqrighttailbound}, we get that $n=O(|\mathcal{Y}| \log |\mathcal{X}| /\theta)$ is sufficient. 
%Equations~\eqref{eqleftrightbound} holds for the estimator that has access to an infinite sequence of samples. 
Note that the Poisson estimator behaves identically to the infinite-sequence estimator unless there exists $x \in \mathrm{supp}(X)$ for which $N_x = 0$ or $ \widetilde{N}_x > N_x$. Therefore, we need to compute the probability of that event.
\begin{align}
\Pr \left(\text{there exists }x \in \mathrm{supp}(X): \widetilde{N}_x > N_x \right) & \leq \sum_{x \in \mathrm{supp}(X)} \Pr(\widetilde{N}_x > N_x) \notag \\
& \leq \sum_{x  \in \mathrm{supp}(X) } \Pr\left( \Poi(n^\star \theta') \geq \Poi \left(n^\star P_X(x) \right) \right)  \notag \\
& \stackrel{(a)} \leq \sum_{x  \in \mathrm{supp}(X) } \exp\left\lbrace-\left(\sqrt{n^\star P_X(x)} - \sqrt{n^\star\theta'}\right)^2 \right\rbrace  \notag \\
&  \stackrel{(b)} \leq \sum_{x  \in \mathrm{supp}(X) } \exp\left\lbrace-\left(\sqrt{4n^\star\theta'} - \sqrt{n^\star\theta'}\right)^2 \right\rbrace  \notag \\
& \leq |\mathcal{X}| e^{-n^\star \theta'} \notag \\
& \stackrel{(c)} \leq \epsilon/5, \label{eqPoissoninfinite}
\end{align}
where (a) follows from the Chernoff bound, (b) follows from the fact that for all $x \in \mathrm{supp}(X)$, $P_X(x) \geq \theta = 4\theta'$, and (c) follows from the fact that $(1+\hat{\delta}) \log (1+\hat{\delta}) - \hat{\delta} < 2 \log 2 - 1 < 1$ for $\hat{\delta} \in (0,1)$. Similarly,
\begin{align}
\Pr \left( \text{there exists } x \in \mathrm{supp}(X): N_x = 0 \right) \leq \!\!\!\!\! \sum_{x \in \mathrm{supp}(X)} \!\!\! \Pr( N_x =0) = \!\!\!\!\! \sum_{x \in \mathrm{supp}(X)} \!\!\!\!\! e^{-n^\star P_X(x) } & \leq |\mathcal{X}| e^{-n^\star \theta'} \notag \\
& \leq \epsilon/5. \label{eqPoissonNxzero}
\end{align}
It follows from equations~\eqref{eqleftrightbound},~\eqref{eqPoissoninfinite},~\eqref{eqPoissonNxzero}, and Lemma~\ref{lemmaPoissonFixedequiv} that
\begin{align*} 
S_{\delta, \epsilon} ( |\mathcal{X}|, |\mathcal{Y}|,\theta)  \leq 2 \frac{\log (5/\epsilon) + |\mathcal{Y}|  \log |\mathcal{X}|}{ \theta' \left( (1+\hat{\delta}) \log (1+\hat{\delta}) - \hat{\delta} \right)  } .
\end{align*}
\begin{Remark}
One can readily verify that $n^\star \geq \log(5/\epsilon)/\log(4/\epsilon)$.
\end{Remark}
Plugging in  $\hat{\delta}= 1-e^{-\delta}$ and $\theta'= \theta/4$ yields Theorem~\ref{thmupperbound}.

\begin{Remark}
The proof shows that the risk of overestimating leakage is what controls
the sample complexity of the estimator. If one is merely interested in
ensuring that the estimator does not underestimate the true leakage, 
which is often the case in practice, then from (\ref{eqlefttailbound}) and 
(\ref{eqPoissoninfinite}) 
the sample complexity is
$$
\frac{ 8 \big(\log (5/\epsilon) +  \log |\mathcal{X}|\big) }{ \theta \big( (2-e^{-\delta}) \log (2-e^{-\delta}) +e^{-\delta}-1 \big)  }.
$$
\end{Remark}

\subsection{Proof of Theorem~\ref{thmlowerbound}} \label{seclearninglower}

Let $|\mathcal{Y}|=k$. We will derive a lower-bound on complexity by considering a subproblem, i.e., we will restrict our attention to a subset of $\mathcal{P}_{\mathcal{X} \times \mathcal{Y}}^\theta$ (cf.~\eqref{eqPtheta}). In particular, consider $P_{XY} \in \mathcal{P}_{\mathcal{X} \times \mathcal{Y}}^\theta$ that satisfy $P_X(x_1)=\theta \in (0,1)$ and $P_{Y|X}$ that have the following form:
\begin{align}
\label{eqPY|X}
P_{Y|X} = \begin{bmatrix}
p_1 & p_2 & \cdots & p_{k} \\
1/k & 1/k & \cdots & 1/k \\
\vdots & \vdots & & \vdots \\
1/k & 1/k & \cdots & 1/k
\end{bmatrix},
\end{align}
where  $p_Y=(p_1,p_2,\cdots,p_k)$ is some distribution over $\mathcal{Y}$. 
Now, for any distribution $p_Y$ over $\mathcal{Y}$, define
\begin{align}
\label{eqdefh}
h(p_Y) =\log \left( \sum_{y \in \mathcal{Y}} \max \left\lbrace \frac{1}{k}, ~ p_y \right\rbrace  \right).
\end{align}
Therefore, 
\begin{align}
\label{eqMLP}
\mathcal{L} (P_{X};P_{Y|X}) = h(P_{Y|X}(\cdot|x_1)).
\end{align}
Hence, estimating maximal leakage for this subproblem is the same as estimating a property of $P_{Y|X}(\cdot|x_1)$. 
Let
\begin{align}
\widetilde{S}_{\delta, \epsilon}^h (|\mathcal{Y}|) & = \inf_f \min \{n: N \sim \Poi (n),  \Pr \left( \left| h(P_Y) - f(Y^N) \right| > \delta \right) < \epsilon, \text{ for all } P_{Y} \in \mathcal{P}_{ \mathcal{Y}}  \},
\end{align}
where $\mathcal{P}_{\mathcal{Y}}$ is the set of all probability distributions on $\mathcal{Y}$, and $Y^n$ is drawn independently according to $P_Y$.
Since sampling $\Poi(n)$ from $P_{XY}$ gives $\Poi(n\theta)$ samples from $P_{Y|X}(\cdot|x_1)$, we get
\begin{align}
\label{eqcomplexitylowerboundequi}
\widetilde{S}_{\delta, \epsilon} ( |\mathcal{X}|, |\mathcal{Y}|,\theta)) \geq \widetilde{S}^h_{\delta, \epsilon} ( |\mathcal{Y}|) /\theta.
\end{align}
%On the other hand, given $N \sim \Poi(n)$,  
%\begin{align}
%& \inf_f \Pr \left( \left| \mathcal{L}(P_X;P_{Y|X}) - f(X^N,Y^N) \right| > \delta \right) > \epsilon \notag \\
%\Rightarrow &
%\inf_f \Pr \left( \left| \mathcal{L}(P_X;P_{Y|X}) - f(X^{n/2},Y^{n/2} ) \right| > \delta \right) > \epsilon/2,
%\end{align}
%when $n > \log(2/\epsilon) / \log(e/2)$.

It remains to show that
\begin{align}
\widetilde{S}_{\delta, \epsilon}^h (k) 
   \ge c \cdot \frac{k}{\log k} \log^2 \frac{1}{\delta}.
\end{align}
We shall show this by relating the problem of estimating $h(p_Y)$
to the problem of estimating the support size of $p_Y$. Consider
any distribution $p_Y$ with the property that 
\begin{equation}
\label{eq:minprobconstraint}
p_Y(y) \ge 1/k \ \text{for all $y$ such that} \ p_Y(y) > 0.
\end{equation}
Then we have
\begin{align}
\nonumber
e^{h(p_Y)} & = \sum_{y \in \mathcal{Y}} \max\left\{\frac{1}{k},p_Y(y)\right\} \\
\nonumber
        & = \sum_{y: p_Y(y) = 0} \frac{1}{k} + 
              \sum_{y: p_Y(y) \ge 1/k}
                    p_Y(y) \\
\nonumber
        & = \frac{k - |\mathrm{supp}(p_Y)|}{k} + 1 \\
\label{eq:supportsizeconversion}
        & = 2 - \frac{|\mathrm{supp}(p_Y)|}{k}.
\end{align}
Choose $\alpha$ such that
$$
\alpha \delta \le \log\left(1 + \frac{\delta}{10}\right)
$$
for all $0 < \delta < 1/2$
and let $f(\cdot)$ be an estimator such that
$$
\Pr(|h(p_Y) - f(Y^{\overline{N}})| > \alpha \delta) < \epsilon
$$
where $\bar{N}$ is Poisson with mean $\theta n$ and $Y^{\overline{N}}$ is
i.i.d.\ $P_Y$.
Then we have
$$
\Pr\left(|h(p_Y) - f(Y^N)| > \log(1 + \frac{\delta}{10}\right) < \epsilon
$$
which, since
$$
\log\left(1 + \frac{\delta}{10}\right) = 
\min\left[ 
\log\left(1 + \frac{\delta}{10}\right),
-\log\left(1 - \frac{\delta}{10}\right)\right]
$$
implies that
\begin{equation}
\Pr\left(|1 - e^{h(p_Y) - f(Y^{\overline{N}})}| > \frac{\delta}{10}\right) 
            < \epsilon.
\end{equation}
Since $h(\cdot) \le 2$, we may assume that $f(\cdot) \le 2$, in which
case the previous inequality implies
\begin{equation}
\Pr(|1 - e^{h(p_Y) - f(Y^N)}| > \delta e^{-f(Y^N)}) < \epsilon,
\end{equation}
which, by defining
$$
\tilde{f}(Y^{\overline{N}}) = \left(2 - e^{f(Y^N)}\right) k,
$$
substituting (\ref{eq:supportsizeconversion}), 
and rearranging, gives
\begin{equation}
\Pr\left(\left| |\mathrm{supp}(p_Y)| - \tilde{f}(Y^{\overline{N}}) \right| > \delta k\right) < \epsilon.
\end{equation}
Thus $\tilde{f}(\cdot)$ estimates the support of $p_Y$ with accuracy
$\delta k$ with probability at least $\epsilon$ for all
$p_Y$ satisfying~(\ref{eq:minprobconstraint}). It then follows from, e.g., 
Wu and Yang~\cite[Theorem~2]{WuYang} (where the role of $\epsilon$ and $\delta$
are reversed) that there exists a constant $c$ such that for all
$k$ and all $\delta$ such that 
$\frac{1}{k} < \delta < c_0$
$$
\theta n \ge c \frac{k}{\log k} \log^2 \frac{1}{\epsilon}.
$$

\section{Guessing Framework to Interpret Leakage Metrics} \label{sec:commonframe}

Finally, we use the guessing framework to provide operational definitions for commonly used information leakage metrics. The new operational definitions clarify in which cases each metric should be used.

In particular, we show that Shannon capacity is suitable for \emph{covert} channel analysis rather than side-channel analysis. Local differential privacy emerges when the system designer is extremely risk averse (i.e., the probability that $U$ is revealed should be small for every possible realization $y$, regardless of the probability of the latter event). The analysis will naturally lead us to define an information metric that is intermediate between maximal leakage and local differential privacy, which we call maximal realizable leakage (cf.~Section~\ref{secmaxrealizable}).

On the other hand, maximal correlation captures the multiplicative decrease, upon observing $Y$, of the variance of functions of $X$. Hence it is more suitable for \emph{estimation} problems, rather than guessing problems. This also naturally leads to a \emph{cost}-based notion of leakage, which considers reductions in costs and is investigated in Section~\ref{seccostleakage}.

\subsection{Shannon Capacity} \label{subsec:compareCapacity}

Shannon justifies the choice of mutual information by arguing
that ``From the point of view of the cryptanalyst [i.e., the adversary],
a secrecy system is almost identical with a noisy communication system''~\cite{ShannonSecrecy}.
This argument is not persuasive, however, because a noisy communication system
(the rate of which is governed by mutual information) relies on coding, of which
there is generally none in the side-channel setting. One could argue that Shannon is simply taking
a ``pessimistic'' view by upper-bounding leakage by assuming
that the transmitter is a cooperative participant and thus willing to code.
This reasoning is erroneous, however; Shannon capacity is generally
lower than maximal leakage.
The reason is that Shanon capacity is concerned  with (the size of) message sets that can be \emph{reliably} reconstructed at the receiver, whereas leakage does not impose any reliability constraint. This inspires the following definition.
\vspace{1mm}
\begin{Definition}[Recoverable Leakage] \label{defcapacitymaxleak}
Given $\epsilon >0$ and a joint distribution $P_{XY}$ on finite alphabets $\mathcal{X}$ and $\mathcal{Y}$, the \emph{recoverable leakage} from $X$ to $Y$ is defined as 
\begin{align}
\label{eqMLcap}
\mathcal{L}_\epsilon^C (X \!\! \to \!\! Y) = \sup_{\substack{(U,X): U-X-Y \\ \Pr(U = \hat{U}(Y)) \geq 1-\epsilon}} \log \frac{\Pr(U = \hat{U}(Y))}{\max_u P_U(u)}, 
\end{align}
where the support of $U$ is finite but of arbitrary size, and $\hat{U}(Y)$ is the MAP estimator.
\end{Definition}
\vspace{1mm}
\begin{Remark} $\mathcal{L}_\epsilon^C (X \!\! \to \!\! Y)$ depends on $P_{XY}$ only through $P_{Y|X}$. \end{Remark}
\vspace{1mm}
\begin{Theorem} \label{thmcapacitymaxleak}
For any joint distribution $P_{XY}$ on finite alphabets $\mathcal{X}$ and $\mathcal{Y}$,
\begin{align}
\label{eqMLcapthm}
\lim_{\epsilon \rightarrow 0} \lim_{n \rightarrow \infty} \frac{1}{n} \mathcal{L}_\epsilon^C (X^n \!\! \to \!\! Y^n) = C(P_{Y|X}),
\end{align}
where $(X^n,Y^n)$ is distributed i.i.d according to $P_{XY}$, and $C(P_{Y|X})$ is the capacity of the channel $P_{Y|X}$.
\end{Theorem}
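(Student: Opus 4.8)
The plan is to show the iterated limit equals $C:=C(P_{Y|X})$ by squeezing $\tfrac1n\mathcal{L}_\epsilon^C(X^n \!\! \to \!\! Y^n)$ between $C$ and $C/(1-\sqrt{\epsilon})$ for every $\epsilon\in(0,1)$, so that sending $\epsilon\to0$ pins the value. The point to keep in mind throughout (already reflected in the Remark preceding the theorem) is that $\mathcal{L}_\epsilon^C$ depends on the joint law only through the memoryless channel $P_{Y^n|X^n}=\prod_{i}P_{Y|X}$: the input law of $X^n$ is a free variable in the supremum, so a capacity-achieving channel code is an admissible choice of $(U,X^n)$ even though its codeword distribution is not a product measure.

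For the lower bound, given $\epsilon$ and any $\delta>0$, I would invoke Shannon's channel coding theorem to obtain, for all large $n$, a block code for $\prod_i P_{Y|X}$ with $M=\lceil e^{n(C-\delta)}\rceil$ messages and average error probability at most $\epsilon$. Taking $U$ uniform on the message set, $X^n$ its (possibly randomized) encoding, and $Y^n$ the channel output gives $U-X^n-Y^n$; since the MAP estimator $\hat U(Y^n)$ only lowers the error, $\Pr(U=\hat U(Y^n))\ge 1-\epsilon$ while $\max_u P_U(u)=1/M$. Hence $\tfrac1n\mathcal{L}_\epsilon^C(X^n \!\! \to \!\! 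Y^n)\ge\tfrac1n\log\!\big((1-\epsilon)M\big)\to C-\delta$, and as $\delta$ is arbitrary, $\liminf_n \tfrac1n\mathcal{L}_\epsilon^C(X^n \!\! \to \!\! Y^n)\ge C$ for every $\epsilon\in(0,1)$.

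For the upper bound I would start from the trivial bound $\mathcal{L}_\epsilon^C(X^n \!\! \to \!\! Y^n)\le\sup_U\big(-\log\max_u P_U(u)\big)$, since the numerator in \eqref{eqMLcap} is at most $1$, and then control $-\log\max_u P_U(u)$ for an admissible $U$. The obstacle is that the alphabet of $U$ is unbounded, so applying Fano's inequality to $U$ directly contributes an uncontrolled $\epsilon\log|\mathcal{U}|$ term. The fix is truncation: with $q=\max_u P_U(u)$, let $T=\{u:\Pr(\hat U(Y^n)\ne U\mid U=u)\le\sqrt{\epsilon}\}$; Markov's inequality gives $\Pr(U\in T)\ge 1-\sqrt{\epsilon}$, hence $|T|\ge(1-\sqrt{\epsilon})/q$. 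Put the uniform prior on $T$ (keeping $P_{X^n|U}$ and the channel unchanged); the decoder $\hat U$, now merely a valid decoder, still has error at most $\sqrt{\epsilon}$ on this code, so by Fano $H(\bar U\mid \bar Y^n)\le H(\sqrt{\epsilon})+\sqrt{\epsilon}\log|T|$ (binary entropy $H$), while $I(\bar U;\bar Y^n)\le I(\bar X^n;\bar Y^n)\le nC$ by data processing and memorylessness. Adding, $\log|T|\le nC+H(\sqrt{\epsilon})+\sqrt{\epsilon}\log|T|$, i.e.\ $\log|T|\le(nC+H(\sqrt{\epsilon}))/(1-\sqrt{\epsilon})$; combined with $\log(1/q)\le\log|T|+\log\tfrac1{1-\sqrt{\epsilon}}$, taking the supremum over $U$ and then $n\to\infty$ yields $\limsup_n \tfrac1n\mathcal{L}_\epsilon^C(X^n \!\! \to \!\! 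Y^n)\le C/(1-\sqrt{\epsilon})$.

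Combining the two bounds, $C\le\liminf_n \tfrac1n\mathcal{L}_\epsilon^C(X^n \!\! \to \!\! Y^n)\le\limsup_n \tfrac1n\mathcal{L}_\epsilon^C(X^n \!\! \to \!\! Y^n)\le C/(1-\sqrt{\epsilon})$ for every $\epsilon\in(0,1)$, so letting $\epsilon\to0$ forces $\lim_{\epsilon\to0}\lim_{n\to\infty}\tfrac1n\mathcal{L}_\epsilon^C(X^n \!\! \to \!\! Y^n)=C$. If one wishes to assert that $\lim_n$ exists for each fixed $\epsilon$, I would note the approximate superadditivity $\mathcal{L}_{\epsilon_1+\epsilon_2}^C(X^{n+m} \!\! \to \!\! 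Y^{n+m})\ge\mathcal{L}_{\epsilon_1}^C(X^n \!\! \to \!\! Y^n)+\mathcal{L}_{\epsilon_2}^C(X^m \!\! \to \!\! Y^m)$, obtained by juxtaposing independent codes, and apply a Fekete-type argument; this refinement is not needed for the statement as written. The step I expect to require the most care is the truncation-plus-rearrangement in the upper bound, as it is precisely what renders the arbitrary cardinality of $\mathcal{U}$ harmless.
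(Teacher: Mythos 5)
Your proof is correct. The lower bound is exactly the paper's argument: take $U$ uniform over a capacity-achieving code's message set, observe that the reliability constraint is met and that $\max_u P_U(u)=1/M$, and let the rate approach $C$. The upper bound follows the same first step as the paper --- bound the numerator of \eqref{eqMLcap} by $1$ so that $\mathcal{L}_\epsilon^C(X^n \!\to\! Y^n)\le \sup_U H_\infty(P_U)$ over admissible $U$ --- but then diverges in the key technical ingredient. The paper invokes Verd\'u's generalized Fano inequality, $H_\infty(P_U)\le \frac{I(U;\hat U)+1}{\Pr(U=\hat U)}$, as a black box, which together with $I(U;\hat U)\le I(X^n;Y^n)\le nC$ immediately gives $\frac{nC+1}{1-\epsilon}$. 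You instead prove a bound of the same flavor from first principles: Markov's inequality extracts a set $T$ of messages each decoded with conditional error at most $\sqrt{\epsilon}$, with $|T|\ge(1-\sqrt{\epsilon})/\max_u P_U(u)$; re-running the same encoder/decoder with the uniform prior on $T$ and applying the standard Fano inequality plus the memoryless converse bound $I(\bar X^n;\bar Y^n)\le nC$ yields $\log|T|\le (nC+H_b(\sqrt{\epsilon}))/(1-\sqrt{\epsilon})$, hence $\limsup_n\frac1n\mathcal{L}_\epsilon^C\le C/(1-\sqrt{\epsilon})$. Both bounds vanish to $C$ as $\epsilon\to0$, so the difference is purely one of self-containedness versus citation: your truncation argument is longer but uses only textbook Fano, and it correctly neutralizes the unbounded cardinality of $\mathcal{U}$, which is the one genuinely delicate point. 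Your closing remark about Fekete-type superadditivity to justify existence of the inner limit is a refinement the paper does not bother with either; as you note, the squeeze between $C$ and $C/(1-\sqrt{\epsilon})$ already determines the iterated limit of both $\liminf_n$ and $\limsup_n$.
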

\vspace{1mm}
Shannon capacity is a suitable metric for covert channel analysis, in which there are two adversaries attempting to (covertly) communicate through $P_{Y|X}$. That is, they are indeed concerned with sending and reconstructing messages reliably. 
To compare with maximal leakage, suppose $X$ has full support. Then,
\begin{align*}
\ml{X}{Y}&  \stackrel{\text{(a)}} = \lim_{\epsilon \rightarrow 0} \lim_{n \rightarrow \infty} \frac{1}{n} \ml{X^n}{Y^n} \\
& \stackrel{\text{(b)}} = \lim_{\epsilon \rightarrow 0} \lim_{n \rightarrow \infty}  \frac{1}{n} \sup_{\substack{(U,X^n): \\ U-X^n-Y^n  }} \log \frac{\Pr(U = \hat{U}(Y^n))}{\max_u P_U(u)} \\
& \geq \lim_{\epsilon \rightarrow 0} \lim_{n \rightarrow \infty} \frac{1}{n} \sup_{\substack{(U,X^n): \\ U-X^n-Y^n \\ \Pr( U =\hat{U}(Y^n)) \geq 1-\epsilon }}\log \frac{\Pr(U = \hat{U}(Y^n))}{\max_u P_U(u)} \\
& = C(P_{Y|X}),
\end{align*}
where (a) follows from the additivity of maximal leakage, and (b) follows from the 
fact that $\ml{X}{Y}$ depends on $P_X$ only through its support. One can readily verify that the inequality can be strict.
\vspace{1mm}
\begin{Example} \label{example} 
Consider $X \sim \mathrm{Ber}(p)$, $p \in (0,1/2)$. If $Y$ is the output of a $\mathrm{BEC}(\epsilon)$ ($\epsilon \in (0,1)$) with input $X$, then $\ml{X}{Y}  =  \log(2-\epsilon) > (1-\epsilon) \log 2 = C(P_{Y|X})$.
\end{Example} 
\vspace{1mm}
\begin{proof}
To show that the left-hand side upper-bounds the right-hand side, consider
\begin{align}
\mathcal{L}_\epsilon^C (X^n \!\! \to \!\! Y^n) & = \!\! \sup_{\substack{(U,X^n): U-X^n-Y^n \\ \Pr(U = \hat{U}(Y^n)) \geq 1-\epsilon}} \!\! \log \frac{\Pr(U = \hat{U}(Y))}{\max_u P_U(u)} \notag \\
& \geq \!\! \sup_{\substack{(U,X^n): U-X^n-Y^n \\ \Pr(U = \hat{U}(Y^n)) \geq 1-\epsilon \\ U \sim \text{uniform} }}  \!\! \log |\mathcal{U}| \! + \! \log (1-\epsilon).
\end{align}
Note that the right-hand side of the above equation is exactly the channel coding setup: $U$ is the uniform message, $P_{X^n|U}$ is the (stochastic) encoding map, $P_{Y|X}$ is the memoryless channel,  and $\epsilon$ is the allowed average probability of decoding error. Therefore, for any $\delta > 0$, any $U$ with $|\mathcal{U}| < 2^{n(C-\delta)}$ is feasible for large enough $n$, yielding the lower bound. For the reverse direction, consider the following.
\begin{align*}
\mathcal{L}_\epsilon^C (X^n \!\! \to \!\! Y^n) & = \!\!  \sup_{\substack{(U,X^n): U-X^n-Y^n \\ \Pr(U = \hat{U}(Y^n)) \geq 1-\epsilon}}  \!\! \log \frac{\Pr(U = \hat{U}(Y))}{\max_u P_U(u)} \\
& \leq  \!\! \sup_{\substack{(U,X^n): U-X^n-Y^n \\ \Pr(U = \hat{U}(Y^n)) \geq 1-\epsilon}}  \!\!  \log \frac{1}{2^{-H_\infty(P_U)}} \\
& = \sup_{\substack{(U,X^n): U-X^n-Y^n \\ \Pr(U = \hat{U}(Y^n)) \geq 1-\epsilon}} H_\infty(P_U) \\
& \stackrel{\text{(a)}} \leq \sup_{\substack{(U,X^n): U-X^n-Y^n \\ \Pr(U = \hat{U}(Y^n)) \geq 1-\epsilon}} \frac{I(U;\hat{U})+1}{\Pr(U=\hat{U})} \\
& \leq \sup_{\substack{(U,X^n): U-X^n-Y^n \\ \Pr(U = \hat{U}(Y^n)) \geq 1-\epsilon}}  \frac{nC(P_{Y|X})+1}{1-\epsilon},
\end{align*}
where (a) follows from~\cite[Theorem 5]{VerduFanoGeneralized}. Taking the limit as $n \rightarrow \infty$ and $\epsilon \rightarrow 0$ yields the upper bound. 
\end{proof}

\subsection{Maximal Realizable Leakage} \label{secmaxrealizable}

We now consider a variation of the definition of maximal leakage, which captures a different scenario of interest. It will be also useful for interpreting local differential privacy in the guessing framework (cf.~Section~\ref{subsec:LDP}).
In particular, maximal leakage considers the \emph{average} guessing performance of the adversary, $\Pr( U = \hat{U}(Y))$, for each $U$ satisfying $U-X-Y$ since our threat model ``tolerates''
realizations $y$ of $Y$ that lead to a high probability of correct guessing if the corresponding probabilities $P_Y(y)$'s are very small. For scenarios in which such small probability events are still unacceptable, we need to consider the \emph{maximum} instead of the average performance. This is the case, for example, when $U$ represents an individual's medical data or when we do not expect leakage to be concentrated around $Y$ (e.g., $Y$ is a public database as opposed to a running stochastic process). 
This leads to the following definition.
\vspace{1mm}
\begin{Definition}[Maximal Realizable Leakage] 
\label{defmaxleakagerealizable}
Given a joint distribution $P_{XY}$ on finite alphabets $\mathcal{X}$ and $\mathcal{Y}$, the \emph{maximal realizable leakage} from $X$ to $Y$ is defined as
\begin{equation} \label{eqdefmaxleakagerealizable}
\mlr{X}{Y}  =  \sup_{U:U - X - Y} \log \frac{ \max_{ \substack{y \in \mathrm{supp}(Y)}} \max_{u \in \mathcal{U}} P_{U|Y}(u|y) }{\max_{u \in \mathcal{U}} P_U(u)},
\end{equation}
where the support $U$ is finite but of arbitrary size.
\end{Definition}
\vspace{1mm}
\begin{Theorem} \label{thmmaxleakrealizable}
For any  joint distribution $P_{XY}$ on finite alphabets $\mathcal{X}$ and $\mathcal{Y}$, the maximal realizable leakage from $X$ to $Y$ is given by the Renyi Divergence of order infinity, $D_{\infty} (P_{XY} || P_X \times P_Y)$. That is,
\begin{align}
\label{eqthmrealizable}
\mlr{X}{Y}  =   \max_{ \substack{ (x,y) \in \mathcal{X} \times \mathcal{Y} \\ P_{XY}(x,y) > 0}} \log \frac{P_{Y|X}(y|x)}{P_Y(y)} = D_{\infty} (P_{XY} || P_X \times P_Y).
\end{align}
\end{Theorem}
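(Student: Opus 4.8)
The plan is to mirror the two-sided argument used in the proof of Theorem~\ref{thmmainthm}, but now tracking the per-realization quantity $\max_u P_{U|Y}(u|y)/\max_u P_U(u)$ instead of the $y$-sum. First I would assume without loss of generality that $P_X$ has full support and that every $y \in \mathcal{Y}$ has $P_Y(y) > 0$. For the upper bound ($\mlr{X}{Y} \le D_\infty(P_{XY}\|P_X\times P_Y)$), fix any $U$ with $U-X-Y$ and any $y \in \mathrm{supp}(Y)$. Exactly as in Proposition~\ref{prop:guessboundY}, write $\max_u P_{U|Y}(u|y) = \max_u \sum_x P_{U|X}(u|x) P_X(x) \frac{P_{Y|X}(y|x)}{P_Y(y)}$, pull out $\max_{x': P_X(x')>0}\frac{P_{Y|X}(y|x')}{P_Y(y)}$ as a constant bounding each term, and bound the remaining $\max_u \sum_x P_{UX}(u,x)$ by $\max_u P_U(u)$. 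This gives
\begin{align*}
\frac{\max_{u} P_{U|Y}(u|y)}{\max_{u} P_U(u)} \le \max_{x: P_X(x) > 0} \frac{P_{Y|X}(y|x)}{P_Y(y)},
\end{align*}
and taking the maximum over $y \in \mathrm{supp}(Y)$ and then the supremum over $U$ yields the desired inequality, since the right-hand side is exactly $\exp\{D_\infty(P_{XY}\|P_X\times P_Y)\}$.

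For the reverse inequality I would again exhibit a good $U$, but here a much simpler choice than the ``shattering'' construction suffices. Let $(x^\star,y^\star)$ be a pair achieving the maximum of $P_{Y|X}(y|x)/P_Y(y)$ over $\{(x,y): P_{XY}(x,y)>0\}$. The idea is to take $U$ binary, essentially a noisy indicator of the event $\{X = x^\star\}$, tuned so that $\max_u P_U(u)$ is attained at the symbol correlated with $x^\star$ while the posterior $P_{U|Y}(\cdot|y^\star)$ is as peaked as possible. Concretely, set $P_{U|X}(1|x^\star) = 1$ and $P_{U|X}(1|x) = \beta$ for $x \ne x^\star$, with $\beta$ a free parameter; then $P_U(1) = P_X(x^\star) + \beta(1-P_X(x^\star))$ and $P_{U|Y}(1|y^\star) = \big(P_{X|Y}(x^\star|y^\star) + \beta(1 - P_{X|Y}(x^\star|y^\star))\big)$, and $P_{U|Y}(1|y^\star)/P_U(1) \to P_{X|Y}(x^\star|y^\star)/P_X(x^\star) = P_{Y|X}(y^\star|x^\star)/P_Y(y^\star)$ as $\beta \to 0$ (after checking that for small $\beta$ the maximizing symbol in both the numerator and denominator is indeed $u=1$, which holds because $P_{X|Y}(x^\star|y^\star) > 0$). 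Letting $\beta \to 0$ shows $\exp\{\mlr{X}{Y}\}$ can be made arbitrarily close to $\exp\{D_\infty(P_{XY}\|P_X\times P_Y)\}$, completing the argument. The identity $\max_{(x,y): P_{XY}(x,y)>0} \frac{P_{Y|X}(y|x)}{P_Y(y)} = \exp\{D_\infty(P_{XY}\|P_X\times P_Y)\}$ is just the definition of $D_\infty$ applied to $P_{XY}$ versus $P_X\times P_Y$ together with the fact that $\frac{dP_{XY}}{d(P_X\times P_Y)}(x,y) = \frac{P_{Y|X}(y|x)}{P_Y(y)}$ on the support.

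The main obstacle is the reverse direction: one must verify carefully that the optimizing $u$ in $\max_u P_{U|Y}(u|y^\star)$ and in $\max_u P_U(u)$ coincide (both equal to the distinguished symbol $u=1$) for $\beta$ small, so that the ratio actually converges to the target; a naive choice of $U$ could have the maximum in the denominator attained at the ``wrong'' symbol. Handling the edge case $P_{X|Y}(x^\star|y^\star) = 1$ (and, more generally, $|\mathcal{X}| = 1$ or $P_Y(y^\star) = P_{Y|X}(y^\star|x^\star)$, where the leakage is zero) should be dispatched separately and is routine. Everything else is a direct adaptation of the already-proven bounds and the definition of Rényi divergence of order infinity in the Remark following Theorem~\ref{thmmainthm}.
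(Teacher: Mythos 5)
Your upper bound is fine and is exactly the paper's argument: Proposition~\ref{prop:guessboundY} gives $\max_u P_{U|Y}(u|y)/\max_u P_U(u) \le \max_{x:P_X(x)>0} P_{Y|X}(y|x)/P_Y(y)$ for each $y$, and maximizing over $y$ and $U$ finishes that direction. The lower bound, however, has a genuine gap, and it is precisely at the spot you flag as needing care. With your binary $U$ one has $P_U(1) = P_X(x^\star)+\beta\,(1-P_X(x^\star))$ and $P_U(0) = (1-\beta)(1-P_X(x^\star))$, so as $\beta\to 0$ the denominator $\max_u P_U(u)$ tends to $\max\{P_X(x^\star),\,1-P_X(x^\star)\}$. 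Whenever $P_X(x^\star) < 1/2$ --- the generic case --- the maximum sits at $u=0$, and your ratio converges to $P_{X|Y}(x^\star|y^\star)/(1-P_X(x^\star))$, which is strictly below the target $P_{X|Y}(x^\star|y^\star)/P_X(x^\star)$. The condition you invoke, $P_{X|Y}(x^\star|y^\star)>0$, is irrelevant to which symbol maximizes $P_U$. Nor can $\beta$ be tuned to repair this: forcing $P_U(1)\ge P_U(0)$ requires $\beta \ge \tfrac{1-2P_X(x^\star)}{2(1-P_X(x^\star))}$, which is bounded away from zero exactly when the problem arises, while the ratio only approaches the target as $\beta\to 0$. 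So the binary construction cannot in general achieve the supremum.

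The fix requires enlarging the alphabet of $U$ so that the mass off $x^\star$ is split into pieces each smaller than $P_X(x^\star)$: for instance, keep $U=1$ deterministically when $X=x^\star$ and let $U$ be uniform on $\{2,\dots,N+1\}$ when $X\ne x^\star$ with $N$ large; then $\max_u P_U(u)=P_X(x^\star)$, $P_{U|Y}(1|y^\star)=P_{X|Y}(x^\star|y^\star)$, and the ratio hits the target exactly (no limit needed). The paper instead reuses the shattering $P_{U|X}$ of~\eqref{eqshattering}, for which $\max_u P_U(u)=p^\star$ and $\max_u P_{U|Y}(u|y)=p^\star\max_x P_{Y|X}(y|x)/P_Y(y)$ by~\eqref{eqleaknum}, giving the bound for every $y$ simultaneously. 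Either repair is short, but as written your construction does not prove the reverse inequality.
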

\vspace{1mm}
In contrast, $\ml{X}{Y} = I_{\infty} (X;Y) = \inf_{Q_Y} D_\infty (P_{XY} || P_X \times Q_Y)$. Consequently, $\mlr{X}{Y}$ depends on $P_X$ (as opposed to $\ml{X}{Y}$ which only depends on the support) and is symmetric in $X$ and $Y$.  Note that it is equal to the maximum \emph{information rate}, which is the random variable the expectation of which is mutual information.  It follows straightforwardly from the definitions that $\mlr{X}{Y} \geq \ml{X}{Y}$. Moreover, $\mlr{X}{Y}$ cannot be bounded in terms of $|\mathcal{X}|$ and $|\mathcal{Y}|$: consider the $\mathrm{BEC}$ example (Example~\ref{example}) where $X \sim \mathrm{Ber}(p)$ ($p \in (0,1/2)$) and $Y$ is the output of a $\mathrm{BEC}(\epsilon)$ ($\epsilon \in (0,1)$) with input $X$, then $\mlr{X}{Y} = \log(1/p) \xrightarrow{p \rightarrow 0} \infty$.

Furthermore, $\mlr{X}{Y}$ exhibits desirable properties of a leakage metric: it satisfies the data processing inequality, it is zero if and only $X$ and $Y$ are independent, and it is additive over independent pairs $\{(X_i,Y_i)\}$. These properties are known for Renyi divergence of order $\infty$~\cite{RenyiKLDiv}.
\vspace{1mm}
\begin{Remark}
The fact that using the max in~\eqref{eqdefmaxleakagerealizable} and the average in~\eqref{eq:maxleakagedef} both lead to quantities with desirable properties suggests that we could also consider weighted averages, i.e., replace the numerator by $\left( \sum_y P_Y(y) \max_{u} P_{U|Y}^\alpha (u|y) \right)^{1/\alpha}$, for some $\alpha > 0$. See also~\cite{LiaoSibson}.
\end{Remark}
\vspace{1mm}
\begin{proof}
That $\mlr{X}{Y} \leq D_{\infty} (P_{XY} || P_X \times P_Y)$ follows directly from Proposition~\ref{prop:guessboundY}. 
For the reverse direction, we again consider the shattering $P_{U|X}$ (cf.~equation~\eqref{eqshattering}).
It is a simple exercise to check that this choice yields the desired lower bound. 
\end{proof}

\subsection{Local Differential Privacy} \label{subsec:LDP}

Differential privacy~\cite{DiffPrivacySurvey} is a widely adopted metric in the database security literature. Roughly speaking, it requires that, for any two \emph{neighboring} databases, the probabilities of any given output do not differ significantly.  
Local differential privacy~\cite{LocalDiffPrivacy} adapts that notion to the setting of a given conditional distribution $P_{Y|X}$. It is defined as:
\begin{align} \label{eqdefLDPorig}
L^{dp}(X \!\! \to \!\!  Y) = \max_{ \substack{y \in \mathcal{Y}, \\ x,x' \in \mathcal{X}}} \log \frac{P_{Y|X}(y|x)}{P_{Y|X}(y|x')}.
\end{align}
Local differential privacy is known to be pessimistic~\cite{LocalDiffPrivacy}. It is indeed very strict: for the $\mathrm{BEC}$ example (Example~\ref{example}) where $X \sim \mathrm{Ber}(p)$ ($p \in (0,1/2)$) and $Y$ is the output of a $\mathrm{BEC}(\epsilon)$ ($\epsilon \in (0,1)$) with input $X$, $L^{dp} ( X \!\! \to \!\! Y) = \infty$.  Interestingly, we also noted in the previous section that $\lim_{p \rightarrow 0} \mlr{X}{Y} = \infty$.

So what operational problem is local differential privacy solving? Similarly to maximal realizable leakage, local differential privacy is concerned with worst-case analysis over the \emph{realizations} of $Y$. Moreover, being a function of $P_{Y|X}$, it is robust against the worst-case distribution $P_X$. Hence, differential privacy is suitable for database security problems in which we do not tolerate low risk, and we do not make any assumptions about the distribution generating the data. 
This yields the following definition.
\vspace{1mm}
\begin{Definition}
Given a conditional distribution $P_{Y|X}$ from $\mathcal{X}$ to $\mathcal{Y}$, where  $\mathcal{X}$ and $\mathcal{Y}$ are finite alphabets, let
\begin{align}
\mathcal{L}^{dp} ( X\!\! \to \!\! Y) & =  \sup_{P_X}  \sup_{U:U - X - Y} \!\! \log \frac{ \max_{ \substack{y }} \! \max_{u } \! P_{U|Y}(u|y) }{\max_{u } P_U(u)} \notag \\
& = \sup_{P_X} \mlr{X}{Y} .\label{eqdefLDP}
\end{align}
\end{Definition}
\vspace{1mm}
\begin{Theorem} \label{thmLDP}
For any conditional distribution $P_{Y|X}$ from $\mathcal{X}$ to $\mathcal{Y}$, where  $\mathcal{X}$ and $\mathcal{Y}$ are finite alphabets,
\begin{align}
\label{eqthmLDP}
\mathcal{L}^{dp} ( X\!\! \to \!\! Y) = L^{dp} (X \!\! \rightarrow \!\! Y). 
\end{align}
\end{Theorem}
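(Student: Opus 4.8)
The plan is to prove the two inequalities $\mathcal{L}^{dp}(X \!\! \to \!\! Y) \le L^{dp}(X \!\! \to \!\! Y)$ and $\mathcal{L}^{dp}(X \!\! \to \!\! Y) \ge L^{dp}(X \!\! \to \!\! Y)$ separately, leveraging Theorem~\ref{thmmaxleakrealizable} which already identifies $\mlr{X}{Y}$ with $D_\infty(P_{XY} \| P_X \times P_Y) = \max_{(x,y): P_{XY}(x,y) > 0} \log \frac{P_{Y|X}(y|x)}{P_Y(y)}$. Since $\mathcal{L}^{dp}(X \!\! \to \!\! Y) = \sup_{P_X} \mlr{X}{Y}$ by definition~\eqref{eqdefLDP}, the whole problem reduces to showing
\begin{equation*}
\sup_{P_X} \ \max_{\substack{(x,y): P_{XY}(x,y) > 0}} \log \frac{P_{Y|X}(y|x)}{P_Y(y)} \ = \ \max_{\substack{y \in \mathcal{Y}, \ x,x' \in \mathcal{X}}} \log \frac{P_{Y|X}(y|x)}{P_{Y|X}(y|x')},
\end{equation*}
where on the right we may restrict to pairs with $P_{Y|X}(y|x') > 0$ (otherwise $L^{dp} = \infty$ and we handle that degenerate case first).

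\textbf{Upper bound.} For any fixed $P_X$ and any $(x,y)$ with $P_{XY}(x,y) > 0$, write $P_Y(y) = \sum_{x''} P_X(x'') P_{Y|X}(y|x'')$. This is a convex combination of the values $\{P_{Y|X}(y|x'')\}$, so $P_Y(y) \ge \min_{x'': P_X(x'') > 0} P_{Y|X}(y|x'')$, and hence
\begin{equation*}
\frac{P_{Y|X}(y|x)}{P_Y(y)} \le \frac{P_{Y|X}(y|x)}{\min_{x'': P_X(x'') > 0} P_{Y|X}(y|x'')} \le \max_{x, x'} \frac{P_{Y|X}(y|x)}{P_{Y|X}(y|x')} \le e^{L^{dp}(X \to Y)}.
\end{equation*}
Taking logs, maximizing over $(x,y)$, and then the supremum over $P_X$ gives $\mathcal{L}^{dp}(X \!\! \to \!\! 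Y) \le L^{dp}(X \!\! \to \!\! Y)$.

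\textbf{Lower bound.} Let $(x^\star, y^\star, x'^\star)$ be a maximizer of the right-hand side, so $L^{dp}(X \!\! \to \!\! Y) = \log \frac{P_{Y|X}(y^\star | x^\star)}{P_{Y|X}(y^\star | x'^\star)}$. The idea is to choose $P_X$ concentrating almost all its mass on $x'^\star$ while still giving positive mass to $x^\star$: take $P_X^{(\beta)}$ with $P_X^{(\beta)}(x^\star) = \beta$, $P_X^{(\beta)}(x'^\star) = 1 - \beta$ for small $\beta > 0$ (if $x^\star = x'^\star$ the ratio is $1$ and $L^{dp} = 0$, trivial). Then $P_Y^{(\beta)}(y^\star) = \beta P_{Y|X}(y^\star|x^\star) + (1-\beta) P_{Y|X}(y^\star|x'^\star) \to P_{Y|X}(y^\star|x'^\star)$ as $\beta \to 0$, so $\frac{P_{Y|X}(y^\star|x^\star)}{P_Y^{(\beta)}(y^\star)} \to \frac{P_{Y|X}(y^\star|x^\star)}{P_{Y|X}(y^\star|x'^\star)} = e^{L^{dp}}$. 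Since $(x^\star, y^\star)$ has positive probability under $P_{X}^{(\beta)}$ (as $\beta > 0$), Theorem~\ref{thmmaxleakrealizable} gives $\mlr{X}{Y}^{(\beta)} \ge \log \frac{P_{Y|X}(y^\star|x^\star)}{P_Y^{(\beta)}(y^\star)}$, and letting $\beta \to 0$ yields $\mathcal{L}^{dp}(X \!\! \to \!\! Y) = \sup_{P_X} \mlr{X}{Y} \ge L^{dp}(X \!\! \to \!\! Y)$. The degenerate cases ($L^{dp} = \infty$, i.e.\ some $P_{Y|X}(y^\star|x'^\star) = 0 < P_{Y|X}(y^\star|x^\star)$) are handled by the same construction, which then drives the ratio to $+\infty$.

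\textbf{Main obstacle.} The argument is essentially a direct calculation; the only subtlety worth care is the interchange of the supremum over $P_X$ with the various maxima, and making sure the support constraint in Theorem~\ref{thmmaxleakrealizable} ($P_{XY}(x,y) > 0$) is respected by the perturbed distribution $P_X^{(\beta)}$ — which is why we keep $\beta$ strictly positive and pass to the limit rather than setting $\beta = 0$. One should also verify that restricting $P_X$ to two-point supports loses nothing, which is immediate from the upper bound already matching.
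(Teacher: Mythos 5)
Your proof is correct and follows essentially the same route as the paper's: both reduce via Theorem~\ref{thmmaxleakrealizable} to comparing $\sup_{P_X} D_\infty(P_{XY}\|P_X\times P_Y)$ with $L^{dp}$, obtain the upper bound from $P_Y(y)\ge \min_{x}P_{Y|X}(y|x)$, and obtain the lower bound by placing mass $1-\beta$ on the minimizing $x$ and $\beta$ on the maximizing $x$ and letting $\beta\to 0$. Your explicit treatment of the degenerate case $L^{dp}=\infty$ and of the support constraint is a minor added care, not a different argument.
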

\vspace{1mm}
Clearly, $\mathcal{L}^{dp} ( X\!\! \to \!\! Y) \geq \mlr{X}{Y} \geq {\ml{X}{Y}}$. Theorems~\ref{thmmaxleakrealizable} and~\ref{thmLDP} imply that $\mathcal{L}^{dp} ( X\!\! \to \!\! Y)= {\mathcal{L}^r ( X\!\! \to \!\! Y)}$ if and only if $X$ and $Y$ are independent. Thus,  $\mathcal{L}^{dp} ( X\!\! \to \!\! Y)= {\ml{X}{Y}}$ if and only if $X$ and $Y$ are independent. Moreover, an interesting implication of~\eqref{eqdefLDP} is that one could incorporate information about the marginal $P_X$ by restricting the optimization set of the $\sup$.
\begin{proof}
By Theorem~\ref{thmmaxleakrealizable}, we can rewrite~\eqref{eqdefLDP} as
\begin{align*}
\mathcal{L}^{dp} ( X\!\! \to \!\! Y) =  \sup_{P_X} \max_{ \substack{ (x,y) \in \mathcal{X} \times \mathcal{Y} \\ P_{XY}(x,y) > 0}} \log \frac{P_{Y|X}(y|x)}{P_Y(y)}.
\end{align*}
The upper bound thus follows from the fact that $P_Y(y) \geq \min_{x} P_{Y|X}(y|x)$. For the lower bound, consider the following. Let $y^\star$ be an element achieving the max in~\eqref{eqdefLDPorig}. Let $x_0 \in \argmin_x P_{Y|X}(y^\star|x)$ and $x_1 \in \argmax_x P_{Y|X}(y^\star|x)$. Finally, for a given $\alpha > 0$, let  $P_X(x_0) = 1- \alpha$ and $P_X(x_1) = \alpha$. Then
\begin{align*}
 \mathcal{L}^{dp} ( X\!\! \to \!\! Y) 
& \geq \log \frac{\max_x P_{Y|X}(y^\star|x) }{ P_Y(y^\star) } \\
& = \log \frac{P_{Y|X}(y^\star|x_1)}{(1-\alpha) P_{Y|X}(y^\star|x_0) + \alpha P_{Y|X}(y^\star|x_1) } \\
& \xrightarrow{\alpha \rightarrow 0} \log \frac{P_{Y|X}(y^\star|x_1)}{P_{Y|X}(y^\star|x_0)} = L^{dp} (X \!\! \rightarrow \!\! Y). 
\end{align*} 
\end{proof}
It is worth noting that Dwork \emph{et al.}~\cite{DworkCalibrating} provide an operational definition closely related to the above definition. In particular, a simple modification of their result yields that
\[ L^{dp} ( X\!\! \to \!\! Y)= \sup_{P_X} \sup_{ \substack{f: \mathcal{X} \rightarrow \{0,1\} \\ y \in \mathcal{Y} }} \left| \log \left( \frac{\Pr( f(X) = 1 | Y=y)}{\Pr(f(X)=1)}  \right) \right|. \]
Alternatively, Kairouz \emph{et al.}~\cite{KairouzComposition} give an operational definition of $(\epsilon,\delta)$-differential privacy in the framework of hypothesis testing. They show that it determines the trade-off between the probabilities of false alarm and missed detection.

\subsection{Maximal Correlation}

Given a joint distribution $P_{XY}$, the Hirschfeld-Gebelein-R{\'e}nyi maximal correlation~\cite{HMaxCorr,GMaxCorr,RenyiMaxCorr}  $\rho_m (X;Y)$ is defined as
\begin{align}
\rho_m (X;Y) = \sup_{\substack{f,g: \\ \E[f]=\E[g]=0 \\ \E[f^2]=\E[g^2]=1}} \E[f(X)g(Y)].
\end{align}
Calmon \emph{et al.} showed that, when the alphabet $\mathcal{X}$ is finite, 
\[ \sup_{N ,f: \mathcal{X} \rightarrow [N]} \left( \sup_{\hat{f}(\cdot)} \Pr( f(X) = \hat{f}(Y)) - \max_{k \in [N]} P_f(k) \right) \leq \rho_m(X;Y), \]
where the supremum is over deterministic functions~\cite[Theorem 9]{CalmonPIC}. However, as we saw in the introduction, the left-hand side can be zero even when $X$ and $Y$ are dependent.
We posit that maximal correlation is more precisely capturing the change in variance. That is, we define \emph{variance leakage} as follows.
\vspace{1mm}
\begin{Definition}[Variance Leakage] \label{defmaxleakagevariance}
Given a joint distribution $P_{XY}$ on alphabets $\mathcal{X}$ and $\mathcal{Y}$, the \emph{variance leakage} from $X$ to $Y$ is defined as
\begin{equation} \label{eqmaxleakagevariance}
\mathcal{L}^v(X \!\! \to \!\! Y) = \sup_{\substack{U: U-X-Y \\ \var(U)>0}} \log \frac{\var(U)}{\E[ (U-\E[U|Y])^2 ] }.
\end{equation}
\end{Definition}
\vspace{1mm}
\begin{Lemma} \label{thmvariance}
For any joint distribution $P_{XY}$ on  alphabets $\mathcal{X}$ and $\mathcal{Y}$, the variance leakage from $X$ to $Y$ is given by
\begin{equation}
\mathcal{L}^v(X \!\! \to \!\! Y) = - \log  (1- \rho_m^2 (X;Y)),
\end{equation}
where $\rho_m (X;Y)$ is the maximal correlation.
\end{Lemma}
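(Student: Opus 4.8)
The plan is to reduce the supremum over all $U$ with $U-X-Y$ to a supremum over square-integrable functions of $X$ alone, and then to recognize the resulting quantity as the square of maximal correlation through its standard variational characterization. The first step is the law of total variance: for any real $U$ with $U-X-Y$ and $0<\var(U)<\infty$, conditioning on $Y$ gives $\E[(U-\E[U|Y])^2]=\E[\var(U\mid Y)]=\var(U)-\var(\E[U|Y])$, so that
\[
\frac{\var(U)}{\E[(U-\E[U|Y])^2]}=\frac{1}{1-\var(\E[U|Y])/\var(U)} .
\]
Since this ratio is invariant under affine transformations of $U$, it suffices to compute $s:=\sup\{\var(\E[U|Y]) : U-X-Y,\ \E[U]=0,\ \var(U)=1\}$, and then conclude $\mathcal{L}^v(X \!\! \to \!\! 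Y)=-\log(1-s)$, with the convention that $s=1$ yields $+\infty$.

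Second, I would record the key intermediate identity
\[
\rho_m^2(X;Y)=\sup_{\substack{f:\ \E[f(X)]=0\\ \E[f(X)^2]=1}}\E\!\left[\big(\E[f(X)\mid Y]\big)^2\right].
\]
This follows by performing the inner supremum over $g(Y)$ in the definition of $\rho_m$ with $f$ held fixed: writing $\E[f(X)g(Y)]=\E[\,\E[f(X)\mid Y]\,g(Y)\,]$ and using $\E[\,\E[f(X)\mid Y]\,]=\E[f(X)]=0$, the Cauchy--Schwarz inequality gives $\sup_g\E[f(X)g(Y)]=\|\E[f(X)\mid Y]\|_2$, attained at $g\propto\E[f(X)\mid Y]$ when this conditional expectation is not a.s.\ zero (and both sides vanish otherwise). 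Squaring and taking the supremum over admissible $f$ yields the claimed formula.

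Third, I would prove $s=\rho_m^2(X;Y)$. For the lower bound, given any $f$ with zero mean and unit variance, setting $U=f(X)$ yields a valid choice ($U-X-Y$ holds trivially) with $\var(U)=1$ and $\var(\E[U|Y])=\E[(\E[f(X)|Y])^2]$, whence $s\ge\rho_m^2(X;Y)$ by the second step. For the upper bound, fix an admissible $U$ with $\E[U]=0$, $\var(U)=1$, and put $g(x)=\E[U\mid X=x]$; the Markov chain $U-X-Y$ gives $\E[U\mid Y]=\E[g(X)\mid Y]$ a.s., while the tower property gives $\E[g(X)]=0$ and $\E[g(X)^2]=\var(g(X))\le\var(U)=1$. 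If $g(X)\equiv 0$ then $\var(\E[U|Y])=0\le\rho_m^2(X;Y)$; otherwise normalize $f=g/\|g(X)\|_2$ to get $\var(\E[U|Y])=\|g(X)\|_2^2\,\E[(\E[f(X)|Y])^2]\le\rho_m^2(X;Y)$ by the second step. Combining both bounds gives $s=\rho_m^2(X;Y)$ and therefore $\mathcal{L}^v(X \!\! \to \!\! Y)=-\log(1-\rho_m^2(X;Y))$.

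I do not anticipate a serious obstacle; the points requiring care are the degenerate regime and the implicit integrability assumptions. When $\var(\E[U|Y])=\var(U)$ — equivalently $\rho_m(X;Y)=1$ and $U$ is an a.s.\ function of $Y$ — the denominator in the defining ratio vanishes and both sides of the asserted identity equal $+\infty$, so this case must be treated by the stated convention. One should also note that the supremum is effectively over $U$ with finite, strictly positive variance (so that all conditional expectations and the law of total variance are meaningful), which is exactly the family over which the definition of $\mathcal{L}^v$ is nontrivial.
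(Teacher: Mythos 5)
Your proof is correct, and its skeleton coincides with the paper's: both reduce the problem via $\E[(U-\E[U|Y])^2]=\var(U)-\var(\E[U|Y])$ to computing $\sup_U \E\big[\E[U|Y]^2\big]$ over normalized $U$ with $U-X-Y$, and both identify the answer with $\rho_m^2(X;Y)$ through R\'enyi's variational characterization $\rho_m^2(X;Y)=\sup_{f}\E\big[\E[f(X)|Y]^2\big]$ (which you prove via Cauchy--Schwarz, while the paper cites it). The one genuine difference is the upper bound: the paper sandwiches $\sup_U\E[\E[U|Y]^2]$ between $\rho_m^2(X;Y)$ and $\sup_U\rho_m^2(U;Y)$ and then closes the loop by invoking the data processing inequality for maximal correlation as a known fact; you instead project $U$ onto $\sigma(X)$ by setting $g(x)=\E[U\mid X=x]$, use the Markov chain to get $\E[U|Y]=\E[g(X)|Y]$ and Jensen to get $\|g(X)\|_2\le\|U\|_2$, which amounts to proving the needed contraction directly. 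Your route is more self-contained (it makes explicit the "standard properties of conditional expectation" the paper alludes to), and your handling of the degenerate cases ($g\equiv 0$, and $\rho_m=1$ giving $+\infty$) is careful; the paper's route is shorter at the cost of leaning on the DPI for $\rho_m$ as a black box.
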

\vspace{1mm}
As such, maximal correlation is capturing the multiplicative decrease in variance. If the $U$ of interest is discrete, which is often the case in practice (e.g., $U$ is a password, a social security number, etc.), the probability of correct guessing is arguably the more relevant quantity.
This holds true even in the case of location privacy, in which (as we saw earlier) typical functions of interest, such as work/home addresses or political affiliations, are discrete.
\begin{proof} The proof is a simple rewriting of R{\'e}nyi's equivalent characterization, taking into account randomized functions of $X$. We include it here for completeness.
Without loss of generality, we can restrict the optimization in~\eqref{eqmaxleakagevariance} to $U$'s that satisfy $\E[U]=0$, and $\E[U^2]=1$. So, we rewrite
\begin{align} 
\mathcal{L}^v(X \!\! \to \!\! Y) & =   \!\! \!\!  \sup_{\substack{U: U-X-Y \\ \E[U]=0,~\E[U^2]=1}} \!\!  \!\! \log \frac{1}{\E[U^2] -\E[\E[U|Y]^2] } \notag \\
& = \!\! \!\!  \sup_{\substack{U: U-X-Y \\ \E[U]=0,~\E[U^2]=1}}  \!\! \!\! - \log  \left(1 -\E \left[\E[U|Y]^2\right] \right)  \! . \label{eqmaxleakvarequivalent}
\end{align}
Also, we can rewrite maximal correlation using Renyi's equivalent characterization~\cite{RenyiMaxCorr}:
\begin{align} \label{eqmaxcorrequiv}
\rho_m (X;Y) =  \!\! \sup_{\substack{f: \E[f(X)]  =0  \\ ~\E[f^2(X)]  =1 }}  \!\! \sqrt{\E \left[\E[f(X)|Y]^2\right]}.
\end{align}
Now note that
\begin{align}
 \rho^2_{m}(X;Y) & = \sup_{\substack{f: ~\E[f]=0,~ \E[f^2]=1}} \E \left[ \E[f(X)|Y]^2 \right]  \notag \\
& \stackrel{\text{(a)}} \leq \sup_{\substack{U: U-X-Y \\ \E[U]=0,~\E[U^2]=1}} \E \left[ \E[U|Y]^2 \right] \notag \\
& \leq \sup_{\substack{U: U-X-Y \\ \E[U]=0,~\E[U^2]=1}} \sup_{\substack{h: ~\E[h(U)]=0, \\ \E[h^2(U)]=1}} \E \left[ \E[h(U)|Y]^2 \right] \notag \\
&  = \sup_{\substack{U: U-X-Y \\ \E[U]=0,~\E[U^2]=1}} \rho^2_{m} (U;Y) \notag \\
& \stackrel{\text{(b)}} \leq \rho^2_{m}(X;Y) \label{eqMaxCorrDPI},
\end{align}
where (b) follows from the fact that maximal correlation obeys the data processing inequality, which can be shown using standard properties of conditional
expectation.
Therefore (a) is in fact an equality. Plugging it in~\eqref{eqmaxleakvarequivalent} yields our desired result. 
\end{proof}

Definition~\ref{defmaxleakagevariance}, with the restriction that
$U = X$, has also been recently investigated by Asoodeh \emph{et al.}~\cite{EstimationEfficiency}. Note that it can be rewritten as
\begin{align} \label{eq:varleakcost}
\mathcal{L}^v(X \!\! \to \!\! Y) = \sup_{\substack{U: U-X-Y \\ \var(U)>0}} \log 
\frac{\inf_{u} \E[(U-u)^2]}{\inf_{u(\cdot)} \E[(U-u(Y))^2]}. 
\end{align}
Hence, $\mathcal{L}^v(X \!\! \to \!\! Y)$ measures the reduction in cost incurred by the adversary, where cost is measured by the mean squared error. In the next section, we consider a natural extension in which we do not assume the cost function is known a priori.

\subsection{Maximal Cost Leakage} \label{seccostleakage}
In this section, we introduce a leakage metric that is dual to maximal leakage. Whereas maximal leakage  considers the maximum gain that the adversary achieves, we could alternatively consider the maximum reduction in \emph{cost} they incur. 
\begin{Definition}[Maximal Cost Leakage]\label{defcostleakage}
Given a joint distribution $P_{XY}$ on alphabets $\mathcal{X}$ and $\mathcal{Y}$, the \emph{maximal cost leakage} from $X$ to $Y$ is defined as
\begin{equation} 
\mcl{X}{Y}= \sup_{\substack{U: U-X-Y \\ \hat{\mathcal{U}}, ~ d: \hat{\mathcal{U}} \times \mathcal{U} \rightarrow \mathbb{R}_+}} \log \frac{\inf_{\hat{u} \in \hat{\mathcal{U}}} \E[d(U,\hat{u})]  }{  \inf_{\hat{u}(\cdot) } \E[d(U,\hat{u}(Y))] },
\end{equation}
where $U$ takes value in a finite (but arbitrary) alphabet, and $\frac{0}{0}=1$ by convention.
\end{Definition}
\vspace{1mm}
It is important to note that the gain-based approach is more operationally meaningful (for side-channel analysis) than the cost-based approach. To illustrate this, suppose $d$ is the Hamming distortion and consider 
\begin{align} \label{eq:error}
\sup_{ \substack{U:U-X-Y}} \log \frac{1-\max_{u \in \mathcal{U} }P_U(u)}{1-\sup_{\hat{u}(\cdot)} \Pr(U = \hat{u}(Y))}.
\end{align}
Recall that, in the definition of maximal leakage, we considered the ratio of the guessing probabilities (as opposed to the difference) because we are typically interested in functions that are hard to guess (e.g., passwords). 
However, the quantity in~\eqref{eq:error} is much more sensitive to changes for functions that are easy to guess: suppose for some $U$, $\max_{u} P_U(u) = 1-10^{-9}$ and $\sup_{\hat{u}(\cdot)} \Pr(U = \hat{u}(Y))=1$, then the ratio in~\eqref{eq:error} is $\infty$. On the other hand, if $ \max_{u} P_U(u) = 10^{-9}$ and $\sup_{\hat{u}(\cdot)} \Pr(U = \hat{u}(Y))=10^{-3}$, the ratio is only $\approx 1.001$ despite the significant change. More generally, it is more intuitive to associate a gain to the adversary if they compromise the system, rather than a cost if they fail to do so\footnote{The cost and gain approaches may be equivalent if we are interested in the difference between the incurred cost (or achieved gain) when $Y$ is observed versus when no observations are made. This is not the case, however, if we are considering the ratio instead of the difference.} (an adversary does not ``lose'' if the system is not compromised). Even in the rate-distortion-based approach to information leakage, the more robust metric is the probability that the adversary incurs a small distortion~\cite{MySecrecySystem} rather than (say) the expected value of the distortion. That is, the probability-metric falls under the gain approach (similar to maximal locational leakage (cf.~Definition~\ref{defleakcont}) or maximal gain leakage (cf.~Definition~\ref{defmaxgains})), albeit the gain is defined indirectly through a distortion function.

Nevertheless, maximal cost leakage admits a simple form for discrete $X$ and $Y$, given in the following theorem.
\begin{Theorem} \label{thmcostleakage}
For any joint distribution $P_{XY}$ on finite alphabets $\mathcal{X}$ and $\mathcal{Y}$, the maximal cost leakage from $X$ to $Y$ is given by
\begin{equation}
\mcl{X}{Y} = - \log   \sum_{y \in \mathcal{Y}} \min_{ \substack{x \in \mathcal{X}: \\ P_X(x) > 0}} P_{Y|X}(y|x)  .
\end{equation}
\end{Theorem}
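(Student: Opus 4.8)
The plan is to prove the two inequalities separately, in close parallel to the proof of Theorem~\ref{thmmainthm}. First I would reduce to the case in which $X$ has full support (replacing $\mathcal{X}$ by $\mathrm{supp}(X)$ changes neither side of the claimed identity), and abbreviate $m(y) = \min_{x \in \mathcal{X}} P_{Y|X}(y|x)$ and $\beta = \sum_{y \in \mathcal{Y}} m(y)$, so that the goal becomes $\mcl{X}{Y} = -\log \beta$.

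For the upper bound I would fix any $U$ with $U-X-Y$, any alphabet $\hat{\mathcal{U}}$, and any $d : \hat{\mathcal{U}} \times \mathcal{U} \to \mathbb{R}_+$, and use that the optimal estimator minimizes the conditional expected cost realization by realization, so that $\inf_{\hat{u}(\cdot)} \E[d(U,\hat{u}(Y))] = \sum_{y} \inf_{\hat{u}} \sum_{u} d(u,\hat{u}) P_{UY}(u,y)$. The Markov chain gives $P_{UY}(u,y) = \sum_{x} P_X(x) P_{U|X}(u|x) P_{Y|X}(y|x) \ge m(y) P_U(u)$, and since $d \ge 0$ this yields $\sum_{u} d(u,\hat{u}) P_{UY}(u,y) \ge m(y) \sum_{u} d(u,\hat{u}) P_U(u)$ for every $\hat{u}$; taking the infimum over $\hat{u}$ and summing over $y$ gives $\inf_{\hat{u}(\cdot)} \E[d(U,\hat{u}(Y))] \ge \beta\, \inf_{\hat{u}} \E[d(U,\hat{u})]$, hence the ratio in the definition is at most $1/\beta$ and $\mcl{X}{Y} \le -\log\beta$ after taking logarithms and the supremum. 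Since $\beta \le \sum_y P_{Y|X}(y|x_0) = 1$ for any fixed $x_0$, the $0/0=1$ convention is consistent with this, and if $\beta = 0$ the bound is vacuous.

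For the matching lower bound I would exhibit a single feasible triple: take $U = X$, $\hat{\mathcal{U}} = \mathcal{X}$, and the cost function $d(x,\hat{x}) = \mathbf{1}[x = \hat{x}]/P_X(\hat{x})$, which charges the adversary exactly when the guess is correct. Then $\E[d(X,\hat{x})] = \sum_x \mathbf{1}[x=\hat{x}] P_X(x)/P_X(\hat{x}) = 1$ for every $\hat{x}$, so the unconditional optimum is $1$; and for each $y$ the optimal guess minimizes $\sum_x \mathbf{1}[x=\hat{x}] P_{XY}(x,y)/P_X(\hat{x}) = P_{Y|X}(y|\hat{x})$, so $\inf_{\hat{u}(\cdot)} \E[d(U,\hat{u}(Y))] = \sum_y \min_{\hat{x}} P_{Y|X}(y|\hat{x}) = \beta$. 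The ratio is therefore exactly $1/\beta$, giving $\mcl{X}{Y} \ge -\log\beta$ (if $\beta = 0$ the ratio is $1/0 = +\infty$, matching $-\log 0$), and combining with the upper bound completes the proof.

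The only nonroutine ingredient is the choice of cost function in the lower bound. A ``cost'' that rewards correct guesses looks paradoxical, but it is the exact dual of the gain function $g(x,\hat{x}) = \mathbf{1}[x=\hat{x}]/P_X(x)$ that is optimal for the gain-based leakage $\mlg{X}{Y}$ (cf.~Theorem~\ref{thmmaxgains}): replacing $d$ by this $g$ and $\inf$ by $\sup$ turns the computation into $\sum_y \max_x P_{Y|X}(y|x)$, recovering $\ml{X}{Y}$. The point is that this choice collapses the inner optimization onto the probability simplex, where the minimum of a linear functional sits at a vertex and equals $\min_x P_{Y|X}(y|x)$. Everything else --- the interchange of infimum and sum for the optimal estimator and the bound $P_{UY}(u,y) \ge m(y) P_U(u)$ --- is the direct analogue of the argument in Section~\ref{sec:proofmain}, and the only care needed is with the $0/0=1$ convention and the degenerate case $\beta = 0$, where both sides equal $+\infty$.
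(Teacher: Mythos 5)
Your proposal is correct and follows essentially the same route as the paper's proof: the upper bound via the pointwise inequality $P_{UY}(u,y) \ge \bigl(\min_x P_{Y|X}(y|x)\bigr) P_U(u)$ implied by the Markov chain, and the lower bound via the single choice $U = X$ with the cost $d(x,\hat{x}) = \mathbf{1}[x=\hat{x}]/P_X(x)$, which is exactly the cost function used in Appendix~\ref{app:costleakthm}. Your additional remarks on the $0/0=1$ convention, the degenerate case $\beta=0$, and the duality with the gain function of Theorem~\ref{thmmaxgains} are accurate but do not change the substance of the argument.
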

\vspace{1mm}
It is worth noting that $\mcl{X}{Y}$, similarly to maximal leakage, depends on $P_{XY}$ only through $P_{Y|X}$ and the support of $P_X$. Moreover, a relation analogous to~\eqref{eq:SibsonMI} holds for $\mcl{X}{Y}$:
\begin{align} \label{eq:MCLequiv}
\mcl{X}{Y} = \inf_{Q_Y} D_{\infty} (P_X \times Q_Y || P_{XY}).
\end{align}
The proofs for Theorem~\ref{thmcostleakage} and the above relation are given in Appendices~\ref{app:costleakthm} and~\ref{app:Cost2}, respectively. 
The following corollary, the proof of which is given in Appendix~\ref{app:costleakcorr}, summarizes useful properties of $\mcl{X}{Y}$.
\vspace{1mm}
\begin{Corollary} \label{corrcostleakage}
For any joint distribution $P_{XY}$ on finite alphabets $\mathcal{X}$ and $\mathcal{Y}$,
\begin{enumerate}
\item (Data Processing Inequality) If the Markov chain $X-Y-Z$ holds for a discrete random variable $Z$, then $\mcl{X}{Z} \leq \min\{ \mcl{X}{Y}, \mcl{Y}{Z} \}.$
\item $\mcl{X}{Y} = 0$ iff $X$ and $Y$ are independent.
\item (\emph{Additivity}) 
If $\{(X_i,Y_i)\}_{i=1}^{\ell}$ are mutually independent, then
\begin{equation*}
\mcl{X_1^\ell}{Y_1^\ell}= \sum_{i=1}^\ell \mcl{X_i}{Y_i}.
\end{equation*}
\item For any non-trivial deterministic law $P_{Y|X}$ (i.e., $ | \{y: P_Y(y) >0 \}| > 1$),  $\mathcal{L}^c(X \!\! \to \!\! Y)= +\infty $. 
\item $\mcl{X}{Y}$ is not symmetric in $X$ and $Y$.
\item $\mcl{X}{Y} \leq L^{dp} ( X \!\! \to \!\! Y)$.
\item $\mcl{X}{Y}$ is convex in $P_{Y|X}$ for fixed $P_X$.
\end{enumerate}
\end{Corollary}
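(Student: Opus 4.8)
The plan is to derive all seven properties directly from the closed-form expression in Theorem~\ref{thmcostleakage}. Write $M^c(P_X;P_{Y|X}) := e^{-\mcl{X}{Y}} = \sum_{y \in \mathcal{Y}} \min_{x \in \mathrm{supp}(P_X)} P_{Y|X}(y|x)$, and assume throughout (without loss of generality, since $\mcl{X}{Y}$ depends on $P_{XY}$ only through $P_{Y|X}$ and $\mathrm{supp}(P_X)$) that $P_X$ has full support. I will use repeatedly the two elementary bounds $\sum_y \min_x P_{Y|X}(y|x) \le 1$ and $\sum_y \max_x P_{Y|X}(y|x) \ge 1$, each obtained by comparing termwise with $\sum_y P_{Y|X}(y|x_0) = 1$ for an arbitrary fixed $x_0 \in \mathcal{X}$; the first of these already gives the inequality $\mcl{X}{Y} \ge 0$ implicit in 2).

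I would dispatch the routine items first. For 7), $P_{Y|X} \mapsto \min_x P_{Y|X}(y|x)$ is a pointwise minimum of affine maps, hence concave, so $M^c$ is concave in $P_{Y|X}$; since $t \mapsto -\log t$ is convex and nonincreasing, $\mcl{X}{Y} = -\log M^c$ is convex in $P_{Y|X}$. For 3), mutual independence gives $P_{Y_1^\ell|X_1^\ell}(y_1^\ell|x_1^\ell) = \prod_i P_{Y_i|X_i}(y_i|x_i)$ and $\mathrm{supp}(P_{X_1^\ell}) = \prod_i \mathrm{supp}(P_{X_i})$, so the minimization decouples coordinatewise and the sum over $y_1^\ell$ factors into a product; taking $-\log$ gives additivity. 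For 2), the bound $\sum_y \min_x P_{Y|X}(y|x) \le \sum_y P_{Y|X}(y|x_0) = 1$ is tight for \emph{every} $x_0$ precisely when $P_{Y|X}(y|x)$ does not depend on $x$ for each $y$, i.e.\ iff $X$ and $Y$ are independent, which is iff $M^c = 1$ iff $\mcl{X}{Y} = 0$. For 4), a deterministic nontrivial law sends each $x$ to a point mass $\delta_{y(x)}$ with $|\{y(x): x \in \mathcal{X}\}| \ge 2$, so for every $y$ there is an $x$ with $y(x) \ne y$, whence $\min_x P_{Y|X}(y|x) = 0$ for all $y$, $M^c = 0$, and $\mcl{X}{Y} = +\infty$. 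For 5) it suffices to exhibit one example: with $\mathcal{X} = \mathcal{Y} = \{0,1\}$, $P_X = (1/2,1/2)$, $P_{Y|X=0} = (1,0)$, $P_{Y|X=1} = (1/2,1/2)$, a direct computation gives $\mcl{X}{Y} = \log 2$ but $\mcl{Y}{X} = \log 3$. For 6), if $L^{dp}(X\!\!\to\!\!Y) = +\infty$ there is nothing to prove; otherwise, finiteness of the ratio in~\eqref{eqdefLDPorig} forces, for each $y$, that $P_{Y|X}(y|x)$ is either $0$ for all $x$ or positive for all $x$, and in the latter case $\min_{x'} P_{Y|X}(y|x') \ge e^{-L^{dp}(X \to Y)} \max_x P_{Y|X}(y|x)$; summing over $y$ and using $\sum_y \max_x P_{Y|X}(y|x) \ge 1$ gives $M^c \ge e^{-L^{dp}(X \to Y)}$, i.e.\ $\mcl{X}{Y} \le L^{dp}(X\!\!\to\!\!Y)$.

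The one item requiring genuine care is the data processing inequality 1), which I would prove last as two separate inequalities under $X-Y-Z$. For $\mcl{X}{Z} \le \mcl{Y}{Z}$: the identity $P_{Z|X}(z|x) = \sum_y P_{Y|X}(y|x) P_{Z|Y}(z|y)$ exhibits $P_{Z|X}(z|x)$ as a convex combination of values $\{P_{Z|Y}(z|y): P_{Y|X}(y|x) > 0\} \subseteq \{P_{Z|Y}(z|y): y \in \mathrm{supp}(P_Y)\}$, hence $P_{Z|X}(z|x) \ge \min_{y \in \mathrm{supp}(P_Y)} P_{Z|Y}(z|y)$; minimizing over $x$ and summing over $z$ gives $M^c(P_X;P_{Z|X}) \ge M^c(P_Y;P_{Z|Y})$. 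For $\mcl{X}{Z} \le \mcl{X}{Y}$: since $P_{Y|X}(y|x) \ge \min_{x'} P_{Y|X}(y|x')$ for every $y$, the same identity gives, termwise in $y$, $P_{Z|X}(z|x) \ge \sum_y \big(\min_{x'} P_{Y|X}(y|x')\big) P_{Z|Y}(z|y)$ for each $x$, hence $\min_x P_{Z|X}(z|x) \ge \sum_y \big(\min_{x'} P_{Y|X}(y|x')\big) P_{Z|Y}(z|y)$; summing over $z$ and interchanging the two sums (each $P_{Z|Y}(\cdot|y)$ is a probability mass function) yields $M^c(P_X;P_{Z|X}) \ge \sum_y \min_{x'} P_{Y|X}(y|x') = M^c(P_X;P_{Y|X})$. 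Taking $-\log$ in both gives $\mcl{X}{Z} \le \min\{\mcl{X}{Y},\mcl{Y}{Z}\}$. The main obstacle is thus purely bookkeeping: keeping every minimum over the correct support set and handling the degenerate $+\infty$ cases in 1), 4) and 6) cleanly, rather than any deeper difficulty once Theorem~\ref{thmcostleakage} is in hand. (An alternative for 1) and 2) is to invoke the variational form~\eqref{eq:MCLequiv} together with the data-processing property of $D_\infty$, but the direct route above is self-contained.)
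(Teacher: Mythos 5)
Your proof is correct, and it diverges from the paper's in one substantive place: item 1). The paper obtains the data processing inequality directly from the operational Definition~\ref{defcostleakage} --- under $X-Y-Z$, every $U$ with $U-X-Y-Z$ also satisfies $U-Y-Z$, and an estimator based on $Z$ cannot outperform one based on $Y$ --- whereas you derive it from the closed form of Theorem~\ref{thmcostleakage} via $P_{Z|X}(z|x)=\sum_y P_{Y|X}(y|x)P_{Z|Y}(z|y)$, bounding the inner minimum in two ways. Both arguments are sound, but note that the corollary allows $Z$ to be discrete and not necessarily finite, while Theorem~\ref{thmcostleakage} is stated only for finite alphabets; your route therefore implicitly extends the formula for $\mcl{X}{Z}$ to countable $\mathcal{Z}$ (which does hold, since the upper-bound half of that proof is insensitive to countability and the lower bound uses the finite $U=X$, but it deserves a sentence), whereas the paper's definitional route sidesteps this entirely. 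Elsewhere you fill in details the paper dismisses as ``readily verified'' (items 3, 4, 5), your asymmetry example for 5) is a valid alternative to the paper's Example~\ref{exampleasy2} (the BEC, where $\mathcal{L}^c(Y \!\to\! X)=+\infty$), and your item 6) bound $\min_{x'}P_{Y|X}(y|x')\ge e^{-L^{dp}(X \to Y)}\max_x P_{Y|X}(y|x)$ summed against $\sum_y\max_x P_{Y|X}(y|x)\ge 1$ is an equivalent rearrangement of the paper's mediant-inequality chain through $P_Y$. Items 2) and 7) match the paper's arguments essentially verbatim.
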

\vspace{1mm}
Thus maximal cost leakage satisfies axiomatic properties of a leakage measure. However, it cannot be bounded in terms of $|\mathcal{X}|$ and $|\mathcal{Y}|$. Indeed, even if $X$ is a single bit, $X \sim \mathrm{Ber}(p)$ for $p \in (0,1)$, $\mcl{X}{X} = +\infty$. 
We evaluate $\mcl{X}{Y}$ for some other examples. 
\begin{Example}
If $X \sim \text{Ber}(q)$, $0 < q <1$, and $Y$ is the output of a BSC with input $X$ and parameter $p$, $0 \leq p \leq 1/2$, then $\mcl{X}{Y} = - \log (2p) $. 
\end{Example}
\vspace{1mm}
\begin{Example} \label{exampleasy2}
If $X \sim \text{Ber}(q)$, $0 < q <1$, and $Y$ is the output of a BEC with input $X$ and parameter $\epsilon$, $0 \leq \epsilon < 1$, then $\mcl{X}{Y} = -\log(\epsilon)$, and $\mathcal{L}^c(Y \!\! \to \!\! X)= +\infty $. 
\end{Example}
\vspace{1mm}
\begin{Remark}
One can note that in each of the examples above, $\mcl{X}{Y} \geq \ml{X}{Y}$. This is always true when $|\mathcal{X}|=|\mathcal{Y}|=2$, but it is not necessarily true in general. As a counter example, say $X$ has full support and $P_{Y|X} = \begin{bmatrix}
0.2 & 0.5 & 0.3 \\ 0.3 & 0.4 & 0.3 \\ 0.2 & 0.4 & 0.4
\end{bmatrix}$. Then $\exp\{  \ml{X}{Y} \} = 1.2$ and $\exp\{  \mcl{X}{Y} \} = 1/0.9=1.\bar{1}$. 
\end{Remark}
\vspace{1mm}

\subsubsection{Comparison with Maximal Correlation} Definition~\ref{defcostleakage} restricted $U$ to be discrete, but the proof of the upper bound in Theorem~\ref{thmcostleakage} does not need this assumption. That is, if we take the supremum over all real-valued $U$'s, the theorem still holds. Comparing with~\eqref{eq:varleakcost}, we get $\mcl{X}{Y} \geq \mathcal{L}^{v} (X \!\! \to \!\! Y)$. We can rewrite this inequality as follows.
\vspace{1mm}
\begin{Corollary} \label{corr:costleakmaxcorr}
For any joint distribution $P_{XY}$ on finite alphabets $\mathcal{X}$ and $\mathcal{Y}$,
\begin{equation}
\label{eq:maxcorrmaxleakge}
\rho_{m} (X;Y) \leq \sqrt{ 1- e^{-\mathcal{L}^c( X \to Y)} }. 
\end{equation}
Consequently, for a fixed conditional distribution $P_{Y|X}$,
\[ \sup_{P_X} s^{\star} (X;Y) \leq  1- \sum_{y \in \mathcal{Y}} \min_{ \substack{x \in \mathcal{X}}} P_{Y|X}(y|x) , \]
where $s^{\star} (X;Y) := \sup_{U: U-X-Y} \frac{I(U;Y)}{I(U;X)}$ is the strong data processing coefficient.
\end{Corollary}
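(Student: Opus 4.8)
\emph{Proof plan.} The plan is to read the first inequality off the relation $\mcl{X}{Y} \ge \mathcal{L}^v(X \to Y)$ recorded just before the corollary, and then to obtain the bound on $\sup_{P_X} s^\star(X;Y)$ by applying that inequality at every marginal $P_X$, using the closed form of $\mcl{X}{Y}$ from Theorem~\ref{thmcostleakage} and a known identity between contraction coefficients. Concretely, because the upper-bound half of the proof of Theorem~\ref{thmcostleakage} never uses the discreteness of $U$, the supremum defining $\mcl{X}{Y}$ in Definition~\ref{defcostleakage} is unchanged if $U$ is allowed to range over real-valued random variables; restricting that extended supremum to real $U$ and to the squared-error cost $d(u,\hat u) = (u-\hat u)^2$ and invoking the rewriting~\eqref{eq:varleakcost} gives $\mcl{X}{Y} \ge \mathcal{L}^v(X \to Y)$. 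By Lemma~\ref{thmvariance}, $\mathcal{L}^v(X \to Y) = -\log\bigl(1 - \rho_m^2(X;Y)\bigr)$, hence $-\log\bigl(1 - \rho_m^2(X;Y)\bigr) \le \mcl{X}{Y}$, i.e.\ $1 - \rho_m^2(X;Y) \ge e^{-\mcl{X}{Y}}$, which is~\eqref{eq:maxcorrmaxleakge}.

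For the second assertion I would run~\eqref{eq:maxcorrmaxleakge} at an arbitrary marginal $P_X$. By Theorem~\ref{thmcostleakage}, $e^{-\mcl{X}{Y}} = \sum_{y \in \mathcal{Y}} \min_{x: P_X(x) > 0} P_{Y|X}(y|x) \ge \sum_{y \in \mathcal{Y}} \min_{x \in \mathcal{X}} P_{Y|X}(y|x)$, so~\eqref{eq:maxcorrmaxleakge} yields $\rho_m^2(X;Y) \le 1 - \sum_{y \in \mathcal{Y}} \min_{x \in \mathcal{X}} P_{Y|X}(y|x)$ for \emph{every} $P_X$, with a right-hand side no longer depending on $P_X$. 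Taking the supremum over $P_X$ gives $\sup_{P_X} \rho_m^2(X;Y) \le 1 - \sum_{y} \min_{x} P_{Y|X}(y|x)$. It remains to pass from $\sup_{P_X}\rho_m^2(X;Y)$ to $\sup_{P_X} s^\star(X;Y)$: while $\rho_m^2(X;Y) \le s^\star(X;Y)$ for each fixed $P_X$ is the standard (Erkip--Cover) direction, what is needed here is the reverse inequality \emph{after} maximizing over inputs, which is precisely the Ahlswede--G\'acs identity asserting that the $\chi^2$- and Kullback--Leibler contraction coefficients of a fixed channel agree, i.e.\ $\sup_{P_X} s^\star(X;Y) = \sup_{P_X} \rho_m^2(X;Y)$. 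Substituting gives the claimed bound.

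The main obstacle is exactly this last step, since the identity $\sup_{P_X} s^\star(X;Y) = \sup_{P_X}\rho_m^2(X;Y)$ is external to this paper. If one prefers a self-contained argument (at the price of not deriving the bound literally as a consequence of~\eqref{eq:maxcorrmaxleakge}), one can instead bound $\sup_{P_X} s^\star(X;Y)$ by the Dobrushin total-variation contraction coefficient $\eta_{\mathrm{TV}}(P_{Y|X}) = \max_{x,x'} \lVert P_{Y|X=x} - P_{Y|X=x'} \rVert_{\mathrm{TV}}$, a standard strong-data-processing fact, and then note that $\eta_{\mathrm{TV}}(P_{Y|X}) \le 1 - \sum_{y} \min_{x} P_{Y|X}(y|x)$ because discarding all but two rows can only increase the overlap sum; this reaches the same conclusion.
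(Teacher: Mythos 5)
Your proof is correct and follows essentially the same route as the paper: the first inequality is read off from $\mcl{X}{Y} \geq \mathcal{L}^v(X \!\! \to \!\! Y) = -\log(1-\rho_m^2(X;Y))$ exactly as in the text preceding the corollary, and the second is obtained by combining the closed form of $\mcl{X}{Y}$ from Theorem~\ref{thmcostleakage} with the identity $\sup_{P_X} s^{\star}(X;Y) = \sup_{P_X}\rho_m^2(X;Y)$. The step you flag as ``external'' is precisely what the paper invokes, citing~\cite[Theorem 8]{Ahlswede:Hypercontraction}, so no gap remains; your alternative route via the Dobrushin coefficient is a valid (and self-contained) substitute but is not the paper's argument.
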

\vspace{1mm}
Note that inequality (\ref{eq:maxcorrmaxleakge}) is tight in the extremal cases, i.e., if $X$ and $Y$ are independent, if $Y$ is a deterministic function of $X$, or if $X$ is a deterministic function of $Y$ (it can be readily verified in this case that $\sum_y \min_{x} P_{Y|X}(y|x)=0$, unless $X$ or $Y$ is determinstic). The second inequality follows from the fact that $ \sup_{P_X} s^{\star} (X;Y) =  \sup_{P_X} \rho_m^2 (X;Y)$~\cite[Theorem 8]{Ahlswede:Hypercontraction}.

\subsubsection{Maximal Realizable Cost}
Similarly to the modification of maximal leakage to maximal realizable leakage, we could consider the minimum cost incurred at the adversary, instead of the average cost. We show next that this yields the maximum of the \emph{negative} of the information rate. Maximizing it over the input distribution also yields local differential privacy.
\vspace{1mm}
\begin{Definition}[Maximal Realizable Cost] \label{defcostrealizable}
Given a joint distribution $P_{XY}$ on alphabets $\mathcal{X}$ and $\mathcal{Y}$, the \emph{maximal realizable cost} from $X$ to $Y$ is defined as
\begin{equation} 
\mathcal{L}^{rc} (X \!\! \to \!\! Y) = \sup_{\substack{U: U-X-Y \\ \hat{\mathcal{U}}, ~ d: \hat{\mathcal{U}} \times \mathcal{U} \rightarrow \mathbb{R}_+}} \log \frac{\inf_{\hat{u} \in \hat{\mathcal{U}}} \E[d(U,\hat{u})]  }{ \min_{y \in \mathrm{supp}(Y)}  \inf_{ \hat{u} \in \hat{\mathcal{U}}}  \E[d(U,\hat{u}) | Y=y] },
\end{equation}
where $\hat{\mathcal{U}}$ is a finite alphabet, and $\frac{0}{0}=1$ by convention.
\end{Definition}
\vspace{1mm}
\begin{Theorem} \label{thmcostrealizable}
For any joint distribution $P_{XY}$ on finite alphabets $\mathcal{X}$ and $\mathcal{Y}$, the maximal realizable cost from $X$ to $Y$ is given by the R{\'e}nyi divergence of order infinity, $D_\infty (P_{X} \times P_Y || P_{XY})$.  That is,
\begin{equation}
\mathcal{L}^{rc}(X \!\! \to \!\! Y) =  \log\max_{\substack{ x,y: \\ P_{X}(x)P_Y(y) >0}  } \frac{P_{Y}(y)}{P_{Y|X}(y|x)} = D_\infty (P_{X} \times P_Y || P_{XY}) .
\end{equation}
\end{Theorem}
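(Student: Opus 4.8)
The plan is to prove the two bounds $\mathcal{L}^{rc}(X \!\! \to \!\! Y) \le D_\infty(P_X \times P_Y || P_{XY})$ and $\mathcal{L}^{rc}(X \!\! \to \!\! Y) \ge D_\infty(P_X \times P_Y || P_{XY})$ separately, in close analogy with the proofs of Theorems~\ref{thmcostleakage} and~\ref{thmmaxleakrealizable}. I would first record that $D_\infty(P_X \times P_Y || P_{XY}) = \log \max_{(x,y): P_X(x)P_Y(y)>0} \frac{P_Y(y)}{P_{Y|X}(y|x)}$, which is immediate from $D_\infty(P||Q) = \log\sup_a P(a)/Q(a)$ and $P_{XY}(x,y) = P_X(x)P_{Y|X}(y|x)$ (a term with vanishing denominator being $+\infty$ by convention); so it suffices to identify $\exp\{\mathcal{L}^{rc}(X \!\! \to \!\! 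Y)\}$ with that maximum.

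For the upper bound I would fix an arbitrary admissible triple $(U,\hat{\mathcal U},d)$ with $U-X-Y$ and bound the conditional cost from below. Writing $P_{U|Y}(u|y) = \frac{1}{P_Y(y)}\sum_x P_{UX}(u,x) P_{Y|X}(y|x)$ (the sum restricting to $x\in\mathrm{supp}(X)$, since $P_{UX}(u,x)=0$ otherwise), nonnegativity of $d$ and $P_{UX}$ gives, for every $\hat{u} \in \hat{\mathcal U}$ and $y \in \mathrm{supp}(Y)$,
\[
\E[d(U,\hat{u}) | Y = y] = \frac{1}{P_Y(y)} \sum_{u}\sum_{x} d(u,\hat{u}) P_{UX}(u,x) P_{Y|X}(y|x) \ \ge\ \frac{\min_{x: P_X(x)>0} P_{Y|X}(y|x)}{P_Y(y)} \, \E[d(U,\hat{u})].
\]
Passing $\inf_{\hat{u}}$ and then $\min_{y \in \mathrm{supp}(Y)}$ through the nonnegative prefactors shows that the ratio defining $\mathcal{L}^{rc}$ for this choice is at most $\max_{y\in\mathrm{supp}(Y)} \max_{x: P_X(x)>0} P_Y(y)/P_{Y|X}(y|x)$ (vacuously so when some such $P_{Y|X}(y|x)$ vanishes); taking the supremum over $(U,\hat{\mathcal U},d)$ then gives the upper bound.

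For the reverse inequality, I expect that---unlike in Theorem~\ref{thmmaxleakrealizable}---no shattering is needed: a single binary secret with a lopsided cost already attains the bound. If $\mathrm{supp}(X)$ is a singleton or $X$ and $Y$ are independent, the right side is $0$ and nothing is to prove; otherwise let $(x^\star,y^\star)$ attain the maximum above (so $x^\star$ minimizes $P_{Y|X}(y^\star|\cdot)$ over $\mathrm{supp}(X)$), set $q = P_X(x^\star) \in (0,1)$, take $U = 1$ on $\{X=x^\star\}$ and $U=0$ elsewhere (a deterministic, hence admissible, function of $X$), $\hat{\mathcal U} = \{0,1\}$, and $d(1,0) = 1$, $d(0,1) = q/(1-q)$, $d(0,0)=d(1,1)=0$, the value $d(0,1)$ being chosen to balance the two candidates. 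A one-line computation gives $\E[d(U,0)] = \E[d(U,1)] = q$, so $\inf_{\hat{u}}\E[d(U,\hat{u})] = q$; and with $q' := P(U=1|Y=y^\star) = q\,P_{Y|X}(y^\star|x^\star)/P_Y(y^\star) \le q$ one gets $\E[d(U,0)|Y=y^\star] = q'$ and $\E[d(U,1)|Y=y^\star] = \tfrac{q}{1-q}(1-q') \ge q'$, so $\inf_{\hat{u}}\E[d(U,\hat{u})|Y=y^\star] = q'$. Since $\min_{y\in\mathrm{supp}(Y)}\inf_{\hat{u}}\E[d(U,\hat{u})|Y=y] \le q'$, the ratio for this choice is at least $q/q' = P_Y(y^\star)/P_{Y|X}(y^\star|x^\star)$, the claimed maximum; if that maximum is $+\infty$ (i.e.\ $P_{Y|X}(y^\star|x^\star)=0$) the same construction gives $q'=0$ with numerator $q>0$, so the ratio is $+\infty$, again matching.

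The main obstacle is isolating the right lower-bound gadget: once one sees that a binary secret together with a cost that heavily penalizes failing to detect $\{X=x^\star\}$ already realizes the worst case (and hence that enlarging the class of secrets cannot increase $\mathcal{L}^{rc}$), the remaining work is routine. The only place that demands care is the bookkeeping in the upper bound---which $x$'s survive in the inner sums, and the commutation of $\inf_{\hat{u}}$ and $\min_y$ with the nonnegative prefactors---together with the separate but immediate treatment of the degenerate cases ($X$ deterministic, $X\perp Y$, or the formula being $+\infty$).
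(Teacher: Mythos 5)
Your proof is correct, and the upper bound is essentially identical to the paper's: fix $(U,\hat{\mathcal U},d)$, expand $\E[d(U,\hat u)\mid Y=y]$ through the Markov chain $U-X-Y$, pull out $\min_{x\in\mathrm{supp}(X)}P_{Y|X}(y|x)/P_Y(y)$, and pass the infimum over $\hat u$ and the minimum over $y$ through the nonnegative prefactors. Where you diverge is the lower bound. The paper reuses the construction from Theorem~\ref{thmcostleakage}: it takes $U=X$ with the cost $d(x,\hat x)=\mathbb{I}\{x=\hat x\}/P_X(x)$, which makes every unconditional candidate cost exactly $1$ and every conditional cost exactly $\min_{x}P_{Y|X}(y|x)/P_Y(y)$, so a single choice of $(U,d)$ attains the bound for all $y$ simultaneously and the extremal pair $(x^\star,y^\star)$ never needs to be named. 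You instead build a binary secret $U=\mathbb{I}\{X=x^\star\}$ with the balanced two-point cost $d(1,0)=1$, $d(0,1)=q/(1-q)$, targeted at the specific maximizing pair; your computations ($\E[d(U,0)]=\E[d(U,1)]=q$, conditional infimum $q'=qP_{Y|X}(y^\star|x^\star)/P_Y(y^\star)\le q$, ratio $q/q'$) all check out, including the $P_{Y|X}(y^\star|x^\star)=0$ case and the degenerate cases you set aside (for those, note that nonnegativity of $\mathcal{L}^{rc}$ follows, e.g., from a constant $U$ with $d\equiv 0$ and the $\tfrac{0}{0}=1$ convention, or from $\min_y\inf_{\hat u}\E[d\mid Y=y]\le\inf_{\hat u}\E[d]$). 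The paper's gadget is more economical and makes the parallel with maximal cost leakage explicit; yours is more elementary in that it shows a single binary secret with a lopsided cost already saturates the bound, which arguably better explains why the supremum is achieved in the limit of simple adversaries.
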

Similarly to maximal realizable leakage, $\mathcal{L}^{rc}(X \!\! \to \!\! Y) $ depends on $P_X$ not only through its support. They are also analogous in that the former is equal to $D_\infty (P_{XY} || P_X \times P_Y)$ and the latter is equal to $D_\infty (P_{X} \times P_Y || P_{XY})$. 
\vspace{1mm}
\begin{Corollary} \label{corrLDPmaxcostrealized}
For any conditional distribution $P_{Y|X}$ from $\mathcal{X}$ to $\mathcal{Y}$, where  $\mathcal{X}$ and $\mathcal{Y}$ are finite alphabets,
\begin{align}
\label{eqLDPmaxcostrealized}
\max_{P_X} \mathcal{L}^{rc}(X \!\! \to \!\! Y) =L^{dp} ( X \!\! \to \!\! Y).
\end{align}
\end{Corollary}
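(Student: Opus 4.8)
The plan is to reduce everything to the closed form for maximal realizable cost supplied by Theorem~\ref{thmcostrealizable}, namely
\[
\mathcal{L}^{rc}(X \!\! \to \!\! Y) = \log \max_{\substack{x,y:\\ P_X(x)P_Y(y)>0}} \frac{P_Y(y)}{P_{Y|X}(y|x)},
\]
and then run the same two-sided argument used for Theorem~\ref{thmLDP}. First I would record that the definition~\eqref{eqdefLDPorig} of $L^{dp}(X \!\! \to \!\! Y)$ is invariant under swapping $x$ and $x'$, so it can be written as $\max_{y}\log\big(\max_{x}P_{Y|X}(y|x)\,/\,\min_{x'}P_{Y|X}(y|x')\big)$, with the convention that it is $+\infty$ whenever the inner minimum is $0$ while the maximum is not.

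For the upper bound, fix any $P_X$ and any admissible pair $(x,y)$ (so $P_X(x)>0$ and $P_Y(y)>0$). Expanding $P_Y(y)=\sum_{x'}P_X(x')P_{Y|X}(y|x')\le \max_{x'}P_{Y|X}(y|x')$ and dividing by $P_{Y|X}(y|x)$ gives
\[
\frac{P_Y(y)}{P_{Y|X}(y|x)} \le \frac{\max_{x'}P_{Y|X}(y|x')}{P_{Y|X}(y|x)} \le e^{L^{dp}(X\to Y)},
\]
where, if $P_{Y|X}(y|x)=0$, the left quotient is $+\infty$ and so is the right side since $P_Y(y)>0$ forces $\max_{x'}P_{Y|X}(y|x')>0$. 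Taking the maximum over admissible $(x,y)$, then the logarithm, then the supremum over $P_X$ yields $\max_{P_X}\mathcal{L}^{rc}(X \!\! \to \!\! Y)\le L^{dp}(X \!\! \to \!\! Y)$.

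For the matching lower bound, I would let $y^\star$ attain the outer maximum in $L^{dp}$ and choose $x_1\in\argmax_x P_{Y|X}(y^\star|x)$ and $x_0\in\argmin_x P_{Y|X}(y^\star|x)$; if $L^{dp}(X \!\! \to \!\! Y)=0$ the left side is already squeezed to $0$, so assume $L^{dp}>0$, whence $x_1\ne x_0$ and $P_{Y|X}(y^\star|x_1)>0$. For $\alpha\in(0,1)$ take $P_X$ with $P_X(x_1)=1-\alpha$ and $P_X(x_0)=\alpha$, so $P_Y(y^\star)=(1-\alpha)P_{Y|X}(y^\star|x_1)+\alpha P_{Y|X}(y^\star|x_0)>0$ and $(x_0,y^\star)$ is admissible; hence
\[
\mathcal{L}^{rc}(X \!\! \to \!\! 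Y) \ge \log \frac{(1-\alpha)P_{Y|X}(y^\star|x_1)+\alpha P_{Y|X}(y^\star|x_0)}{P_{Y|X}(y^\star|x_0)},
\]
which tends to $\log\big(P_{Y|X}(y^\star|x_1)/P_{Y|X}(y^\star|x_0)\big)=L^{dp}(X \!\! \to \!\! Y)$ as $\alpha\to0$ when $P_{Y|X}(y^\star|x_0)>0$, and equals $+\infty$ for every $\alpha>0$ when $P_{Y|X}(y^\star|x_0)=0$. Either way $\sup_{P_X}\mathcal{L}^{rc}(X \!\! \to \!\! Y)\ge L^{dp}(X \!\! \to \!\! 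Y)$, which together with the upper bound gives~\eqref{eqLDPmaxcostrealized}.

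The argument is routine once Theorem~\ref{thmcostrealizable} is available; the only real care needed is the bookkeeping around zero conditional probabilities (the $+\infty$ cases) and the observation that ``$\max_{P_X}$'' should be read as a supremum, since the optimizing $P_X$ degenerates in the limit $\alpha\to0$ to mass essentially on $x_1$ together with a vanishing atom at $x_0$, so the value is approached but not attained for a generic channel. I do not anticipate any deeper obstacle.
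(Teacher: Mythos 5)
Your proof is correct and follows essentially the same route as the paper's: the upper bound via $P_Y(y)\le\max_{x'}P_{Y|X}(y|x')$, and the lower bound via the two-point prior on $\{x_0,x_1\}$ with a vanishing atom at $x_0$ (the paper writes $P_X(x_0)=1-\alpha$, $P_X(x_1)=\alpha$ and sends $\alpha\to1$, which is just your parameterization relabeled). Your extra bookkeeping on the zero-probability cases and the remark that the supremum is generally not attained are fine refinements but do not change the argument.
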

A consequence of Theorem~\ref{thmLDP} and Corollary~\ref{corrLDPmaxcostrealized} is that local differential privacy is concerned with both worst-case reductions in costs incurred and worst-case increases in gains achieved at the adversary.
The proofs of Theorem~\ref{thmcostrealizable} and Corollary~\ref{corrLDPmaxcostrealized} are given in Appendices~\ref{app:costrealizable} and~\ref{app:corrcostrealized}, respectively.

\section{Discussion} 

It is worth noting that Sibson's mutual information of infinite
order~(\ref{eq:SibsonMI}) has appeared  
in the data compression literature as the Shtarkov sum~\cite{Shtarkov}, which evaluates the worst-case regret.%\footnote{We are grateful to Emre Telatar for pointing out this connection.}	
More recently, it has also been used as a complexity measure in the study of communication complexity~\cite{SibsonInfComp}.

If $X$ is binary and not deterministic and $Y = X$, then the maximal
leakage from $X$ to $Y$ is one bit. Thus if $X$ represents, say, whether
Alice has a stigmatized disease, and Alice
reveals this information to Bob, maximal leakage
would declare that only one bit has been leaked to him. Maximal leakage 
would likewise declare that one bit has been leaked if Alice
revealed the first bit of her phone number or whether she was born
on an even- or odd-numbered day. Thus maximal leakage fails to capture
the gravity of revealing highly-confidential quantities if those quantities
can only take a few possible values. The reason is simply that maximal
leakage measures the extent to which randomized functions of $X$ that
are difficult to guess \emph{a priori} become easy to guess after
observing $Y$. Any binary-valued function can be guessed 
\emph{a priori} with probability at least $1/2$. Therefore the
increase in the guessing probability upon observing $Y$ cannot
be large. According to maximal leakage, revealing whether Alice
has a particular disease is not a concern because
Bob already has a reasonably high probability of guessing correctly even 
without any information from Alice. Thus maximal leakage is an
appropriate metric when the goal is to prevent Bob from guessing
quantities, such as passwords or keys, that are \emph{a priori}
hard to guess. Other metrics, such as differential 
privacy~(\ref{eqdefLDPorig})
are more appropriate in the above scenario in which revealing a single 
bit represents a significant breach.

Following the publication of an early version of this work, maximal leakage was used as a privacy metric in the context of hypothesis testing~\cite{HypTestML}, and in a more general setup of privacy-utility trade-offs~\cite{Wang2017Privacy}.  Variations on the definition of maximal leakage that yield Sibson mutual information of finite orders have also been considered~\cite{LiaoSibson}.

\label{sec:conc}

\section*{Acknowledgment}

The authors thank Emre Telatar for pointing out the connection between
the Sibson mutual information of infinite order and the Shtarkov sum
and for suggesting that no scalar multiple of $I(X;Y)$ can upper-bound
$\ml{X}{Y}$ in Lemma~\ref{lemmacompareMI}. This research was supported
by the National Science Foundation under grant CCF-1704443.

\begin{appendices}

\section{Proofs for Section~\ref{sec:main}}

\subsection{Proof of Lemma~\ref{lemmadefprop}} \label{app:lemmadefproof}
\begin{enumerate}
\item Consider any discrete $U$ satisfying $U-X-Y$ and define $G(U;Y) = \sup_{\hat{U}:U-Y-\hat{U}} \log \frac{\Pr{\left(U=\hat{U}\right)}}{\max_{u \in \mathcal{U}} P_U(u)}$. Clearly if $U-X-Y-Z$ holds, then $G(U;Z) \leq G(U;Y)$. 
So if $X-Y-Z$ holds,
\begin{align*}
\ml{X}{Z} = \sup_{U: U-X-Z} G(U;Z) = \sup_{U: U-X-Y-Z} G(U;Z) \leq \sup_{U: U-X-Y-Z} G(U;Y) = \ml{X}{Y},
\end{align*}
Similarly,
\begin{align*}
\ml{X}{Z} = \sup_{U: U-X-Z} G(U;Z) = \sup_{U: U-X-Y-Z} G(U;Z) \leq \sup_{U: U-Y-Z} G(U;Z) = \ml{Y}{Z}.
\end{align*}
\item If $Y$ is discrete, then for any discrete $U$
\begin{align*}
\sup_{\hat{U}:U-X-Y-\hat{U}} \Pr{\left(U=\hat{U}\right)}  = \sum_{y \in \mathrm{supp}(Y)} \max_{u \in \mathcal{U}} P_{UY}(u,y) \leq   \sum_{y \in \mathrm{supp}(Y)} \max_{u \in \mathcal{U}} P_{U}(u) = |\mathrm{supp}(Y)|  \max_{u \in \mathcal{U}} P_{U}(u).
\end{align*} Hence for any $U$ satisfying $U-X-Y$, $G(U;Y) \leq \log |\mathrm{supp}(Y)|$ and subsequently $\ml{X}{Y}  \leq \log |\mathrm{supp}(Y)|$.
\item If $X$ is discrete, then $\ml{X}{Y} \leq \ml{X}{X} \leq \log |\mathrm{supp}(X)|$, where the first inequality follows from 1) and the second from 2). 
\item If $X$ and $Y$ are independent, then any $U$ satisfying $U-X-Y$ is independent from $Y$. Hence $G(U;Y)=0$ for all $U$, which implies $\ml{X}{Y} = 0$. The non-negativity is obvious.
\end{enumerate}

\subsection{Proof of Corollary~\ref{corrprop}} \label{app:corrprop}

\begin{enumerate}
\item If $\ml{X}{Y}=0$, then $\sum_{y \in \mathcal{Y}} \max_{x \in \mathrm{supp}(X)} P_{Y|X}(y|x) = 1$. Hence, $\sum_{y \in \mathcal{Y}} \max_{x \in \mathrm{supp}(X)} P_{Y|X}(y|x) = \sum_{y \in \mathcal{Y}} P_Y(y)$. Since $ \max_{x \in \mathrm{supp}(X)} P_{Y|X} (y|x) \geq P_Y(y)$ for every $y \in \mathcal{Y}$, it follows that  $\max_{x \in \mathrm{supp}(X)} P_{Y|X} (y|x)= P_Y(y)$ for all $y \in \mathcal{Y}$. Therefore, $X$ and $Y$ are independent. The reverse direction follows from Lemma~\ref{lemmadefprop}.
\item The additivity property is known for $I_{\infty}(X;Y)$~\cite{SibsonInfo,Verdualpha}.
\item Since 
\begin{align*}
\sum_{y \in \mathcal{Y}} \max_{x \in \mathcal{X}} P_{Y|X}(y|x) \leq \sum_{y \in \mathcal{Y}} \sum_{x \in \mathcal{X}} P_{Y|X}(y|x) = |\mathcal{X}|,
\end{align*} 
equality holds if and only if for all $y \in \mathcal{Y}$, $ \max_{x \in \mathcal{X}} P_{Y|X}(y|x) = \sum_{x \in \mathcal{X}} P_{Y|X}(y|x)$.  This condition holds if and only if for all $y \in \mathcal{Y}$, there exists a unique $x_y \text{ such that } P_{Y|X}(y|x_y) > 0$. Finally, the latter condition holds if and only if for all $y \in \mathcal{Y}$, there exists $x_y$ such that $P_{X|Y}(x_y|y) =  1$.
\item The equality is straightforward to verify.
\item The asymmetry is illustrated in Example~\ref{exampleasy1}.
\item Convexity in $P_{Y|X}$ follows from the fact that for each $y \in \mathcal{Y}$, $\max_x P_{Y|X}(y|x)$ is convex in $P_{Y|X}$. 
\item Concavity in $P_X$ follows from the fact that for any $\lambda \in (0,1)$ and any two distributions $P_1$ and $P_2$ on $\mathcal{X}$, ${\mathrm{supp}( \lambda P_1 +(1-\lambda) P_2) }= \mathrm{supp}(P_1) \cup \mathrm{supp}(P_2)$.
\end{enumerate}

\subsection{Proof of Lemma~\ref{lemmacompareMI}}  \label{app:lemmaMIMLproof}
Consider the following chain of inequalities.
\begin{align*}
I(X;Y) & = \sum_{x \in \mathcal{X}, y \in \mathcal{Y}} P_{XY}(x,y) \log \frac{P_{Y|X}(y|x)}{P_Y(y)}  
= \sum_{\substack{x \in \mathcal{X}, y \in \mathcal{Y}: \\ P_{XY}(x,y) >0}} P_{XY}(x,y) \log \frac{P_{Y|X}(y|x)}{P_Y(y)} \\
& \stackrel{\text{(a)}} \leq \log \sum_{\substack{x \in \mathcal{X}, y \in \mathcal{Y}: \\ P_{XY}(x,y) >0}} P_{XY}(x,y)\frac{P_{Y|X}(y|x)}{P_Y(y)} 
 = \log \sum_{\substack{x \in \mathcal{X}, y \in \mathcal{Y}: \\ P_{XY}(x,y) >0}} P_{X|Y}(x|y) P_{Y|X}(y|x) \\
& \stackrel{\text{(b)}} \leq \log \sum_{\substack{x \in \mathcal{X}, y \in \mathcal{Y}: \\ P_{XY}(x,y) >0}} P_{X|Y}(x|y) \max_{x' \in \mathcal{X}: P_X(x') > 0} P_{Y|X}(y|x') 
 = \log \sum_{\substack{ y \in \mathcal{Y}: \\ P_Y(y) > 0}} \max_{x \in \mathcal{X}: P_X(x) > 0} P_{Y|X}(y|x) \\
& \stackrel{\text{(c)}} = \log \sum_{\substack{ y \in \mathcal{Y}}} \max_{x \in \mathcal{X}: P_X(x) > 0} P_{Y|X}(y|x)  = \mathcal{L}(X \to Y),
\end{align*}
where (a) is Jensen's inequality, and (c) follows from the fact that $P_{Y}(y)=0$ implies that $\max_{x \in \mathcal{X}: P_X(x) > 0} P_{Y|X}(y|x) =0.$ Now, note that (b) is an equality if and only if condition 1) holds. Given condition 1), it can be seen that condition 2) is necessary and sufficient for (a) to become equality (by expanding $P_Y(y) = \sum_{x: P_{XY}(x,y) >0} P_X(x) P_{Y|X}(y|x)$).

\section{Proofs for Section~\ref{sec:variations}}

\subsection{Proof of Theorem~\ref{lemmakeq1}} \label{app:kleakage}

To show $\kml{X}{Y} \geq \ml{X}{Y}$, we consider an arbitrary $P_{U|X}$ and construct $P_{V|X}$ such that $\kml{X}{Y} [V] = {\ml{X}{Y}[U]}$. In particular, for a given $P_{U|X}$ and associated alphabet $\mathcal{U}$, let 
\begin{align*}
\mathcal{V} =\bigcup_{u \in \mathcal{U}} \{(u,1),(u,2),\ldots,(u,k)\}, 
\text{ and }
P_{V|X} (v|x)  = P_{V|X}((a_v,b_v)|x) = P_{U|X}(a_v|x)/k.
\end{align*}
Then the probability of correctly guessing $V$ with $k$ guesses after observing $Y$ is:
\begin{align}
 \sup_{X - Y - (\hat{V}_i)_{i=1}^k} \Pr(V = \hat{V}_1 \vee \dots \vee V =\hat{V}_k) 
& = \sum_{y \in \mathcal{Y}} \max_{\substack{ v_1,v_2,\ldots,v_k  \\ v_i \neq v_j, i\neq j}} \sum_{i=1}^k \sum_{x \in \mathcal{X}} P_X(x) P_{V|X}(v_i|x) P_{Y|X}(y|x) \notag \\
& = \sum_{y\in \mathcal{Y}} \sum_{i=1}^k \! \max_{v_i \neq v_1,\ldots,v_{i-1}}  \sum_{x \in \mathcal{X}} \! P_X(x) P_{V|X}(v_i|x) P_{Y|X}(y|x) \notag \\
& \stackrel{(a)} = \sum_{y \in \mathcal{Y}} \max_u \sum_{x \in \mathcal{X}} P_X(x)P_{U|X}(u|x)P_{Y|X}(y|x) , \label{Vnum}
\end{align}
where (a) follows by setting $v_i=(u^\star,i)$, where
\begin{align*}
u^\star  = \argmax_{u \in \mathcal{U}} \sum_{x \in \mathcal{X}} P_X(x) P_{U|X}(u|x) P_{Y|X}(y|x).
\end{align*}
Now note that \eqref{Vnum} is simply the probability of guessing $U$ correctly with a single guess after observing $Y$. A similar argument shows that, with no $Y$ observation, the probability of guessing $V$ correctly with $k$ guesses is equal to the probability of guessing $U$ correctly with a single guess, hence $\kml{X}{Y} [V] = \ml{X}{Y}[U]$, which establishes $\kml{X}{Y} \geq \ml{X}{Y}$.

It remains to show $\ml{X}{Y} \geq \kml{X}{Y}$. For any $P_{V|X}$, we construct $P_{U|X}$ such that $\ml{X}{Y}[U]=\kml{X}{Y}[V]$. So let $P_{V|X}$ be given, with associated alphabet $\mathcal{V}$, and let $\ell \triangleq |\mathcal{V}| \geq k$. Now, let 
\begin{align*}
\mathcal{U} = \{ S \subset \mathcal{V}: |S|=k\},
\text{ and } P_{U|X}(u|x) & =  c \sum_{v \in u} P_{V|X}(v|x),
\end{align*}
where $c={1}/{\binom{\ell-1}{k-1}}$. Then, observing $Y$, the probability of guessing $U$ correctly with a single guess is
\begin{align}
 \sup_{X-Y-\hat{U}} \Pr(U = \hat{U}) 
&  = \sum_{y \in \mathcal{Y}} \max_{u \in \mathcal{U}} \sum_{x \in \mathcal{X}} P_X(x)P_{U|X}(u|x)P_{Y|X}(y|x) \notag \\
& = \sum_{y \in \mathcal{Y}} \max_{u \in \mathcal{U}} \sum_{x \in \mathcal{X}} P_X(x) \sum_{v \in u} P_{V|X}(v|x) P_{Y|X}(y|x) c \notag \\
& = c \! \sum_{y \in \mathcal{Y}} \max_{\substack{v_1,v_2,\ldots,v_k \\ v_i \neq v_j, i \neq j}} \! \sum_{x \in \mathcal{X}} \! \sum_{i=1}^k \! P_X(x)P_{V|X}(v_i|x)P_{Y|X}(y|x),  \notag
\end{align}
which is the probability, normalized by $c$, of guessing $V$ correctly with $k$ guesses after observing $Y$. A similar argument shows that, with no $Y$ observation, the probability of guessing $U$ correctly with a single guess is equal to the probability, normalized by $c$, of guessing $V$ correctly with $k$ guesses, hence $\ml{X}{Y} [V] = \kml{X}{Y}[U]$, which establishes $\ml{X}{Y} \geq \kml{X}{Y}$.

\subsection{Proof of Theorem~\ref{thmcondmaxleakage}} \label{app:conditional}
Assume, without loss of generality, that $X$ and $Z$ have full marginal support. 
To show that the left-hand side is upper-bounded by the right-hand side, fix $P_{U|XZ}$ and consider the following.
\begin{align*}
\frac{\Pr(U=\hat{U}(Y,Z))}{\Pr(U=\tilde{U}(Z))} 
 = \frac{\sum_z p(z) \sum_y p(y|z) \max_u p(u|y,z)}{\sum_z p(z) \max_u p(u|z) } 
 \leq \max_z \frac{\sum_y p(y|z) \max_u p(u|y,z)}{\max_u p(u|z) }.
\end{align*} 
Then by noting that the ratio being maximized is $ \exp\{ \mathcal{L}( X \!\! \to \!\! Y | Z=z) \}$, we get
\begin{align*}
\sup_{\substack{U: U-(X,Z)-Y}}  \frac{\Pr(U=\hat{U}(Y,Z))}{\Pr(U=\tilde{U}(Z))} \leq \max_z \sum_y \max_{x: P_{X|Z}(x|z) > 0} P_{Y|XZ} (y|x,z).  
\end{align*}
To get the reverse inequality, let $\epsilon_n = 1/n$ for $n \in \mathbb{N}$, $z^\star \in \argmax \sum_y \max_{x: P_{X|Z}(x|z) > 0} P_{Y|XZ} (y|x,z)$, and $p^\star = \min_{x: p(x|z^\star)> 0} p(x|z^\star)$. Construct $P_{U|XZ}$ as follows. If $Z=z^\star$, then $P_{U|X,Z=z^\star}$ is the ``shattering'' conditional with respect to the distribution $P_{X|Z=z^\star}$ (cf.~equation~\eqref{eqshattering}). If $Z \neq z^\star$, then $U \sim \mathrm{Unif}([n])$, independent of $X$. Using Proposition~\ref{prop:shattering}, we get
\begin{align*}
\frac{\Pr(U=\hat{U}(Y,Z))}{\Pr(U=\tilde{U}(Z))} 
&  = \frac{\sum_{z \neq z^\star} p(z) \sum_y p(y|z) \max_u p(u|y,z) + p(z^\star) \sum_y p(y|z^\star) \max_u p(u|y,z^\star)}{\sum_{z \neq z^\star} p(z)  \max_u p(u|z) + p(z^\star)  \max_u p(u|z^\star) } \\
& =  \frac{\sum_{z \neq z^\star} p(z) \epsilon_n + p(z^\star) p^\star \sum_y \max_{x: P_{X|Z}(x|z^\star) > 0} P_{Y|XZ} (y|x,z^\star)}{\sum_{z \neq z^\star} p(z) \epsilon_n + p(z^\star) p^\star } \\
& =   \frac { (1-p(z^\star))\epsilon_n + p(z^\star) p^\star \sum_y \max_{x: P_{X|Z}(x|z^\star) > 0} P_{Y|XZ} (y|x,z^\star)}{(1-p(z^\star)) \epsilon_n + p(z^\star) p^\star }
\end{align*}
Letting $n \rightarrow \infty$ (i.e., $\epsilon_n \rightarrow 0$) yields our lower bound.

\subsection{Proof of Corollary~\ref{corrpropcond}} \label{app:CorrConditionalProof}
\begin{enumerate}
\item The data processing inequality follows directly from the definition as in the unconditional case.
\item The upper bound follows from Theorem~\ref{thmcondmaxleakage} and Lemma~\ref{lemmadefprop}.
\item[3-4)] The independence and additivity properties follow straightforwardly from the theorem.
\item[5)] $I(X;Y|Z) \leq \max_{z \in \mathrm{supp}(Z)} I(X;Y|Z=z) \leq \max_{z \in \mathrm{supp}(Z)} \cml{X}{Y}{Z=z} = \cml{X}{Y}{Z}$.
\item[6)] The asymmetry follows immediately from the unconditional case.
\item[7)] Let $Z-X-Y$ be a Markov chain. Then
\begin{align*}
\cml{X}{Y}{Z} & = \log \left( \max_{z \in \mathrm{supp}(Z)} \sum_y \max_{x: P_{X|Z}(x|z) > 0} P_{Y|XZ} (y|x,z) \right)\\
 & = \log \left( \max_{z \in \mathrm{supp}(Z)} \sum_y \max_{x: P_{X|Z}(x|z) > 0} P_{Y|X} (y|x) \right) \\
 &  \leq  \log \left(\sum_y \max_{x: P_{X}(x) > 0} P_{Y|X} (y|x) \right) \\
 & = \ml{X}{Y},
\end{align*}
where the inequality follows from the fact that $\mathrm{supp}(X) \supseteq \mathrm{supp}(X|Z=z)$ for any $z \in \mathrm{supp}(Z)$. Note that the inequality becomes an equality if for some $z \in \mathrm{supp}(Z)$, $\mathrm{supp}(X) = \mathrm{supp}(X|Z=z)$.
\item[8)]
\begin{align*}
\ml{X}{(Y,Z)} - \ml{X}{Z} & = \log   \frac{\sum_{z,y} \max_{x: P_{X}(x) > 0} P_{YZ|X} (y,z|x) }{\sum_z \max_{x: P_{X}(x) > 0} P_{Z|X} (z|x)}   \\
& \leq \log  \max_{z \in \mathrm{supp}(Z)} \frac{\sum_{y} \max_{x: P_{X}(x) > 0} P_{YZ|X} (y,z|x) }{ \max_{x: P_{X}(x) > 0} P_{Z|X} (z|x)}    \\
& = \log \max_{z \in \mathrm{supp}(Z)} \frac{\sum_{y} \max_{x: P_{X}(x) > 0} P_{Z|X} (z|x) P_{Y|XZ}(y|x,z) }{ \max_{x: P_{X}(x) > 0} P_{Z|X} (z|x)}  \\
& \stackrel{\text{(a)}} =   \log \max_{z \in \mathrm{supp}(Z)} \frac{\sum_{y} \max_{x: P_{X|Z}(x|z) > 0} P_{Z|X} (z|x) P_{Y|XZ}(y|x,z) }{ \max_{x: P_{X}(x) > 0} P_{Z|X} (z|x)} \\
& = \log \max_{z \in \mathrm{supp}(Z)} \sum_{y} \max_{x: P_{X|Z}(x|z) > 0}  P_{Y|XZ}(y|x,z) \frac{P_{Z|X} (z|x)}{\max_{x': P_{X}(x') > 0} P_{Z|X} (z|x')} \\
& \leq \log \max_{z \in \mathrm{supp}(Z)} \sum_{y} \max_{x: P_{X|Z}(x|z) > 0}  P_{Y|XZ}(y|x,z) \\
& = \cml{X}{Y}{Z},
\end{align*}
where (a) follows from the fact that $P_{X|Z}(x|z) =0$, $P_X(x) >0$ and $P_Z(z) > 0$ implies that $P_{Z|X}(z|x) = 0$, so that the maximum is achieved outside this set.
\end{enumerate}

\subsection{Proof of Theorem~\ref{thmgeneralformula}} \label{app:generalformula}
{\bf Proof of 1):} 
To show that the right-hand side upper-bounds the left-hand side, fix any $P_{U|X}$, and consider the following
\begin{align*}
\sup_{\hat{U}(Y)} \Pr (U =\hat{U}(Y)) & = \int_{\mathcal{Y}} \max_{u \in \mathcal{U}} \int_{\mathcal{X}} P_{U|X}(u|x) P_{XY}(dxdy) \\
& = \int_{\mathcal{Y}} \max_{u \in \mathcal{U}} \int_{\mathcal{X}} P_{U|X}(u|x) f(x,y) P_X(dx) P_Y(dy) \\
& \leq \int_{\mathcal{Y}} \max_{u \in \mathcal{U}} \int_{\mathcal{X}} P_{U|X}(u|x) (\essup_{P_X}  f(X,y)) P_X(dx) P_Y(dy) \\
& = \int_{\mathcal{Y}} (\essup_{P_X} f(X,y)) \left( \max_{u \in \mathcal{U}} \int_{\mathcal{X}} P_{U|X}(u|x) P_X(dx) \right) P_Y(dy) \\
& = (\max_{u \in \mathcal{U}} P_U(u)) \int_{\mathcal{Y}} (\essup_{P_X} f(X,y))  P_Y(dy). 
\end{align*}

To show the reverse direction, we will show it first for discrete $X$, and then extend the result by discretizing more general $X$'s. 
Suppose $X$ has a finite alphabet. In this case, $\sigma(X)$ is generated by a finite set, and $P_{XY} \ll P_X \times P_Y$ since $I(X;Y) \leq H(X) < \infty$. Without loss of generality, suppose $X$ has full support. Consider the ``shattering'' $P_{U|X}$. Recall: $p^\star = \min_{x \in \mathcal{X}} P_X(x)$. For each $x \in \mathcal{X}$, let $k(x) = P_X(x)/p^{\star}$, and let $ \mathcal{U} = \bigcup_{x \in \mathcal{X}} \{ (x,1), (x,2), \ldots, (x,\lceil k(x)\rceil) \}$.  For each $u=(i_u,j_u) \in \mathcal{U}$ and $x \in \mathcal{X}$, let $P_{U|X}(u|x)$ be:
\begin{align*}
 P_{U|X}((i_u,j_u)|x) 
& = \begin{cases}
\frac{p^{\star}}{P_X(x)}, &  i_u=x, ~~1 \leq j_u \leq \lfloor k(x) \rfloor , \\
1 - \frac{(\lceil k(x) \rceil -1)p^{\star}}{P_X(x)}, & i_u=x,~~ j_u=\lceil k(x) \rceil, \\
0, &  i_u \neq x, ~~1 \leq j_u \leq \lceil k(i_u) \rceil.
\end{cases}
\end{align*} 
Then
\begin{align*}
\sup_{\hat{U}(Y)} \Pr (U =\hat{U}(Y)) & = \int_{\mathcal{Y}} \max_{(i_u,j_u) \in \mathcal{U}} \sum_{x \in \mathcal{X}} P_{U|X}((i_u,j_u)|x) f(x,y) P_X(x) P_Y(dy) \\
& = \int_{\mathcal{Y}} \max_{(i_u,1) \in \mathcal{U}} p^\star f(i_u,y)  P_Y(dy) \\
& = p^\star \int_{\mathcal{Y}} \max_{x \in \mathcal{X}}  f(x,y)  P_Y(dy).
\end{align*}
The proof for the discrete case is completed by noticing that $p^\star = \max_u P_U(u)$. 

Now, consider the more general case. Let $\{A_n\}_{n=1}^{\infty}$ be a countable collection of sets generating $\sigma(X)$. We will prove the result by considering a series of discretizations of $X$, each of which is a refinement of the previous one. To that end,  let $\mathcal{S}_n$ be the finite partition generating $\sigma(\cup_{i=1}^n A_i)$. It can be readily verified that $\mathcal{S}_{n+1}$ is a refinement of $\mathcal{S}_n$. Let $N_n = | \mathcal{S}_n|$, $\mathcal{S}_n =\{S_{n,1},S_{n,2}, \cdots, S_{n,N_n}\}$, and define
\begin{align*}
U_n(X) = \sum_{i=1}^{N_n} i ~\mathbb{I}\{X \in S_{n,i}\},
\end{align*} where $\mathbb{I} \{.\}$ is the indicator function.
Then we get $\mathcal{L}( X \to Y ) \geq \mathcal{L} (U_n \to Y)$ since $U_n - X- Y$ is a Markov chain, and the data processing inequality holds by Lemma~\ref{lemmadefprop}. By the earlier result for finite $X$, we have
\begin{align*}
\mathcal{L} (U_n \to Y) = \log \int_{\mathcal{Y}} \sup_{u: P_{U_n}(u_n) >0} f_n(u_n,y) P_Y(dy),
\end{align*}
where $f_n(u_n,y) = \frac{dP_{U_nY}}{d(P_{U_n} \times P_Y)}.$ We next compute $f_n(u_n,y)$. Let $A \subseteq \mathcal{U}_n \times \mathcal{Y}$. Then
\begin{align*}
P_{U_n,Y} (A) & = \int_{\mathcal{Y}} \sum_{u_n} \mathbb{I} \{(u_n,y) \in A\} \int_{\mathcal{X}} P_{U_n|X} (u_n|x) f(x,y) P_X(dx) P_Y(dy)  \\
& = \int_{\mathcal{Y}} \sum_{u_n} \mathbb{I} \{(u_n,y) \in A\} \left( \int_{S_{n,u_n}}  f(x,y) P_X(dx) \right) P_Y(dy) \\
& = \int_{\mathcal{Y}} \sum_{u_n: P_{U_n}(u_n)>0 } \mathbb{I} \{(u_n,y) \in A\} \left( \frac{ \int_{S_{n,u_n}}  f(x,y) P_X(dx)} { \int_{S_{n,u_n}}  P_X(dx)} \right) P_{U_n}(u_n) P_Y(dy),
\end{align*}
so that 
\begin{align*}
f_n(u_n,y) = \frac{ \int_{S_{n,u_n}}  f(x,y) P_X(dx)} { \int_{S_{n,u_n}}  P_X(dx)}.
\end{align*}
Let $S_n(x)$ be the set in $\mathcal{S}_n$ containing $x$. Then we can view $f_n(u_n,y)$ as a function of $(x,y)$:
\begin{align*}
f_n(x,y) = \frac{ \int_{S_{n}(x)}  f(x,y) P_X(dx)} { \int_{S_{n}(x)}  P_X(dx)}.
\end{align*}
We can rewrite $f_n(x,y) = \E[f(X,y)   | X \in S_n(x)]$, so that
\begin{align}
f_n(X,y) =  \E[f(X,y) | \sigma(\mathcal{S}_n)]. ~~ P_X-\text{a.s.}
\end{align}
Since $\mathcal{S}_n$'s are refinements, $f_n(X,y)$ is a martingale process, and it follows by Levy's upward Theorem~\cite[Theorem 14.2]{ProbabilityWithMartingales} that \begin{align} \label{eqfnalmostsureconv1}
f_n(X,y) \stackrel{\text{a.s.}} \rightarrow \E\left[f(X,y) | \sigma\left(\cup_{i=1}^\infty \mathcal{S}_i\right)\right].
\end{align}
Then
\begin{align} \label{eqfnalmostsureconv2}
\E\left[f(X,y) | \sigma\left(\cup_{i=1}^\infty \mathcal{S}_i\right)\right] = \E\left[f(X,y) | \sigma\left(\cup_{i=1}^\infty A_i\right)\right] = \E\left[f(X,y) | \sigma\left(X \right)\right] \stackrel{\text{a.s.}} = f(X,y).
\end{align}
Moreover, 
\begin{align}
\mathcal{L}( X \to Y ) \geq \limsup_{n \rightarrow \infty} \mathcal{L} (U_n \to Y) & = \limsup_{n \rightarrow \infty} \log \int_{\mathcal{Y}} \sup_{u: P_{U_n}(u_n) >0} f_n(u_n,y) P_Y(dy) \\
& = \limsup_{n \rightarrow \infty} \log \int_{\mathcal{Y}} \sup_{x: P_{X}(S_n(x)) >0} f_n(x,y) P_Y(dy).
\end{align}
Since $\mathcal{S}_{n+1}$ is a refinement of $\mathcal{S}_n$, the integrand is nondecreasing with $n$. Therefore, by the monotone convergence theorem, 
\begin{align}
\mathcal{L}( X \to Y ) \geq  \log \int_{\mathcal{Y}} \lim_{n \rightarrow \infty} \sup_{x: P_{X}(S_n(x)) >0} f_n(x,y) P_Y(dy)  .
\end{align}
Then it remains to show that
\begin{align}
\label{eqlimsupsuplim}
\lim_{n \rightarrow \infty} \sup_{x: P_{X}(S_n(x)) >0} f_n(x,y) \geq \essup_{P_X} f(X,y)
\end{align}
for all $y$.
To that end, let $B = \{ \alpha : P_X( f(X,y) > \alpha ) > 0 \}.$ Consider $r \in B$  and let $E_r = \{x: f(x,y) > r \}$. Then $P_X(E_r) > 0$. Therefore, by~\eqref{eqfnalmostsureconv1} and~\eqref{eqfnalmostsureconv2}, $f_n(X,y)$ converges almost everywhere to $f(X,y)$ on $E_r$. By Egoroff's Theorem~\cite[Theorem 7.12]{bartle2014elements}, for every $\delta >0$, there exists $E'_\delta$ such that $P_X(E'_\delta)<\delta$ and $f_n$ converges uniformly to $f$ on $E_r \backslash E'_\delta$. Call the latter set $E_{r \backslash \delta}$.
So fix $\delta > 0$ small such that $P_X( E_{r \backslash \delta}) > 0$. For each $n$, let $\mathcal{S}_n (E_{r \backslash \delta})$ be a collection of sets in $\mathcal{S}_n$ satisfying: $\cup_{S \in \mathcal{S}_n (E_{r \backslash \delta})} \supseteq E_{r \backslash \delta}$ and $S \in \mathcal{S}_n (E_{r \backslash \delta}) \Rightarrow S \cap E_{r \backslash \delta} \neq \varnothing$. Then there must exist $S \in \mathcal{S}_n (E_{r \backslash \delta})$ satisfying $P(S) > 0$. Denote the latter set by $S_n ( E_{r \backslash \delta})$. Hence,
\begin{align}
\lim_{n \rightarrow \infty} \sup_{x: P_{X}(S_n(x)) >0} f_n(x,y) &
\stackrel{\text{(a)}} \geq  \lim_{n \rightarrow \infty} \sup_{x \in S_n ( E_{r \backslash \delta})} f_n(x,y) \\ 
& \stackrel{\text{(b)}} \geq \lim_{n \rightarrow \infty} \inf_{x \in E_{r \backslash \delta} } f_n (x,y) \\
& \stackrel{\text{(c)}} =  \inf_{x \in E_{r \backslash \delta} } f(x,y) \\
& \geq r,
\end{align} 
where (a) follows from the fact that $P_X( S_n (E_{r \backslash \delta})) > 0$, (b) follows from the fact that $S_n (E_{r \backslash \delta}) \cap E_{r \backslash \delta} \neq \varnothing$, and (c) follows from the fact that $f_n(x,y)$ converges uniformly to $f$ on $E_{r \backslash \delta}$. Finally, since $r $ was chosen arbitrarily from $B$, we get
\begin{align}
\lim_{n \rightarrow \infty} \sup_{x: P_{X}(S_n(x)) >0} f_n(x,y)  \geq \sup B = \essup_{P_X} f(X,y),
\end{align}
as desired. 

{\bf Proof of 2):}
If absolute continuity does not hold, then $I(X;Y) = +\infty$, and there exists a sequence of discretizations $(X_n,Y_n)$ such that $I(X_n;Y_n) \rightarrow +\infty$ (e.g.,~\cite[p.~37]{Gallager:IT}). The result then follows by noting that $\ml{X}{Y} \geq \ml{X_n}{Y_n} \geq I(X_n;Y_n)$.

\subsection{Proof of Lemma~\ref{lemmageneralformula}} \label{app:lemmageneral}
Suppose $P_{X_1 Y_1} \ll P_{X_1} \times P_{Y_1}$ and let $ d P_{X_1 Y_1} = f_1(x,y) d (P_{X_1} \times P_{Y_1})$. 
Then for every $A \in \sigma_X$ and $B \in \sigma_Y$,
\begin{equation*}
\int_\mathcal{X} \left[ \int_\mathcal{Y} 1(y \in B) d\mu(x,dy) \right]
 1(x \in A) dP_{X_1}(dx)
  = \int_\mathcal{X} \left[ \int_\mathcal{Y} 1(y \in B) f_1(x,y)
    dP_{Y_1}(dy) \right] 1(x \in A) dP_{X_1}(dx).
\end{equation*}
Since this holds true for all $A$ we must have
\begin{equation}
\label{eq:generalproofderivative}
\int_\mathcal{Y} 1(y \in B) d\mu(x,dy) = \int_{\mathcal{Y}}
      1(y \in B) f_1(x,y) dP_{Y_1} (dy) \quad \text{$P_{X_1}$--a.s.}
\end{equation}
Hence $\mu(x,\cdot) \ll P_{Y_1}$ and $f_1(x,y) = \frac{d \mu(x,\cdot)}{dP_{Y_1}}(y)$. Let $P_X$ be an arbitrary representative of the equivalence class of $P_{X_1}$ and $Q_Y$ be any measure satisfying $P_{Y_1} \ll Q_Y$. Then
\begin{align*}
\ml{X_1}{Y_1} & \stackrel{\text{(a)}} = \log \int_{\mathcal{Y}} \essup_{P_{X_1}} \left( \frac{d \mu(X,\cdot)}{dP_{Y_1}}(y) \right)  P_{Y_1}(dy) \\
& =  \log \int_{\mathcal{Y}} \essup_{P_{X_1}} \left( \frac{d \mu(X,\cdot)}{dP_{Y_1}}(y)  \right)  \frac{dP_{Y_1}}{dQ_{Y}}(y)   Q_{Y}(dy) \\
& = \log \int_{\mathcal{Y}} \essup_{P_{X_1}} \left( \frac{d \mu(X,\cdot)}{dQ_{Y}}(y) \right) Q_{Y}(dy) \\
& \stackrel{\text{(b)}} = \log \int_{\mathcal{Y}} \essup_{P_{X}} \left( \frac{d \mu(X,\cdot)}{dQ_{Y}}(y) \right) Q_{Y}(dy),
\end{align*} where (a) follows from Theorem~\ref{thmgeneralformula}, and (b) follows from the fact that for any function $h: \mathcal{X} \rightarrow \mathbb{R}$, $\essup_{P_{X}} h(X) = \essup_{P_{X_1}} h(X)$ when $P_{X_1} \equiv P_{X}$.
Now consider $P_{X_2}$ satisfying $P_{X_2} \equiv P_{X_1}$ and let $g(x) = \frac{d P_{X_2}}{d P_{X_1}}$. For any set $A \in \sigma_{XY}$,
\begin{align*}
P_{X_2 Y_2} (A)= \int_{\mathcal{X}} \int_{\mathcal{Y}} \mathbb{I} \{ (x,y) \in A \} \mu(x,dy) P_{X_2} (dx)  = \int_{\mathcal{X}} \int_{\mathcal{Y}} g(x) \mathbb{I} \{ (x,y) \in A \} \mu(x,dy) P_{X_1} (dx), 
\end{align*}
hence $P_{X_2 Y_2} \ll P_{X_1 Y_1}$. Similarly, for any set $B \in \sigma_Y$,
\begin{align*}
P_{Y_2} (B) = \int_{\mathcal{X}} \int_{\mathcal{Y}} \mathbb{I} \{ y \in B \} \mu(x,dy) P_{X_2} (dx)    = \int_{\mathcal{X}} \int_{\mathcal{Y}} \mathbb{I} \{ y \in B \} f_1(x,y)   P_{Y_1} (dy) P_{X_2} (dx) , 
\end{align*} and
\begin{align*}
P_{Y_1} (B) = \int_{\mathcal{X}} \int_{\mathcal{Y}} \mathbb{I} \{ y \in B \} f_1(x,y)   P_{Y_1} (dy) P_{X_1} (dx).
\end{align*}
Hence $P_{Y_2}(B) = 0$ implies that for $(P_{X_2} \times P_{Y_1})$-almost all $(x,y)$, $\mathbb{I} \{ y \in B \} f_1(x,y) = 0$. Since $P_{X_1} \ll P_{X_2}$, this implies that for $(P_{X_1} \times P_{Y_1})$-almost all $(x,y)$, $\mathbb{I} \{ y \in B \} f_1(x,y) = 0$~\cite[p.~22, Ex.~19]{Kallenberg:Prob2}. Hence $P_{Y_1} (B) = 0$, which implies that $ P_{Y_1} \ll P_{Y_2}$. Therefore, we get
\begin{align*}
P_{X_2 Y_2} \ll P_{X_1 Y_1} \ll P_{X_1} \times P_{Y_1} \stackrel{\text{(a)}} \ll P_{X_2} \times P_{Y_2},
\end{align*} where (a) follows from the fact that $P_{X_1} \ll P_{X_2}$ and $P_{Y_1} \ll P_{Y_2}$. 
By symmetry we also get $P_{Y_2} \ll P_{Y_1}$, hence $P_{Y_1} \equiv P_{Y_2}$. By choosing $P_{X_1}$ to be the representative of the equivalence classes of $P_{X_2}$ and noting that $P_{Y_2} \ll P_{Y_1}$, the first part of the lemma yields 
\begin{align*}
 \ml{X_2}{Y_2} = \log \int_{\mathcal{Y}} \essup_{P_{X_1}} \left( \frac{d \mu(X,\cdot)}{dP_{Y_1}} \right) P_{Y_1}(dy) = \ml{X_1}{Y_1}.
\end{align*}

%\subsection{Computation of Waiting Times} \label{app:waitingtimes}
%\input{AppendixWaitingTime}

\section{Proof of Equation~\eqref{eqHammingresult}} \label{app:maxleakhamming}

Let $P_{Y|X} = \begin{bmatrix} 1-W_{10} & W_{10} \\
 W_{01} & 1-W_{01} \end{bmatrix}  $ (where the first column corresponds to $y=0$, the second to $y=1$). Dropping the logarithm, we can rewrite the problem as:
\begin{align} \label{eqHamming2}
\text{minimize}~ & \max\{1-W_{10}, W_{01}\}+\max\{W_{10}, 1-W_{01}\}  \\
\text{subject to}~ & (1-p)W_{10} +pW_{01} \leq D,~~ 0 \leq W_{10}, W_{01} \leq 1. \notag
\end{align}
Now note that 
\begin{align} \label{eqHammingbound}
W_{10}+W_{01} \stackrel{\text{(a)}} \leq \frac{1-p}{p} W_{10} + W_{01} = \frac{1}{p} ((1-p)W_{10}+pW_{01}) \stackrel{\text{(b)}} \leq  \frac{D}{p} \stackrel{\text{(c)}} \leq 1,
\end{align}
where (a) follows because $p \leq 1/2$, (b) follows from the constraint in~\eqref{eqHamming2}, and (c) follows because $D \leq p$. Using~\eqref{eqHammingbound}, we can rewrite~\eqref{eqHamming2} as
\begin{align}
\text{minimize}~ & 2-(W_{10}+W_{01})  \\
\text{subject to}~ & (1-p)W_{10} +pW_{01} \leq D,~~ 0 \leq W_{10}, W_{01} \leq 1. \notag
\end{align}
Therefore, we need to maximize $(W_{10}+W_{01})$. By~\eqref{eqHammingbound}, the sum is upper-bounded by $D/p$. The upper bound can be achieved by setting 
\begin{align}
W^\star_{10} = 0 \text{ and } W^\star_{01} = D/p,
\end{align}
which clearly satisfies the constraint in~\eqref{eqHamming2}. Therefore, 
\begin{align} \label{eqHammingresultapp}
 \min_{ \substack{ P_{Y | X}: \\ \E[d(X,Y)] \leq D}} \ml{X}{Y}= \log_2 (2-D/p) \quad (bits).
\end{align}

%\section{Proof of Lemma~\ref{lemmainduction}} \label{app:lemmainduction}
%
% We will proceed by induction on $j$. Let $j=d/2$. It is easy to check then that the left-hand side and the right-hand side are both equal to $d$. Next, the induction step shows that if the formula holds for $j+1$, then it holds for $j$: 
%\begin{align*}
%\sum_{i=d/2+j}^{d} (-1)^{d-i}  {{d}\choose{i}} i 
%& = (-1)^{d/2-j} \left(d/2+j \right){d \choose {\frac{d}{2}+j}} + (-1)^{d/2-j-1} \frac{(2j+d)(2j+d+2){{d}\choose{d/2+j+1}} }{4(d-1)}. 
%\end{align*}
%Note that
%\begin{align*}
%{{d}\choose{d/2+j+1}} = {{d}\choose{d/2+j}} \frac{d/2-j}{d/2+j+1}.
%\end{align*}
%Then, continuing,
%\begin{align*}
%\sum_{i=d/2+j}^{d} (-1)^{d/2-j}  {{d}\choose{i}} i 
%& = \frac{(-1)^{d/2-j} {{d}\choose{d/2+j}}}{4(d-1)}  \left(4(d-1)(d/2+j)  - \frac{(2j+d)(2j+d+2)(d/2-j)}{d/2+j+1}    \right) \\
%& = \frac{(-1)^{d/2-j} {{d}\choose{d/2+j}}}{4(d-1)}  \left(2(d-1)(d+2j)  - 2(2j+d)(d/2-j)    \right) \\
%& = \frac{(-1)^{d/2-j} {{d}\choose{d/2+j}}}{4(d-1)} (2j+d) ( 2(d-1)-2(d/2-j)) \\
%& = \frac{(-1)^{d/2-j} {{d}\choose{d/2+j}}}{4(d-1)} (2j+d) ( d+2j-2),
%\end{align*}
%as desired. \hfill $\blacksquare$

\section{Proofs for Sections~\ref{seccostleakage}}

\subsection{Proof of Theorem~\ref{thmcostleakage}} \label{app:costleakthm}
To show that the left-hand side is upper-bounded by the right-hand side, fix $U$, $\hat{\mathcal{U}}$ and $d$, and consider:
\begin{align*}
\inf_{\hat{u}(\cdot) } \E[d(U,\hat{u}(Y))]  & = \sum_{y \in \mathcal{Y}} \inf_{\hat{u}} P_Y(y) \E[d(U, \hat{u}) | Y=y] \\
& = \sum_{y \in \mathcal{Y}} \inf_{\hat{u}}   \sum_{x \in \mathrm{supp}(X)} P_X(x,y) \E[d(U, \hat{u}) | X=x,Y=y] \\
& =  \sum_{y \in \mathcal{Y}} \inf_{\hat{u}}  \sum_{x \in \mathrm{supp}(X)} P_X(x) P_{Y|X}(y|x) \E[d(U, \hat{u}) | X=x] \\
& \geq \sum_{y \in \mathcal{Y}}  \min_{ \tilde{x} \in \mathrm{supp}(X)} P_{Y|X}(y|\tilde{x}) \inf_{\hat{u}} \sum_{x \in \mathrm{supp}(X)} P_X(x) \E[d(U, \hat{u}) | X=x] \\
& =  \sum_{y \in \mathcal{Y}}  \min_{ \tilde{x} \in \mathrm{supp}(X)} P_{Y|X}(y|\tilde{x}) \left( \inf_{ \hat{u}} \E[d(U, \hat{u})] \right),
\end{align*}
where the third equality follows from the Markov chain $U-X-Y$.
%\begin{align*}
% \inf_{\hat{U}: X-Y-\hat{U}} \E[d(U,\hat{U})] 
% & =  \sum_{y \in \mathcal{Y}} \inf_{\hat{u}} \sum_{u \in \mathcal{U}} P_{UY} (u,y) d(u, \hat{u}) \\
% & =   \sum_{y \in \mathcal{Y}} \inf_{\hat{u}}  \sum_{u \in \mathcal{U}} \sum_{x \in \mathrm{supp}(X)} P_{X}(x) P_{U|X}(u|x) P_{Y|X} (y|x) d(u, \hat{u}) \\
% & \geq  \sum_{y \in \mathcal{Y}}  \left( \inf_{\hat{u}}  \sum_{u \in \mathcal{U}} \sum_{x \in \mathrm{supp}(X)} P_{X}(x) P_{U|X}(u|x) d(u, \hat{u}) \right)  \min_{ \tilde{x} \in \mathrm{supp}(X)} P_{Y|X}(y|\tilde{x}) \\
% & = \sum_{y \in \mathcal{Y}} \left( \inf_{ \hat{u}} \E[d(U, \hat{u})] \right) \min_{ \tilde{x} \in \mathrm{supp}(X)} P_{Y|X}(y|\tilde{x}) .
%\end{align*}
For the reverse direction, let $U=X$, $\hat{\mathcal{X}} = \mathrm{supp}(X) $, and
\begin{align} \label{eqdistmax}
d(x,\hat{x}) = \begin{cases}
\frac{1}{P_X(x)}, & x = \hat{x}, \\
0, & x \neq \hat{x}.
\end{cases}
\end{align}
Then
\begin{align}
\min_{\hat{x} \in \mathrm{supp}(X)} \E[d(X, \hat{x} )] =  \min_{\hat{x} \in \mathrm{supp}(X)} \sum_{x \in \mathrm{supp}(X)}  P_X(x) d(x,\hat{x}) = \min_{\hat{x} \in \mathrm{supp}(X) }  P_X(\hat{x}) d(\hat{x},\hat{x}) = 1,
\end{align} and for a given $y \in \mathcal{Y}$,
\begin{align}
\min_{\hat{x} \in \mathrm{supp}(X)} \sum_{x \in \mathrm{supp}(X)}  P_X(x) P_{Y|X}(y|x) d(x,\hat{x}) =  \min_{\hat{x} \in \mathrm{supp}(X)} P_{Y|X} (y|\hat{x}),
\end{align}
which concludes the proof.

\subsection{Proof of equation~\eqref{eq:MCLequiv}} \label{app:Cost2}
Fix any distribution $Q_Y$ on $\mathcal{Y}$. Then
\begin{align*}
\exp\{D_{\infty} (P_X \times Q_Y || P_{XY})\} = \max_{ \substack{x,y: \\ P_X(x)Q_Y(y) > 0}} \frac{Q_Y(y)}{P_{Y|X}(y|x)} = \max_{y:Q_Y(y)>0} \frac{Q_Y(y)}{\min\limits_{x: P_X(x)>0} P_{Y|X}(y|x)}. 
\end{align*}
If for every $y \in \mathcal{Y}$ there exists $x \in \mathrm{supp}(P_X)$ such that $P_{Y|X}(y|x) = 0$, then for any $Q_Y$ the above quantity is $\infty$. By Theorem~\ref{thmcostleakage}, $\mcl{X}{Y}$ is also $\infty$ in this case. Now assume $ \sum_{y \in \mathcal{Y}} \min_{x: P_X(x)>0} P_{Y|X}(y|x) > 0$. We have
\begin{align*}
\exp\{D_{\infty} (P_X \times Q_Y || P_{XY})\} = \max_{y:Q_Y(y)>0} \frac{Q_Y(y)}{\min\limits_{x: P_X(x)>0} P_{Y|X}(y|x)} 
& \geq \frac{\sum_{y \in \mathcal{Y}} Q_Y(y)}{ \sum\limits_{y \in \mathcal{Y}} \min\limits_{x: P_X(x)>0} P_{Y|X}(y|x)}.
\end{align*}
Noting that $\sum_{y} Q_Y(y) = 1$, we get $\inf_{Q_Y} D_{\infty} (P_X \times Q_Y || P_{XY})\} \geq \mcl{X}{Y}$. One can readily verify that the lower bound is achievable by setting 
\[ Q_Y(y) = \frac{ \min_{x: P_X(x)>0} P_{Y|X}(y|x) }{ \sum_{y' \in \mathcal{Y}} \min_{x: P_X(x)>0} P_{Y|X}(y'|x) }. \]
\begin{Remark}
In the case of $I_{\infty} (X;Y)$ (cf.~\eqref{eq:SibsonMI}), one can readily verify that
\begin{equation}
\label{eq:optQY}
Q_Y(y) = \frac{ \max_{x: P_X(x)>0} P_{Y|X}(y|x) }{ \sum_{y' \in \mathcal{Y}} \max_{x: P_X(x)>0} P_{Y|X}(y'|x) }
\end{equation}
achieves the infimum in (\ref{eq:SibsonMIinfDinf}).
\end{Remark}

\subsection{Proof of Corollary~\ref{corrcostleakage}} \label{app:costleakcorr}

In the following, assume $X$ has full support. 
\begin{enumerate}
	\item The data processing inequality follows directly from the definition. 
	\item The ``if'' direction is straightforward. The ``only if'' direction follows from the fact that, for each $y$, $\min_{x } P_{Y|X}(y|x) \leq  P_Y(y)$. Thus, $\sum_y \min_x P_{Y|X} (y|x) =1 \Rightarrow \forall y, \min_x P_{Y|X} (y|x) = P_Y (y) \Rightarrow X$ and $Y$ are independent.
	\item[3-5)] The additivity property and the equality in 4) can be readily verified. Example~\ref{exampleasy2} illustrates 5).
	\item[6)] Local-differential privacy upper-bounds maximal cost leakage since:
	\begin{align*}
	\frac{1}{ \sum_y \min_x P_{Y|X} (y|x)} = \frac{ \sum_y P_Y(y) }{ \sum_y \min_x P_{Y|X} (y|x)} \leq \max_y \frac{P_Y(y)}{\min_x P_{Y|X}(y|x)} \leq \max_{x,x',y} \frac{P_{Y|X} (y|x')}{P_{Y|X}(y|x) }.
	\end{align*}
	\item[7)] Convexity follows from the fact that $\min_x P_{Y|X} (y|x)$ is concave in $P_{Y|X}$, and $(-\log)$ is a non-increasing convex function.
\end{enumerate}

\subsection{Proof of Theorem~\ref{thmcostrealizable}} \label{app:costrealizable}

Without loss of generality, assume $X$ and $Y$ have full marginal support. To show $\mathcal{L}^{rc}(X \!\! \to \!\! Y) \leq D_\infty (P_{X} \times P_Y || P_{XY})$, fix any $\hat{\mathcal{X}}$, $d$ and $y \in \mathcal{Y}$, and consider:
\begin{align*}
\inf_{ \hat{u} \in \hat{\mathcal{U}}}  \E[d(U,\hat{u}) | Y=y] & =  \inf_{ \hat{u} \in \hat{\mathcal{U}}} \sum_{ u \in \mathcal{U}} P_{U|Y}(u|y) d(u, \hat{u}) \\
& = \inf_{ \hat{u} \in \hat{\mathcal{U}}} \sum_{ u \in \mathcal{U}} \sum_{ x \in \mathcal{X}} P_{X|Y}(x|y) P_{U|X} (u|x) d(u, \hat{u}) \\
&  = \inf_{ \hat{u} \in \hat{\mathcal{U}}} \sum_{ u \in \mathcal{U}} \sum_{ x \in \mathcal{X}} \frac{P_{Y|X}(y|x)}{P_Y(y)}   P_X(x) P_{U|X}(u|x) d(u, \hat{u}) \\  
& \geq \inf_{ \hat{u} \in \hat{\mathcal{U}}} \frac{\min_{x'} P_{Y|X}(y|x')}{P_Y(y)} \sum_{ u \in \mathcal{U}} \sum_{ x \in \mathcal{X}} P_X(x) P_{U|X}(u|x) d(u, \hat{u}) \\
& = \frac{\min_{x'} P_{Y|X}(y|x')}{P_Y(y)} \inf_{\hat{u} \in \hat{\mathcal{U}}} \E[d(U,\hat{u})].
\end{align*}
The reverse direction follows by using the same $d$ as in~\eqref{eqdistmax}.

\subsection{Proof of Corollary~\ref{corrLDPmaxcostrealized}} \label{app:corrcostrealized}

To show $\sup_{P_X} \mathcal{L}^{rc}(X \!\! \to \!\! Y) \leq L^{dp} ( X \!\! \to \!\! Y)$, note that $P_Y(y) \leq \max_{x} P_{Y|X}(y|x)$. For the reverse direction, consider the following. Let $y^\star$ be an element achieving the max of $L^{dp}$. Let $x_0 \in \argmin_x P_{Y|X}(y^\star|x)$ and $x_1 \in \argmax_x P_{Y|X}(y^\star|x)$. Finally, for a given $\alpha > 0$, let  $P_X(x_0) = 1- \alpha$ and $P_X(x_1) = \alpha$. Then,
\begin{align*}
 \sup_{P_X} \mathcal{L}^{rc}(X  \!\!  \to \!\!   Y) 
& \geq \log  \frac{P_{Y}(y^\star) }{ P_{Y|X} (y^\star|x_0) } \\
& = \log \frac{ (1-\alpha) P_{Y|X}(y^\star|x_0) + \alpha P_{Y|X}(y^\star|x_1) } { P_{Y|X} (y^\star|x_0) } \\
& \xrightarrow{\alpha \rightarrow 1} \log \frac{P_{Y|X}(y^\star|x_1)}{P_{Y|X}(y^\star|x_0)} = L^{dp} (X    \rightarrow    Y). \tag*{$\blacksquare$}
\end{align*} 

\end{appendices}

\bibliographystyle{IEEEtran}
\bibliography{IEEEabrv,database}

\end{document}